\documentclass[aps,prx,letterpaper,reprint,twocolumn,floatfix,superscriptaddress,notitlepage,dvipsnames,svgnames,x11names,table,footinbib,longbibliography]{revtex4-2}
\pdfoutput=1
\usepackage[utf8]{inputenc}
\usepackage[T1]{fontenc}
\usepackage{graphicx}
\usepackage{grffile}
\usepackage{amsfonts}
\usepackage{times}
\usepackage{amsmath,amssymb}
\usepackage{xcolor}
\usepackage{tikz}
\usetikzlibrary{decorations.pathreplacing,arrows.meta,calc}
\usepackage{bm}
\usepackage{booktabs}
\usepackage{dsfont,pifont}
\usepackage{mathtools}
\usepackage{bbm}
\usepackage{colortbl}
\usepackage{arydshln}
\usepackage{booktabs}
\usepackage{mathrsfs}
\usepackage{setspace}
\usepackage{apptools}
\usepackage{appendix}
\usepackage[normalem]{ulem}

\usepackage[braket, qm]{qcircuit}

\usepackage{amsthm}
\newtheorem{theorem}{Theorem}
\newtheorem{proposition}{Proposition}
\newtheorem{definition}{Definition}

\definecolor{gaussred}{RGB}{190,50,50}
\definecolor{isingblue}{RGB}{40,80,180}
\definecolor{medium-blue}{rgb}{0,0,0.8}
\definecolor{forest-green}{cmyk}{0.76,0,0.76,0.25}
\definecolor{hpe-green}{RGB}{2,169,130}
\definecolor{cerulean}{cmyk}{0.98,0.57,0,0.25}

\usepackage[compatibility=false,font=small,labelfont=bf,justification=justified,format=plain]{caption}
\usepackage{ragged2e}
\usepackage{subcaption}

\usepackage[english]{babel}
\usepackage[autostyle,english=american]{csquotes}
\MakeOuterQuote{"}

\usepackage[pdfusetitle]{hyperref}
\hypersetup{pdflang={English},colorlinks=true,linkcolor={hpe-green},citecolor=RoyalBlue,urlcolor=RoyalBlue,breaklinks=true}

\usepackage[nameinlink,capitalize,capitalise]{cleveref}

\graphicspath{{figures/}}

\newcommand{\PRLrefsep}{%
  \par\vspace{0.75\baselineskip}%
  \noindent\makebox[\linewidth][c]{%
    \rule[0.5ex]{0.08\linewidth}{0.25pt}%
    \rule[0.5ex]{0.09\linewidth}{0.45pt}%
    \rule[0.5ex]{0.10\linewidth}{0.70pt}%
    \rule[0.5ex]{0.18\linewidth}{1.00pt}%
    \rule[0.5ex]{0.10\linewidth}{0.70pt}%
    \rule[0.5ex]{0.09\linewidth}{0.45pt}%
    \rule[0.5ex]{0.08\linewidth}{0.25pt}%
  }%
  \par\vspace{0.75\baselineskip}%
}

\newcommand{\Tr}{\operatorname{Tr}}

\begin{document}

\title{Distribution Complexity of Electronic Structure Simulations on Quantum Supercomputers}

\author{Jason Necaise}
\affiliation{HPE Quantum, Emergent Machine Intelligence, Hewlett Packard Labs, CA, USA}
\affiliation{Department of Physics and Astronomy, Dartmouth College, Hanover, NH 03755, USA}

\author{Namit Anand}\thanks{namit.anand@hpe.com}
\affiliation{HPE Quantum, Emergent Machine Intelligence, Hewlett Packard Labs, CA, USA}

\author{Gaurav Gyawali}
\affiliation{HPE Quantum, Emergent Machine Intelligence, Hewlett Packard Labs, CA, USA}

\author{K.~Grace Johnson}
\affiliation{HPE Quantum, Emergent Machine Intelligence, Hewlett Packard Labs, CA, USA}

\author{James D.~Whitfield}
\affiliation{Department of Physics and Astronomy, Dartmouth College, Hanover, NH 03755, USA}

\author{Masoud Mohseni}\thanks{masoud.mohseni@hpe.com}
\affiliation{HPE Quantum, Emergent Machine Intelligence, Hewlett Packard Labs, CA, USA}

\date{\today}

\begin{abstract}

Efficient simulation of strongly-interacting fermionic systems on quantum processing units (QPUs) is a challenging task due to nonlocal mode entanglement generation. However, it is not yet well understood how the structure of entanglement governs the hardness of large-scale quantum chemistry simulations or the scaling of distributing such workloads. Here, we introduce an algorithm for estimating the \textit{distribution complexity} of hybrid quantum-classical simulation for electronic structure Hamiltonians over heterogeneous high-performance architectures. Our algorithm relies on efficient analytical evaluation of the low entanglement boundaries for the orbital rotations and dephasing-induced localization within tensor fragments, in a double-factorized representation. Our entanglement estimation scales as $O(N^3)$ for each fragment, where \(N\) is the number of orbitals. When QPUs are communicating via a quantum network, the cost of distribution per fragment is reduced quadratically from \(O(N^2)\) to \(O(N)\). Similarly, for hybrid quantum-classical approaches, with access to only conventional HPC interconnects, the worst-case cost is reduced from \(O(\exp(N^2))\) to \(O(\exp(N))\). We show that emergent entanglement patterns are induced by the interplay between coherent Gaussian orbital rotations and disordered Coulomb interactions. We discuss the underlying physical mechanisms that govern distribution complexity and introduce model systems that are tunable based on the localizability of fragments and the overlap of interfragment rotations. We characterize three different regimes of hardness for distribution complexity and classical simulability. The framework introduced here enables novel and more efficient quantum-classical application workflows towards utility-scale quantum computing.

\end{abstract}

\maketitle

\let\oldaddcontentsline\addcontentsline%
\renewcommand{\addcontentsline}[3]{}%

%=============================================================================
\section{Introduction}
\label{sec:intro}
%=============================================================================

\begin{figure*}[!thbp]
  \centering
  \includegraphics[width=2\columnwidth]{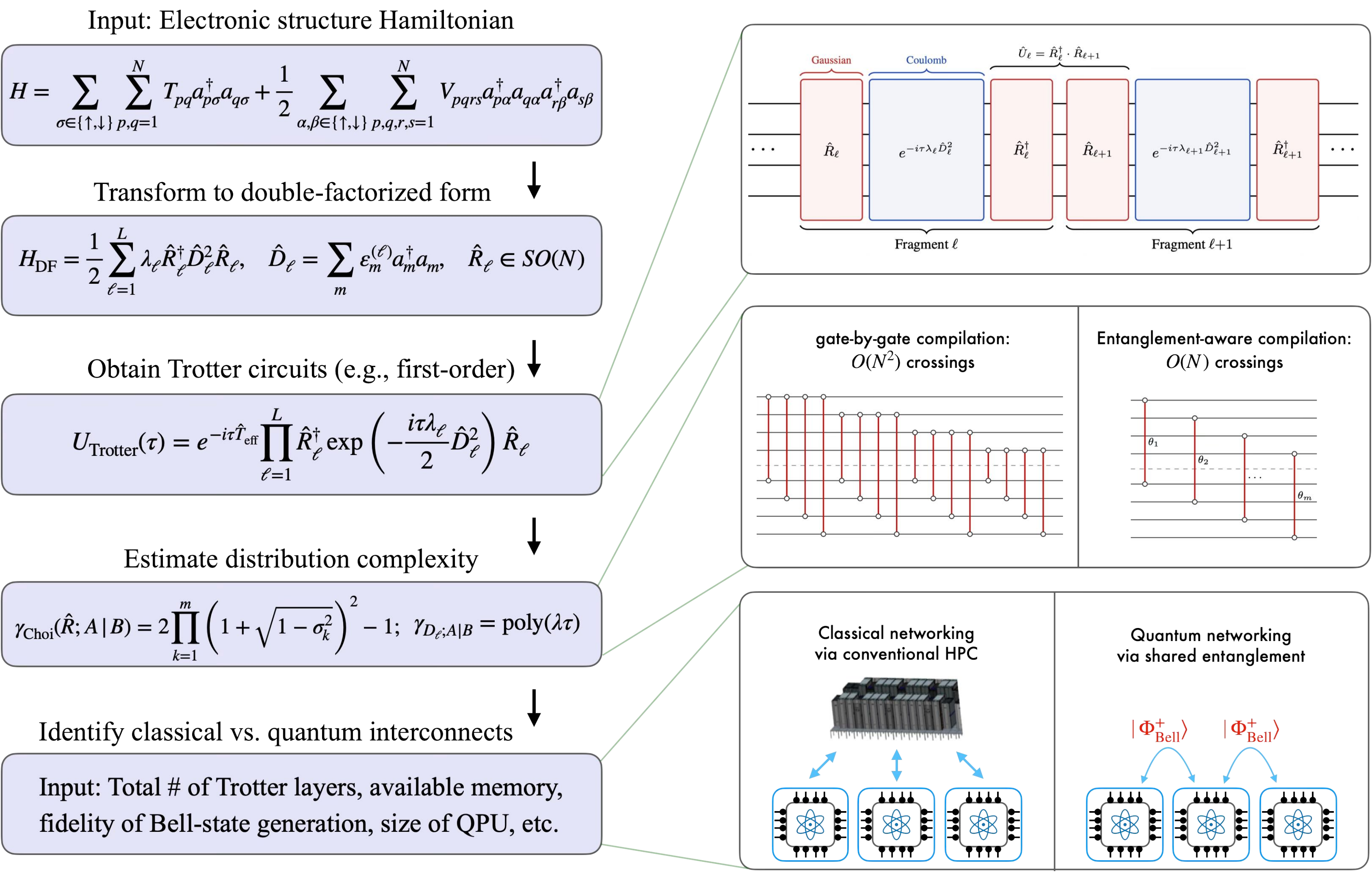}
  \caption{\textbf{Distributed quantum simulation workflow for double-factorized electronic Hamiltonians.} Our work characterizes the distribution cost of Trotter circuits via their operator entanglement, enabling the distribution of quantum simulation workloads among QPUs via conventional HPC interconnects or coherent quantum interconnects. The workflow takes as input an electronic structure Hamiltonian and estimates the overall distribution cost. As an example, for Gaussian orbital rotations, we show how to achieve an entanglement-aware compilation that reduces the inter-QPU communication. If this distribution cost is smaller than some threshold value \(\gamma_\star\) then we simulate it via LOCC-assisted distributed quantum simulation; otherwise we simulate it via entanglement-assisted distributed quantum simulation. The threshold value \(\gamma_\star\) can be a function of the available classical compute, the number of distributed QPUs, quantum hardware characteristics such as gate and measurement times, and the fidelity of Bell pair generation. Our workflow is focused on simulating electronic structure Hamiltonians but can be generalized to other classes of problems.}
  \label{fig:entanglement-hypervisor}
\end{figure*}

Building tightly integrated quantum and HPC systems, or \textit{quantum supercomputers}, capable of tackling large-scale applications will be one of the key technological challenges of the next decade. Over the years, there has been immense progress in identifying hardware, software, and system integration bottlenecks, including the size and characteristics of quantum processors required \cite{mohseni2026buildquantumsupercomputerscaling,chung2026partiallyfaulttolerantquantumcomputation}. Despite significant effort, state-of-the-art quantum algorithms still likely require millions of qubits to enable useful quantum computation at industrial scale. While there is excitement that certain cryptographic applications might show advantage with tens to hundreds of thousands of physical qubits under optimistic assumptions of hardware parameters and quantum error correction capabilities \cite{babbush2026securingellipticcurvecryptocurrencies,webster2026pinnaclearchitecturereducingcost,cain2026shorsalgorithmpossible10000}, the scale of most practical quantum algorithms is realistically beyond what can fit on a single QPU. As a result, utility-scale quantum computing will most likely be achieved via a tightly integrated quantum network of QPUs \cite{Barral2025DistributedQC}. Given the known constraints on quantum interconnects, it will be highly advantageous to identify optimal partitioning techniques for one concrete utility-scale application such as quantum simulation of chemistry and material science. Moreover, it is equally important to offload classically tractable simulations to heterogeneous high-performance classical computing over CPUs, GPUs, FPGAs, and custom-designed ASICs. A systematic understanding of these partitioning schemes and the relevant metrics is currently lacking.

In this work, we provide a rigorous framework for distributing quantum workloads by identifying operator space entanglement as the key metric in circuit partitioning. We focus on the hardness of distributing quantum simulation of electronic structure Hamiltonians in the double-factorized form, which governs the behavior of interacting electrons in molecules and materials. Efficient quantum simulation of such Hamiltonians is anticipated to be one of the cornerstones of quantum computing advantage~\cite{Cao2019QChemQC,motta2021lowrank, lee2021thc,feng2024distributedquantumsimulation}. Our central goal is to characterize the scaling of distributing quantum simulation via quantum interconnects (shared Bell pairs) and/or conventional classical HPC interconnects. In general, the nontrivial cost of Bell pair generation and their low fidelities across QPUs necessitates a careful cost analysis of classical versus quantum interconnects, both in the near-term and the fully fault-tolerant era.

In principle, circuit knitting provides a framework for classically distributing quantum circuits: to compute expectation values of observables, nonlocal gates are replaced by a quasiprobability decompositions (QPD) over local
operations. This replaces entanglement with local unitaries, at the cost of a classical sampling overhead of $\gamma^2$
per cut, where $\gamma$ captures the total strength of terms in quasiprobability decomposition~\cite{peng2020simulating,Tang2021,Bravyi2022,Lowe2023fastquantumcircuit,CarreraVazquez2024,Piveteau2024,piveteau2024knitting,aram_optimal_2025}.  
When multiple gates cross the partition, the distribution overhead is multiplicative,
$\gamma_\mathrm{total} = \prod_k \gamma_k$, so for generic circuits
the total cost grows exponentially in the number of
cuts~\cite{jing2024entanglement}. This serves as a generic upper bound for distributing an arbitrary compiled quantum circuit gate-by-gate. Lower bounds are generally much harder to compute, and even harder to achieve by finding an explicit QPD to match the operator Schmidt decomposition of the total unitary \cite{zanardi_entanglement_2001,styliaris2020informationscrambling,aram_optimal_2025}. Our work extends the class of unitaries for which lower-bounds are either efficient to compute, efficient to achieve (via an explicit compilation scheme), or both. 

Previous work has considered the cost of distributing circuits for electronic structure. Jones and Jacobsen~\cite{jones2025distributed} applied circuit knitting
to fermionic ans\"atze,
compiling the orbital rotations into individual two-qubit Givens gates
and cutting each cross-partition gate independently.
For H$_2$ with Unitary Coupled Cluster Singles and Doubles (UCCSD) in the Jordan--Wigner encoding, $18$ gate cuts produce an overhead of $9^{18} \approx 1.5 \times 10^{17}$. For chains larger than H$_{10}$, most of the extrapolated circuit distribution overheads exceeded $10^{275}$. Of the five ans\"atze they studied, only the Local Unitary Cluster Jastrow (LUCJ) ans\"atz~\cite{Motta2023LUCJ} was remotely feasible, primarily due to the fact that the controlled-phase angles across the spin-sector partition happen to be small. As an application of our circuit partitioning framework, we will show that reported costs of distributing such quantum simulations have been overestimated by several orders of magnitude. 

Such astronomical overheads arise not from the fundamental scaling of the circuit partitioning overhead with (operator) entanglement, but primarily from ignoring the structure of the Trotter circuits. The orbital rotation layers in a double-factorized Trotter step
are not arbitrary unitaries, but rather \emph{Gaussian fermionic
unitaries}. These free-fermion transformations can be fully specified by
$O(N^2)$ parameters despite acting on a $2^N$-dimensional Fock space, where \(N\) is the number of orbitals. The entanglement structure of a Gaussian unitary factorizes over orbital modes,
and the circuit knitting overhead can be computed exactly from the
$N \times N$ rotation matrix without constructing the exponentially
large Fock-space unitary. Similarly, understanding the structure of the diagonalized Coulomb interactions allows us to bound the circuit partitioning overhead. We show that the operator entanglement of the Coulomb layer grows logarithmically in time due to a dephasing mechanism, similar to many-body localization. Therefore, the Coulomb layers can be efficiently distributed across QPUs. Together, these techniques enable significant reduction in the cost of distributing Trotter circuits for double-factorized electronic structure Hamiltonians. Our distributed quantum simulation workflow is summarized in \cref{fig:entanglement-hypervisor}.

Beyond quantifying and reducing distribution complexity, our aim is to carefully characterize the underlying mechanisms of nonlocal mode entanglement generation in electronic structure within a double-factorized representation. These mechanisms include the compatibility between adjacent tensor fragment eigenbases, the eigenvalue spectrum of each fragment, and the number of retained fragments after truncation, among others. This structure is invisible at the level of individual compiled gates. 
To understand the structure of electronic Hamiltonians that makes their classical simulation and distribution nontrivial, we introduce several families of synthetic Hamiltonians that satisfy electronic structure symmetries but have a tunable degree of hardness. These \textit{model molecules} allow us to explore two different mechanisms of distribution hardness: the compatibility of inter-fragment rotations and the block-diagonal structure of the fragment eigenbases. 

It is worth emphasizing that our hybrid quantum-classical algorithms are focused on distributed simulation of electronic structure \textit{across} QPUs, unlike sample-based quantum diagonalization schemes whose focus is to "split" an algorithm across quantum and classical hardware~\cite{Shirakawa2025IBM,gunther2026biomolecularfreeenergies,Kirby2026IBM}. The latter primarily uses classical HPC resources for post-processing of quantum data. In contrast, we employ classical HPC pre-processing to minimize the communication/sampling overhead for either quantum or classical interconnects for distributed computing.

The paper is organized as follows. \cref{sec:distribution_framework} introduces the framework for quantifying the hardness of circuit partitioning via classical vs. quantum interconnects. \cref{df_intro} focuses the discussion to double-factorized electronic Hamiltonians and identifies the Gaussian and Coulomb layers whose costs must be estimated. \cref{sec:structure} explains how molecular structure, truncation, gauge freedom, and fragment ordering affect the resulting distribution complexity, and \cref{sec:synthetic} introduces synthetic Hamiltonian families to illustrate how these mechanisms control the complexity landscape. We conclude with a discussion in \cref{sec:conclusion}.

\section{Quantum and Classical Distribution Complexity}
\label{sec:distribution_framework}

To mitigate the practical upper limits on physical QPU sizes and avoid the substantial cost of building a monolithic quantum computing architecture, one must recast utility-scale quantum simulation as a distributed computing problem. To partition a large (logical) quantum circuit onto (logical) qubits across distinct QPUs, every operation crossing that partition must be replaced by some interprocessor resource. A quantum interconnect can supply shared entanglement that can be used to implement nonlocal operation coherently~\cite{Barral2025DistributedQC,Yue_2019,Wu2023DistributedEntanglement}. Alternatively, circuit knitting has been introduced as a distributed quantum computing method that uses only local quantum operations and classical postprocessing, replacing nonlocal channels by quasiprobability decompositions (QPDs) over implementable local operations~\cite{peng2020simulating,mitarai2021overhead,piveteau2024knitting,johnson2026distributedquantumcomputingadaptive}. Here, we use the term \textit{distribution complexity} to refer to the scaling of the physical and computational resources for pre-processing partitioning, inter-QPU quantum/classical communications, and post-processing required by either strategy. 

% For electronic structure Hamiltonians studied in this work, the distribution scaling is characterized primarily as a function of the number of orbitals.

For a fixed bipartition $A|B$, a QPD of a nonlocal channel $\mathcal U$ has the form
\begin{equation}
    \mathcal U
    =
    \sum_i a_i \ \mathcal F_i ,
    \qquad
    a_i\in\mathbb R ,
    \label{eq:qpd_def_short}
\end{equation}
where each $\mathcal F_i$ is implementable by local operations and classical communication (LOCC) between $A$ and $B$. Any LOCC quantum operation can be written as \(\mathcal F_i = \sum_{j} p_j \mathcal{A}_{j} \otimes \mathcal{B}_{j}\), where \(\mathcal{A}_{j}, \mathcal{B}_{j}\) are local quantum channels and \(\{p_j\}\) is a probability distribution. The ``quasiprobability extent'' \cite{piveteau2025side} of a specific QPD protocol $\Pi$ (also called ``product extent'' when the operators $\mathcal{A}_{j}, \mathcal{B}_{j}$ are unitaries \cite{aram_optimal_2025}), with coefficients $a_i^{(\Pi)}$, is given by the one-norm
\begin{equation}
    \gamma_{\Pi}(\mathcal U)
    =
    \sum_i |a_i^{(\Pi)}| .
    \label{eq:qpd_gamma_protocol_def}
\end{equation}

QPD sampling protocols provide an unbiased estimator for expectation values using only product unitaries on systems $A$ and $B$. For a protocol with one-norm $\gamma_\Pi$, the number of circuit executions required to estimate a bounded observable to precision $\varepsilon$ scales as $O(\gamma_\Pi^2/\varepsilon^2)$. In this work, we discuss the parameter $\gamma_\Pi$ as the fundamental parameter, noting that the explicit circuit sampling overhead of the estimator is set by $\gamma_\Pi^2$~\cite{peng2020simulating,mitarai2021overhead,piveteau2024knitting}. This gives us the following definition.

\begin{definition}
The LOCC-assisted \emph{distribution complexity} of a unitary $\mathcal U$ is the minimum of the quasiprobability extent over all local decompositions:
\begin{equation}
    \gamma_{\mathrm{LOCC}}(\mathcal U)
    =
    \min_{\mathcal U=\sum_i a_i\mathcal F_i}\ 
    \sum_i |a_i| ,
    \qquad
    \mathcal F_i\in\mathrm{LOCC}.
    \label{eq:qpd_gamma_def}
\end{equation}
\end{definition}

For a unitary channel $\mathcal U$ such that \(\mathcal{U}(\rho) = U\rho U^{\dagger}\), the robustness of entanglement of its Choi-Jamiolkowski state gives a lower bound on the LOCC-assisted distribution complexity~\cite{Harrow_2003_robustness,zanardi_entanglement_2001,piveteau2024knitting,aram_optimal_2025}. If $\{\alpha_k\}$ are the operator Schmidt coefficients of $U$ across $A|B$, normalized so that $\sum_k \alpha_k^2=1$, then
\begin{equation}
    \gamma_{\mathrm{Choi}}(\mathcal U)
    = 
    2\Bigl(\sum_k \alpha_k\Bigr)^2 - 1, \quad  \gamma_{\mathrm{LOCC}}(\mathcal U)
    \geq \gamma_{\mathrm{Choi}}(\mathcal U).
    \label{eq:choi_bound}
\end{equation}
For generic many-qubit unitaries, evaluating this Choi lower bound requires access to the full Fock-space unitary. As we will see, for the Gaussian orbital rotations appearing in double-factorized electronic simulation, this exponential representation is unnecessary: the same Choi spectrum can be computed from single-particle information (see \cref{thm:gaussian_choi_bound}).

Similar to the LOCC-assisted distribution cost, we can also estimate the \textit{entanglement-assisted} distribution cost of a quantum state or unitary. For a pure state $|\psi\rangle$, the entanglement-assisted distribution complexity is the logarithm of the Schmidt rank, equivalently $S_0(\rho_A)$ for the reduced state~\cite{Yue_2019,theurer2023singleshotentanglementmanipulationstates}. Monotonicity of R\'enyi entropies gives us the following inequalities,
\begin{equation}
    S_0(\rho_A)
    \geq
    S_{1/2}(\rho_A)
    \geq
    S_{\mathrm{vN}}(\rho_A)
    \geq
    S_2(\rho_A) .
    \label{eq:renyi_hierarchy_main}
\end{equation}
The LOCC-assisted distribution complexity is approximately equal to the exponential of the R\'enyi entropy $S_{1/2}(\rho_A)$, which lower bounds the zeroth-order Renyi entropy as per the inequality above~\cite{vidal1999robustness,johnson2026distributedquantumcomputingadaptive}. Therefore, the logarithm of the LOCC-assisted distribution complexity lower bounds the entanglement-assisted distribution complexity for coherent state preparation and/or unitary dynamics. Note that the entanglement-assisted distribution cost, \(S_0(\rho_A)\) is roughly the number of inter-QPU Bell pairs needed to prepare a quantum state/unitary~\cite{Yue_2019,theurer2023singleshotentanglementmanipulationstates}.

% Channel-level and state-dependent extents can differ sharply: a permutation or SWAP-like unitary introduces maximal entanglement as a channel, while its action is easily tracked on product input states~\cite{PhysRevLett.107.180501,Wu2023DistributedEntanglement}.

The distribution-complexity framework we introduce here turns partition selection into a classical preprocessing task. Adaptive circuit knitting follows this principle by choosing cuts with small distribution complexity when the entanglement structure is heterogeneous~\cite{mohseni2026buildquantumsupercomputerscaling,johnson2026distributedquantumcomputingadaptive}. For generic circuits, computing the relevant channel-level extent is limited by the exponential size of the unitary. For double-factorized electronic structure circuits, we are able to avoid this bottleneck by analyzing the structure of the Gaussian and diagonal layers to expose the partitioning data, as we discuss now. 

\section{Distributing Electronic Simulation}
\label{df_intro}

\begin{figure*}[tbhp]
  %\centering
  \begin{tikzpicture}[
    wire/.style={thick},
    gateblock/.style={draw,thick,minimum width=1.4cm,minimum height=3.2cm,
                      rounded corners=2pt,font=\small,align=center},
    dots/.style={font=\Large},
    x=0.72cm, y=0.7cm
  ]
  % Wires
  \foreach \i in {0,...,1} {
    \draw[wire] (-0.5,-\i) -- (20,-\i);
    \node[left,font=\small] at (-1.1,-\i) {$q_{\i}$};
  }
    \foreach \i in {2,...,2} {
    \draw[wire] (-0.5,-\i) -- (20,-\i);
    \node[left,font=\small] at (-0.5,-\i) {$q_{N-2}$};
  }
    \foreach \i in {3,...,3} {
    \draw[wire] (-0.5,-\i) -- (20,-\i);
    \node[left,font=\small] at (-0.5,-\i) {$q_{N-1}$};
  }
  \def\cy{-1.5}

  % Vertical dots
  \node[dots] at (-1.1,\cy + 0.15) {\small{$\vdots$}};
  % Horizontal dots
  \node[dots] at (0.7,\cy) {$\cdots$};

  % Fragment ell
  \node[gateblock, draw=gaussred, fill=gaussred!6]
    (Rl) at (2.8,\cy) {$\hat{R}_\ell$};
  \node[gateblock, draw=isingblue, fill=isingblue!6, minimum width=2.6cm]
    (Dl) at (5.8,\cy) {$e^{-i\tau\lambda_\ell \hat{D}_\ell^2}$};
  \node[gateblock, draw=gaussred, fill=gaussred!6]
    (Rld) at (8.8,\cy) {$\hat{R}_\ell^\dagger$};

  % Fragment ell+1
  \node[gateblock, draw=gaussred, fill=gaussred!6]
    (Rl1) at (11.2,\cy) {$\hat{R}_{\ell+1}$};
  \node[gateblock, draw=isingblue, fill=isingblue!6, minimum width=2.6cm]
    (Dl1) at (14.2,\cy) {$e^{-i\tau\lambda_{\ell+1} \hat{D}_{\ell+1}^2}$};
  \node[gateblock, draw=gaussred, fill=gaussred!6]
    (Rl1d) at (17.2,\cy) {$\hat{R}_{\ell+1}^\dagger$};

  % Right dots
  \node[dots] at (19,\cy) {$\cdots$};

  % % --- NEW: dotted rounded rectangle around \hat{R}_l^\dagger and \hat{R}_{l+1}
  % \draw[dotted, thick, rounded corners=6pt]
  %   ($(Rld.north west)+(-0.3,0.3)$) rectangle
  %   ($(Rl1.south east)+(0.3,-0.3)$);

  % Top labels
  \draw[decorate,decoration={brace,amplitude=4pt,raise=2pt},gaussred,thick]
    (Rl.north west) -- (Rl.north east)
    node[midway,above=6pt,font=\scriptsize,gaussred] {Gaussian};

  \draw[decorate,decoration={brace,amplitude=4pt,raise=2pt},isingblue,thick]
    (Dl.north west) -- (Dl.north east)
    node[midway,above=6pt,font=\scriptsize,isingblue] {Coulomb};

  % --- NEW: combined overbrace (uncolored)
  \draw[decorate,decoration={brace,amplitude=4pt,raise=6pt},thick]
    (Rld.north west) -- (Rl1.north east)
    node[midway,above=8pt,font=\scriptsize]
    {$\hat{U}_\ell = \hat{R}_\ell^\dagger \cdot \hat{R}_{\ell+1}$};

  % Bottom braces (unchanged)
  \draw[decorate,decoration={brace,amplitude=5pt,mirror,raise=3pt},thick]
    (Rl.south west) -- (Rld.south east)
    node[midway,below=8pt,font=\small] {Fragment $\ell$};

  \draw[decorate,decoration={brace,amplitude=5pt,mirror,raise=3pt},thick]
    (Rl1.south west) -- (Rl1d.south east)
    node[midway,below=8pt,font=\small] {Fragment $\ell{+}1$};

\end{tikzpicture}
  \caption{\textbf{Trotter circuits for double-factorized electronic Hamiltonians.} Quantum circuit for a single first-order Trotter step of the double-factorized Hamiltonian. Each fragment \(\ell\) consists of a diagonal Coulomb interaction (\textcolor{RoyalBlue}{blue}), conjugated with a Gaussian orbital rotation (\textcolor{BrickRed}{red}). Adjacent fragment layers contain composite interfragment Gaussian rotations \(\hat{U}_{\ell}=\hat{R}_{\ell}^{\dagger} \cdot \hat{R}_{\ell+1}\). The distribution complexity is therefore controlled by two types of objects: the compatibility of neighboring fragment bases and the diagonal interactions inside each fragment.}
  \label{fig:trotter_circuit}
\end{figure*}

After choosing a finite set of single-particle basis functions, the electronic
Hamiltonian is represented on the fermionic Fock space generated by the
corresponding spin-orbitals.  We denote the fermionic creation and annihilation
operators by \(a_{p\sigma}^\dagger\) and \(a_{p\sigma}\), where \(p\) labels a
spatial orbital and \(\sigma \in \{\uparrow,\downarrow\}\) labels spin.  They
satisfy the canonical anticommutation relations
\( \{a_{p\sigma}, a_{q\tau}^\dagger\} \equiv a_{p\sigma}a_{q\tau}^\dagger + a_{q\tau}^\dagger a_{p\sigma}=\delta_{pq}\delta_{\sigma\tau}\) and
\(\{a_{p\sigma},a_{q\tau}\}=\{a_{p\sigma}^\dagger,a_{q\tau}^\dagger\}=0\).
At fixed electron number, this gives a finite-dimensional many-electron
Hilbert-space approximation to the continuum problem~\cite{szabo1996modern,Helgaker2000,chien2026simulatingfermions}. 
Concrete quantum circuits then require a representation of the fermionic algebra on qubits. Standard one-to-one encodings such as Jordan--Wigner, parity, and Bravyi--Kitaev transformations represent the same fermionic Fock space with different choices of where occupation and parity information are stored~\cite{bravyi2002fermionicquantumcomputation,seeleyBravyiKitaevTransformationQuantum2012,chien2026simulatingfermions}.
In encodings other than Jordan--Wigner, the occupation of a spatially localized orbital is stored across several qubits. This can spread locality and entanglement in qubit space even when the fermionic Hamiltonian has localized orbital structure. 
The Jordan--Wigner encoding has a high Pauli weight, but it keeps orbital occupation information directly in computational-basis qubit occupations, so diagonal Coulomb terms retain the simple and transparent form $n_p:=(1-Z_p)/2$. Despite its high Pauli weight, the regular string-like structure turns out to be quite favorable after circuit synthesis and hardware routing, especially for smaller systems and early FTQC algorithms~\cite{gao2026linearcomplexityfermionic}. As a result, we focus our study of distribution complexity to Jordan-Wigner transformation in this work. 
More sophisticated encodings, including ternary tree, superfast, compact, and succinct fermionic data-structure constructions~\cite{setiaBravyiKitaevSuperfastSimulation2018,jiang_optimal_2020,miller2023bonsai,miller2024treespilation,harrison2024sierpinski,derby2021compact,kirby2021compactencoding,kirby2022secondquantized,carolan2025succinct}, are expected to require encoding-specific algorithms to achieve comparable distribution complexity. 

The electronic Hamiltonian in second quantization with $N$ spatial orbitals is composed of a quadratic effective one-body term and the quartic electron-electron potential term:
\begin{align}
    H_\mathrm{ES} = 
    &\sum_{\sigma\in\{\uparrow,\downarrow\}}
    \sum_{p,q=1}^{N}
    T_{pq}\,
    a_{p\sigma}^\dagger a_{q\sigma}  \nonumber\\
    &+ \ 
    \frac{1}{2}
    \sum_{\alpha,\beta\in\{\uparrow,\downarrow\}}
    \sum_{p,q,r,s=1}^{N}
    V_{pqrs}\,
    a_{p\alpha}^\dagger a_{q\alpha}
    a_{r\beta}^\dagger a_{s\beta}
    \label{eq:electronic_H}
\end{align}
Here $V_{pqrs}=(pq|rs)$ are the spin-independent electronic repulsion integrals (ERI), and the effective one body coefficients 
\begin{equation}
    T_{pq}
    =
    h_{pq}
    -
    \frac{1}{2}\sum_{r=1}^{N} V_{prrq}
    \label{eq:T_eff}
\end{equation}
are given by each electron's kinetic energy, the nuclear-electron potentials, and an effective one-body shift from this ordering of the quartic term \cite{szabo1996modern,lee2021thc}.
The basis set error is fixed once a finite orbital space is chosen. The tensors $T$ and $V$ then  define the operators acting on the $N$-body space spanned by Slater determinants of the single-electron orbitals ~\cite{Helgaker2000,berry2019qubitization,lee2021thc}.
The circuit complexity and distribution problem is dominated by the two-body term; tensor factorizations act on the spatial-orbital indices of the ERI tensor $V$. 
A full fermionic encoding maps the occupation of each spin-orbital to a separate qubit, thus requiring $2N$ qubits for $N$ spatial orbitals ~\cite{bravyi2002fermionicquantumcomputation,seeleyBravyiKitaevTransformationQuantum2012,berry2019qubitization,lee2021thc,chien2026simulatingfermions}. In this work, we focus on restricted closed-shell structure, where the $\alpha$ and $\beta$ sectors share the same spatial orbital basis and every orbital rotation acts identically on both spin sectors\footnote{We could think of partitioning the spin sectors onto two QPUs, after which we would have to partition spatial orbitals within each sector. For a single balanced bipartition for each sector, this would result in four QPUs of size $N/2$ qubits each to simulate the evolution of $2N$ spin-orbitals}. Therefore, we consider the problem of spatial orbital partitions within each spin sector, and so after this point we drop the spin index entirely. We now introduce the \emph{double factorization} of the spatial orbital ERI tensor~\cite{whitten1973coulombic,aquilante2011cholesky,peng2017lowrank,motta2021lowrank}.

For molecular systems that obey time reversal symmetry, the spatial orbital functions can be chosen to be real~\cite{Helgaker2000}. The resulting ERI symmetries allow $V$ to be interpreted as a real symmetric pair-density matrix $W_{(pq),(rs)}=V_{pqrs}$. 
In this space, the quartic terms over the fermionic ladder operators can be written as a quadratic contraction of the pair-wise operators $a^\dagger_p a_q$:
\begin{equation}
    H_\mathrm{two-body} = \sum_{(pq)=1}^{N^2}
    \sum_{(rs)=1}^{N^2} a^\dagger_p a_q \ W_{(pq,rs)} \  a^\dagger_r a_s
\end{equation}
The first factorization takes the spectral form of this ERI matrix
\begin{equation}
    W
    =
    \sum_{\ell=1}^{L}
    \lambda_\ell \ 
    \lvert g^{(\ell)}\rangle\!\langle g^{(\ell)}\rvert,
    \label{eq:W_spectral}
\end{equation}
where $|g^{(\ell)}\rangle$ is the eigenvector on the $N^2$ dimensional space, which also defines a coefficient matrix of a quadratic fermionic Hamiltonian, denoted as $g^{(\ell)}_{p,q} = \langle pq | g^{(\ell)}\rangle$. Doing so allows us to write the two-body part of the electronic Hamiltonian
\begin{equation}
    H_{\mathrm{two-body}} = \frac{1}{2} \sum_{\ell =1}^{L} \lambda_\ell \left(\sum_{p,q} g^{(\ell)}_{pq} a^\dagger_p a_q \right)^2 
    \label{eq:fragments_quadratic}
\end{equation}
where $L$ is the rank of the outer factorization of the ERI matrix $W$. This means we can write the quartic two-body part of the electronic Hamiltonian as a linear combination of squared quadratic fermionic Hamiltonians. 
Each term in this outer factorized sum is a
\emph{tensor fragment} \cite{bellonzi2025qbgsee}, a squared one-body component of the total two-body operator,
which is not to be confused with a molecular fragment, active space, or embedding region of the system~\cite{roos1980complete,knizia2012density,kitaura1999fragment,FMO_review_2021}. For brevity, we drop the qualifier and refer to each term $\ell$ in \cref{eq:fragments_quadratic} as a ``fragment''~\footnote{The semantics vary across the literature, for e.g., ``fermionic fragments''\cite{oumarou2024compressed}, ``leafs'' \cite{oumarou2024compressed}, ``pairwise operators'' for $\hat{G}_\ell$\cite{motta2021lowrank}, ``Cholesky vectors'' for $|g^{(\ell)}\rangle$\cite{motta2019afqmc_lowrank}. Our exact fragment terminology is from \cite{bellonzi2025qbgsee}.}.

The coefficient matrix $g^{(\ell)}$ for each of the $L$-many fragments is guaranteed to be real and symmetric by the symmetries of the ERI tensor, and can be orthogonally diagonalized on the single-particle space via standard free fermion transformations. This yields the second factorization:
\begin{equation}
    g^{(\ell)}
    =
    R_\ell\
    \mathrm{diag}
    (\varepsilon_1^{(\ell)},\ldots,\varepsilon_N^{(\ell)}) \ 
    R_\ell^\dagger,
    \qquad
    R_\ell\in\mathrm{SO}(N),
    \label{eq:eigenmatrix}
\end{equation}
where we explicitly note that $g^{(\ell)}$ and the diagonalizing orbital rotation $R_\ell$ are all $N \times N$ matrices of coefficients. 
We apply distinguish between compact single-particle coefficient tensors and their associated Fock space operators applies via a circumflex. For Gaussian fermionic unitaries and their generators, this means $R_\ell$ for the single-particle rotation matrix, and the circumflex notation $\hat R_\ell$ for the induced Fock-space Gaussian fermionic unitary that implements it as a \(2^N \times 2^N\) transformation \cite{bravyi2005lagrangianflo,kivlichan2018fswap}. 
Putting them together, the double factored form of the two-body Hamiltonian is:
\begin{equation}
    H_{\mathrm{two-body}}
    =
    \frac{1}{2}
    \sum_{\ell=1}^{L}
    \lambda_\ell\,
    \hat R_\ell^\dagger
    \hat{D}_\ell^2
    \hat R_\ell,
    \qquad
    \hat{D}_\ell
    =
    \sum_m
    \varepsilon_m^{(\ell)} a_m^\dagger a_m .
    \label{eq:H_df}
\end{equation}
Double factorization rewrites the quartic two-body Hamiltonian as a linear combination of rotated diagonal fermionic fragments. Each fragment is simple in its own orbital basis. The nontrivial structure results from the distinct rotations $R_\ell\neq R_k$ between fragments, and how much compression is available. 
In many common bases, the coefficients of the ERIs are often linearly dependent. This allows methods such as density fitting, resolution-of-the-identity, and Cholesky constructions to represent the Coulomb interaction using significantly less space than all $O(N^4)$ integrals~\cite{whitten1973coulombic,dunlap1979approximations,vahtras1993integral,beebe1977simplifications,koch2003reduced,aquilante2011cholesky,folkestad2019efficient,aquilante2017innerprojection}. 

For localized Gaussian bases, spatial decay and integral screening make the ERI pair-density matrix highly compressible. In compound and nested low-rank decompositions, one first retains $L$ outer factors and then truncates each retained matrix to an inner rank $\rho_\ell$. Peng and Kowalski reported storage scaling consistent with $O(N^2\log(N))$ for carbon-hydrogen benchmark families~\cite{peng2017lowrank}, and Motta et al. used this behavior to motivate $L\sim O(N)$ and $\langle\rho_\ell\rangle\sim O(\log N)$ for double-factorized Hamiltonian truncations~\cite{motta2021lowrank}. Later on, auxiliary-field quantum Monte Carlo (AFQMC) work argued that $\langle\rho_\ell\rangle$ may even saturate in large localized systems, up to possible logarithmic factors~\cite{motta2019afqmc_lowrank}, although all of these appear to be empirical and representation-dependent.

A standard route for simulating electronic structure Hamiltonians is to
approximate the time evolution via  Trotterization, a product formula
that decomposes the dynamics into implementable one- and two-body
layers~\cite{kivlichan2018fswap,motta2021lowrank,childs2021theory,Campbell_2021}.  For a total simulated
time \(t\), the step size \(\tau=t/n_\mathrm{steps}\) is chosen from the product-formula
order, here first order, and from the allowed product-formula error
\(\epsilon_{\mathrm{PF}}\)~\cite{childs2021theory,mehendale2023estimating}.  The relevant tolerance is
application dependent: real-time dynamics, phase estimation, and
ground-state energy estimation impose different accuracy requirements on
the simulated time evolution. An alternative to Trotterization for quantum simulation is the so-called qubitization or quantum signal processing framework~\cite{low2019hamiltonian,babbush2018encoding}. Qubitization typically treats the Hamiltonian as a \textit{global} operator and washes out its locality. As we will show, distributing a nonlocal operator without additional structure typically incurs a cost that is \textit{exponential} in the number of qubits/modes, and therefore it is important to preserve its local information. We anticipate that techniques such  as the fragment molecular orbital (FMO) method might be able to circumvent some of these limitations \cite{FMO_review_2021}. Moreover, while qubitization and signal processing based methods enjoy asymptotically optimal quantum simulation costs, for many practical applications to condensed matter physics and chemistry, Trotterization still retains better scaling in practice ~\cite{babbush2018encoding,Campbell_2021}. For these reasons, our primary focus in this work is on Trotterization based approaches to quantum simulation. 

Here we analyze the distribution cost of a
single double-factorized (DF) Trotter step, the primitive repeated by the full
simulation circuit.  This layerwise perspective is especially natural for
early fault-tolerant and modular architectures, where nonlocal operations
can be a resource bottleneck~\cite{mohseni2026buildquantumsupercomputerscaling,chung2026partiallyfaulttolerantquantumcomputation}. Denoting the quadratic part sum of \cref{eq:electronic_H} as $\hat{T}_\mathrm{eff}=\sum_{p,q} T_{pq} a^\dagger_p a_q$ and noting that its evolution is also Gaussian, 
the Trotter circuit
\begin{equation}
    \hat{U}_{\mathrm{Trotter}}(\tau)
    =
    e^{-i\tau \, \hat{T}_{\mathrm{eff}}}
    \prod_{\ell=1}^{L}
    \hat R_\ell^\dagger
    \exp\left(-\frac{i\tau\lambda_\ell}{2} \hat{D}_\ell^2\right)
    \hat R_\ell .
    \label{eq:trotter}
\end{equation}
alternates Gaussian orbital rotations with diagonal Coulomb interactions in the fragment eigenbases, illustrated in \cref{fig:trotter_circuit}. The composition of adjacent Gaussian rotations is another Gaussian rotation.
Thus a DF Trotter step contains two boundary Gaussian layers $\hat U_0$ and $\hat U_L$ and \(L-1\)
inter-fragment Gaussian layers $\hat U_\ell$,
\begin{equation}
    \hat U_0
    =
    e^{-i\tau \hat T_{\mathrm{eff}}}\hat R_1,
    \qquad
    \hat U_\ell
    =
    \hat R_\ell^\dagger \hat R_{\ell+1},
    \qquad
    \hat U_L
    =
    \hat R_L^\dagger .
    \label{eq:telescoped}
\end{equation}
If neighboring fragments share an eigenbasis, the corresponding inter-fragment rotation cancels. The distribution complexity of the Gaussian part can be thought of as a direct measure of basis incompatibility between fragments.

Rather than assigning a distribution cost after compiling the full circuit
into elementary two-qubit gates, we estimate the cost of the structured
Gaussian basis changes and diagonal interaction layers produced by the
electronic structure factorization itself.  Using the submultiplicativity of
QPD overheads~\cite{piveteau2024knitting}, the cost of independently cutting
the layers of one Trotter step obeys

\begin{equation}
\begin{split}
    \gamma^{(\mathrm{Trotter})}_\mathrm{LOCC}
    &\leq \prod_{\ell=0}^{L} \gamma_\mathrm{LOCC}(\hat U_\ell) 
    % \\
    % &\quad 
    \cdot \prod_{\ell=1}^{L} 
    \gamma_\mathrm{LOCC}\!\left( e^{-i\tau\lambda_\ell \hat D_\ell^2/2} \right) .
\end{split}
\label{eq:gamma_layers}
\end{equation}

An aspect of classical simulability that emerges in Trotter circuits for double-factorized electronic structure is in their simple alternating form with Gaussian unitaries interleaved with diagonal unitaries. Both classes of unitaries are independently classically simulable: Gaussian unitaries in the covariance matrix form and diagonal unitaries in a low-bond dimension matrix product operator (MPO) form and/or in a compressed diagonal form. However, their interleaved form is nontrivial and typically generates universal quantum circuits and therefore cannot be classically efficiently simulated. Together, Gaussian unitaries and CPhase gates are known to be universal for quantum chemistry simulation and in this sense our distribution complexity formalism is universal \cite{leimkuhler2026exponentialquantumspeedupsnearterm}. In the following two sections we discuss the distribution complexity of the Gaussian and Coulomb layers of the Trotter circuit, respectively.

%=============================================================================
\subsection{Distribution Complexity of Orbital Rotations}
\label{sec:gaussian_overhead}
%=============================================================================

The Choi bound~\cref{eq:choi_bound} gives a channel-level lower bound on the QPD extent from the Schmidt coefficients of the Choi state of a unitary. For a generic $N$-orbital Fock space unitary, even evaluating this bound requires access to an object of dimension $2^N$. However, the Gaussian orbital rotations that appear in double-factorized Trotter circuits are highly structured, and we show that it is possible to compute the relevant bound from the compressed covariance matrix representation rather than by the full many-body unitary~\cite{bravyi2005lagrangianflo,peschel2003rdm,peschel2009reduced}.

We proceed by computing the Choi state directly by introducing one ancilla mode for each physical mode and invoking the Choi-Jamiolkowski isomorphism. We begin by denoting the maximally entangled state in Fock space, in which the physical and ancilla modes are maximally entangled:
\begin{equation}
    \lvert \Omega \rangle
    =
    2^{-N/2}
    \prod_{j=1}^{N}
    \bigl(1+a_j^{\dagger}a_j^{\prime\dagger}\bigr)
    \lvert 0 \rangle 
    \label{eq:omega_bell}
\end{equation}
and observe that this is by definition a Gaussian fermionic state. The Choi state is defined as $\lvert\Phi_R\rangle=(\hat R\otimes I)\lvert\Omega\rangle$ and effectively measures the non-locality of the channel by the entanglement generated across the physical and ancilla sectors---one can understand this as the amount of entanglement between the computational basis states and their images under the action of the unitary~\cite{zanardi_entanglement_2001,piveteau2024knitting,aram_optimal_2025}. Our insight is that since $\lvert\Phi_R\rangle$ is Gaussian, its entanglement spectrum can be extracted from its induced covariance matrix, giving the efficiently calculable formula for the Choi bound $\gamma_\mathrm{Choi}(\hat{R})$ below.

\begin{theorem}[\textbf{Gaussian Choi bound}]
\label{thm:gaussian_choi_bound}
Let $\hat R$ be a particle-preserving fermionic Gaussian unitary induced by an orbital rotation matrix $R\in\mathrm{SO}(N)$, and let $A|B$ be a bipartition of the modes with $|A|=N_A$. If $\sigma_1,\ldots,\sigma_{N_A}$ are the singular values of the submatrix $R[A,A]$\footnote{Where $R[A,A]$ is defined as the $N_A \times N_A$ submatrix whose elements are defined as $R[A,A]_{ij} = R_{A[i],A[j]}$.}, then the Choi lower bound on the LOCC-assisted distribution complexity of $\hat R$ is
\begin{equation}
    \gamma_{\mathrm{Choi}}(\hat R;A)
    =
    2
    \prod_{k=1}^{N_A}
    \bigl(1+\sqrt{1-\sigma_k^2}\bigr)^2
    -
    1 .
    \label{eq:gamma_choi}
\end{equation}
Therefore, the Choi lower bound on the distribution cost of any fermionic Gaussian unitary can be computed from the spectrum of an $N_A\times N_A$ singular-value decomposition in $O(N^3)$ time.
\end{theorem}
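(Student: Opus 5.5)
The plan is to reduce the operator Schmidt decomposition of $\hat R$ across $A|B$ to the ordinary Schmidt decomposition of its Choi state $\lvert\Phi_R\rangle=(\hat R\otimes I)\lvert\Omega\rangle$ across the enlarged bipartition $(A\cup A')\,|\,(B\cup B')$, where $A'$ and $B'$ are the ancilla copies. This is the standard Choi--Jamiolkowski correspondence. Because $\lvert\Omega\rangle$ is normalized, the resulting coefficients automatically satisfy $\sum_k\alpha_k^2=1$, which is the normalization required in \cref{eq:choi_bound}. Since $\lvert\Phi_R\rangle$ is Gaussian, its Schmidt spectrum is fixed by the restricted one-body correlation matrix, as in Peschel's method. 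The whole argument therefore reduces to computing that matrix and reading off its eigenvalues.

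\textbf{Step 1: rewrite the Choi state as a Slater determinant.} I will perform a particle--hole transformation $b_j:=a_j^{\prime\dagger}$ on the ancilla modes only. This map is local to each side of the cut, so it does not change the Schmidt coefficients. Under it, each factor $(1+a_j^\dagger a_j^{\prime\dagger})$ acting on the vacuum becomes, up to signs, a single occupied orbital $(a_j^\dagger+b_j^\dagger)/\sqrt2$. Hence $\lvert\Omega\rangle$ is an $N$-particle Slater determinant in $2N$ modes, whose occupied subspace is spanned by the columns of $\tfrac{1}{\sqrt2}\binom{I}{I}$. Applying $\hat R$ turns this into the column span of $\tfrac{1}{\sqrt2}\binom{R}{I}$. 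Using $R R^{T}=I$, the correlation matrix (the projector onto the occupied subspace) is
\[
C=\tfrac12\begin{pmatrix} I & R\\ R^{T} & I\end{pmatrix}.
\]
Restricting $C$ to the modes $A\cup A'$ gives $C_{AA'}=\tfrac12\begin{pmatrix} I & R[A,A]\\ R[A,A]^{T} & I\end{pmatrix}$. Writing an SVD of $R[A,A]$ block-diagonalizes this matrix, so its $2N_A$ eigenvalues are $\nu_k^{\pm}=(1\pm\sigma_k)/2$.

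\textbf{Step 2: convert the spectrum into the Choi bound.} For a Slater determinant, the Schmidt coefficients across the cut are products over the eigenmodes of $C_{AA'}$: each mode contributes a factor $\sqrt{\nu}$ or $\sqrt{1-\nu}$. Consequently the sum of the Schmidt coefficients factorizes,
\[
\sum_k\alpha_k=\prod_{\nu}\bigl(\sqrt{\nu}+\sqrt{1-\nu}\bigr).
\]
For $\nu=(1\pm\sigma_k)/2$ one has $\bigl(\sqrt{\nu}+\sqrt{1-\nu}\bigr)^2=1+2\sqrt{\nu(1-\nu)}=1+\sqrt{1-\sigma_k^2}$. Each $k$ contributes two such modes, so $(\sum\alpha_k)^2=\prod_k(1+\sqrt{1-\sigma_k^2})^2$. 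Substituting this into \cref{eq:choi_bound} gives \cref{eq:gamma_choi}. Forming $R[A,A]$ and computing its SVD costs $O(N_A^3)\subseteq O(N^3)$, which gives the complexity claim.

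\textbf{Main obstacle.} I expect the delicate step to be justifying that the fermionic Schmidt decomposition, computed in mode space, coincides with the qubit operator Schmidt decomposition under Jordan--Wigner that defines $\gamma_{\mathrm{Choi}}$. My first approach is to order the modes so that $A\cup A'$ forms a contiguous Jordan--Wigner block. The Choi state is parity-even, so the Jordan--Wigner strings crossing the cut act as parity-dependent local signs and can be absorbed into local unitaries on each side. Together with the locality of the particle--hole map in Step 1, this would show that the coefficients $\alpha_k$ are unchanged. For a non-contiguous $A$, I would argue that the permutation bringing $A$ into a contiguous block is itself a Gaussian unitary that is local with respect to $A|B$. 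Checking this carefully, and tracking the sign conventions in $\lvert\Omega\rangle$, is where the real work lies.
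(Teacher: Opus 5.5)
\textbf{Steps 1 and 2.} These are correct and already prove the theorem in the sense the paper intends, where locality is defined on fermionic modes. The paper follows the same strategy in Majorana language:
\begin{itemize}
\item it writes the covariance matrix of $\lvert\Omega\rangle$ as a direct sum of $4\times4$ blocks (\cref{app:covariance});
\item it asserts, without derivation, that each $\sigma_k$ appears twice in the restricted covariance spectrum of the Choi state (\cref{app:factorization_schmidt});
\item it derives that doubling only in the second proof, via the super-Majorana covariance $\bigl(\begin{smallmatrix}0&O\\-O^{\dagger}&0\end{smallmatrix}\bigr)$ with $O_R=R\oplus R$ (\cref{app:operator_entanglement_Gaussian_covariance}).
\end{itemize}
Your particle--hole map turns the Choi state into a Slater determinant. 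The doubling then drops out of the block SVD of $C_{AA'}$ as the pair $(1\pm\sigma_k)/2$, i.e.\ $\nu=\lvert1-2\mu\rvert=\sigma_k$ twice. This is more elementary and supplies the step \cref{app:factorization_schmidt} leaves unproved. In exchange, the Majorana route extends through $O[A,A]$ to Gaussians that do not conserve particle number, where your particle--hole trick no longer gives a Slater determinant.

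\textbf{Non-contiguous $A$.} Your plan in the last paragraph would fail, and no argument can rescue it, because the qubit-level statement is false there. Two readings of your permutation step:
\begin{itemize}
\item As a change of Jordan--Wigner order, it acts on qubits as conjugation by $\prod CZ_{ij}$ over every pair whose relative order flips. Making a non-contiguous $A$ contiguous necessarily flips some pair with $i\in A$, $j\in B$.
\item As a fermionic swap network, it sends some $A$-mode to a $B$-label, so its own $[A,A]$ block is singular.
\end{itemize}
In neither reading is it local across the cut.

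\textbf{Counterexample.} Take $N=3$, order $(1,2,3)$, $A=\{1,3\}$, and $\hat R$ a Givens rotation on modes $1,3$. Then $R[A,A]$ is orthogonal, so \cref{eq:gamma_choi} gives $1$. But $a_1^\dagger a_3=\sigma_1^+Z_2\sigma_3^-$, so the qubit unitary is $\lvert0\rangle\langle0\rvert_2\otimes G_{13}(\theta)+\lvert1\rangle\langle1\rvert_2\otimes G_{13}(-\theta)$. Its operator Schmidt coefficients are $\sqrt{(1+\cos^2\theta)/2}$ and $\lvert\sin\theta\rvert/\sqrt2$, so the qubit Choi value is $3$ at $\theta=\pi/2$.

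\textbf{Fix.} State, rather than prove, that the bound refers to mode-space locality, or equivalently to a Jordan--Wigner order adapted to the cut. The paper makes this choice implicitly by never leaving the fermionic covariance formalism. Your contiguous-case argument is sound. The crossing strings become conjugations of $U=\mathrm{JW}(\hat R)$ by the mode parity $P_B$, controlled on the parity of $A'$. Total-parity conservation gives $P_BUP_B=P_AUP_A$, which makes that control local to $AA'$.
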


This replaces the exponential $2^N$-dimensional Choi-state calculation required for a generic unitary with a polynomial-time computation on the single-particle Gaussian data. In~\cref{app:operator_entanglement_Gaussian_covariance}, we discuss an independent proof of the theorem without using Choi-Jamiolkowski isomorphism via the (operator-space) covariance matrix of Gaussian unitaries. This relies on the fact that the operator entanglement of a unitary can be obtained via the SVD of the realigned matrix. For Gaussian unitaries, one can obtain the realigned covariance matrix from the \(2N \times 2N\) matrix directly.

The covariance construction and the factorization of the Schmidt coefficients are given in \cref{app:covariance,app:factorization_schmidt}. The singular values have a direct interpretation: $\sigma_k=1$ is a mode that remains local to the partition and contributes no distribution cost, while smaller $\sigma_k$ indicates stronger mixing across the cut.

For a single cross-partition mode, $\sigma=\cos\theta$ and \cref{eq:gamma_choi} gives $\gamma=2(1+|\sin\theta|)^2-1$, matching the optimal two-qubit result from the KAK decomposition formula~\cite{tucci2005introductioncartanskakdecomposition,piveteau2024knitting} (See \cref{app:kak_baseline} for more details). For several modes, \cref{eq:gamma_choi} should be read as a lower bound, not as an implementable protocol by itself. Optimally achieving this bound requires that the orthogonal Schmidt operators in the decomposition are themselves product unitaries~\cite{aram_optimal_2025}, while for arbitrary multi-mode Gaussians this is not generically the case.

%-----------------------------------------------------------------------------
\subsubsection{Achievable Protocols and Hierarchy}
\label{sec:protocols}
%-----------------------------------------------------------------------------
The Choi bound in \Cref{thm:gaussian_choi_bound} gives the channel-level scale set by an orbital rotation across a partition. A circuit partitioning implementation still requires an explicit local decomposition, and we show here that some procedures take advantage of the Gaussian structure much better than others.

A standard approach compiles the orbital rotation $R \in \mathrm{SO}(N)$ into 
$N(N{-}1)/2$ two-qubit Givens rotations~\cite{hoffman1972generalized,kivlichan2018fswap}, then cuts
each cross-partition gate independently~\cite{piveteau2024knitting,jones2025distributed}. For a balanced
partition ($N_A = N/2$), $O(N^2)$ cross-partition Givens must be cut,
giving total overhead
\begin{equation}
    \gamma_{\mathrm{gbg}}
    = \prod_{j=1}^{O(N_A^2)} \gamma(G_j)
    \sim \exp\left({c\,N^2}\right)\,.
    \label{eq:gbg}
\end{equation}
This is super-exponential in $N$, and is the gate-by-gate overhead reported in prior distributed electronic structure
studies. This approach reaches impractical values for even shallow fermionic ans\"atze on
small molecular instances~\cite{jones2025distributed}.

We can do better than this and nearly match the Choi lower bound by instead using the cosine-sine (CS) decomposition for the orthogonal rotation matrix~\cite{paige1994history,golub2013matrix} with respect to a partition of the modes. This representation factors out rotations that act locally within subspaces $A$ or $B$, leaving a minimal set of cross-partition two-qubit gates.

\begin{figure}[!thbp]
    \centering
    \includegraphics[width=\columnwidth]{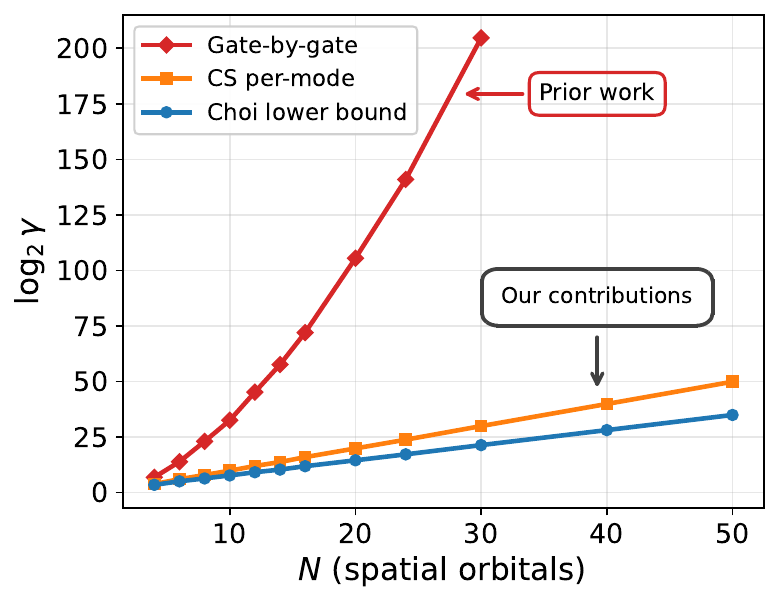}
    \caption{\textbf{Complexity of protocols for distributing random orbital rotations.} Scaling of the log of the distribution complexity with the number of orbitals, \(N\). For each $N$, we plot the median $\log_2(\gamma)$ over an ensemble of Haar random orbital rotations for the three protocols. We note that the baseline gate-by-gate compilation (red) from previous studies~\cite{jones2025distributed,piveteau2024knitting,optimal_joint_two} scales super-exponentially as $O(\exp{(N^2)})$, while our achievable protocol based on the cosine-sine decomposition (orange) almost matches the Choi lower bound (blue), both of which scale as $O(\exp{(N)})$ (with different exponent constants). See \cref{thm:gaussian_cs_protocol} for more details.}
  \label{fig:haar_operator}
\end{figure}

\begin{theorem}[\textbf{Entanglement-aware compilation of Gaussian unitaries}]
\label{thm:gaussian_cs_protocol}
Let $\hat R$ be a particle-preserving fermionic Gaussian unitary induced by $R\in\mathrm{SO}(N)$, and let $A|B$ be a mode partition with $|A|=N_A\leq N_B$. Let $\sigma_k(R[A,A])=\cos\theta_k$ for $k=1,\ldots,N_A$, with $0\leq\theta_k\leq \pi/2$. 

Then $\hat R$ admits an LOCC-assisted cosine-sine protocol across $A|B$ with distribution cost
\begin{equation}
    \gamma_{\mathrm{LOCC}}(\hat R;A)
    \leq
    \gamma_{\mathrm{CS}}(\hat R;A)
    =
    \prod_{k=1}^{N_A}
    \bigl[
        2(1+|\sin\theta_k|)^2 - 1
    \bigr] .
    \label{eq:gamma_cs}
\end{equation}
For a balanced partition, the gate-by-gate compilation decomposes into $O(N^2)$ cross-partition Givens rotations, while the CS protocol leaves only $N/2$. 
Therefore, distributing each independently yields a total distribution cost $\log(\gamma_\mathrm{CS}) = O(N)$.
\end{theorem}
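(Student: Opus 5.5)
We would prove \cref{thm:gaussian_cs_protocol} by constructing the protocol explicitly from the cosine-sine (CS) decomposition of $R$ with respect to the block structure $A|B$. After permuting modes so that $A$ comes first, the CS decomposition gives
\begin{equation}
    R = \begin{pmatrix} U_A & 0 \\ 0 & U_B \end{pmatrix}
    \begin{pmatrix} C & -S & 0 \\ S & C & 0 \\ 0 & 0 & I_{N_B-N_A} \end{pmatrix}
    \begin{pmatrix} V_A & 0 \\ 0 & V_B \end{pmatrix}^{T},
\end{equation}
with $U_A,V_A\in\mathrm{O}(N_A)$, $U_B,V_B\in\mathrm{O}(N_B)$, $C=\mathrm{diag}(\cos\theta_k)$ and $S=\mathrm{diag}(\sin\theta_k)$. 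The $\cos\theta_k$ are exactly the singular values of $R[A,A]=U_A C V_A^T$, which identifies the angles in the statement. Determinant signs can be absorbed into the local factors, for example by flipping the sign of one orbital, which is a local $Z$-type phase on the Fock space.

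Because $R\mapsto\hat R$ is a group homomorphism for particle-preserving Gaussians, $\hat R=\hat U_{AB}\,\hat K\,\hat V_{AB}^\dagger$. The outer factors are block-diagonal rotations, so they induce Gaussian unitaries that act on the $A$ modes and the $B$ modes separately. The middle factor $K$ is a direct sum of $N_A$ commuting $2\times2$ Givens rotations, each coupling mode $a_k\in A$ to a partner mode $b_k\in B$ with angle $\theta_k$. Hence $\hat K=\prod_k \hat G_{a_k b_k}(\theta_k)$, and these factors commute because they act on disjoint mode pairs.

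The main obstacle is locality under Jordan--Wigner. The Fock-space image of a block-diagonal rotation is local only up to parity strings, and the Givens gate between non-adjacent modes $a_k,b_k$ carries a $Z$-string across the cut. We would handle this in two ways.
\begin{itemize}
\item Order the modes so that all of $A$ precedes all of $B$. Then $\hat U_A\otimes\hat U_B$ is a genuine product: a particle-preserving quadratic unitary on $B$ commutes with the parity of $A$, so no cross strings arise.
\item Write each cross Givens $e^{\theta(a_k^\dagger b_k-b_k^\dagger a_k)}$ in Jordan--Wigner form. It becomes a two-qubit $XY$-type rotation on qubits $a_k,b_k$, dressed by the $Z$-string on the intermediate qubits. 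Those intermediate $Z$'s are diagonal, and the whole term is equivalent to $(X_aX_b+Y_aY_b)\otimes Z_{\mathrm{string}}$.
\end{itemize}
We can then either conjugate with local fermionic SWAP networks within $A$ and within $B$, which are local unitaries and free, to make each $a_k$ and $b_k$ adjacent across the cut in turn. Alternatively, we observe directly that the exponentiated string operator has the same KAK-type operator Schmidt structure as its two-qubit core.

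With the cross gates made local-equivalent to two-qubit gates, we apply the known optimal two-qubit knitting result cited in the discussion after \cref{thm:gaussian_choi_bound} (\cref{app:kak_baseline}). For a Givens angle $\theta_k$ this gives $\gamma=2(1+|\sin\theta_k|)^2-1$. Submultiplicativity of QPD extents under composition, the same property used in \cref{eq:gamma_layers}, then yields $\gamma_{\mathrm{LOCC}}(\hat R;A)\le\prod_k[2(1+|\sin\theta_k|)^2-1]$. The local factors have extent $1$.

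Finally, the counting claims follow directly.
\begin{itemize}
\item A Givens network on $N$ modes has $N(N-1)/2$ gates, of which $\Theta(N^2)$ straddle a balanced cut. In contrast, $\hat K$ contains only $N_A=N/2$ cross gates.
\item Each cross factor is at most $7$, so $\log\gamma_{\mathrm{CS}}\le (N/2)\log 7=O(N)$.
\item Computing the CS decomposition costs $O(N^3)$.
\end{itemize}
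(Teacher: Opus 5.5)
Your proposal is correct and follows the paper's argument. You CS-decompose $R$ across $A|B$, treat the block-diagonal outer factors as free local Gaussian unitaries, cut each of the $N_A$ middle Givens rotations with the two-qubit KAK extent $2(1+|\sin\theta_k|)^2-1$, and multiply via submultiplicativity, with each factor at most $7$ giving $\log\gamma_{\mathrm{CS}}=O(N)$. Your extra handling of Jordan--Wigner strings (ordering $A$ before $B$, then conjugating each cross Givens by fermionic-SWAP or CZ-type unitaries local to $A$ and to $B$) is sound, and it fills in a locality step the paper only asserts when it calls the central block a product of independent two-qubit gates.
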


The CS decomposition of an orthogonal matrix $R$ with respect to a partition $A|B$ says that there always exist $N_A\times N_A$ orthogonal matrices $V_1$ and $W_1$, and $N_B \times N_B$ orthogonal matrices $V_2$ and $W_2$, such that,
\begin{equation}
    R =
      \begin{pmatrix} V_1 & 0 \\ 0 & V_2 \end{pmatrix}
      \begin{pmatrix} C & -S \\ S & C' \end{pmatrix}
      \begin{pmatrix} W_1^\dagger & 0 \\ 0 & W_2^\dagger \end{pmatrix}\,,
    \label{eq:cs}
\end{equation}
where $V_1,W_1\in \mathrm{SO}(N_A)$ and $V_2,W_2\in \mathrm{SO}(N_B)$ act within the two processors and therefore require no interprocessor resource. The central block is a product of $N_A$ independent two-qubit gates with $C=\operatorname{diag}(\cos\theta_k)$ and $S=\operatorname{diag}(\sin\theta_k)$. For LOCC-assisted distribution, we can cut the remaining two-qubit gates with minimal extent $2(1+|\sin\theta_k|)^2-1$ using the KAK formula (see \cref{app:kak_baseline} for more details). Composing the cuts gives the following product~\cite{piveteau2024knitting}:
\begin{equation}
    \gamma_{\mathrm{CS}}
    = \prod_{k=1}^{N_A}
      \bigl[2(1 + |\sin\theta_k|)^2 - 1\bigr]
    \sim \exp\left({c'\,N}\right)\,.
    \label{eq:gamma_cs_single_exp}
\end{equation}
For quantum interconnects, the same CS circuit has only $N_A=O(N)$ nonlocal two-mode gates, each of which can be supplied with constant-size entanglement assistance. The interconnect count then scales with the number of CS angles rather than with the number of compiled Givens rotations~\cite{Barral2025DistributedQC,Wu2023DistributedEntanglement}. This yields a quadratic speedup for quantum distribution protocols of Gaussian fermionic unitaries.

Overall, we arrive at a distribution complexity hierarchy:
\begin{equation}
    \gamma_{\mathrm{Choi}} \leq \gamma_\mathrm{LOCC} \leq
    \gamma_{\mathrm{CS}} \leq \gamma_{\mathrm{gbg}}.
    \label{eq:hierarchy}
\end{equation}
In this hierarchy, $\gamma_\mathrm{LOCC}$ is the cost of the most efficient possible distribution protocol, while we show that  $\gamma_\mathrm{Choi}$ is an efficiently computable theoretical lower bound, and $\gamma_\mathrm{CS}$ is an efficiently computable achievable protocol cost. We now turn our attention towards the scaling of these bounds for typical Gaussian fermionic unitaries. 

%-----------------------------------------------------------------------------
\subsubsection{Random Rotations}
\label{sec:haar}
%-----------------------------------------------------------------------------

Before analyzing the fragment rotations of specific molecules, which
carry structure inherited from the molecular geometry and the DF
decomposition, we establish a baseline by studying Haar-random
rotations drawn uniformly from $\mathrm{SO}(N)$.  Such rotations have no
partition-aligned structure, no low-rank fragments, and no preferred
eigenbasis; they provide an unstructured baseline for distributing a
single Gaussian layer~\cite{wachter1980limiting}.

For Haar-random $R \in SO(N)$ with a balanced partition ($m = N/2$), we sample 100
rotations at each~$N$ and compute the mean overhead under each
protocol, see \cref{fig:haar_operator}.  
We recall that the exponent scaling coefficient is
approximately the number of Bell pairs per mode for quantum interconnect protocols, and observe the scaling:
\begin{alignat}{2}
    \log_2\gamma_{\mathrm{gbg}}
      &\approx 0.26\,N^2
      &\qquad&\text{(super-exponential)}\,,
      \nonumber \\
    \log_2\gamma_{\mathrm{CS}}
      &\approx 0.96\,N
      &\qquad&\text{(exponential)}\,,
      \label{eq:scaling_fits} \\
    \log_2\gamma_{\mathrm{Choi}}
      &\approx 0.70\,N
      &\qquad&\text{(lower bound)}\,.
      \nonumber
\end{alignat}
Randomly sampled orbital rotations are exponentially costly to distribute using LOCC protocols, but the cosine-sine decomposition achieves a quadratic speedup in the distribution exponent over naive gate-by-gate compilation schemes. 
The best achievable protocol for arbitrary orbital rotations, whatever it is, has a cost $\gamma_\mathrm{LOCC}$ that is bounded between $\gamma_\mathrm{Choi} \leq \gamma_\mathrm{LOCC} \leq \gamma_\mathrm{CS}$. We therefore provide an efficiently computable minimum possible lower bound, and a nearly-matching achievable upper bound. To the best of our knowledge, such protocols have previously been limited to Clifford unitaries \cite{piveteau2024knitting}.

%-----------------------------------------------------------------------------
\subsection{Distribution Complexity of Coulomb Interactions}
\label{sec:coulomb}
%-----------------------------------------------------------------------------

The diagonal layers of the Trotter circuit are all-to-all connected interactions capturing Coulomb interactions and non-Gaussianity. However, the DF diagonal layers are not actually \textit{generic} diagonal unitaries. Instead, each fragment is the square of a one-body operator, and the cross-boundary phases are locked to a rank-one form $\vartheta_{ij}^{(\ell)} = \tau\lambda_\ell\varepsilon_i^{(\ell)}\varepsilon_j^{(\ell)}$. Moreover, for real molecules, the \(\varepsilon_i^{(\ell)}\)s (in a fixed basis such as STO-3G) are disordered. Together, this rank-one structure and disorder in the "single-particle spectra" considerably limits the operator entanglement growth of these diagonal layers.

\begin{figure*}[!ht]
    \centering
    \includegraphics[width=\textwidth]{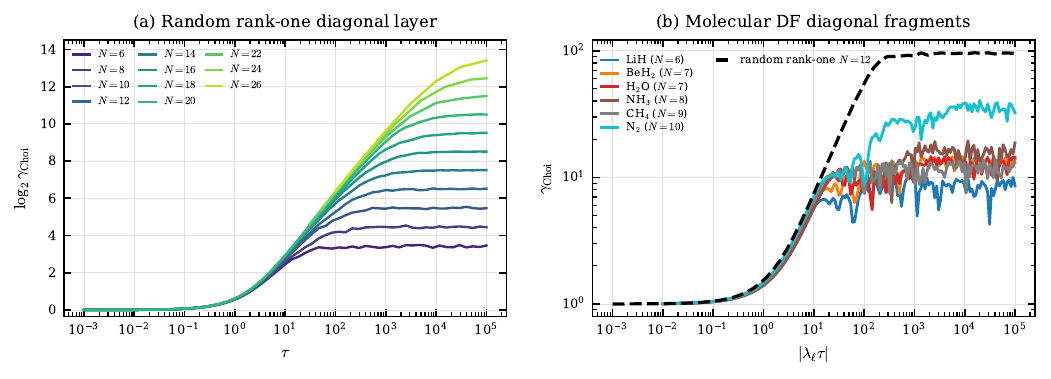}
  \caption{\textbf{Distribution complexity of the diagonal Coulomb layers appearing in Trotter circuits for double-factorized electronic structure simulation.} 
  The evolution in the eigenbasis of each fragment is generated by squares of diagonal one-body operators $\exp(i\tau \lambda_\ell \, \hat{D}_\ell^2)$, where $\hat{D} = \sum_m \varepsilon^{(\ell)}_m a^\dagger_m a_m $. We demonstrate that this structure limits the growth of operator entanglement and therefore makes these layers easy to distribute. \textbf{(a)} We plot the distribution complexity of random rank one diagonal simulation layers $\exp(i\tau \hat{D}^2)$, where each coefficient is randomly chosen $\varepsilon_m \sim \mathcal{N}(0, 1/N)$, corresponding to Gaussian random single-particle energies. We observe a universal scaling behavior with simulation time: with early time growth as $\log(\gamma_\mathrm{Choi}) \sim O(\tau)$, intermediate time growth as $\log(\gamma_\mathrm{Choi}) \sim O(\log(\tau))$ and saturation to an $N$-dependent value, which is not present without the squared one-body operator structure (see \cref{app:random_cphase_saturation} for more details). The slow growth at early- and intermediate-times translates to polynomial distribution complexity of diagonal layers. \textbf{(b)} We illustrate that the random rank-one behavior upper bounds the distribution complexity for diagonal layers in real molecular systems. We compute the DF decomposition of the ERI tensor for small molecules, evaluated in the STO-3G orbital basis, and plot the distribution complexity of their dominant fragments (largest outer eigenvalue $\lambda_\ell$). We then compare this with the distribution complexity of the corresponding random rank one diagonal model and show that the distribution complexity is always bounded by the random rank-one model. We discuss more numerical results of small molecules in \cref{app:small_molecules}. 
}
\label{fig:diagonal_correlation_comparison}
\end{figure*}

To illustrate how this structure constrains entanglement growth, we analyze rank-one diagonal models with Gaussian disorder. Observing the normalization condition from DF eigenmatrices $\sum\varepsilon_i^2=1$, we can parameterize diagonal fragments in terms of their vector of single-particle eigenvalues on the sphere in $N$ dimensions. If an arbitrary diagonal interaction Hamiltonian has
\begin{equation}
    H_\Theta
    =
    \sum_{i<j}
    \Theta_{ij}
    \ \hat{n}_i \ 
   \hat{n}_j ,
    \qquad
    \hat{U}_\Theta(\tau)
    =
    e^{-i\lambda \tau H_\Theta},
    \label{eq:random_diagonal_model}
\end{equation}
then random rank one (RRO) diagonal fragment models of the form we see in DF are given by
\begin{equation}
    H_\mathrm{RRO}(\vec{\varepsilon} \, ) = \left( \sum_{i=1}^N \varepsilon_i \ \hat{n}_i \right)^2,  \qquad \vec\varepsilon \in S^{N-1}
\end{equation}
where $S^{N-1}$ is the hypersphere in $N$ dimensions. For a single diagonal layer, $\lambda$ only rescales the dimensionless evolution strength $\lambda\tau$, but we keep it to explicitly match the DF fragment parameterization in \cref{eq:trotter}.
We observe a universal scaling behavior in the operator entanglement of Coulomb layers by turning to RRO diagonal models, and specifically the case where $\varepsilon$ is distributed uniformly on $S^{N-1}$. This is equivalent to sampling each $\varepsilon_i \sim \mathcal{N}(0,1)$ and then renormalizing the sampled energies $\|\vec\varepsilon\, \|_2$. 

In \cref{fig:diagonal_correlation_comparison}(a) we plot the Choi distribution complexity lower bound as a function of time step $\tau$ for increasing values of $N$ for RRO models as well as for fragments in a handful of small molecules with the STO-3G basis set. In both cases, the entanglement is governed by a universal scaling $\log(\gamma_\mathrm{Choi}) \sim \tau$ for small times (\(\tau \ll 1\)) and \(\log(\gamma_\mathrm{Choi}) \sim \log(\tau)\) for intermediate times. In contrast, the uncorrelated models (without any constraints on the rank-structure) grow in a system-size dependent and much faster rate, as discussed in \cref{app:random_cphase_saturation}.

For quantum simulation, this implies that distributing diagonal Trotter circuits of electronic structure Hamiltonians is in fact efficient, while the majority of the complexity originates from the orbital rotations. This universal slow-growth is consistent with the behavior of small molecules as well. In \cref{fig:diagonal_correlation_comparison}(b), we compare the exact Choi distribution complexity of the largest full-rank diagonal fragment for a few small molecules against a bounding envelope of an arbitrary random one-body operator, as a function of the normalized parameter $\lambda_\ell \tau$. The real molecular fragments are bounded by the same universal behavior exhibited by the worst case random systems. See \cref{app:collective_dephasing} for more details.

To understand this bounded entanglement growth, it is useful to work with the one-half R\'enyi entropy of the Choi state, which is related to the LOCC-assisted distribution complexity as,
\begin{equation}
    S_{1/2}^{\mathrm{Choi}}(\hat{U}) = \log\!\left(\frac{\gamma_{\mathrm{Choi}}(\hat{U})+1}{2}\right).
    \label{eq:choi_half_renyi_coulomb}
\end{equation}
The main observation is that this entropy grows linearly at early times and logarithmically at intermediate times, before saturating to a system-size dependent equilibrium value. This is reminiscent of the growth of entanglement in the phenomenological "\(\ell\)-bits" model of many-body localization (MBL), as we discuss in \cref{subsec:entanglement-qchem}. If the single-particle energies are independent Gaussian variables, then we can analytically explain this behavior. Moreover, in \cref{fig:diagonal_correlation_comparison}(b) we show that real molecules are bounded by this Gaussian random rank-one model. Let us define $S^{\mathrm{RRO}}_{1/2}$ as the one-half R\'enyi entropy of the Choi state a typical instance of the random rank one diagonal unitary,
\begin{equation}
    S^{\mathrm{RRO}}_{1/2}(\lambda \tau) \equiv S^{\mathrm{Choi}}_{1/2} ( e^{i\lambda \tau H_{\mathrm{RRO}}}).
\end{equation}
Then, we have the following theorem which summarizes the bounded distribution complexity for such structured diagonal unitaries.

\begin{theorem}[\textbf{Distribution complexity of rank-one random diagonal unitaries}]
\label{thm:slow-growth-MBL}
The operator entanglement entropies in the random rank-one model of diagonal unitaries with Gaussian disorder has a bounded growth of entanglement. For the \(\alpha=1/2\)-Renyi entropy, the short-time growth is,
\begin{align}
\label{eq:slow-growth-MBL}
S_{1/2}^{\mathrm{RRO}}(\lambda \tau) \approx O(|\lambda \tau|), \qquad \quad &\text{ for } |\lambda \tau| \ll 1, \\
S_{1/2}^{\mathrm{RRO}}(\lambda \tau)\approx O(\log\!\left(|\lambda \tau|\right)), \ \ \ &\text{ for } |\lambda \tau| \gg 1.
\end{align}
Therefore, because $\gamma_\mathrm{Choi} \sim \exp(S_{1/2})$, the distribution complexity is linear in time and independent of system size 
\begin{align}
\gamma_\mathrm{Choi}(e^{i\lambda \tau H_\mathrm{RRO}}) = O(|\lambda \tau|).
\end{align}
\end{theorem}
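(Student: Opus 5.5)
We would compute the Choi-state entanglement of a diagonal unitary directly. For a diagonal unitary $\hat U=\sum_{x}e^{-i\phi(x)}|x\rangle\!\langle x|$ on $N$ modes with $x=(x_A,x_B)$, the Choi state across $A|B$ (physical plus ancilla on each side) is, up to local unitaries, the bipartite pure state
\begin{equation}
|\Psi\rangle = 2^{-N/2}\sum_{x_A,x_B} e^{-i\phi(x_A,x_B)}|x_A\rangle|x_B\rangle .
\end{equation}
Its operator Schmidt coefficients are therefore the normalized singular values of the $2^{N_A}\times 2^{N_B}$ phase matrix $M_{x_A,x_B}=e^{-i\phi(x_A,x_B)}$. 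For $\phi=\lambda\tau(E_A(x_A)+E_B(x_B))^2$ with $E_A=\sum_{i\in A}\varepsilon_i x_i$ and $E_B$ defined analogously, the pure single-side terms $E_A^2$ and $E_B^2$ are local phases. Only the cross term $e^{-2i\lambda\tau E_AE_B}$ matters. The first key step is thus to reduce to the matrix $K_{x_A,x_B}=e^{-2i\lambda\tau E_A(x_A)E_B(x_B)}$, which depends on $x_A,x_B$ only through the scalar energies. Writing $\mu_A$ and $\mu_B$ for the empirical distributions of $E_A$ and $E_B$ over bitstrings, $K$ is a sampled Fourier kernel between the two energy distributions.

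Next we estimate $S_{1/2}=2\log\sum_k\alpha_k$ via the effective number of Schmidt components. With Gaussian disorder, $E_A$ is approximately Gaussian with variance $s_A^2\sim\|\varepsilon_A\|^2/4=O(1)$ by the sphere normalization, and similarly for $E_B$. The kernel $e^{-2i\lambda\tau uv}$, restricted to Gaussian-weighted $u$ and $v$ of width $O(1)$, is a band-limited operator. A prolate-spheroidal/Landau--Pollak counting (or, more simply, a Taylor expansion in $uv$ truncated at degree $k\sim |\lambda\tau|$ plus Gaussian tail bounds) shows that the singular values decay super-exponentially beyond an index $k_\ast\sim O(1+|\lambda\tau|)$, with the leading ones roughly flat. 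Hence $\sum_k\alpha_k\lesssim\sqrt{k_\ast}$, which gives $S_{1/2}\lesssim\log(1+c|\lambda\tau|)$. This is $O(\log|\lambda\tau|)$ for large $|\lambda\tau|$, and expanding gives $O(|\lambda\tau|)$ for small $|\lambda\tau|$.

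In the short-time regime we would confirm the bound directly. Expanding $K=\mathbf 1\mathbf 1^T-2i\lambda\tau\,E_AE_B^T+O(\tau^2)$ gives essentially a rank-two matrix. Its subleading Schmidt coefficient is $\alpha_2\approx 2|\lambda\tau|\,s_As_B$, so $\sum\alpha_k\approx 1+\alpha_2$ and $S_{1/2}\approx 2\alpha_2=O(|\lambda\tau|)$. The system-size independence follows because $N$ enters only through $s_A$ and $s_B$, which are $O(1)$. The final statement then follows from \cref{eq:choi_half_renyi_coulomb}: $\gamma_\mathrm{Choi}=2e^{S_{1/2}}-1=O(1+|\lambda\tau|)$.

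The main obstacle is the intermediate regime. There the discrete, finite support of $\mu_A$ and $\mu_B$ (only $2^{N_A}$ points) must be controlled, and the continuum Gaussian-kernel approximation must remain valid, since saturation is governed by $N$. We would first handle this by bounding $M$ through its low-degree polynomial approximation in $E_AE_B$, which depends only on moments of $E_A$ and $E_B$ and so is insensitive to discreteness. We would then use concentration of the Gaussian $\varepsilon_i$ to make the statement hold for typical instances.
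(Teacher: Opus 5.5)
Your short-time step is the same as the paper's. After dropping the local factor $e^{-it\hat D_A^2}\otimes e^{-it\hat D_B^2}$ and the linear cross terms (which are local phases once the charges are centered), the Choi state is $|0_A0_B\rangle-2it\,v_Av_B|1_A1_B\rangle+O(t^2)$. Its subleading Schmidt coefficient is $g/2$ with $g=4|t|v_Av_B$, so $S_{1/2}=g+o(g)$; your $\alpha_2=2|\lambda\tau|s_As_B$ is the same quantity.

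For the intermediate regime you take a genuinely different route. The paper replaces the empirical charge distributions by Gaussians of variances $v_A^2,v_B^2$, justified by a Lindeberg/central-limit argument. It then diagonalizes the reduced Choi kernel exactly with Mehler's formula, obtaining the geometric spectrum $p_n=(1-q)q^n$ with $q=(\nu-1)/(\nu+1)$ and $\nu=\sqrt{1+g^2}$. This gives $S_{1/2}=\operatorname{arsinh}(g)$ and $\gamma^{(\mathrm G)}_{\mathrm{Choi}}=2(g+\sqrt{1+g^2})-1$. That route yields exact closed forms for every R\'enyi order, the sharp constant $\gamma\simeq4g$, and a matching lower bound, but only inside the Gaussian-charge approximation and before finite-size saturation.

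Your effective-rank argument gives only an upper bound with unspecified constants. However, the fix you propose for the discreteness obstacle removes the approximation altogether:
\begin{itemize}
\item The matrix $\sum_{k<K}\frac{(-2it)^k}{k!}X_x^kY_y^k\,\mathbf 1[|X_x|\le L_A]\,\mathbf 1[|Y_y|\le L_B]$ has rank at most $K$.
\item The Taylor remainder of $e^{iz}$ is at most $|z|^K/K!$.
\item Since $X=\tfrac12\sum_a\varepsilon_a r_a$ is a Rademacher sum, Hoeffding gives $\Pr(|X|>s)\le2e^{-s^2/(2v_A^2)}$ for any coefficients.
\end{itemize}
Take $L_A^2=2v_A^2\log(1/\epsilon)$, $L_B^2=2v_B^2\log(1/\epsilon)$ and $K\ge e^2\max(g,1)\log(1/\epsilon)$. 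Eckart--Young then gives $T(K):=\sum_{k>K}\alpha_k^2\le5e^{-K/(e^2\max(g,1))}$ for the exact finite Choi state, with no disorder assumption. Since $v_Av_B\le1/8$ for any unit vector, $g\le|\lambda\tau|/2$. This ultimately yields $\gamma_{\mathrm{Choi}}\le C'(1+|\lambda\tau|)$ uniformly in $N$ and at all times, which is more robust than the paper's derivation.

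Two details of your sketch need correcting, though neither breaks the conclusion.

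\emph{Decay rate.} The singular values do not decay super-exponentially beyond $k_*\sim g$. With Gaussian or sub-Gaussian weights the decay is only geometric, on the same scale $1-q\approx2/g$. The Landau--Pollak plunge belongs to hard cutoffs. Applying it at the true support $|X|\le\tfrac12\sum_a|\varepsilon_a|$, which is $\sim\sqrt{N_A}$ times wider than $v_A$, would give an $N$-dependent count. The cutoff must instead sit at $L^2\sim v^2\log(1/\epsilon)$, and correspondingly the Taylor degree must be $\sim g\log(1/\epsilon)$ rather than $\sim|\lambda\tau|$.

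\emph{Summing the tail.} The bound $\sum_k\alpha_k\lesssim\sqrt{k_*}$ does not follow from a single cutoff, because a finite system can carry up to $2^{N_A}$ small Schmidt coefficients. Sum in blocks of length $B=\lceil e^2\max(g,1)\rceil$ using Cauchy--Schwarz:
\[
\sum_k\alpha_k\le\sum_{j\ge0}\sqrt{B\,T(jB)}.
\]
With the exponential tail bound this gives $\sum_k\alpha_k=O(\sqrt{1+g})$. Hence $S_{1/2}\le\log(1+g)+O(1)$ and $\gamma_{\mathrm{Choi}}=2(\sum_k\alpha_k)^2-1=O(1+g)$.
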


In fact, we are able to prove a more general version of this result, namely, the bounded growth of operator entanglement for all Renyi-\(\alpha\) entropies. For one diagonal fragment, it is useful to separate the Trotter
normalization from the collective-dephasing scale. We therefore define
\begin{align}
    t_\ell&:=\frac{\tau\lambda_\ell}{2},\\
    v_{A,\ell}^2&:=\frac14\sum_{a\in A}
    \bigl(\varepsilon_a^{(\ell)}\bigr)^2,
    \qquad
    v_{B,\ell}^2:=\frac14\sum_{b\in B}
    \bigl(\varepsilon_b^{(\ell)}\bigr)^2.
\end{align}
and the dimensionless collective-dephasing strength
\begin{equation}
    \omega_\ell:=4|t_\ell|v_{A,\ell}v_{B,\ell}.
    \label{eq:rro_collective_strength}
\end{equation}
For independent Gaussian coefficients of variance $\sigma_\ell^2$,
typical realizations satisfy
$\omega_\ell\simeq |t_\ell|\sigma_\ell^2\sqrt{N_AN_B}$. Thus, for a
balanced cut and a unit-normalized vector
$\boldsymbol\varepsilon^{(\ell)}$, $\omega_\ell$ is \textit{independent} of $N$ to
leading order. The following result explains both the short-time and
intermediate-time regimes; the proof is given in
\cref{app:collective_dephasing}. For the diagonal layer
$\exp[-it_\ell\hat D_\ell^2]$, the finite-system short-time operator
R\'enyi entropies obey
\begin{equation}
S_\alpha=
\begin{cases}
O(\omega_\ell^{2\alpha}), & 0<\alpha<1,\\
O\!\left(\omega_\ell^2\log(1/\omega_\ell)\right), & \alpha=1,\\
O(\omega_\ell^2), & 1<\alpha\leq\infty.
\end{cases}
\label{eq:slow-growth-MBL}
\end{equation}
Here, \(\omega_\ell\) represents the dimensionless collective-dephasing strength as defined above.

In particular, $S_{1/2}=\omega_\ell+o(\omega_\ell)$ is linear at short times,
whereas $S_2=\omega_\ell^2/2+o(\omega_\ell^2)$ is quadratic. In the Gaussian-disorder
approximation, the two entropies are
\begin{equation}
S_{1/2}(\omega_\ell)=\operatorname{arsinh}(\omega_\ell),
    \qquad
S_2(\omega_\ell)=\frac12\log(1+\omega_\ell^2),
\end{equation}
and every fixed-order R\'enyi entropy has the same leading
pre-saturation growth,
\begin{equation}
    S_\alpha(\omega_\ell)=\log \omega_\ell+O_\alpha(1),
    \qquad \omega_\ell\gg1.
\end{equation}
The corresponding Choi lower bound is
\begin{equation}
\gamma_{\mathrm{Choi}}^{(\mathrm G)}
=2\left(\omega_\ell+\sqrt{1+\omega_\ell^2}\right)-1.
\end{equation}
Thus $\gamma_{\mathrm{Choi}}-1=O(\omega_\ell)$ at short times and
$\gamma_{\mathrm{Choi}}^{(\mathrm G)}=O(\omega_\ell)$ throughout the
logarithmic-entanglement regime.

This serves as a bounding envelope for the diagonal layers present in our double factored Trotter circuits. For many molecular systems in a double-factorized form, the norm of the ERI outer eigenvalues $\lambda_\ell$ is often linear with the number of orbitals $O(N)$\cite{lee2021thc,loaiza2023lcucostreduction,oumarou2024compressed,rocca2024symmetry}. Moreover, a typical choice of Trotter step is \(\tau = O(1/N)\) or smaller, and therefore, the distribution overhead is polynomially bounded for all relevant timescales for typical tensor fragments.

Having understood the distribution cost of the Gaussian orbital rotations and the diagonal Coulomb unitaries, we are now ready to estimate the distribution complexity of electronic structure per Trotter step. Recall that a single Trotter step typically involves a number of fragments \(L\) that scales with the number of orbitals \(L \sim \Theta(N)\). Our work reduces the exponent of the LOCC-assisted protocol for every Trotter step from \(O(LN^2)\) to \(O(LN)\), that is linear in both the number of fragments and the number of orbitals. Despite our quadratic reduction via \cref{thm:gaussian_cs_protocol} and the polynomial bound via \cref{thm:slow-growth-MBL}, the LOCC-assisted distribution complexity per Trotter step still scales exponentially. This limits the applicability of circuit partitioning approaches based on conventional HPCs for quantum simulation of strongly correlated electronic systems. Nonetheless, there exist both applications that require finite-depth quantum circuits (such as Trotterized quantum phase estimation to relative energy accuracy~\cite{Campbell_2021}) and molecular systems where the number of dominant fragments scales sublinearly, say as \(\log(N)\) or \(O(1)\), for which these methods can still be useful~\cite{peng2017lowrank}. In contrast, the scaling of resources for quantum interconnects (shared entanglement across QPUs) is much more feasible, although, they may experience other challenges such as frequent distillation of Bell pairs, low fidelities, among others~\cite{ds45-fm9n}.

%=============================================================================
\subsection{Entanglement Propagation in Gaussian and Coulomb layers}
\label{subsec:entanglement-qchem}
%=============================================================================

The total distribution complexity of Trotterized time evolution under electronic structure Hamiltonians is a function of the operator entanglement generated via the Gaussian and Coulomb layers. However, the two layers have distinct entanglement propagation characteristics with the Gaussian layers showing ballistic entanglement propagation, while the disordered Coulomb layers show slow diffusive entanglement growth. Thus the average entanglement propagation (and therefore the average distribution complexity) has a nontrivial interplay generated via the competition between these two layers. Entanglement growth in quenched free-fermionic systems is well understood due to their integrability, which enables translating the entanglement growth in terms of the rate of quasiparticle generation across a boundary~\cite{Calabrese2005,Fagotti2008,Peschel2009,Alba2017}. Typically, in clean non-interacting fermions, both quenched state and operator entanglement grows linearly with time, until it saturates to volume-law entanglement~\cite{Calabrese2005,Dubail2017}. The presence of disorder or randomness can drastically change this growth, for e.g., as is the case for disordered fermions in Anderson or many-body localization (MBL) \cite{Nandkishore2015}. In general, disorder in either space or time results in sub-diffusive growth of entanglement for Gaussian unitaries \cite{Paviglianiti2026}, similar to random Clifford circuits~\cite{PhysRevX.7.031016,PhysRevX.8.021014}.

For systems with disorder, the slow-growth of entanglement often originates from a "dephasing mechanism," similar to that of many-body localization. We first briefly review the key ideas behind MBL and then discuss its implications. Recall that a quantum many-body interacting Hamiltonian is said to be in a \textit{many-body localized} form if it is equivalent to \(H_{\mathrm{MBL}}\) below up to a quasi-local unitary \cite{RevModPhys.91.021001,Basko2006,PhysRevB.82.174411,Nandkishore2015}. Namely, the Hamiltonian of interest is \(H = U^{\dagger} H_{\mathrm{MBL}} U\), where \(U\) is quasi-local. The quasi-locality ensures that the eigenstate entanglement structure of \(H_{\mathrm{MBL}}\) is inherited by the original Hamiltonian \(H\). After the quasi-local transformation, an MBL Hamiltonian generically has the following form, which clearly identifies the local integrals of motion $\sigma^z_k$,
\begin{align}
\label{eq:mbl-diagonal-form}
H_{\mathrm{MBL}}=&\sum_{k} J_{k}^{(1)} \sigma_{k}^{\mathrm{z}}+\sum_{k<l} J_{k, l}^{(2)} \sigma_{k}^{\mathrm{z}} \sigma_{l}^{\mathrm{z}} \nonumber \\
&+\sum_{k<l<m} J_{k, l, m}^{(3)} \sigma_{k}^{\mathrm{z}} \sigma_{l}^{\mathrm{z}} \sigma_{m}^{\mathrm{z}}+\cdots.
\end{align}

Here, the couplings \(J^{(r)}\) are disordered and depend on the original model and typically also decay exponentially. A characteristic feature of MBL Hamiltonians is the slow growth of entanglement, typically as \(\sim \log(t)\). While we do not suggest the same phenomenology, the dephasing mechanism in MBL and the disordered Ising models considered here have the same origin. It is worth emphasizing that the phenomenology of MBL, especially its stability in the thermodynamic limit remains an area of intense debate due to avalanche phenomena \cite{PhysRevB.105.174205,PhysRevB.95.155129,PhysRevB.106.L020202,Lonard2023,PhysRevLett.133.116502}. 

For the purposes of understanding entanglement propagation in electronic structure circuits, we note the following. The dephasing phenomena (which also shows up in MBL) is limited to the Ising-like Coulomb interactions and not necessarily to the entire Trotter circuit. Second, typically, MBL Hamiltonians are equivalent to the \(\ell\)-bit model \cref{eq:mbl-diagonal-form} up to a quasi-local unitary. Since a quasi-local unitary cannot change the entanglement structure of a model (namely an area/volume law phase remains an area/volume law), as a result, the entanglement growth in the original MBL Hamiltonian retains the logarithmic rate predicted by the \(\ell\)-bit model. In contrast, the Trotter circuit for a DF electronic structure alternates between the change of basis (the orbital rotations) and the diagonal disordered evolution. As a result, the effective evolution has a competition between a ballistic propagation of entanglement (the Gaussian unitary from the orbital rotation) versus the slow, diffusive propagation of entanglement (the diagonal Coulomb interaction). We will explore this interplay and related phenomenology in an upcoming manuscript \cite{QChem_MBL_paper}. 

%=============================================================================
\section{Underlying Mechanisms of Distribution Complexity in a Double Factorized Representation}
\label{sec:structure}
%=============================================================================

\begin{figure*}[!ht]
  \centering
  \includegraphics[width=\textwidth]{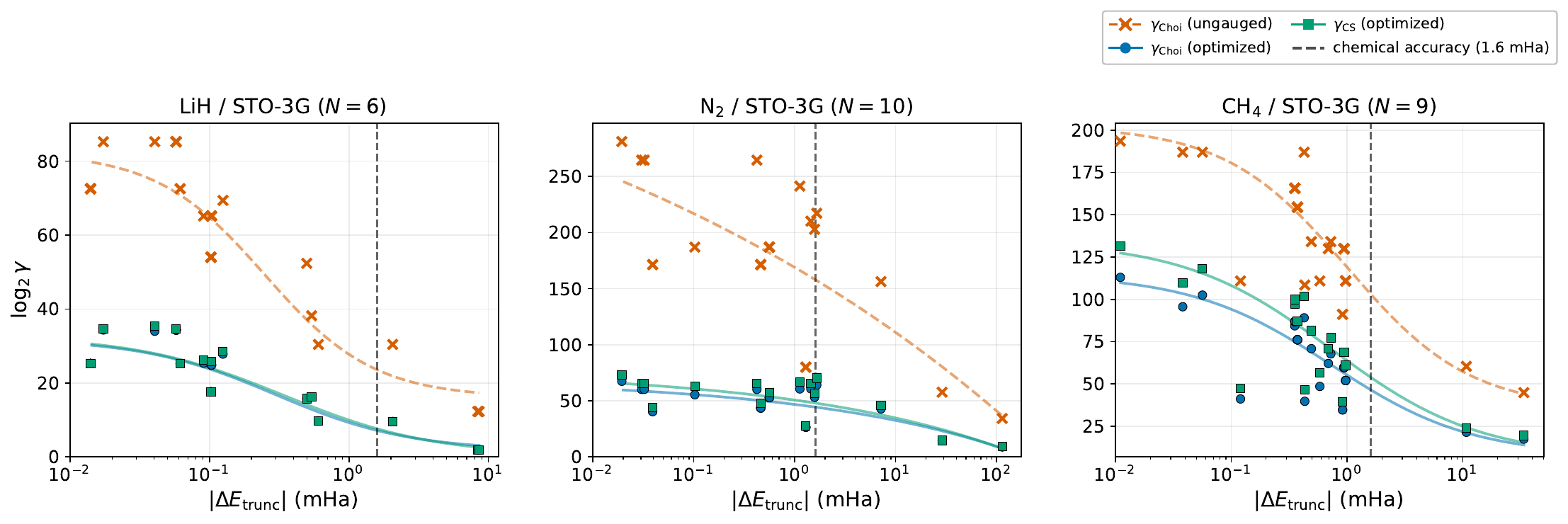}
  \caption{\textbf{Distribution complexity as a function of the truncation error for correlation energy estimation.} We demonstrate the effect of removing weak fragments on distribution complexity for small molecules such as \textbf{(a)} LiH, \textbf{(b)} N$_2$, and \textbf{(c)} CH$_4$, all evaluated in STO-3G orbital bases. We compute the $\log_2 \gamma$ for a single Trotter step, computed by taking the Choi bound for each interfragment Gaussian rotation and diagonal layers independently, and plot this against the induced truncation error $|\Delta E_{\mathrm{trunc}}|$ in the CCSD(T) correlation energy. Each data point is a different energy threshold $\delta$, using the weighted truncation criteria in \cref{eq:weighted_l2}. The vertical dashed line highlights the point at which the induced difference in correlation energy reaches chemical accuracy, $1.6~\mathrm{mHa}$. We compare this for unoptimized (orange) as well as optimized (blue) representations of the circuit, with respect to the degrees of freedom which are the focus of \cref{sec:structure}. Even non-intensive greedy optimization of these parameters leads to dramatic reduction in distribution complexity at nearly all truncation levels, including for the achievable CS protocols (green). In particular, we note that small molecules show reasonable distribution complexity when truncating fragments to achieve truncation error close to chemical accuracy. This highlights the sensitivity of distribution complexity to the algorithmic task (correlation energy estimation) and precision requirements ($|\Delta E_{\mathrm{trunc}}|$). 
  }
  \label{fig:truncation_three_panel}
\end{figure*}

The double-factorized representation contains several choices that leave the
target Hamiltonian unchanged, or change it only within a controlled
approximation, while strongly affecting the distribution cost of the compiled
Trotter step.  We organize these effects into four mechanisms: (i) the retained
outer and inner ranks determine how many diagonal and Gaussian layers appear;
(ii) gauge freedom inside degenerate and nearly degenerate fragment
eigenspaces changes the relative rotations between fragments; (iii) the
ordering of fragments determines which bases are adjacent in the telescoped
circuit; and (iv) the mode partition determines which singular values enter
\cref{eq:gamma_choi}. We first discuss how retained rank trades approximation error against the number and strength of nonlocal layers. The rest of this section then studies representation choices that change channel QPD cost while preserving, or deliberately approximating, the same electronic Hamiltonian.

%-----------------------------------------------------------------------------
\subsection{Low-Rank Compression and Calibration}
\label{sec:truncation}
%-----------------------------------------------------------------------------

In the DF representation of the ERI tensor, truncation occurs at two levels. The first factorization treats $V_{pqrs}$ as a real symmetric pair-density matrix $W_{(pq),(rs)}$ and expands it in tensor fragments with coefficients $\lambda_\ell$. In general, the number of tensor fragments can be as large as $L_{\mathrm{full}}=N(N+1)/2$. Outer truncation retains only $L_{\mathrm{active}}$ fragments with significant $|\lambda_\ell|$, thereby removing the weakest diagonal Coulomb layers and their neighboring Gaussian basis changes from the Trotter circuit. The second factorization diagonalizes each retained fragment $g^{(\ell)}=R_\ell\operatorname{diag}(\varepsilon^{(\ell)})R_\ell^{\dagger}$. Inner truncation keeps only $\rho_\ell$ significant eigenvalues in the $\ell^\mathrm{th}$ fragment, reducing the number of diagonal two-mode phases generated by $\hat D_\ell^2$.

This two-level rank structure is a source of compression exploited by low-rank electronic-structure simulations. Density fitting, resolution-of-the-identity, and Cholesky factorization all decompose $W_{(pq),(rs)}$ to reduce the storage and contraction cost of electronic integrals~\cite{whitten1973coulombic,dunlap1979approximations,vahtras1993integral,aquilante2011cholesky,folkestad2019efficient}. The nested decompositions used in double factorization add a second compression step by truncating the spectrum of each retained fragment matrix $g^{(\ell)}$~\cite{peng2017lowrank,motta2019afqmc_lowrank,motta2021lowrank}. The scaling of the average fragment rank is then dependent on the orbital basis and systems being studied. Compressed and regularized double-factorization methods optimize the retained cores, coefficients, and symmetry shifts to reduce quantum simulation resource estimates, often expressed through qubitization $\lambda$ or related one-norms~\cite{oumarou2024compressed,rocca2024symmetry,caesura2025bliss,loaiza2023lcucostreduction,loaiza2024majoranatensordecomposition}. In this work, we consider only direct truncation.

For DF fragments, the outer coefficient $\lambda_\ell$ carries units of energy, while the normalized inner eigenvalues satisfy $\sum_m(\varepsilon_m^{(\ell)})^2=1$. We therefore use a weighted inner truncation algorithm: after sorting the inner eigenvalues by magnitude, choose the smallest $\rho_\ell$ such that
\begin{equation}
    |\lambda_\ell|
    \sum_{k>\rho_\ell}
    \bigl(\varepsilon_k^{(\ell)}\bigr)^2
    <
    \delta ,
    \label{eq:weighted_l2}
\end{equation}
where $\delta$ is an energy threshold. This rule truncates weak fragments more aggressively and treats important fragments more conservatively. It also removes fragments entirely when $|\lambda_\ell|<\delta$. Moreover, note that, as the inner eigenvalues \(\varepsilon^{(\ell)}\) are normalized, the norm of the Hamiltonian is determined by the outer eigenvalues, \(\lambda_\ell\), which controls the scaling of qubitization-based simulation via linear combination of unitaries \cite{low2019hamiltonian}. In an upcoming work, we explore the distributed simulation of DF Hamiltonians via qubitization and its scaling as a function of the truncation of inner and outer eigenvalues \cite{QChem_MBL_paper}.

The weighted tail in \cref{eq:weighted_l2} is a tensor-space error budget. It
orders the DF data by the energy-weighted Frobenius weight discarded from each
fragment, and therefore gives a simple one-parameter path of truncated
Hamiltonians $V_{\mathrm{trunc}}(\delta)$. This is a rigorous bound on the operator error of the approximation, but it does not directly translate to specific observables of interest. Weak tensor directions can enter with small amplitudes or cancel, so the induced energy error can be much smaller than the raw Frobenius residual.

In \cref{fig:truncation_three_panel}, we illustrate this by calibrating the truncation path according to the induced error in the
post-Hartree--Fock correlation energy. For each threshold $\delta$, we rebuild
the Hamiltonian from $V_{\mathrm{trunc}}(\delta)$ and compute
\begin{equation}
    \Delta E_{\mathrm{trunc}}(\delta)
    =
    \left|
    E_{\mathrm{CCSD(T)}}(V)
    -
    E_{\mathrm{CCSD(T)}}(V_{\mathrm{trunc}}(\delta))
    \right| ,
    \label{eq:delta_e_trunc}
\end{equation}
using the same molecular orbitals before and after truncation. Lee \emph{et al.}\cite{lee2021thc} argued that CCSD(T) correlation-energy error is a more practical criterion that truncated operator norms because those coefficient errors overestimate the differences in observable estimations. Similarly, Caesura
\emph{et al.} use $L_2$ error to optimize the factorization but choose the retained rank using CCSD(T) and DMRG correlation-energy
checks~\cite{caesura2025bliss}. The vertical line in
\cref{fig:truncation_three_panel} marks the point at which the induced error reaches chemical accuracy,
$\Delta E_{\mathrm{trunc}}=1.6~\mathrm{mHa}$. We plot this tradeoff for three STO-3G examples, computing exactly how much error in correlation energy is incurred with increasing truncation, as well as how much can be saved in distribution cost of the circuit.
Meanwhile, the significant separation between the ungauged Choi curve
and the optimized Choi/CS curves shows that rank-based truncation does not accurately represent the full picture

% the full picture alone does not tell the whole story. \masoud{last statement is a bit too informal. I suggest " ... that rank alone does represent the full picture."  or " ... that rank alone does not contain the full information." or something similar}

%-----------------------------------------------------------------------------
\subsection{Gauge Freedom in Fragment Representations}
\label{sec:gauge}
%-----------------------------------------------------------------------------

The outer factorization of the ERI tensor fixes each fragment matrix $g^{(\ell)}$, but does not always uniquely determine the circuit representation necessary to implement that fragment's evolution. If
\begin{equation}
    g^{(\ell)}
    =
    R_\ell\,\mathrm{diag}(\varepsilon^{(\ell)})\,R_\ell^\dagger ,
    \label{eq:fragment_diagonalization}
\end{equation}
has degenerate subspaces, then there is standard numerical instability in the eigendecomposition. Nearly degenerate blocks can rotate strongly under small changes of a matrix~\cite{davis1970rotation,stewart1990matrixperturbation}, and for our purposes, this means that any rotation within degenerate subspaces is a gauge freedom that results in the same DF Hamiltonian. This also means that truncation leading to gap closures can dramatically affect the observed interfragment rotation strengths. For low-rank DF fragments, the nullspace is especially important. If only $\rho_\ell<N$ directions are retained in the diagonal factor, the basis on the remaining $N-\rho_\ell$ directions is not fixed by that fragment Hamiltonian~\cite{gunther2026phaseestimation}. Modes absent from $\hat{D}_\ell$ can therefore be mixed without changing $g^{(\ell)}$, but those choices still appear in the relative rotations between neighboring Trotter layers.

Let the distinct eigenspaces of $g^{(\ell)}$ have dimensions $n_k$. The gauge group of the fragment can be written as
\begin{equation}
    \mathcal{G}(g^{(\ell)})
    =
    \left\{
    % G = 
    \left(\bigoplus_k Q_k\right)P
    \quad \Bigg\vert \quad 
    Q_k \in SO(n_k),\; P\in S_N
    \right\},
    \label{eq:gauge_group}
\end{equation}
where the $Q_k$ rotate within equal-eigenvalue subspaces and $P$ relabels the eigenmode to qubit index map. For any $G\in\mathcal{G}(g^{(\ell)})$,
\begin{equation}
    R_\ell G\,P^\dagger \mathrm{diag}(\varepsilon^{(\ell)})
    P \, G^\dagger R_\ell^\dagger
    =
    R_\ell \, \mathrm{diag}(\varepsilon^{(\ell)})\, R_\ell^\dagger
\end{equation}
with any choice of gauge defined above leaves the simulated Hamiltonian unchanged, as long as we relabel the diagonal layers correspondingly, but can dramatically change the interfragment Gaussian rotations. Under $R_\ell\mapsto R_\ell G_\ell$, an interior layer transforms as
\begin{equation}
    R_{\ell+1}^\dagger R_\ell
    \;\longmapsto\;
    G_{\ell+1}^\dagger
    R_{\ell+1}^\dagger R_\ell
    G_\ell .
    \label{eq:gauge_telescope}
\end{equation}
Since $\gamma_{\mathrm{Choi}}$ depends on the singular values of the partitioned submatrix of this relative rotation, two diagonalizers of the same fragment can give different distribution costs for the same Hamiltonian evolution. In practical implementations, it can be valuable to attempt first-order block-diagonalization to work around instabilities and search for good gauge choices~\cite{bartels1972solution}.

%-----------------------------------------------------------------------------
\subsection{Fragment Ordering and Mode Partition}
\label{sec:ordering_partition}
%-----------------------------------------------------------------------------

After truncation and gauge choice, the remaining freedom is how the retained fragments are composed. A first-order Trotter step may apply the retained DF fragments in any order. The Hamiltonian terms are the same, but the interfragment Gaussian layers are not: placing fragment $\ell$ next to fragment $k$ produces the inter-fragment rotation $R_k^\dagger R_\ell$. For a fixed partition $A|B$, define a pseudo-distance between each fragment
\begin{equation}
    d(\ell,k)
    =
    \log \gamma_{\mathrm{Choi}}\!\left(R_k^\dagger R_\ell\right),
    \label{eq:ordering_distance}
\end{equation}
and optimize by finding an ordering with small total path length. This is a pathfinding problem through fragment bases, with distances given by distribution cost rather than geometry.

The mode partition is an independent choice. For $N$ fermionic modes, a balanced distribution selects a subset $A$ with $|A|=N/2$ and assigns the remaining modes to $B$. The same circuit can have very different QPD distribution complexity for different choices of $A$, because every Gaussian factor in \cref{eq:gamma_choi} depends on the singular values of $R[A,A]$. Spatial locality, embedding, and orbital optimization already exploit the fact that useful electronic degrees of freedom often live in weakly hybridized subspaces
~\cite{knizia2012density,FMO_review_2021,otten2022localized,li2025emo,zhang2026coreoptimizedorbitals}. In our setting, a useful partition is one that makes the retained fragment rotations close to block diagonal across $A|B$.

%-----------------------------------------------------------------------------
\subsection{CS-Spectrum Governing Entanglement Boundaries}
\label{sec:synthetic_cs_spectrum}
%-----------------------------------------------------------------------------

\begin{figure}[t]
  \centering
  \includegraphics[width=\columnwidth]{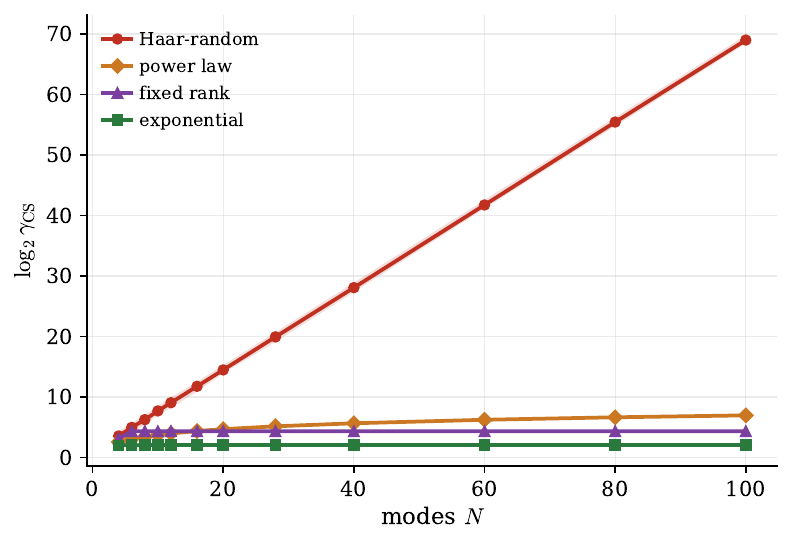}
  \caption{\textbf{Distribution cost of orbital rotations as a function of the operator entanglement spectrum.} Single-channel coupled-region regimes with the interface specified by
  its CS-angle spectrum. Exponentially decaying and fixed-rank crossing spectra
  give bounded Gaussian distribution exponents, a power-law spectrum gives slow
  growth, and a uniform spectrum gives the unstructured volume-law limit.}
  \label{fig:synthetic_cs_envelopes}
\end{figure}

The cosine-sine decomposition gives an operational criterion~\cite{paige1994history,golub2013matrix} for when a Gaussian orbital rotation can be distributed with a given QPD sampling complexity. Let $\sigma_k(R[A,A]) = \cos\theta_k$ be the CS spectrum of the one-particle rotation across the processor cut. The nonlocal part of the CS circuit is a product of cross-partition Givens rotations with angles $\theta_k$, so the LOCC/QPD cost grows with the number and size of the appreciable crossing angles. A bounded, low-rank, or summable crossing spectrum gives an area-law distribution boundary. The worst case is a maximally non-local rotation with uniformly large crossing spectrum, and this gives a volume-law boundary.

For a balanced cut $A|B$, choose the fragment eigenbasis in CS form, 
\begin{equation}
    R
    =
    (Q_A\oplus Q_B)\,C(\theta)\,(V_A\oplus V_B)^{\dagger},
    \qquad
    \sigma_k(R[A,A])=\cos\theta_k .
    \label{eq:synthetic_cs_main}
\end{equation}
The local factors act within the two processors and do not change the
distribution cost. The crossing block $C(\theta)$ is the interface. In this
single-channel limit of the coupled-region model, prescribing $\theta_k$ is the same as prescribing how many one-particle modes cross the boundary.

The question is when a DF representation realizes one of these favorable crossing spectra. The unstructured comparison is the Haar random rotation baseline, where many crossing angles remain appreciable~\cite{wachter1980limiting,truncated_unitaries_2015}. However, if the rotation is aligned with the partition so that only a limited number of cross-partition beamsplitter angles are appreciable, then the Gaussian layer has bounded distribution complexity, as we illustrate in \cref{fig:synthetic_cs_envelopes}. We plot the cumulative distribution cost for an orbital rotation given its cosine-sine decomposition yields crossing mode angles which satisfy (i) power law $\theta_k \propto k^{-1}$, (ii) exponential decay $\theta_k \propto \exp(-k)$, and (iii) fixed rank, where only a constant number of angles are nonzero. These are compared to the Haar random baseline, in which all cross-partition modes contribute with random angles. The natural question that follows is whether we can define systems which are structured with respect to the mechanisms we have outlined in a way that ultimately satisfies some of these bounded entanglement growth envelopes.

%=============================================================================
\section{Characterizing Distribution Complexity in Model Systems}
\label{sec:synthetic}
%=============================================================================

Here we analyze model Hamiltonians, two of which are designed to serve as application-oriented and planted-instance benchmarks, as verifiable problem instances of tunable hardness~\cite{lubinski2021applicationoriented,bellonzi2025qbgsee,wang2026plantedsolutions,watts2024fullerene}. Although they are not derived from specific molecules, they represent controlled DF Hamiltonian data sets that preserve the algebraic structure of electronic Hamiltonians while isolating the features that determine distribution cost: the mode partition, the gauge of low-rank or degenerate eigenspaces, the Trotter order dependence, and ultimately the cumulative cosine-sine crossing spectrum. These model systems allow us to discuss computational resources in terms of ERI tensor properties using the symmetries present in real fermionic systems. Their specific form enables exploring parameter regimes which would allow for the quantum simulation of large, chemically relevant systems on many small QPUs with bounded distribution cost.

We introduce two models that are explicit in their DF parameterizations, which we call the \textit{Localized Fragments} and \textit{Drifting Fragments} models. The Fermi--Hubbard model is distinct from these two in that, while the interactions are local on the lattice, the Fourier basis change that diagonalizes the hopping terms is highly non-local. The so-called "split-operator" technique used to simulate strongly-interacting Hamiltonians relies on using a fermionic fast Fourier transform (FFFT) \cite{PhysRevA.79.032316}. We demonstrate with these models that the optimal mode partitions, fragment gauge choices, and Trotter order are explicit circuit representation choices that have dramatic consequences for simulating the same Hamiltonian data. Together, these models span different distribution complexity regimes for LOCC-assisted or purely quantum distribution protocols.

%-----------------------------------------------------------------------------
\subsection{Localized Fragments: Mode Partitions and Sensitivity}
\label{sec:model_localized_fragments}
%-----------------------------------------------------------------------------

\begin{figure}[t]
  \centering
  \includegraphics[width=\columnwidth]{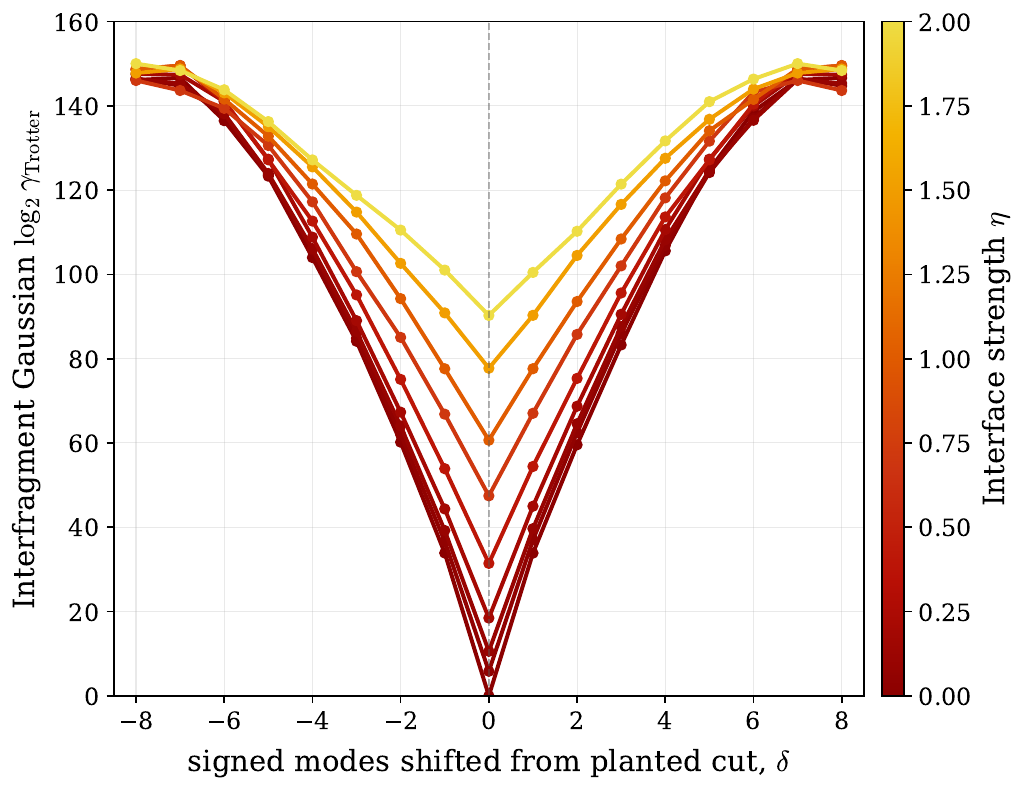}
  \caption{\textbf{Distribution cost of localized fragment Hamiltonians with weak cross-partition coupling.} Here we compute the mode partition sensitivity for nearly block-diagonal, localized fragment Hamiltonians defined in \cref{eq:localized_fragments}, with $N=30$ and $L=8$. We sweep cyclic balanced partition $A=\{j,j+1,\ldots,j+N/2-1\}$ over starting positions. The distribution complexity is clearly minimized when the cut aligns with the planted block structure and rises exponentially with $\eta$ as the partition is shifted away. This quantifies explicitly how locality helps distribution only when the hardware partition exposes the weak interface.}
  \label{fig:partition_heatmap}
\end{figure}

Electronic structure methods exploit locality in several different ways. Active space and downfolding methods identify a correlated subspace, or derive an effective Hamiltonian for one, so that the most important many-body correlations can be treated inside a smaller problem~\cite{roos1980complete, bauman2019downfolding}. Embedding and fragment molecular orbital methods (not to be confused with double factorization fragments discussed here) instead keep a chosen region coupled to an approximate environment, replacing the full molecule by regions whose couplings to the rest of the system are treated as weakly correlated~\cite{knizia2012density,kitaura1999fragment,FMO_review_2021,vorwerk2022quantumembedding,ma2020quantumsimulations,otten2022localized}. Recent quantum-classical chemistry workflows use the same logic at application scale, where QPUs or high-accuracy electronic structure solvers act on embedded molecular subproblems inside a larger HPC pipeline~\cite{merz2026crossing12000atom,gunther2026biomolecularfreeenergies}.

We introduce what we call the \emph{Localized Fragments} model to cleanly isolate the effect of spatial localization and orbital fragmentation in the double-factorized circuits. The question is whether there exists a single-particle basis and a mode partition $A|B$ for which the fragment coefficient matrices are close to block local. In that case, the distribution cost is controlled by the small cross-block part of each fragment rather than by the full one-particle dimension. The model therefore gives a controlled setting in which the layerwise Gaussian and diagonal cost formulas that we have derived identify when a DF Trotter step can be assigned across processors with manageable overhead.

\begin{figure*}[!ht]
  \centering
  \includegraphics[width=\textwidth]{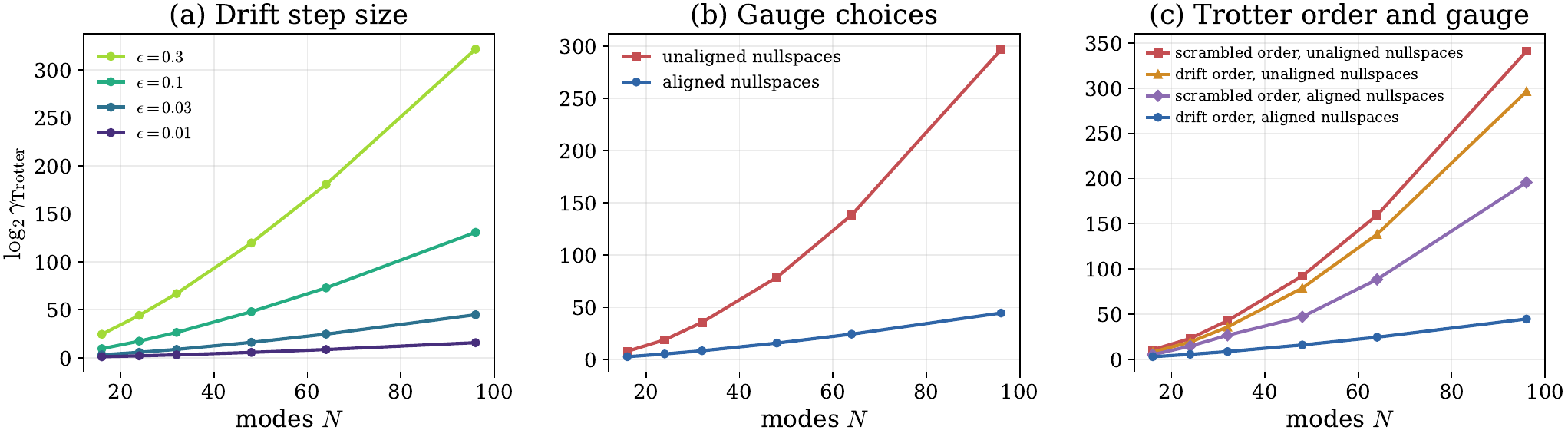}
  \caption{\textbf{Drifting Fragments model and the Trotter-order and gauge dependence of Gaussian distribution complexity.} We consider a Drifting Fragments model with $L=N$ double-factorized fragments and inner rank $\rho_\ell=\lceil\log_2 N\rceil$. We plot the Trotter distribution complexity bound $\gamma_{\mathrm{Trotter}} \leq \prod_\ell \gamma(\hat{U}_\ell) \cdot \gamma(e^{-i\tau \hat{D}_\ell^2})$, treating each of the interfragment orbital rotations $\hat{U}_\ell = \hat{R}_{\ell+1}^\dagger \hat{R}_\ell$ as well as diagonal evolution layers independently. \textbf{(a)} Increasing the drift step size raises the interfragment cost, because adjacent fragment eigenbases differ by larger cross-partition rotations. \textbf{(b)} 
  Low-rank fragments have nullspace gauge freedom: an arbitrary basis in the unused subspace can introduce avoidable crossings by artificially decreasing similarity between adjacent fragment eigenbases, while an aligned gauge keeps those modes local. Here we plot drifting fragments models with fixed drift step size $\epsilon=0.03$, in the case where we take random nullspace bases versus the case where we keep the nullspaces aligned to the drift order.
  \textbf{(c)} The same fragment set is much cheaper when applied in gradual drift order than in scrambled order. The lowest curve uses both drift order and aligned nullspaces, while scrambling the order or leaving nullspaces unaligned dramatically increases the telescoped Gaussian cost.}
  \label{fig:drifting_fragments_mech}
\end{figure*}

Local orbital bases and other orbital-optimization schemes implicitly try to find exactly this kind of representation choice~\cite{lowdin1950nonorthogonality,boys1960construction,pipek1989fast,li2025emo, quiqbox,
zhang2026coreoptimizedorbitals}. In the Localized Fragments model, the cross-block interface of $g^{(\ell)}$ controls how many eigenmodes hybridize across the partition, and whether these crossings are compressible or not. For a given partition of the modes into regions $A$ and $B$, we define the block matrix form of the Localized Fragments model as
\begin{align}
    g^{(\ell)}
    &=
    \begin{pmatrix} A_\ell & \eta\,X_\ell \\ \eta\,X_\ell^{\dagger} & B_\ell \end{pmatrix},
     \nonumber \\
    H_{\mathrm{Loc}}
    &= \frac12\sum_\ell \lambda_\ell
      \Bigl(\sum_{pq} g^{(\ell)}_{pq} a_p^\dagger a_q\Bigr)^2 .
    \label{eq:localized_fragments}
\end{align}
where $A_\ell$ and $B_\ell$ are local block matrix interactions with respect to the mode partitions, and $X_\ell$ is a normalized $\| X_\ell\|_2 = 1$ cross-region interface. This allows us to discuss the interface strength $\eta$ separately from its rank and compressibility.

With no interaction $\eta=0$, every eigenbasis factorizes as $R_\ell=R_{\ell,A}\oplus R_{\ell,B}$ and the simulation becomes embarrassingly parallelizable. By varying the interfaces $X_\ell$ and the coupling strength $\eta$, we can observe the eigenmodes of the fragment hybridize, leading to rotations that are more costly to distribute. In the perturbative regime of weak $\eta$, the number of appreciable cross-partition modes becomes boundary data rather than volume data.

\cref{fig:partition_heatmap} shows the sensitivity of distribution overhead with respect to the choice of mode partition in the case of the localization described above. We generate random local interactions $A_\ell$ and $B_\ell$, as well random interfaces $X_\ell$, and intuitively, we see a sharp dependence on the distance of a cut from the optimal partition. As the strength of the interaction $\eta$ grows, we see an exponential saturation to the worst case behavior, which ultimately saturates when $\eta$ is equal to $ \|A_\ell\|$ and $\|B_\ell\|$ (when inter-region couplings are equal to the intra-region couplings). 

We also consider the case of interfaces with small effective dimensions. Although the absolute interaction strength $\eta$ is not negligible, the interactions can be carried by low-rank or compressible ``channels.'' This models a boundary where only a few orbital combinations carry most of the cross-region coupling. This mechanism is distinct from weak coupling. The parameter $\eta$ sets the strength of the cross-region block, while the rank and singular-value decay of $X_\ell$ determine how many independent directions can mix across the partition. If $\operatorname{rank}(X_\ell)=r_\ell$, then at most an $r_\ell$-dimensional set of orbital combinations is directly coupled across $A|B$ at leading order. If the singular values of $X_\ell$ decay rapidly, the same statement holds approximately. These active directions are the ones that generate appreciable cross-partition angles in the CS decomposition, so an interface can have finite strength while still producing small distribution cost when it is low rank or compressible.

We take a moment to discuss numerical stability and the gauge choice, which is particularly easy to understand in the Localized Fragments model. When the spectra of $A_\ell$ and $B_\ell$ are well separated, standard numerical linear algebra implies that rotations of the fragment eigenbasis are well conditioned~\cite{davis1970rotation,stewart1990matrixperturbation}. In the case of degeneracies between $A_\ell$ and $B_\ell$, such as in the case of dimers like H$_2$, Cr$_2$, and so on in symmetric bases, the exact orbital eigenmodes become unstable under these cross-partition perturbations, and care needs to be taken to optimize over these perturbative representations~\cite{paige1994history,golub2013matrix}.

%-----------------------------------------------------------------------------
\subsection{Drifting Fragments: Gauge Representations and Trotter Order}
\label{sec:representation_choices}
%-----------------------------------------------------------------------------

A different kind of structure appears when the double-factorized fragment eigenbases are correlated with one another, even while individual fragments may be delocalized. There may be interaction channels that are ordered by some collective coordinate. Low-rank, double-factorized, and tensor-hypercontracted electronic structure methods show that useful interaction representations have empirical rank and compression structure rather than behaving like independent random matrices~\cite{peng2017lowrank,motta2019afqmc_lowrank,motta2021lowrank,lee2021thc}. We explicitly consider the case where there is cross-fragment correlation beyond just outer compressibility.
The Drifting Fragments model considers the presence of these correlations within a single electronic system by taking a sequence of nearly identical fragment eigenbases:
\begin{align}
    \hat{R}_{\ell+1}&=\hat{R}_\ell\,\hat{F}_\ell; \nonumber \\ 
    H_{\mathrm{Drift}}
    &= \frac12\sum_{\ell=1}^{L} \lambda_\ell
        \ \hat{R}_\ell \left(\sum_m \varepsilon^{(\ell)}_m a^\dagger_m a_m\right)^2 \hat{R}^\dagger_\ell
    \label{eq:drifting_basis}
\end{align}
where at each step the relative Gaussian rotation $\hat{F}_\ell$, which we call the ``drift operator'', is close to the identity, and the resulting fragment matrices $g^{(\ell)}$ can be variable rank. Applying the fragments in their natural order therefore keeps the Gaussian distribution path short, where the Trotter circuit takes a ``smooth'' path through the different eigenbases only rotating a small amount at a time, while scrambling the same fragments replaces nearby basis changes with long jumps through the space of orbital eigenbases. 
If the ERI tensor decomposes into components with nearly simultaneous eigenbases like this, even when no single fragment basis is localized, the efficient computation of interfragment rotations' distribution overhead once again allows the entanglement hypervisor to discover an efficient protocol.

\begin{figure*}[!ht]
  \centering
  \includegraphics[width=0.8\textwidth]{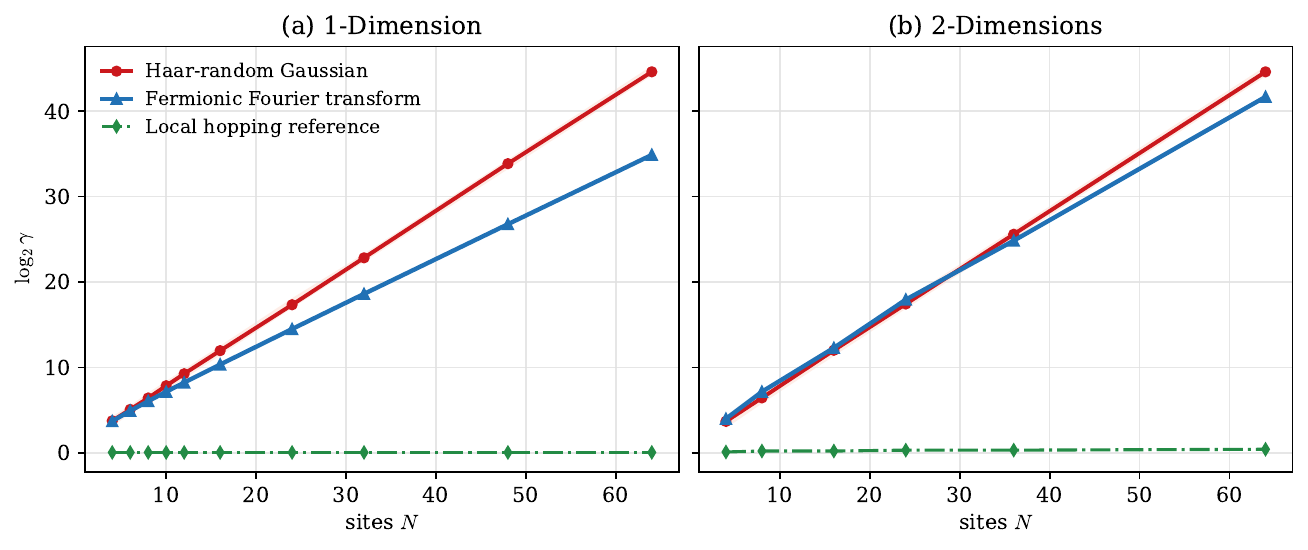}
  \caption{\textbf{Distribution complexity of Fourier orbital rotations across contiguous real-space cuts.} 
    We use the Fermi--Hubbard kinetic term as a model example where the hopping Hamiltonian is diagonal in momentum space, but the basis change to momentum space is not local on a spatially distributed architecture. For each lattice size $N$, we plot the Gaussian Choi distribution complexity $\log_2\gamma$ across a contiguous half-system real-space cut. \textbf{(a)} In one dimension, the fermionic Fourier transform has a rapidly growing distribution exponent across the contiguous cut, close to the median over $100$ Haar-random number-conserving Gaussian rotations on the same number of modes. The local hopping reference is an open-boundary nearest-neighbor hopping layer at $\tau =0.01$; because only the hopping term crossing the cut contributes, its cost stays near zero. \textbf{(b)} In two dimensions, the same comparison holds for a contiguous spatial cut through the lattice: the Fourier transform remains close to the Haar-random Gaussian scale, while the local hopping reference grows only with the length of the spatial boundary.}
  \label{fig:hubbard_fourier_hardness}
\end{figure*}

We use an isotropic random walk in the orbital basis, and vary the step size between adjacent fragment eigenbases. Specifically, we use unbiased generators in the Lie algebra $\mathfrak{so}(N)$ to generate the small relative rotation matrices:
\begin{equation}
    F_\ell = \exp(\epsilon K_\ell)
    \label{eq:drifting_step}
\end{equation}
where $K_\ell$ is a Frobenius-normalized Gaussian skew-symmetric matrix, so that $\epsilon$ alone sets the drift step size. Choosing generators this way ensures that the step between adjacent eigenbasis is isotropic, rather than being tied to a particular orbital subspace as in the Localized Fragments model. This controls the complexity of the interfragment orbital rotations in the Gaussian layers $\hat{U}_\ell = \hat{R}_{\ell}^\dagger \cdot \hat{R}_{\ell+1}$, 
even when the individual fragment bases are globally delocalized or arbitrary rank. 
\cref{fig:drifting_fragments_mech} illustrates three ways the smooth-path structure enters the distribution cost of a Trotter circuit representation. In panel (a), increasing the drift step size $\epsilon$ increases the mismatch between adjacent eigenbases, leading to an exponential increase in complexity until saturation to the Haar-random level. In panel (b) we explicitly consider low rank fragments, illustrating the impact of arbitrary choices for the nullspace as discussed in \cref{sec:gauge}. Naive numerical eigensolvers can produce arbitrary bases for these subspaces and therefore introduce significant inter-fragment rotations that are not necessary by the physical interactions. An aligned gauge across low-rank fragments can remove this artificial cost. In panel (c), we combine this insight on nullspaces with the Trotter order with respect to the drift operators. The scrambled order destroys the small relative rotation, but putting the fragments in the most compatible order still requires nullspace alignment (and generally the correct gauge choices) to fully recover slow-growing distribution complexity. 

When the fragments are completely non-degenerate, the only gauge choice is a permutation choice (mapping mode $i$ in the eigenvector order of fragment $\ell$ to mode $j$ in the eigenvector order of fragment $\ell+1$), which can still introduce dramatic artificial entanglement.  Even this discrete choice can introduce large crossings, essentially inserting unnecessary SWAPs across the cut, which are among the most expensive two-qubit operations in the QPD benchmarks (see \cref{app:kak_baseline} for more details). These representation choices (and not just the parameters of the Hamiltonian) determine how much nonlocal Gaussian rotations appear in the compiled circuit and ultimately the total distribution cost for the circuit.

%-----------------------------------------------------------------------------
\subsection{The Role of Orbital Rotations: Fermionic Fourier Transforms vs Random Unitaries}
\label{sec:hubbard_fourier}
%-----------------------------------------------------------------------------

The Fermi--Hubbard model is a standard minimal model of strongly-correlated lattice
fermions~\cite{hubbard1963electron,Arovas2022},
\begin{equation}
    H_{\mathrm{FH}}
    = -t\sum_{\langle i,j\rangle,\sigma}
      \bigl(a_{i\sigma}^\dagger a_{j\sigma}+\mathrm{h.c.}\bigr)
    + U\sum_i\hat{n}_{i\uparrow} \hat{n}_{i\downarrow}.
    \label{eq:hubbard}
\end{equation}
Here, \(t\) denotes the hopping strength and \(U\) is the fermion-fermion repulsion. The Fermi-Hubbard model is extremely versatile, in the sense that realizations of this model can capture relevant physics of material properties such as (anti)ferromagnetism, unconventional superconductivity, topological order (e.g., spin-liquid phases) etc.~\cite{Arovas2022}.

Classical simulation of the Fermi-Hubbard model is especially challenging due to several competing phases separated by small energy scales in the ground-state. Standard techniques such as tensor networks struggle with the growth of entanglement while quantum Monte Carlo struggles with the sign problem; although there is some progress in solving the ground-state properties of this model via fermionic Projected Entangled Pair States~\cite{Liu2025}. Given the classical hardness of simulating this model, as well as its versatility in capturing the physics of materials, quantum simulation of the Fermi-Hubbard model is anticipated to be one of the key applications of quantum computers in the near-term~\cite{mohseni2026buildquantumsupercomputerscaling,PhysRevApplied.9.044036,alam2025programmabledigitalquantumsimulation}.

Note that the Fermi-Hubbard model can be thought of as a simplified version of the electronic structure Hamiltonian considered in this work as it can be obtained by imposing additional constraints such as only a single localized orbital per site, a single-particle matrix of the tight-binding form, and an ERI tensor with only on-site Coulomb elements. A simplified Fermi-Hubbard model only has a hopping term and an interaction term. For quantum simulation, the leading approach is to use the so-called "split operator" Trotterization: to diagonalize the kinetic and potential terms independently using a fermionic Fourier transform \cite{Verstraete2009}. Our first order Trotterization followed by distribution approach closely resembles the same intuition. Namely, the interaction is diagonal in the real-space basis and the kinetic term is
diagonal in the momentum basis, allowing simulation which alternates between the two bases~\cite{wecker2015solving,babbush2018lowdepth}. Then the simulated interactions themselves are all local, and the nonlocal operation is the change of basis. The basis change here is achieved by the fermionic Fourier transform, which is by itself a Gaussian circuit, but which has a large Choi exponent across a contiguous real-space partition. See \cref{fig:hubbard_fourier_hardness} for a discussion of the distribution complexity of fermionic Fourier transforms.

%-----------------------------------------------------------------------------
\section{Complexity Landscape: Classical Simulability vs. Distribution Scaling}
\label{sec:flo}
%-----------------------------------------------------------------------------
\begin{figure*}[!ht]
  \centering
  \includegraphics[width=0.8\textwidth]{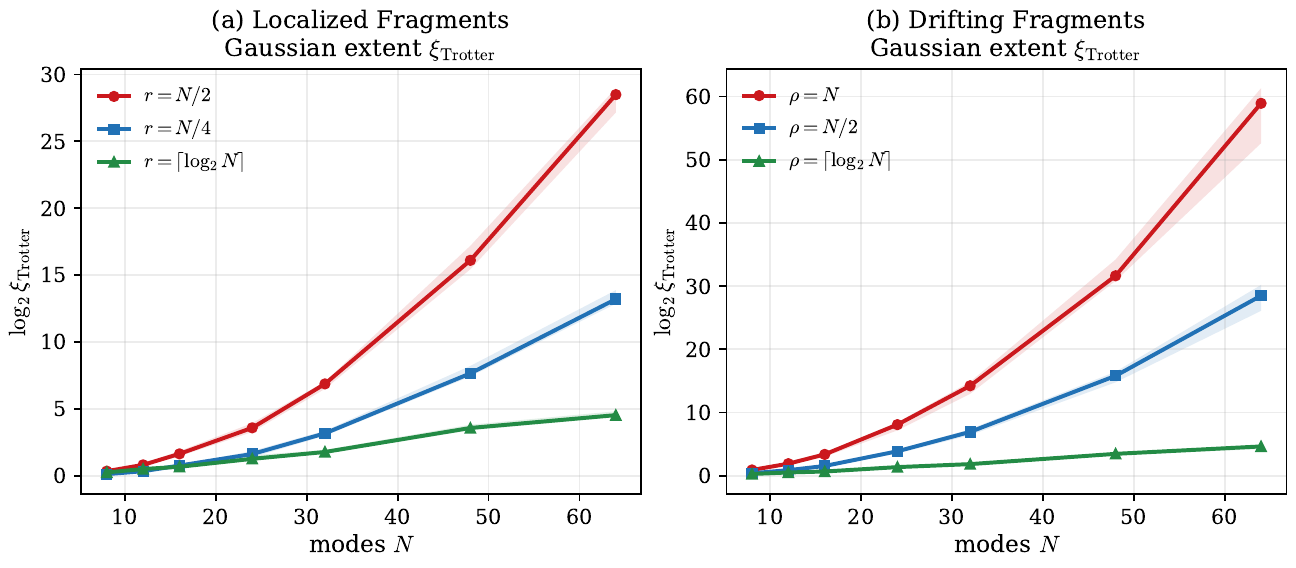}
  \caption{\textbf{(Distribution complexity is independent from classical simulability.)} 
    In both panels, we plot the diagonal-layer Gaussian extent $\log_2\xi_{\mathrm{Trotter}}^*$, which determines the classical simulation overhead of the Gaussian-plus-CPhase algorithm, for one DF Trotter step at fixed small $\tau=0.05$.
    \textbf{(a)} In the Localized Fragments model, increasing the intra-region diagonal rank $r$ (the rank of block matrices $A_\ell$ and $B_\ell$ in \cref{eq:localized_fragments}, not to be confused with the rank of the interfaces $X_\ell$) adds local non-Gaussian CPhase structure inside each region while leaving unchanged the cross-region interfaces $\eta X_\ell$ as the mechanism controlling distribution cost. 
    \textbf{(b)} In the Drifting Fragments model, increasing the inner fragment rank $\rho_\ell$ adds diagonal CPhase structure while the Gaussian distribution cost is controlled separately by the path of adjacent fragment eigenbases.
    Full-rank diagonal spectra result in exponential complexity for classical simulability via Gaussian extent methods, while weakly hybridizing or well-ordered fragment eigenbases can keep distribution complexity bounded even as we increase rank and the number of orbitals.
  }
  \label{fig:model_decoupling}
\end{figure*}

The distribution complexity, characterized by $\gamma$, measures the difficulty of distributing quantum simulations across multiple quantum processors. This is very different from the classical simulation cost of the same circuit, and it is worth analyzing them both for a given application instance. 
Classical simulation algorithms exist that are designed for circuits dominated by fermionic Gaussian unitaries (also known as "matchgates"), whose evolution is efficiently tracked using the covariance matrix representation~\cite{Valiant2001,PhysRevA.65.032325}. Non-Gaussian gates in the circuit disrupt this structure; however there exist classical simulation techniques techniques that can tolerate logarithmically many non-Gaussian gates~\cite{PhysRevA.102.052604,PhysRevLett.123.080503}. We choose a method in which the only non-Gaussian gates are assumed to be Controlled-Phase (CPhase) gates, which is the case for our double-factorized Trotter circuits. Note that Gaussian unitaries together with CPhase gates are universal for electronic structure simulation (and quantum computation in general)~\cite{leimkuhler2026exponentialquantumspeedupsnearterm}.

Let $\hat{U}_\mathrm{GCP}$ be a quantum circuit which is made up of only Gaussian fermionic unitaries interspersed with $N_\mathrm{CPhase}$-many CPhase gates. Each $\operatorname{CPhase}(\theta) \equiv \exp(i \theta \, \hat{n}_i \hat{n}_j)$ can be written as a linear combination of two Gaussian fermionic unitaries
\begin{equation}
    \operatorname{CPhase}(\theta)
    =
    e^{i\theta/4}
    \left[
    \cos\!\left(\frac{\theta}{4}\right)\hat{R}_0(\theta)
    +
    i\sin\!\left(\frac{\theta}{4}\right)\hat{R}_1(\theta)
    \right],
    \label{eq:cphase_gaussian_decomp}
\end{equation}
which can be done independently and multiplied out, such that the total circuit can be written as a linear combination of $2^{N_\mathrm{CPhase}}$-many Gaussian fermionic unitaries (all of the possible choices between $\hat{R}_0$ or $\hat{R}_1$ for each CPhase). 
\begin{equation}
    \hat{U}_\mathrm{GCP} = \sum_{b=1}^{2^{N_\mathrm{CPhase}}} a_b \, \hat{R}_b 
\end{equation}
Then the goal is to classically simulate this decomposition using only its Gaussian fermionic unitaries $\hat{R}_b$, just as the goal of distribution protocols is to simulate the quasiprobability decompositions of a quantum circuit using only its LOCC operators. However, since this is trivially exponential in the number of CPhase gates in the quantum circuit, it is important to find improved and approximate simulation methods.

Recently a number of algorithms have been developed that use the Gaussian fermionic unitaries as the free resource theory in the same way that stabilizer simulation methods use Clifford unitaries. Clifford gates are also efficiently simulable, and the exponential part of the simulation cost is controlled by the non-Clifford resource, which can be quantified by the number of $T$ gates, stabilizer rank, or stabilizer extent~\cite{gottesman1998heisenberg,aaronson2004improved,bravyi2016trading,bravyi2016improved,bravyi2019simulation}. Similarly, for Gaussian plus CPhase circuits, we focus on Gaussian trajectory sampling algorithms, for which the relevant parameter is the \emph{Gaussian extent}, which is defined for each CPhase gate in terms of its angle:
\begin{equation}
    \xi[{\mathrm{CPhase}(\theta)}]
    =
    \left[
    \cos\!\left(\frac{|\theta|}{4}\right)
    +
    \sin\!\left(\frac{|\theta|}{4}\right)
    \right]^2 .
    \label{eq:gaussian_extent_cphase}
\end{equation}
Then the direct composition described above is also multiplicative for the quantum circuit $\hat{U}_\mathrm{GCP}$
\begin{equation}
    \xi^* = \prod_{i=1}^{N_\mathrm{CPhase}} \left[
    \cos\!\left(\frac{|\theta_i|}{4}\right)
    +
    \sin\!\left(\frac{|\theta_i|}{4}\right)
    \right]^2
\end{equation}
we write $\xi^*$ for the aggregate circuit-level extent~\cite{reardonsmith2024improved,hassman2025flo}, as opposed to the extent of an individual gate. We note that despite the name, the Gaussian extent is by definition a measure of the \emph{non-Gaussianity} of a unitary. As in stabilizer and Gaussian extent, $\xi^*$ is already a squared coefficient one-norm~\cite{bravyi2019simulation,dias2024nongaussian,cudby2023gaussian};
this differs from our QPD convention, where $\gamma$ is the unsquared
quasiprobability one-norm and sampled circuit executions scale as
$\gamma^2$~\cite{mitarai2021overhead,piveteau2024knitting}. In the extended matchgate
simulation literature this quantity is also called the Fermionic Linear Optical (FLO) extent~\cite{hassman2025flo}, while closely related resource-theoretic papers use the names fermionic Gaussian extent or just Gaussian extent~\cite{dias2024nongaussian,cudby2023gaussian,reardonsmith2024improved}, as we adopt here.

In a double-factorized Trotter step, these CPhase gates arise from the
off-diagonal terms in the square of each diagonal one-body operator. A
rank-$\rho_\ell$ diagonal fragment contributes $\binom{\rho_\ell}{2}$ CPhase
gates, resulting in $O(L\cdot \langle \rho_\ell^2\rangle)$ total CPhase gates, with angles
\begin{equation}
    \theta_{mw}^{(\ell)}
    =
    \tau\lambda_\ell
    \varepsilon_m^{(\ell)}\varepsilon_w^{(\ell)},
    \qquad m<w .
\end{equation}
Thus the Gaussian extent of one retained DF Trotter step is
\begin{equation}
    \xi_{\mathrm{Trotter}}^*
    =
    \prod_{\ell=1}^L
    \prod_{m<w}^{\rho_\ell}
    \left(1+
        \sin\!\left(
        \frac{|\tau\lambda_\ell
        \varepsilon_m^{(\ell)}\varepsilon_w^{(\ell)}|}{2}
        \right)
    \right).
    \label{eq:gaussian_extent_df}
\end{equation}

This expression explicitly depends on the diagonal spectra, fragment weights, retained inner ranks, and simulation time. It does not depend on the interfragment orbital rotations at all, because the Gaussian-trajectory simulation tracks those rotations efficiently by construction. We illustrate the scaling behavior of $\xi^*$ with respect to the fragment ranks of both the Localized Fragments model and Drifting Fragments model in \cref{fig:model_decoupling}. Rank dependence also enters the product over the $\binom{\rho_\ell}{2}$ CPhase angles in each retained fragment. Therefore, $\xi_\mathrm{Trotter}^*$ grows exponentially with the average fragment rank, and each full-rank fragment contributes a factor of up to $O(\exp(N^2))$ to the Gaussian extent. Meanwhile, we showed that the contribution of the diagonal layers to the distribution cost $\gamma_\mathrm{Trotter}$ is bounded even for full rank fragments. In \cref{sec:coulomb}, we showed that each diagonal fragment layer contributes a factor of $O(|\lambda_\ell \tau|)$. The squared one-body operator structure of the evolution leads to this scaling, saturating regardless of $N$ and $\rho_\ell$.

\begin{table*}[!htbp]
\footnotesize
\providecommand{\distyes}{\textcolor{green!50!black}{\ding{51}}}
\providecommand{\distno}{\textcolor{red!70!black}{\ding{55}}}
\caption{\textbf{Regimes of simulatability and distributability of DF Hamiltonian model.} The table classifies parameter regimes by whether the Gaussian extent $\xi_{\mathrm{Trotter}}^*$ or LOCC/QPD distribution complexity $\gamma_{\mathrm{Trotter}}$ are polynomial in the number of modes $N$. ``Gauss. sim.'' refers to classical simulation of the diagonal layers by Gaussian+CPhase trajectory sampling. ``LOCC dist.'' refers to circuit knitting over classical communication. ``Quantum dist.'' refers to coherent distribution with quantum interconnects, which is feasible for the regimes shown whenever $\log_2\gamma_{\mathrm{Trotter}}=\mathrm{poly}(N)$.}
\label{tab:complexity_landscape}

\resizebox{\textwidth}{!}{%
\renewcommand{\arraystretch}{1.18}
\begin{tabular}{@{}lllll@{}}
\toprule
& Parameter regimes
& \begin{tabular}[t]{@{}c@{}}Classical simulation\\Gaussian extent $\xi_{\mathrm{Trotter}}^*$\end{tabular}
& \begin{tabular}[t]{@{}c@{}}LOCC distribution\\complexity $\gamma_{\mathrm{Trotter}}$\end{tabular}
& \begin{tabular}[t]{@{}c@{}}Feasibility\\\end{tabular} \\
\midrule
\begin{tabular}[t]{@{}l@{}}\textbf{Simulable}\\\textbf{and distributable}\end{tabular}
& \begin{tabular}[t]{@{}l@{}}
\textit{Localized Fragments}: weak interface and low-rank $X_\ell$;\\
\quad low diagonal rank.\\
\textit{Drifting Fragments}: $L=O(N)$ and $\langle\rho\rangle=O(\log N)$;\\
\quad controlled weighted diagonal phases, motivated by low-rank\\
\quad Gaussian-basis compression~\cite{peng2017lowrank,motta2019afqmc_lowrank,motta2021lowrank}.
\end{tabular}
& $\mathrm{poly}(N)$
& $\mathrm{poly}(N)$
& \begin{tabular}[t]{@{}l@{}}Gauss. sim.~\distyes\\LOCC dist.~\distyes\\Quantum dist.~\distyes\end{tabular} \\
\midrule
\begin{tabular}[t]{@{}l@{}}\textbf{Simulable but}\\\textbf{distribution-hard}\end{tabular}
& \begin{tabular}[t]{@{}l@{}}
\textit{Fourier/Hubbard Gaussian step}: global Fourier basis change\\
\quad across a spatial cut.\\
\textit{Unstructured Gaussian step}: volume-law crossing spectrum\\
\quad with absent or compressed diagonal phases.
\end{tabular}
& $\mathrm{poly}(N)$
& $\exp[\Omega(N)]$
& \begin{tabular}[t]{@{}l@{}}Gauss. sim.~\distyes\\LOCC dist.~\distno\\Quantum dist.~\distyes\end{tabular} \\
\midrule
\begin{tabular}[t]{@{}l@{}}\textbf{Large Gaussian extent}\\\textbf{but distributable}\end{tabular}
& \begin{tabular}[t]{@{}l@{}}
\textit{Localized Fragments}: high intra-region diagonal rank;\\
\quad weak cross-region interface.\\
\textit{Drifting Fragments}: high inner rank;\\
\quad smooth eigenbasis path.\\
\textit{Rank-one diagonal DF layer}: $\Theta_{ij}=\lambda\varepsilon_i\varepsilon_j$.
\end{tabular}
& $\exp[\mathrm{poly}(N)]$
& $\mathrm{poly}(N)$
& \begin{tabular}[t]{@{}l@{}}Gauss. sim.~\distno\\LOCC dist.~\distyes\\Quantum dist.~\distyes\end{tabular} \\
\midrule
\begin{tabular}[t]{@{}l@{}}\textbf{Large Gaussian extent}\\\textbf{and distribution-hard}\end{tabular}
& \begin{tabular}[t]{@{}l@{}}
\textit{Localized Fragments}: strong and full-rank interface.\\
\textit{Drifting Fragments}: full-rank and scrambled fragments.\\
\textit{Unstructured baseline}: full-rank bases with effectively\\
\quad independent cross-boundary phases.
\end{tabular}
& $\exp[\mathrm{poly}(N)]$
& $\exp[\mathrm{poly}(N)]$
& \begin{tabular}[t]{@{}l@{}}Gauss. sim.~\distno\\LOCC dist.~\distno\\Quantum dist.~\distyes\end{tabular} \\
\bottomrule
\end{tabular}%
}
\end{table*}

Of course, this is only one choice of classical simulation technique. The Gaussian extent is a cost parameter where fermionic Gaussian evolution is treated efficiently and CPhase gates are treated as the branching non-Gaussian resource. Other nearly-Gaussian simulation methods organize the resource differently. For example, matchgate circuits also become universal when combined with the SWAP gate or other non-matchgates~\cite{
jozsa2008matchgates,brod2011extending,
mocherla2024extendingmatchgatesimulationmethods}, while Majorana Propagation tracks observables in a Majorana-monomial basis and truncates operator growth rather than tracking state evolution via Gaussian trajectory sampling~\cite{miller2025majoranapropagation}. While polynomial Gaussian extent $\xi_{\mathrm{Trotter}}^*$ is sufficient to show that a quantum circuit can be efficiently simulated, large Gaussian extent is not itself a universal hardness result. This is analogous to the stabilizer-rank simulation scheme where large stabilizer/extent only implies hardness of simulation in the stabilizer formalism and not necessarily by other techniques such as tensor networks.

For our double-factorized Trotter circuits, however, it is a natural classical comparison. The Gaussian extent and distribution by LOCC implementable operators both exploit the linear combination over the resource unitaries, and yield sample complexity with respect to the one-norm of the linear coefficients~\cite{arunachalam2026tomographyquantumstatesbounded}. The contrast between $\xi_{\mathrm{Trotter}}^*$ and $\gamma_{\mathrm{Trotter}}$ compares the cost of classical Gaussian-trajectory sampling versus distributed quantum execution.

We summarize different regimes of classical simulability and distribution complexity  characterized by the Localized Fragments and Drifting Fragments models in \Cref{tab:complexity_landscape}.
The Fourier basis and uniform crossing spectra are classically easy but
distribution-hard. Block-local regions, decaying crossing spectra, and naturally
ordered drifting bases are favorable on all axes. Rank-one diagonal layers and
full-rank ordered drifting bases are classically hard but distribution-cheap. An
unstructured basis with full-rank diagonal phases is hard on both the
classical-simulation and classical-interconnect axes. A coherent multi-QPU quantum supercomputer with quantum  interconnects can distribute across all regimes discussed here, with a Bell-pair cost of $\log\gamma$ that scales polynomially in number of orbitals.

%=============================================================================
\section{Discussion and Outlook}
\label{sec:conclusion}
%=============================================================================

Simulating industry-scale quantum chemistry and material science problems on fault-tolerant quantum computers will eventually necessitate distributing fermionic simulation across a network of QPUs. However, the inherent nonlocality of interacting fermions makes optimally partitioning and merging quantum circuits a highly nontrivial task. In this work, we adopt operator entanglement as a measure to capture the boundaries of highly correlated subproblems or sub-circuits, leading to novel workflows for hybrid distributed quantum-classical simulations. In particular, we have introduced preprocessing algorithms that efficiently characterize the entanglement structure of Trotter circuits for electronic Hamiltonians before compilation into elementary two-qubit gates. This information can then be used for building a \textit{entanglement-aware hypervisor} to manage workloads across QPUs and GPUs via quantum interconnects (distributed Bell pairs across QPUs) and/or conventional classical HPC interconnects. 

We primarily focus on quantum circuits originating from Trotter-Suzuki expansion for double-factorized electronic structure, which consists of alternating orbital rotations and Coulomb interactions. For Gaussian orbital rotations, we show how to compute the distribution cost for a tensor fragment via its single-particle rotation matrix, replacing an exponentially large Fock-space diagonalization with an \(O(N^3)\) calculation. This exponential reduction in estimating the distribution cost arises from carefully utilizing the Gaussian unitary structure of orbital rotations, which standard gate-by-gate methods ignore. Our~\cref{thm:gaussian_cs_protocol} further provides a quadratic improvement in compiling Gaussian unitaries so that only \(O(N)\) two-qubit gates cross a partition in the worst-case, achieving a scaling close to the optimal lower bound. It is worth highlighting that our distribution complexity results are independent of the \textit{computational complexity} classification of solving static or dynamic electronic structure problems~\cite{Schuch2009,OGorman2022}.

Similarly, for the diagonal Coulomb interactions, we show that their rank-one structure together with disorder in the fragment spectrum results in a dephasing-induced localization within each tensor fragment. This dephasing mechanism implies a slow, logarithmic growth of operator entanglement, reminiscent of many-body localization. We provide analytical bounds on the distribution cost of such diagonal unitaries and numerically show this bound is satisfied by small molecules. Together, these analytical and numerical results enable us to lower bound the distribution cost of a Coulomb layer to \textit{at most} polynomial for all relevant Trotter steps and \(O(1)\) for most times of interest. In an upcoming work, we will show how the bounded distribution cost translates into an efficient tensor network representation for the Coulomb layers~\cite{QChem_MBL_paper}.

Our work identifies various mechanisms underlying distribution complexity in electronic structure and how they can be optimized for application-specific instances. As an example, for correlation energy estimation to chemical accuracy, we show that our optimizations enable truncating weak fragments in a systematic way to reduce distribution complexity by orders of magnitude. We numerically show the impact of this for small molecules such as LiH, N$_2$, and CH$_4$ in STO-3G basis. It will be interesting to see if the techniques introduced here can be generalized to larger problems such as distributing the active-space of FeMoCo~\cite{zhai2026classicalsolutionfemocofactormodel}.

Our algorithms and analyses are primarily focused on Trotterization since it preserves locality, including the entanglement structure behind Gaussian and Coulomb layers, enabling  independent and efficient estimation of their distribution complexity. Another promising approach to quantum simulation is based on qubitization/quantum signal processing. However, unlike Trotterization, qubitization block-encodes the Hamiltonian into a \textit{global} walk operator. This typically washes out the local structure and makes its distribution less transparent, since the low-entanglement boundaries are not revealed as readily. While qubitization is resource-optimal in the fault-tolerant framework, this is primarily optimized for monolithic architectures and its distribution compatibility remains unclear. Our work suggests an alternative path towards utility-scale quantum computing, where algorithms compatible with distributed quantum simulation are preferred, similar to current day classical supercomputing.

Future work will focus on generalizing the algorithms introduced here to other fermionic ans\"atze such as Unitary Coupled-Cluster Singles and Doubles (UCCSD) and local unitary cluster Jastrow (LUCJ); which encode the locality and interactions in electronic structure. For near-term quantum simulation, it will be useful to perform hardware-specific optimizations aimed at reconfigurable quantum systems such as neutral atoms \cite{Bluvstein2023} and trapped ion architectures \cite{Schwerdt2024}. Reconfigurable systems can implement the fast fermionic Fourier transform efficiently~\cite{maskara2025fastsimulationfermionsreconfigurable} as well as implement the diagonal long-range Coulomb interactions natively and may be particularly well-suited for electronic structure. 

The framework introduced here could provide a  workload management and resource-estimation toolbox for heterogeneous quantum supercomputing architectures. This includes QPUs interacting with heterogeneous classical compute such as CPUs, GPUs, and probabilistic processing units (PPUs); as well as heterogeneous qubit modalities which may be optimized for fast gates/measurements, quantum memory, and long-range connectivity. In such heterogeneous QPU networks, our entanglement-aware hypervisor could decide not only between quantum vs. classical interconnects but also which hardware modality should host which part of the electronic structure simulation.

It will also be interesting to explore the feasibility of distributing partially fault-tolerant quantum computation. Partial FTQC typically requires an order of magnitude fewer physical qubits but requires small angle rotations, which naturally appear in Trotterized quantum simulation~\cite{chung2026partiallyfaulttolerantquantumcomputation}. There is also the possibility to identify heterogeneous distribution of quantum correlations across the quantum circuit, and deploy quantum interconnects even more surgically to minimize the inter-QPU entanglement for distributed simulation~\cite{hu2026entanglementassistedcircuitknittingdistributed}. Together, the advances in entanglement-aware compilation and distribution, application-centric quantum architectures and algorithms, and distributed quantum error correction will be key in enabling useful applications to quantum chemistry, material science, and condensed matter physics.\\

%=============================================================================
\section*{Acknowledgements}
We thank Andrew Projansky, Weishi Wang, Brent Harrison, and Sam Alterman for useful discussions.
We acknowledge support from
DARPA's Quantum Benchmarking Initiative (QBI), contract no.\ HR00112590116.  JN and JDW are supported by the Army Research Office under award W911NF2410043 as well as the Department of Energy under award DE‐SC0019374.
JDW holds concurrent appointments at Dartmouth College and as an Amazon
Visiting Academic. This paper describes work performed at Dartmouth College
and is not associated with Amazon.

\bibliography{refs}

%apsrev4-2.bst 2019-01-14 (MD) hand-edited version of apsrev4-1.bst
%Control: key (0)
%Control: author (72) initials jnrlst
%Control: editor formatted (1) identically to author
%Control: production of article title (-1) disabled
%Control: page (0) single
%Control: year (1) truncated
%Control: production of eprint (0) enabled
\begin{thebibliography}{166}%
\makeatletter
\providecommand \@ifxundefined [1]{%
 \@ifx{#1\undefined}
}%
\providecommand \@ifnum [1]{%
 \ifnum #1\expandafter \@firstoftwo
 \else \expandafter \@secondoftwo
 \fi
}%
\providecommand \@ifx [1]{%
 \ifx #1\expandafter \@firstoftwo
 \else \expandafter \@secondoftwo
 \fi
}%
\providecommand \natexlab [1]{#1}%
\providecommand \enquote  [1]{``#1''}%
\providecommand \bibnamefont  [1]{#1}%
\providecommand \bibfnamefont [1]{#1}%
\providecommand \citenamefont [1]{#1}%
\providecommand \href@noop [0]{\@secondoftwo}%
\providecommand \href [0]{\begingroup \@sanitize@url \@href}%
\providecommand \@href[1]{\@@startlink{#1}\@@href}%
\providecommand \@@href[1]{\endgroup#1\@@endlink}%
\providecommand \@sanitize@url [0]{\catcode `\\12\catcode `\$12\catcode `\&12\catcode `\#12\catcode `\^12\catcode `\_12\catcode `\%12\relax}%
\providecommand \@@startlink[1]{}%
\providecommand \@@endlink[0]{}%
\providecommand \url  [0]{\begingroup\@sanitize@url \@url }%
\providecommand \@url [1]{\endgroup\@href {#1}{\urlprefix }}%
\providecommand \urlprefix  [0]{URL }%
\providecommand \Eprint [0]{\href }%
\providecommand \doibase [0]{https://doi.org/}%
\providecommand \selectlanguage [0]{\@gobble}%
\providecommand \bibinfo  [0]{\@secondoftwo}%
\providecommand \bibfield  [0]{\@secondoftwo}%
\providecommand \translation [1]{[#1]}%
\providecommand \BibitemOpen [0]{}%
\providecommand \bibitemStop [0]{}%
\providecommand \bibitemNoStop [0]{.\EOS\space}%
\providecommand \EOS [0]{\spacefactor3000\relax}%
\providecommand \BibitemShut  [1]{\csname bibitem#1\endcsname}%
\let\auto@bib@innerbib\@empty
%</preamble>
\bibitem [{\citenamefont {Mohseni}\ \emph {et~al.}(2026)\citenamefont {Mohseni}, \citenamefont {Scherer}, \citenamefont {Johnson}, \citenamefont {Wertheim}, \citenamefont {Otten}, \citenamefont {Anand}, \citenamefont {Aadit}, \citenamefont {Alexeev}, \citenamefont {Ben-Shach}, \citenamefont {Bresniker}, \citenamefont {Camsari}, \citenamefont {Chapman}, \citenamefont {Chatterjee}, \citenamefont {Chowdhury}, \citenamefont {Dagnew}, \citenamefont {Dvir}, \citenamefont {Esposito}, \citenamefont {Fahim}, \citenamefont {Ferguson}, \citenamefont {Fiorentino}, \citenamefont {Gajjar}, \citenamefont {Gratsea}, \citenamefont {Gyawali}, \citenamefont {Heiter}, \citenamefont {Kavaki}, \citenamefont {Khalid}, \citenamefont {Kong}, \citenamefont {Kulchytskyy}, \citenamefont {Kyoseva}, \citenamefont {Li}, \citenamefont {Lott}, \citenamefont {Markov}, \citenamefont {McDermott}, \citenamefont {Morais}, \citenamefont {Pedretti}, \citenamefont {Rao}, \citenamefont {Rieffel}, \citenamefont {Silva}, \citenamefont {Sorebo},
  \citenamefont {Spentzouris}, \citenamefont {Steiner}, \citenamefont {Torosov}, \citenamefont {Venturelli}, \citenamefont {Visser}, \citenamefont {Webb}, \citenamefont {Zhan}, \citenamefont {Cohen}, \citenamefont {Ronagh}, \citenamefont {Ho}, \citenamefont {Beausoleil},\ and\ \citenamefont {Martinis}}]{mohseni2026buildquantumsupercomputerscaling}%
  \BibitemOpen
  \bibfield  {author} {\bibinfo {author} {\bibfnamefont {M.}~\bibnamefont {Mohseni}}, \bibinfo {author} {\bibfnamefont {A.}~\bibnamefont {Scherer}}, \bibinfo {author} {\bibfnamefont {K.~G.}\ \bibnamefont {Johnson}}, \bibinfo {author} {\bibfnamefont {O.}~\bibnamefont {Wertheim}}, \bibinfo {author} {\bibfnamefont {M.}~\bibnamefont {Otten}}, \bibinfo {author} {\bibfnamefont {N.}~\bibnamefont {Anand}}, \bibinfo {author} {\bibfnamefont {N.~A.}\ \bibnamefont {Aadit}}, \bibinfo {author} {\bibfnamefont {Y.}~\bibnamefont {Alexeev}}, \bibinfo {author} {\bibfnamefont {G.}~\bibnamefont {Ben-Shach}}, \bibinfo {author} {\bibfnamefont {K.~M.}\ \bibnamefont {Bresniker}}, \bibinfo {author} {\bibfnamefont {K.~Y.}\ \bibnamefont {Camsari}}, \bibinfo {author} {\bibfnamefont {B.}~\bibnamefont {Chapman}}, \bibinfo {author} {\bibfnamefont {S.}~\bibnamefont {Chatterjee}}, \bibinfo {author} {\bibfnamefont {S.}~\bibnamefont {Chowdhury}}, \bibinfo {author} {\bibfnamefont {G.~A.}\ \bibnamefont {Dagnew}}, \bibinfo {author} {\bibfnamefont
  {T.}~\bibnamefont {Dvir}}, \bibinfo {author} {\bibfnamefont {A.}~\bibnamefont {Esposito}}, \bibinfo {author} {\bibfnamefont {F.}~\bibnamefont {Fahim}}, \bibinfo {author} {\bibfnamefont {M.}~\bibnamefont {Ferguson}}, \bibinfo {author} {\bibfnamefont {M.}~\bibnamefont {Fiorentino}}, \bibinfo {author} {\bibfnamefont {A.}~\bibnamefont {Gajjar}}, \bibinfo {author} {\bibfnamefont {K.}~\bibnamefont {Gratsea}}, \bibinfo {author} {\bibfnamefont {G.}~\bibnamefont {Gyawali}}, \bibinfo {author} {\bibfnamefont {C.}~\bibnamefont {Heiter}}, \bibinfo {author} {\bibfnamefont {A.~H.~Z.}\ \bibnamefont {Kavaki}}, \bibinfo {author} {\bibfnamefont {A.}~\bibnamefont {Khalid}}, \bibinfo {author} {\bibfnamefont {X.}~\bibnamefont {Kong}}, \bibinfo {author} {\bibfnamefont {B.}~\bibnamefont {Kulchytskyy}}, \bibinfo {author} {\bibfnamefont {E.}~\bibnamefont {Kyoseva}}, \bibinfo {author} {\bibfnamefont {R.}~\bibnamefont {Li}}, \bibinfo {author} {\bibfnamefont {P.~A.}\ \bibnamefont {Lott}}, \bibinfo {author} {\bibfnamefont {I.~L.}\
  \bibnamefont {Markov}}, \bibinfo {author} {\bibfnamefont {R.~F.}\ \bibnamefont {McDermott}}, \bibinfo {author} {\bibfnamefont {L.}~\bibnamefont {Morais}}, \bibinfo {author} {\bibfnamefont {G.}~\bibnamefont {Pedretti}}, \bibinfo {author} {\bibfnamefont {P.}~\bibnamefont {Rao}}, \bibinfo {author} {\bibfnamefont {E.}~\bibnamefont {Rieffel}}, \bibinfo {author} {\bibfnamefont {A.}~\bibnamefont {Silva}}, \bibinfo {author} {\bibfnamefont {J.}~\bibnamefont {Sorebo}}, \bibinfo {author} {\bibfnamefont {P.}~\bibnamefont {Spentzouris}}, \bibinfo {author} {\bibfnamefont {Z.}~\bibnamefont {Steiner}}, \bibinfo {author} {\bibfnamefont {B.}~\bibnamefont {Torosov}}, \bibinfo {author} {\bibfnamefont {D.}~\bibnamefont {Venturelli}}, \bibinfo {author} {\bibfnamefont {R.~J.}\ \bibnamefont {Visser}}, \bibinfo {author} {\bibfnamefont {Z.}~\bibnamefont {Webb}}, \bibinfo {author} {\bibfnamefont {X.}~\bibnamefont {Zhan}}, \bibinfo {author} {\bibfnamefont {Y.}~\bibnamefont {Cohen}}, \bibinfo {author} {\bibfnamefont {P.}~\bibnamefont
  {Ronagh}}, \bibinfo {author} {\bibfnamefont {A.}~\bibnamefont {Ho}}, \bibinfo {author} {\bibfnamefont {R.~G.}\ \bibnamefont {Beausoleil}},\ and\ \bibinfo {author} {\bibfnamefont {J.~M.}\ \bibnamefont {Martinis}},\ }\href {https://arxiv.org/abs/2411.10406} {\bibinfo {title} {How to build a quantum supercomputer: Scaling from hundreds to millions of qubits}} (\bibinfo {year} {2026}),\ \Eprint {https://arxiv.org/abs/2411.10406} {arXiv:2411.10406 [quant-ph]} \BibitemShut {NoStop}%
\bibitem [{\citenamefont {Chung}\ \emph {et~al.}(2026)\citenamefont {Chung}, \citenamefont {Kavaki}, \citenamefont {Scherer}, \citenamefont {Khalid}, \citenamefont {Kong}, \citenamefont {Kawakubo}, \citenamefont {Anand}, \citenamefont {Dagnew}, \citenamefont {Webb}, \citenamefont {Silva}, \citenamefont {Gyawali}, \citenamefont {Yan}, \citenamefont {Fujii}, \citenamefont {Ho}, \citenamefont {Mohseni}, \citenamefont {Ronagh},\ and\ \citenamefont {Martinis}}]{chung2026partiallyfaulttolerantquantumcomputation}%
  \BibitemOpen
  \bibfield  {author} {\bibinfo {author} {\bibfnamefont {M.-Z.}\ \bibnamefont {Chung}}, \bibinfo {author} {\bibfnamefont {A.~H.~Z.}\ \bibnamefont {Kavaki}}, \bibinfo {author} {\bibfnamefont {A.}~\bibnamefont {Scherer}}, \bibinfo {author} {\bibfnamefont {A.}~\bibnamefont {Khalid}}, \bibinfo {author} {\bibfnamefont {X.}~\bibnamefont {Kong}}, \bibinfo {author} {\bibfnamefont {T.}~\bibnamefont {Kawakubo}}, \bibinfo {author} {\bibfnamefont {N.}~\bibnamefont {Anand}}, \bibinfo {author} {\bibfnamefont {G.~A.}\ \bibnamefont {Dagnew}}, \bibinfo {author} {\bibfnamefont {Z.}~\bibnamefont {Webb}}, \bibinfo {author} {\bibfnamefont {A.}~\bibnamefont {Silva}}, \bibinfo {author} {\bibfnamefont {G.}~\bibnamefont {Gyawali}}, \bibinfo {author} {\bibfnamefont {T.}~\bibnamefont {Yan}}, \bibinfo {author} {\bibfnamefont {K.}~\bibnamefont {Fujii}}, \bibinfo {author} {\bibfnamefont {A.}~\bibnamefont {Ho}}, \bibinfo {author} {\bibfnamefont {M.}~\bibnamefont {Mohseni}}, \bibinfo {author} {\bibfnamefont {P.}~\bibnamefont {Ronagh}},\
  and\ \bibinfo {author} {\bibfnamefont {J.}~\bibnamefont {Martinis}},\ }\href {https://arxiv.org/abs/2603.13093} {\bibinfo {title} {Partially fault-tolerant quantum computation for megaquop applications}} (\bibinfo {year} {2026}),\ \Eprint {https://arxiv.org/abs/2603.13093} {arXiv:2603.13093 [quant-ph]} \BibitemShut {NoStop}%
\bibitem [{\citenamefont {Babbush}\ \emph {et~al.}(2026)\citenamefont {Babbush}, \citenamefont {Zalcman}, \citenamefont {Gidney}, \citenamefont {Broughton}, \citenamefont {Khattar}, \citenamefont {Neven}, \citenamefont {Bergamaschi}, \citenamefont {Drake},\ and\ \citenamefont {Boneh}}]{babbush2026securingellipticcurvecryptocurrencies}%
  \BibitemOpen
  \bibfield  {author} {\bibinfo {author} {\bibfnamefont {R.}~\bibnamefont {Babbush}}, \bibinfo {author} {\bibfnamefont {A.}~\bibnamefont {Zalcman}}, \bibinfo {author} {\bibfnamefont {C.}~\bibnamefont {Gidney}}, \bibinfo {author} {\bibfnamefont {M.}~\bibnamefont {Broughton}}, \bibinfo {author} {\bibfnamefont {T.}~\bibnamefont {Khattar}}, \bibinfo {author} {\bibfnamefont {H.}~\bibnamefont {Neven}}, \bibinfo {author} {\bibfnamefont {T.}~\bibnamefont {Bergamaschi}}, \bibinfo {author} {\bibfnamefont {J.}~\bibnamefont {Drake}},\ and\ \bibinfo {author} {\bibfnamefont {D.}~\bibnamefont {Boneh}},\ }\href {https://arxiv.org/abs/2603.28846} {\bibinfo {title} {Securing elliptic curve cryptocurrencies against quantum vulnerabilities: Resource estimates and mitigations}} (\bibinfo {year} {2026}),\ \Eprint {https://arxiv.org/abs/2603.28846} {arXiv:2603.28846 [quant-ph]} \BibitemShut {NoStop}%
\bibitem [{\citenamefont {Webster}\ \emph {et~al.}(2026)\citenamefont {Webster}, \citenamefont {Berent}, \citenamefont {Chandra}, \citenamefont {Hockings}, \citenamefont {Baspin}, \citenamefont {Thomsen}, \citenamefont {Smith},\ and\ \citenamefont {Cohen}}]{webster2026pinnaclearchitecturereducingcost}%
  \BibitemOpen
  \bibfield  {author} {\bibinfo {author} {\bibfnamefont {P.}~\bibnamefont {Webster}}, \bibinfo {author} {\bibfnamefont {L.}~\bibnamefont {Berent}}, \bibinfo {author} {\bibfnamefont {O.}~\bibnamefont {Chandra}}, \bibinfo {author} {\bibfnamefont {E.~T.}\ \bibnamefont {Hockings}}, \bibinfo {author} {\bibfnamefont {N.}~\bibnamefont {Baspin}}, \bibinfo {author} {\bibfnamefont {F.}~\bibnamefont {Thomsen}}, \bibinfo {author} {\bibfnamefont {S.~C.}\ \bibnamefont {Smith}},\ and\ \bibinfo {author} {\bibfnamefont {L.~Z.}\ \bibnamefont {Cohen}},\ }\href {https://arxiv.org/abs/2602.11457} {\bibinfo {title} {The pinnacle architecture: Reducing the cost of breaking rsa-2048 to 100 000 physical qubits using quantum ldpc codes}} (\bibinfo {year} {2026}),\ \Eprint {https://arxiv.org/abs/2602.11457} {arXiv:2602.11457 [quant-ph]} \BibitemShut {NoStop}%
\bibitem [{\citenamefont {Cain}\ \emph {et~al.}(2026)\citenamefont {Cain}, \citenamefont {Xu}, \citenamefont {King}, \citenamefont {Picard}, \citenamefont {Levine}, \citenamefont {Endres}, \citenamefont {Preskill}, \citenamefont {Huang},\ and\ \citenamefont {Bluvstein}}]{cain2026shorsalgorithmpossible10000}%
  \BibitemOpen
  \bibfield  {author} {\bibinfo {author} {\bibfnamefont {M.}~\bibnamefont {Cain}}, \bibinfo {author} {\bibfnamefont {Q.}~\bibnamefont {Xu}}, \bibinfo {author} {\bibfnamefont {R.}~\bibnamefont {King}}, \bibinfo {author} {\bibfnamefont {L.~R.~B.}\ \bibnamefont {Picard}}, \bibinfo {author} {\bibfnamefont {H.}~\bibnamefont {Levine}}, \bibinfo {author} {\bibfnamefont {M.}~\bibnamefont {Endres}}, \bibinfo {author} {\bibfnamefont {J.}~\bibnamefont {Preskill}}, \bibinfo {author} {\bibfnamefont {H.-Y.}\ \bibnamefont {Huang}},\ and\ \bibinfo {author} {\bibfnamefont {D.}~\bibnamefont {Bluvstein}},\ }\href {https://arxiv.org/abs/2603.28627} {\bibinfo {title} {Shor's algorithm is possible with as few as 10,000 reconfigurable atomic qubits}} (\bibinfo {year} {2026}),\ \Eprint {https://arxiv.org/abs/2603.28627} {arXiv:2603.28627 [quant-ph]} \BibitemShut {NoStop}%
\bibitem [{\citenamefont {Barral}\ \emph {et~al.}(2025)\citenamefont {Barral}, \citenamefont {Cardama}, \citenamefont {Díaz-Camacho}, \citenamefont {Faílde}, \citenamefont {Llovo}, \citenamefont {Mussa-Juane}, \citenamefont {Vázquez-Pérez}, \citenamefont {Villasuso}, \citenamefont {Piñeiro}, \citenamefont {Costas}, \citenamefont {Pichel}, \citenamefont {Pena},\ and\ \citenamefont {Gómez}}]{Barral2025DistributedQC}%
  \BibitemOpen
  \bibfield  {author} {\bibinfo {author} {\bibfnamefont {D.}~\bibnamefont {Barral}}, \bibinfo {author} {\bibfnamefont {F.~J.}\ \bibnamefont {Cardama}}, \bibinfo {author} {\bibfnamefont {G.}~\bibnamefont {Díaz-Camacho}}, \bibinfo {author} {\bibfnamefont {D.}~\bibnamefont {Faílde}}, \bibinfo {author} {\bibfnamefont {I.~F.}\ \bibnamefont {Llovo}}, \bibinfo {author} {\bibfnamefont {M.}~\bibnamefont {Mussa-Juane}}, \bibinfo {author} {\bibfnamefont {J.}~\bibnamefont {Vázquez-Pérez}}, \bibinfo {author} {\bibfnamefont {J.}~\bibnamefont {Villasuso}}, \bibinfo {author} {\bibfnamefont {C.}~\bibnamefont {Piñeiro}}, \bibinfo {author} {\bibfnamefont {N.}~\bibnamefont {Costas}}, \bibinfo {author} {\bibfnamefont {J.~C.}\ \bibnamefont {Pichel}}, \bibinfo {author} {\bibfnamefont {T.~F.}\ \bibnamefont {Pena}},\ and\ \bibinfo {author} {\bibfnamefont {A.}~\bibnamefont {Gómez}},\ }\href {https://doi.org/10.1016/j.cosrev.2025.100747} {\bibfield  {journal} {\bibinfo  {journal} {Computer Science Review}\ }\textbf {\bibinfo
  {volume} {57}},\ \bibinfo {pages} {100747} (\bibinfo {year} {2025})}\BibitemShut {NoStop}%
\bibitem [{\citenamefont {Cao}\ \emph {et~al.}(2019)\citenamefont {Cao}, \citenamefont {Romero}, \citenamefont {Olson}, \citenamefont {Degroote}, \citenamefont {Johnson}, \citenamefont {Kieferová}, \citenamefont {Kivlichan}, \citenamefont {Menke}, \citenamefont {Peropadre}, \citenamefont {Sawaya}, \citenamefont {Sim}, \citenamefont {Veis},\ and\ \citenamefont {Aspuru-Guzik}}]{Cao2019QChemQC}%
  \BibitemOpen
  \bibfield  {author} {\bibinfo {author} {\bibfnamefont {Y.}~\bibnamefont {Cao}}, \bibinfo {author} {\bibfnamefont {J.}~\bibnamefont {Romero}}, \bibinfo {author} {\bibfnamefont {J.~P.}\ \bibnamefont {Olson}}, \bibinfo {author} {\bibfnamefont {M.}~\bibnamefont {Degroote}}, \bibinfo {author} {\bibfnamefont {P.~D.}\ \bibnamefont {Johnson}}, \bibinfo {author} {\bibfnamefont {M.}~\bibnamefont {Kieferová}}, \bibinfo {author} {\bibfnamefont {I.~D.}\ \bibnamefont {Kivlichan}}, \bibinfo {author} {\bibfnamefont {T.}~\bibnamefont {Menke}}, \bibinfo {author} {\bibfnamefont {B.}~\bibnamefont {Peropadre}}, \bibinfo {author} {\bibfnamefont {N.~P.~D.}\ \bibnamefont {Sawaya}}, \bibinfo {author} {\bibfnamefont {S.}~\bibnamefont {Sim}}, \bibinfo {author} {\bibfnamefont {L.}~\bibnamefont {Veis}},\ and\ \bibinfo {author} {\bibfnamefont {A.}~\bibnamefont {Aspuru-Guzik}},\ }\href {https://doi.org/10.1021/acs.chemrev.8b00803} {\bibfield  {journal} {\bibinfo  {journal} {Chemical Reviews}\ }\textbf {\bibinfo {volume} {119}},\
  \bibinfo {pages} {10856–10915} (\bibinfo {year} {2019})}\BibitemShut {NoStop}%
\bibitem [{\citenamefont {Motta}\ \emph {et~al.}(2021)\citenamefont {Motta}, \citenamefont {Ye}, \citenamefont {McClean}, \citenamefont {Li}, \citenamefont {Minnich}, \citenamefont {Babbush},\ and\ \citenamefont {Chan}}]{motta2021lowrank}%
  \BibitemOpen
  \bibfield  {author} {\bibinfo {author} {\bibfnamefont {M.}~\bibnamefont {Motta}}, \bibinfo {author} {\bibfnamefont {E.}~\bibnamefont {Ye}}, \bibinfo {author} {\bibfnamefont {J.~R.}\ \bibnamefont {McClean}}, \bibinfo {author} {\bibfnamefont {Z.}~\bibnamefont {Li}}, \bibinfo {author} {\bibfnamefont {A.~J.}\ \bibnamefont {Minnich}}, \bibinfo {author} {\bibfnamefont {R.}~\bibnamefont {Babbush}},\ and\ \bibinfo {author} {\bibfnamefont {G.~K.-L.}\ \bibnamefont {Chan}},\ }\href {https://doi.org/10.1038/s41534-021-00416-z} {\bibfield  {journal} {\bibinfo  {journal} {npj Quantum Information}\ }\textbf {\bibinfo {volume} {7}},\ \bibinfo {pages} {83} (\bibinfo {year} {2021})},\ \bibinfo {note} {arXiv:1808.02625}\BibitemShut {NoStop}%
\bibitem [{\citenamefont {Lee}\ \emph {et~al.}(2021)\citenamefont {Lee}, \citenamefont {Berry}, \citenamefont {Gidney}, \citenamefont {Huggins}, \citenamefont {McClean}, \citenamefont {Wiebe},\ and\ \citenamefont {Babbush}}]{lee2021thc}%
  \BibitemOpen
  \bibfield  {author} {\bibinfo {author} {\bibfnamefont {J.}~\bibnamefont {Lee}}, \bibinfo {author} {\bibfnamefont {D.~W.}\ \bibnamefont {Berry}}, \bibinfo {author} {\bibfnamefont {C.}~\bibnamefont {Gidney}}, \bibinfo {author} {\bibfnamefont {W.~J.}\ \bibnamefont {Huggins}}, \bibinfo {author} {\bibfnamefont {J.~R.}\ \bibnamefont {McClean}}, \bibinfo {author} {\bibfnamefont {N.}~\bibnamefont {Wiebe}},\ and\ \bibinfo {author} {\bibfnamefont {R.}~\bibnamefont {Babbush}},\ }\href {https://doi.org/10.1103/PRXQuantum.2.030305} {\bibfield  {journal} {\bibinfo  {journal} {PRX Quantum}\ }\textbf {\bibinfo {volume} {2}},\ \bibinfo {pages} {030305} (\bibinfo {year} {2021})},\ \Eprint {https://arxiv.org/abs/2011.03494} {arXiv:2011.03494 [quant-ph]} \BibitemShut {NoStop}%
\bibitem [{\citenamefont {Feng}\ \emph {et~al.}(2024)\citenamefont {Feng}, \citenamefont {Xu}, \citenamefont {Yu}, \citenamefont {Ye}, \citenamefont {Yao},\ and\ \citenamefont {Zhao}}]{feng2024distributedquantumsimulation}%
  \BibitemOpen
  \bibfield  {author} {\bibinfo {author} {\bibfnamefont {T.}~\bibnamefont {Feng}}, \bibinfo {author} {\bibfnamefont {J.}~\bibnamefont {Xu}}, \bibinfo {author} {\bibfnamefont {W.}~\bibnamefont {Yu}}, \bibinfo {author} {\bibfnamefont {Z.}~\bibnamefont {Ye}}, \bibinfo {author} {\bibfnamefont {P.}~\bibnamefont {Yao}},\ and\ \bibinfo {author} {\bibfnamefont {Q.}~\bibnamefont {Zhao}},\ }\href {https://arxiv.org/abs/2411.02881} {\bibinfo {title} {Distributed quantum simulation}} (\bibinfo {year} {2024}),\ \Eprint {https://arxiv.org/abs/2411.02881} {arXiv:2411.02881 [quant-ph]} \BibitemShut {NoStop}%
\bibitem [{\citenamefont {Peng}\ \emph {et~al.}(2020)\citenamefont {Peng}, \citenamefont {Harrow}, \citenamefont {Ozols},\ and\ \citenamefont {Wu}}]{peng2020simulating}%
  \BibitemOpen
  \bibfield  {author} {\bibinfo {author} {\bibfnamefont {T.}~\bibnamefont {Peng}}, \bibinfo {author} {\bibfnamefont {A.~W.}\ \bibnamefont {Harrow}}, \bibinfo {author} {\bibfnamefont {M.}~\bibnamefont {Ozols}},\ and\ \bibinfo {author} {\bibfnamefont {X.}~\bibnamefont {Wu}},\ }\href {https://doi.org/10.1103/PhysRevLett.125.150504} {\bibfield  {journal} {\bibinfo  {journal} {Physical Review Letters}\ }\textbf {\bibinfo {volume} {125}},\ \bibinfo {pages} {150504} (\bibinfo {year} {2020})},\ \bibinfo {note} {arXiv:1904.00102}\BibitemShut {NoStop}%
\bibitem [{\citenamefont {Tang}\ \emph {et~al.}(2021)\citenamefont {Tang}, \citenamefont {Tomesh}, \citenamefont {Suchara}, \citenamefont {Larson},\ and\ \citenamefont {Martonosi}}]{Tang2021}%
  \BibitemOpen
  \bibfield  {author} {\bibinfo {author} {\bibfnamefont {W.}~\bibnamefont {Tang}}, \bibinfo {author} {\bibfnamefont {T.}~\bibnamefont {Tomesh}}, \bibinfo {author} {\bibfnamefont {M.}~\bibnamefont {Suchara}}, \bibinfo {author} {\bibfnamefont {J.}~\bibnamefont {Larson}},\ and\ \bibinfo {author} {\bibfnamefont {M.}~\bibnamefont {Martonosi}},\ }in\ \href {https://doi.org/10.1145/3445814.3446758} {\emph {\bibinfo {booktitle} {Proceedings of the 26th ACM International Conference on Architectural Support for Programming Languages and Operating Systems}}},\ \bibinfo {series and number} {ASPLOS ’21}\ (\bibinfo  {publisher} {ACM},\ \bibinfo {year} {2021})\ p.\ \bibinfo {pages} {473–486}\BibitemShut {NoStop}%
\bibitem [{\citenamefont {Bravyi}\ \emph {et~al.}(2022)\citenamefont {Bravyi}, \citenamefont {Dial}, \citenamefont {Gambetta}, \citenamefont {Gil},\ and\ \citenamefont {Nazario}}]{Bravyi2022}%
  \BibitemOpen
  \bibfield  {author} {\bibinfo {author} {\bibfnamefont {S.}~\bibnamefont {Bravyi}}, \bibinfo {author} {\bibfnamefont {O.}~\bibnamefont {Dial}}, \bibinfo {author} {\bibfnamefont {J.~M.}\ \bibnamefont {Gambetta}}, \bibinfo {author} {\bibfnamefont {D.}~\bibnamefont {Gil}},\ and\ \bibinfo {author} {\bibfnamefont {Z.}~\bibnamefont {Nazario}},\ }\bibfield  {journal} {\bibinfo  {journal} {Journal of Applied Physics}\ }\textbf {\bibinfo {volume} {132}},\ \href {https://doi.org/10.1063/5.0082975} {10.1063/5.0082975} (\bibinfo {year} {2022})\BibitemShut {NoStop}%
\bibitem [{\citenamefont {Lowe}\ \emph {et~al.}(2023)\citenamefont {Lowe}, \citenamefont {Medvidovi{\'{c}}}, \citenamefont {Hayes}, \citenamefont {O'Riordan}, \citenamefont {Bromley}, \citenamefont {Arrazola},\ and\ \citenamefont {Killoran}}]{Lowe2023fastquantumcircuit}%
  \BibitemOpen
  \bibfield  {author} {\bibinfo {author} {\bibfnamefont {A.}~\bibnamefont {Lowe}}, \bibinfo {author} {\bibfnamefont {M.}~\bibnamefont {Medvidovi{\'{c}}}}, \bibinfo {author} {\bibfnamefont {A.}~\bibnamefont {Hayes}}, \bibinfo {author} {\bibfnamefont {L.~J.}\ \bibnamefont {O'Riordan}}, \bibinfo {author} {\bibfnamefont {T.~R.}\ \bibnamefont {Bromley}}, \bibinfo {author} {\bibfnamefont {J.~M.}\ \bibnamefont {Arrazola}},\ and\ \bibinfo {author} {\bibfnamefont {N.}~\bibnamefont {Killoran}},\ }\href {https://doi.org/10.22331/q-2023-03-02-934} {\bibfield  {journal} {\bibinfo  {journal} {{Quantum}}\ }\textbf {\bibinfo {volume} {7}},\ \bibinfo {pages} {934} (\bibinfo {year} {2023})}\BibitemShut {NoStop}%
\bibitem [{\citenamefont {Carrera~Vazquez}\ \emph {et~al.}(2024)\citenamefont {Carrera~Vazquez}, \citenamefont {Tornow}, \citenamefont {Ristè}, \citenamefont {Woerner}, \citenamefont {Takita},\ and\ \citenamefont {Egger}}]{CarreraVazquez2024}%
  \BibitemOpen
  \bibfield  {author} {\bibinfo {author} {\bibfnamefont {A.}~\bibnamefont {Carrera~Vazquez}}, \bibinfo {author} {\bibfnamefont {C.}~\bibnamefont {Tornow}}, \bibinfo {author} {\bibfnamefont {D.}~\bibnamefont {Ristè}}, \bibinfo {author} {\bibfnamefont {S.}~\bibnamefont {Woerner}}, \bibinfo {author} {\bibfnamefont {M.}~\bibnamefont {Takita}},\ and\ \bibinfo {author} {\bibfnamefont {D.~J.}\ \bibnamefont {Egger}},\ }\href {https://doi.org/10.1038/s41586-024-08178-2} {\bibfield  {journal} {\bibinfo  {journal} {Nature}\ }\textbf {\bibinfo {volume} {636}},\ \bibinfo {pages} {75–79} (\bibinfo {year} {2024})}\BibitemShut {NoStop}%
\bibitem [{\citenamefont {Piveteau}\ and\ \citenamefont {Sutter}(2024{\natexlab{a}})}]{Piveteau2024}%
  \BibitemOpen
  \bibfield  {author} {\bibinfo {author} {\bibfnamefont {C.}~\bibnamefont {Piveteau}}\ and\ \bibinfo {author} {\bibfnamefont {D.}~\bibnamefont {Sutter}},\ }\href {https://doi.org/10.1109/tit.2023.3310797} {\bibfield  {journal} {\bibinfo  {journal} {IEEE Transactions on Information Theory}\ }\textbf {\bibinfo {volume} {70}},\ \bibinfo {pages} {2734–2745} (\bibinfo {year} {2024}{\natexlab{a}})}\BibitemShut {NoStop}%
\bibitem [{\citenamefont {Piveteau}\ and\ \citenamefont {Sutter}(2024{\natexlab{b}})}]{piveteau2024knitting}%
  \BibitemOpen
  \bibfield  {author} {\bibinfo {author} {\bibfnamefont {C.}~\bibnamefont {Piveteau}}\ and\ \bibinfo {author} {\bibfnamefont {D.}~\bibnamefont {Sutter}},\ }\href {https://doi.org/10.22331/q-2024-10-24-1500} {\bibfield  {journal} {\bibinfo  {journal} {Quantum}\ }\textbf {\bibinfo {volume} {8}},\ \bibinfo {pages} {1500} (\bibinfo {year} {2024}{\natexlab{b}})},\ \bibinfo {note} {arXiv:2205.00016}\BibitemShut {NoStop}%
\bibitem [{\citenamefont {Harrow}\ and\ \citenamefont {Lowe}(2025)}]{aram_optimal_2025}%
  \BibitemOpen
  \bibfield  {author} {\bibinfo {author} {\bibfnamefont {A.~W.}\ \bibnamefont {Harrow}}\ and\ \bibinfo {author} {\bibfnamefont {A.}~\bibnamefont {Lowe}},\ }\href {https://doi.org/10.1103/PRXQuantum.6.010316} {\bibfield  {journal} {\bibinfo  {journal} {PRX Quantum}\ }\textbf {\bibinfo {volume} {6}},\ \bibinfo {pages} {010316} (\bibinfo {year} {2025})}\BibitemShut {NoStop}%
\bibitem [{\citenamefont {Jing}\ \emph {et~al.}(2025)\citenamefont {Jing}, \citenamefont {Zhu},\ and\ \citenamefont {Wang}}]{jing2024entanglement}%
  \BibitemOpen
  \bibfield  {author} {\bibinfo {author} {\bibfnamefont {M.}~\bibnamefont {Jing}}, \bibinfo {author} {\bibfnamefont {C.}~\bibnamefont {Zhu}},\ and\ \bibinfo {author} {\bibfnamefont {X.}~\bibnamefont {Wang}},\ }\href {https://doi.org/10.1103/PhysRevA.111.012433} {\bibfield  {journal} {\bibinfo  {journal} {Physical Review A}\ }\textbf {\bibinfo {volume} {111}},\ \bibinfo {pages} {012433} (\bibinfo {year} {2025})},\ \Eprint {https://arxiv.org/abs/2404.03619} {arXiv:2404.03619 [quant-ph]} \BibitemShut {NoStop}%
\bibitem [{\citenamefont {Zanardi}(2001)}]{zanardi_entanglement_2001}%
  \BibitemOpen
  \bibfield  {author} {\bibinfo {author} {\bibfnamefont {P.}~\bibnamefont {Zanardi}},\ }\href {https://doi.org/10.1103/PhysRevA.63.040304} {\bibfield  {journal} {\bibinfo  {journal} {Phys. Rev. A}\ }\textbf {\bibinfo {volume} {63}},\ \bibinfo {pages} {040304} (\bibinfo {year} {2001})}\BibitemShut {NoStop}%
\bibitem [{\citenamefont {Styliaris}\ \emph {et~al.}(2021)\citenamefont {Styliaris}, \citenamefont {Anand},\ and\ \citenamefont {Zanardi}}]{styliaris2020informationscrambling}%
  \BibitemOpen
  \bibfield  {author} {\bibinfo {author} {\bibfnamefont {G.}~\bibnamefont {Styliaris}}, \bibinfo {author} {\bibfnamefont {N.}~\bibnamefont {Anand}},\ and\ \bibinfo {author} {\bibfnamefont {P.}~\bibnamefont {Zanardi}},\ }\href {https://doi.org/10.1103/PhysRevLett.126.030601} {\bibfield  {journal} {\bibinfo  {journal} {Physical Review Letters}\ }\textbf {\bibinfo {volume} {126}},\ \bibinfo {pages} {030601} (\bibinfo {year} {2021})},\ \Eprint {https://arxiv.org/abs/2007.08570} {arXiv:2007.08570 [quant-ph]} \BibitemShut {NoStop}%
\bibitem [{\citenamefont {Jones}\ and\ \citenamefont {Jacobsen}(2025)}]{jones2025distributed}%
  \BibitemOpen
  \bibfield  {author} {\bibinfo {author} {\bibfnamefont {G.~M.}\ \bibnamefont {Jones}}\ and\ \bibinfo {author} {\bibfnamefont {H.-A.}\ \bibnamefont {Jacobsen}},\ }\href {https://arxiv.org/abs/2507.01902} {\bibinfo {title} {Analyzing common electronic structure theory algorithms for distributed quantum computing}} (\bibinfo {year} {2025}),\ \Eprint {https://arxiv.org/abs/2507.01902} {arXiv:2507.01902 [quant-ph]} \BibitemShut {NoStop}%
\bibitem [{\citenamefont {Motta}\ \emph {et~al.}(2023)\citenamefont {Motta}, \citenamefont {Sung}, \citenamefont {Whaley}, \citenamefont {Head-Gordon},\ and\ \citenamefont {Shee}}]{Motta2023LUCJ}%
  \BibitemOpen
  \bibfield  {author} {\bibinfo {author} {\bibfnamefont {M.}~\bibnamefont {Motta}}, \bibinfo {author} {\bibfnamefont {K.~J.}\ \bibnamefont {Sung}}, \bibinfo {author} {\bibfnamefont {K.~B.}\ \bibnamefont {Whaley}}, \bibinfo {author} {\bibfnamefont {M.}~\bibnamefont {Head-Gordon}},\ and\ \bibinfo {author} {\bibfnamefont {J.}~\bibnamefont {Shee}},\ }\href {https://doi.org/10.1039/d3sc02516k} {\bibfield  {journal} {\bibinfo  {journal} {Chemical Science}\ }\textbf {\bibinfo {volume} {14}},\ \bibinfo {pages} {11213–11227} (\bibinfo {year} {2023})}\BibitemShut {NoStop}%
\bibitem [{\citenamefont {Shirakawa}\ \emph {et~al.}(2025)\citenamefont {Shirakawa}, \citenamefont {Robledo-Moreno}, \citenamefont {Itoko}, \citenamefont {Tripathi}, \citenamefont {Ueda}, \citenamefont {Kawashima}, \citenamefont {Broers}, \citenamefont {Kirby}, \citenamefont {Pathak}, \citenamefont {Paik}, \citenamefont {Tsuji}, \citenamefont {Kodama}, \citenamefont {Sato}, \citenamefont {Evangelinos}, \citenamefont {Seelam}, \citenamefont {Walkup}, \citenamefont {Yunoki}, \citenamefont {Motta}, \citenamefont {Jurcevic}, \citenamefont {Horii},\ and\ \citenamefont {Mezzacapo}}]{Shirakawa2025IBM}%
  \BibitemOpen
  \bibfield  {author} {\bibinfo {author} {\bibfnamefont {T.}~\bibnamefont {Shirakawa}}, \bibinfo {author} {\bibfnamefont {J.}~\bibnamefont {Robledo-Moreno}}, \bibinfo {author} {\bibfnamefont {T.}~\bibnamefont {Itoko}}, \bibinfo {author} {\bibfnamefont {V.}~\bibnamefont {Tripathi}}, \bibinfo {author} {\bibfnamefont {K.}~\bibnamefont {Ueda}}, \bibinfo {author} {\bibfnamefont {Y.}~\bibnamefont {Kawashima}}, \bibinfo {author} {\bibfnamefont {L.}~\bibnamefont {Broers}}, \bibinfo {author} {\bibfnamefont {W.}~\bibnamefont {Kirby}}, \bibinfo {author} {\bibfnamefont {H.}~\bibnamefont {Pathak}}, \bibinfo {author} {\bibfnamefont {H.}~\bibnamefont {Paik}}, \bibinfo {author} {\bibfnamefont {M.}~\bibnamefont {Tsuji}}, \bibinfo {author} {\bibfnamefont {Y.}~\bibnamefont {Kodama}}, \bibinfo {author} {\bibfnamefont {M.}~\bibnamefont {Sato}}, \bibinfo {author} {\bibfnamefont {C.}~\bibnamefont {Evangelinos}}, \bibinfo {author} {\bibfnamefont {S.}~\bibnamefont {Seelam}}, \bibinfo {author} {\bibfnamefont {R.}~\bibnamefont
  {Walkup}}, \bibinfo {author} {\bibfnamefont {S.}~\bibnamefont {Yunoki}}, \bibinfo {author} {\bibfnamefont {M.}~\bibnamefont {Motta}}, \bibinfo {author} {\bibfnamefont {P.}~\bibnamefont {Jurcevic}}, \bibinfo {author} {\bibfnamefont {H.}~\bibnamefont {Horii}},\ and\ \bibinfo {author} {\bibfnamefont {A.}~\bibnamefont {Mezzacapo}},\ }\href {https://doi.org/10.48550/ARXIV.2511.00224} {\bibinfo {title} {Closed-loop calculations of electronic structure on a quantum processor and a classical supercomputer at full scale}} (\bibinfo {year} {2025})\BibitemShut {NoStop}%
\bibitem [{\citenamefont {G{\"u}nther}\ \emph {et~al.}(2026{\natexlab{a}})\citenamefont {G{\"u}nther}, \citenamefont {Weymuth}, \citenamefont {Bensberg}, \citenamefont {Witteveen}, \citenamefont {Teynor}, \citenamefont {Thomasen}, \citenamefont {Sora}, \citenamefont {Bro-J{\o}rgensen}, \citenamefont {Husistein}, \citenamefont {Erakovic}, \citenamefont {Miller}, \citenamefont {Weisburn}, \citenamefont {Cho}, \citenamefont {Eckhoff}, \citenamefont {Harrow}, \citenamefont {Krogh}, \citenamefont {Van~Voorhis}, \citenamefont {Lindorff-Larsen}, \citenamefont {Solomon}, \citenamefont {Reiher},\ and\ \citenamefont {Christandl}}]{gunther2026biomolecularfreeenergies}%
  \BibitemOpen
  \bibfield  {author} {\bibinfo {author} {\bibfnamefont {J.}~\bibnamefont {G{\"u}nther}}, \bibinfo {author} {\bibfnamefont {T.}~\bibnamefont {Weymuth}}, \bibinfo {author} {\bibfnamefont {M.}~\bibnamefont {Bensberg}}, \bibinfo {author} {\bibfnamefont {F.}~\bibnamefont {Witteveen}}, \bibinfo {author} {\bibfnamefont {M.~S.}\ \bibnamefont {Teynor}}, \bibinfo {author} {\bibfnamefont {F.~E.}\ \bibnamefont {Thomasen}}, \bibinfo {author} {\bibfnamefont {V.}~\bibnamefont {Sora}}, \bibinfo {author} {\bibfnamefont {W.}~\bibnamefont {Bro-J{\o}rgensen}}, \bibinfo {author} {\bibfnamefont {R.~T.}\ \bibnamefont {Husistein}}, \bibinfo {author} {\bibfnamefont {M.}~\bibnamefont {Erakovic}}, \bibinfo {author} {\bibfnamefont {M.}~\bibnamefont {Miller}}, \bibinfo {author} {\bibfnamefont {L.}~\bibnamefont {Weisburn}}, \bibinfo {author} {\bibfnamefont {M.}~\bibnamefont {Cho}}, \bibinfo {author} {\bibfnamefont {M.}~\bibnamefont {Eckhoff}}, \bibinfo {author} {\bibfnamefont {A.~W.}\ \bibnamefont {Harrow}}, \bibinfo {author}
  {\bibfnamefont {A.}~\bibnamefont {Krogh}}, \bibinfo {author} {\bibfnamefont {T.}~\bibnamefont {Van~Voorhis}}, \bibinfo {author} {\bibfnamefont {K.}~\bibnamefont {Lindorff-Larsen}}, \bibinfo {author} {\bibfnamefont {G.}~\bibnamefont {Solomon}}, \bibinfo {author} {\bibfnamefont {M.}~\bibnamefont {Reiher}},\ and\ \bibinfo {author} {\bibfnamefont {M.}~\bibnamefont {Christandl}},\ }\href {https://doi.org/10.1021/acs.jctc.5c02088} {\bibfield  {journal} {\bibinfo  {journal} {Journal of Chemical Theory and Computation}\ }\textbf {\bibinfo {volume} {22}},\ \bibinfo {pages} {4329} (\bibinfo {year} {2026}{\natexlab{a}})}\BibitemShut {NoStop}%
\bibitem [{\citenamefont {Kirby}\ \emph {et~al.}(2026)\citenamefont {Kirby}, \citenamefont {Pokharel}, \citenamefont {Moreno}, \citenamefont {Smith}, \citenamefont {Bravyi}, \citenamefont {Deshpande}, \citenamefont {Evangelinos}, \citenamefont {Fuller}, \citenamefont {Garrison}, \citenamefont {Jaderberg}, \citenamefont {Johnson}, \citenamefont {Jurcevic}, \citenamefont {Lee}, \citenamefont {Martiel}, \citenamefont {Motta}, \citenamefont {Seelam}, \citenamefont {Shtanko}, \citenamefont {Sung}, \citenamefont {Tran}, \citenamefont {Tripathi}, \citenamefont {Seki}, \citenamefont {Shinjo}, \citenamefont {Xu}, \citenamefont {Broers}, \citenamefont {Shirakawa}, \citenamefont {Yunoki}, \citenamefont {Sharma},\ and\ \citenamefont {Mezzacapo}}]{Kirby2026IBM}%
  \BibitemOpen
  \bibfield  {author} {\bibinfo {author} {\bibfnamefont {W.}~\bibnamefont {Kirby}}, \bibinfo {author} {\bibfnamefont {B.}~\bibnamefont {Pokharel}}, \bibinfo {author} {\bibfnamefont {J.~R.}\ \bibnamefont {Moreno}}, \bibinfo {author} {\bibfnamefont {K.~C.}\ \bibnamefont {Smith}}, \bibinfo {author} {\bibfnamefont {S.}~\bibnamefont {Bravyi}}, \bibinfo {author} {\bibfnamefont {A.}~\bibnamefont {Deshpande}}, \bibinfo {author} {\bibfnamefont {C.}~\bibnamefont {Evangelinos}}, \bibinfo {author} {\bibfnamefont {B.}~\bibnamefont {Fuller}}, \bibinfo {author} {\bibfnamefont {J.~R.}\ \bibnamefont {Garrison}}, \bibinfo {author} {\bibfnamefont {B.}~\bibnamefont {Jaderberg}}, \bibinfo {author} {\bibfnamefont {C.}~\bibnamefont {Johnson}}, \bibinfo {author} {\bibfnamefont {P.}~\bibnamefont {Jurcevic}}, \bibinfo {author} {\bibfnamefont {S.-u.}\ \bibnamefont {Lee}}, \bibinfo {author} {\bibfnamefont {S.}~\bibnamefont {Martiel}}, \bibinfo {author} {\bibfnamefont {M.}~\bibnamefont {Motta}}, \bibinfo {author} {\bibfnamefont
  {S.}~\bibnamefont {Seelam}}, \bibinfo {author} {\bibfnamefont {O.}~\bibnamefont {Shtanko}}, \bibinfo {author} {\bibfnamefont {K.~J.}\ \bibnamefont {Sung}}, \bibinfo {author} {\bibfnamefont {M.}~\bibnamefont {Tran}}, \bibinfo {author} {\bibfnamefont {V.}~\bibnamefont {Tripathi}}, \bibinfo {author} {\bibfnamefont {K.}~\bibnamefont {Seki}}, \bibinfo {author} {\bibfnamefont {K.}~\bibnamefont {Shinjo}}, \bibinfo {author} {\bibfnamefont {H.}~\bibnamefont {Xu}}, \bibinfo {author} {\bibfnamefont {L.}~\bibnamefont {Broers}}, \bibinfo {author} {\bibfnamefont {T.}~\bibnamefont {Shirakawa}}, \bibinfo {author} {\bibfnamefont {S.}~\bibnamefont {Yunoki}}, \bibinfo {author} {\bibfnamefont {K.}~\bibnamefont {Sharma}},\ and\ \bibinfo {author} {\bibfnamefont {A.}~\bibnamefont {Mezzacapo}},\ }\href {https://doi.org/10.48550/ARXIV.2603.03496} {\bibinfo {title} {Observation of improved accuracy over classical sparse ground-state solvers using a quantum computer}} (\bibinfo {year} {2026})\BibitemShut {NoStop}%
\bibitem [{\citenamefont {Yue}\ and\ \citenamefont {Chitambar}(2019)}]{Yue_2019}%
  \BibitemOpen
  \bibfield  {author} {\bibinfo {author} {\bibfnamefont {Q.}~\bibnamefont {Yue}}\ and\ \bibinfo {author} {\bibfnamefont {E.}~\bibnamefont {Chitambar}},\ }\bibfield  {journal} {\bibinfo  {journal} {Journal of Mathematical Physics}\ }\textbf {\bibinfo {volume} {60}},\ \href {https://doi.org/10.1063/1.5087815} {10.1063/1.5087815} (\bibinfo {year} {2019}),\ \Eprint {https://arxiv.org/abs/1808.10516} {1808.10516} \BibitemShut {NoStop}%
\bibitem [{\citenamefont {Wu}\ \emph {et~al.}(2023)\citenamefont {Wu}, \citenamefont {Matsui}, \citenamefont {Forrer}, \citenamefont {Soeda}, \citenamefont {Andr{\'{e}}s-Mart{\'{i}}nez}, \citenamefont {Mills}, \citenamefont {Henaut},\ and\ \citenamefont {Murao}}]{Wu2023DistributedEntanglement}%
  \BibitemOpen
  \bibfield  {author} {\bibinfo {author} {\bibfnamefont {J.-Y.}\ \bibnamefont {Wu}}, \bibinfo {author} {\bibfnamefont {K.}~\bibnamefont {Matsui}}, \bibinfo {author} {\bibfnamefont {T.}~\bibnamefont {Forrer}}, \bibinfo {author} {\bibfnamefont {A.}~\bibnamefont {Soeda}}, \bibinfo {author} {\bibfnamefont {P.}~\bibnamefont {Andr{\'{e}}s-Mart{\'{i}}nez}}, \bibinfo {author} {\bibfnamefont {D.}~\bibnamefont {Mills}}, \bibinfo {author} {\bibfnamefont {L.}~\bibnamefont {Henaut}},\ and\ \bibinfo {author} {\bibfnamefont {M.}~\bibnamefont {Murao}},\ }\href {https://doi.org/10.22331/q-2023-12-05-1196} {\bibfield  {journal} {\bibinfo  {journal} {{Quantum}}\ }\textbf {\bibinfo {volume} {7}},\ \bibinfo {pages} {1196} (\bibinfo {year} {2023})}\BibitemShut {NoStop}%
\bibitem [{\citenamefont {Mitarai}\ and\ \citenamefont {Fujii}(2021)}]{mitarai2021overhead}%
  \BibitemOpen
  \bibfield  {author} {\bibinfo {author} {\bibfnamefont {K.}~\bibnamefont {Mitarai}}\ and\ \bibinfo {author} {\bibfnamefont {K.}~\bibnamefont {Fujii}},\ }\href {https://doi.org/10.22331/q-2021-01-28-388} {\bibfield  {journal} {\bibinfo  {journal} {Quantum}\ }\textbf {\bibinfo {volume} {5}},\ \bibinfo {pages} {388} (\bibinfo {year} {2021})},\ \bibinfo {note} {arXiv:2006.11174}\BibitemShut {NoStop}%
\bibitem [{\citenamefont {Johnson}\ \emph {et~al.}(2026)\citenamefont {Johnson}, \citenamefont {Esposito}, \citenamefont {Gyawali}, \citenamefont {Zhan}, \citenamefont {Ganti}, \citenamefont {Anand}, \citenamefont {Beausoleil},\ and\ \citenamefont {Mohseni}}]{johnson2026distributedquantumcomputingadaptive}%
  \BibitemOpen
  \bibfield  {author} {\bibinfo {author} {\bibfnamefont {K.~G.}\ \bibnamefont {Johnson}}, \bibinfo {author} {\bibfnamefont {A.}~\bibnamefont {Esposito}}, \bibinfo {author} {\bibfnamefont {G.}~\bibnamefont {Gyawali}}, \bibinfo {author} {\bibfnamefont {X.}~\bibnamefont {Zhan}}, \bibinfo {author} {\bibfnamefont {R.}~\bibnamefont {Ganti}}, \bibinfo {author} {\bibfnamefont {N.}~\bibnamefont {Anand}}, \bibinfo {author} {\bibfnamefont {R.~G.}\ \bibnamefont {Beausoleil}},\ and\ \bibinfo {author} {\bibfnamefont {M.}~\bibnamefont {Mohseni}},\ }\href {https://arxiv.org/abs/2603.12411} {\bibinfo {title} {Distributed quantum computing via adaptive circuit knitting}} (\bibinfo {year} {2026}),\ \Eprint {https://arxiv.org/abs/2603.12411} {arXiv:2603.12411 [quant-ph]} \BibitemShut {NoStop}%
\bibitem [{\citenamefont {Piveteau}\ \emph {et~al.}(2025)\citenamefont {Piveteau}, \citenamefont {Schmitt},\ and\ \citenamefont {Sutter}}]{piveteau2025side}%
  \BibitemOpen
  \bibfield  {author} {\bibinfo {author} {\bibfnamefont {C.}~\bibnamefont {Piveteau}}, \bibinfo {author} {\bibfnamefont {L.}~\bibnamefont {Schmitt}},\ and\ \bibinfo {author} {\bibfnamefont {D.}~\bibnamefont {Sutter}},\ }\href {https://doi.org/10.1103/38mx-36k6} {\bibfield  {journal} {\bibinfo  {journal} {Phys. Rev. Res.}\ }\textbf {\bibinfo {volume} {7}},\ \bibinfo {pages} {033063} (\bibinfo {year} {2025})}\BibitemShut {NoStop}%
\bibitem [{\citenamefont {Harrow}\ and\ \citenamefont {Nielsen}(2003)}]{Harrow_2003_robustness}%
  \BibitemOpen
  \bibfield  {author} {\bibinfo {author} {\bibfnamefont {A.~W.}\ \bibnamefont {Harrow}}\ and\ \bibinfo {author} {\bibfnamefont {M.~A.}\ \bibnamefont {Nielsen}},\ }\bibfield  {journal} {\bibinfo  {journal} {Physical Review A}\ }\textbf {\bibinfo {volume} {68}},\ \href {https://doi.org/10.1103/physreva.68.012308} {10.1103/physreva.68.012308} (\bibinfo {year} {2003})\BibitemShut {NoStop}%
\bibitem [{\citenamefont {Theurer}\ \emph {et~al.}(2023)\citenamefont {Theurer}, \citenamefont {Fang},\ and\ \citenamefont {Gour}}]{theurer2023singleshotentanglementmanipulationstates}%
  \BibitemOpen
  \bibfield  {author} {\bibinfo {author} {\bibfnamefont {T.}~\bibnamefont {Theurer}}, \bibinfo {author} {\bibfnamefont {K.}~\bibnamefont {Fang}},\ and\ \bibinfo {author} {\bibfnamefont {G.}~\bibnamefont {Gour}},\ }\href {https://arxiv.org/abs/2312.17088} {\bibinfo {title} {Single-shot entanglement manipulation of states and channels revisited}} (\bibinfo {year} {2023}),\ \Eprint {https://arxiv.org/abs/2312.17088} {arXiv:2312.17088 [quant-ph]} \BibitemShut {NoStop}%
\bibitem [{\citenamefont {Vidal}\ and\ \citenamefont {Tarrach}(1999)}]{vidal1999robustness}%
  \BibitemOpen
  \bibfield  {author} {\bibinfo {author} {\bibfnamefont {G.}~\bibnamefont {Vidal}}\ and\ \bibinfo {author} {\bibfnamefont {R.}~\bibnamefont {Tarrach}},\ }\href {https://doi.org/10.1103/PhysRevA.59.141} {\bibfield  {journal} {\bibinfo  {journal} {Physical Review A}\ }\textbf {\bibinfo {volume} {59}},\ \bibinfo {pages} {141} (\bibinfo {year} {1999})},\ \Eprint {https://arxiv.org/abs/quant-ph/9806094} {arXiv:quant-ph/9806094} \BibitemShut {NoStop}%
\bibitem [{\citenamefont {Szabo}\ and\ \citenamefont {Ostlund}(1996)}]{szabo1996modern}%
  \BibitemOpen
  \bibfield  {author} {\bibinfo {author} {\bibfnamefont {A.}~\bibnamefont {Szabo}}\ and\ \bibinfo {author} {\bibfnamefont {N.~S.}\ \bibnamefont {Ostlund}},\ }\href@noop {} {\emph {\bibinfo {title} {Modern Quantum Chemistry: Introduction to Advanced Electronic Structure Theory}}}\ (\bibinfo  {publisher} {Dover Publications},\ \bibinfo {address} {Mineola, NY},\ \bibinfo {year} {1996})\BibitemShut {NoStop}%
\bibitem [{\citenamefont {Helgaker}\ \emph {et~al.}(2000)\citenamefont {Helgaker}, \citenamefont {J{\o}rgensen},\ and\ \citenamefont {Olsen}}]{Helgaker2000}%
  \BibitemOpen
  \bibfield  {author} {\bibinfo {author} {\bibfnamefont {T.}~\bibnamefont {Helgaker}}, \bibinfo {author} {\bibfnamefont {P.}~\bibnamefont {J{\o}rgensen}},\ and\ \bibinfo {author} {\bibfnamefont {J.}~\bibnamefont {Olsen}},\ }\href@noop {} {\emph {\bibinfo {title} {Molecular Electronic-Structure Theory}}}\ (\bibinfo  {publisher} {John Wiley \& Sons},\ \bibinfo {year} {2000})\BibitemShut {NoStop}%
\bibitem [{\citenamefont {Chien}\ \emph {et~al.}(2026)\citenamefont {Chien}, \citenamefont {Chiew}, \citenamefont {Harrison}, \citenamefont {Necaise}, \citenamefont {Wang}, \citenamefont {Mudassar}, \citenamefont {McLauchlan}, \citenamefont {Henderson}, \citenamefont {Scuseria}, \citenamefont {Strelchuk},\ and\ \citenamefont {Whitfield}}]{chien2026simulatingfermions}%
  \BibitemOpen
  \bibfield  {author} {\bibinfo {author} {\bibfnamefont {R.~W.}\ \bibnamefont {Chien}}, \bibinfo {author} {\bibfnamefont {M.}~\bibnamefont {Chiew}}, \bibinfo {author} {\bibfnamefont {B.}~\bibnamefont {Harrison}}, \bibinfo {author} {\bibfnamefont {J.}~\bibnamefont {Necaise}}, \bibinfo {author} {\bibfnamefont {W.}~\bibnamefont {Wang}}, \bibinfo {author} {\bibfnamefont {M.}~\bibnamefont {Mudassar}}, \bibinfo {author} {\bibfnamefont {C.}~\bibnamefont {McLauchlan}}, \bibinfo {author} {\bibfnamefont {T.~M.}\ \bibnamefont {Henderson}}, \bibinfo {author} {\bibfnamefont {G.~E.}\ \bibnamefont {Scuseria}}, \bibinfo {author} {\bibfnamefont {S.}~\bibnamefont {Strelchuk}},\ and\ \bibinfo {author} {\bibfnamefont {J.~D.}\ \bibnamefont {Whitfield}},\ }\href {https://doi.org/10.1038/s42254-025-00914-5} {\bibfield  {journal} {\bibinfo  {journal} {Nature Reviews Physics}\ }\textbf {\bibinfo {volume} {8}},\ \bibinfo {pages} {131} (\bibinfo {year} {2026})}\BibitemShut {NoStop}%
\bibitem [{\citenamefont {Bravyi}\ and\ \citenamefont {Kitaev}(2002)}]{bravyi2002fermionicquantumcomputation}%
  \BibitemOpen
  \bibfield  {author} {\bibinfo {author} {\bibfnamefont {S.~B.}\ \bibnamefont {Bravyi}}\ and\ \bibinfo {author} {\bibfnamefont {A.~Y.}\ \bibnamefont {Kitaev}},\ }\href {https://doi.org/10.1006/aphy.2002.6254} {\bibfield  {journal} {\bibinfo  {journal} {Annals of Physics}\ }\textbf {\bibinfo {volume} {298}},\ \bibinfo {pages} {210} (\bibinfo {year} {2002})},\ \Eprint {https://arxiv.org/abs/quant-ph/0003137} {arXiv:quant-ph/0003137 [quant-ph]} \BibitemShut {NoStop}%
\bibitem [{\citenamefont {Seeley}\ \emph {et~al.}(2012)\citenamefont {Seeley}, \citenamefont {Richard},\ and\ \citenamefont {Love}}]{seeleyBravyiKitaevTransformationQuantum2012}%
  \BibitemOpen
  \bibfield  {author} {\bibinfo {author} {\bibfnamefont {J.~T.}\ \bibnamefont {Seeley}}, \bibinfo {author} {\bibfnamefont {M.~J.}\ \bibnamefont {Richard}},\ and\ \bibinfo {author} {\bibfnamefont {P.~J.}\ \bibnamefont {Love}},\ }\href {https://doi.org/10.1063/1.4768229} {\bibfield  {journal} {\bibinfo  {journal} {The Journal of Chemical Physics}\ }\textbf {\bibinfo {volume} {137}},\ \bibinfo {pages} {224109} (\bibinfo {year} {2012})}\BibitemShut {NoStop}%
\bibitem [{\citenamefont {Gao}\ \emph {et~al.}(2026)\citenamefont {Gao}, \citenamefont {Li}, \citenamefont {Li}, \citenamefont {Li}, \citenamefont {Huang}, \citenamefont {Iancu},\ and\ \citenamefont {Zhang}}]{gao2026linearcomplexityfermionic}%
  \BibitemOpen
  \bibfield  {author} {\bibinfo {author} {\bibfnamefont {X.}~\bibnamefont {Gao}}, \bibinfo {author} {\bibfnamefont {W.}~\bibnamefont {Li}}, \bibinfo {author} {\bibfnamefont {J.}~\bibnamefont {Li}}, \bibinfo {author} {\bibfnamefont {Z.}~\bibnamefont {Li}}, \bibinfo {author} {\bibfnamefont {Y.}~\bibnamefont {Huang}}, \bibinfo {author} {\bibfnamefont {C.}~\bibnamefont {Iancu}},\ and\ \bibinfo {author} {\bibfnamefont {E.~Z.}\ \bibnamefont {Zhang}},\ }\href {https://doi.org/10.48550/arXiv.2606.00982} {\bibinfo {title} {Linear complexity fermionic simulation on quantum devices with hardware connectivity constraints}} (\bibinfo {year} {2026}),\ \Eprint {https://arxiv.org/abs/2606.00982} {arXiv:2606.00982 [cs.AR]} \BibitemShut {NoStop}%
\bibitem [{\citenamefont {Setia}\ and\ \citenamefont {Whitfield}(2018)}]{setiaBravyiKitaevSuperfastSimulation2018}%
  \BibitemOpen
  \bibfield  {author} {\bibinfo {author} {\bibfnamefont {K.}~\bibnamefont {Setia}}\ and\ \bibinfo {author} {\bibfnamefont {J.~D.}\ \bibnamefont {Whitfield}},\ }\href {https://doi.org/10.1063/1.5019371} {\bibfield  {journal} {\bibinfo  {journal} {The Journal of Chemical Physics}\ }\textbf {\bibinfo {volume} {148}},\ \bibinfo {pages} {164104} (\bibinfo {year} {2018})}\BibitemShut {NoStop}%
\bibitem [{\citenamefont {Jiang}\ \emph {et~al.}(2020)\citenamefont {Jiang}, \citenamefont {Kalev}, \citenamefont {Mruczkiewicz},\ and\ \citenamefont {Neven}}]{jiang_optimal_2020}%
  \BibitemOpen
  \bibfield  {author} {\bibinfo {author} {\bibfnamefont {Z.}~\bibnamefont {Jiang}}, \bibinfo {author} {\bibfnamefont {A.}~\bibnamefont {Kalev}}, \bibinfo {author} {\bibfnamefont {W.}~\bibnamefont {Mruczkiewicz}},\ and\ \bibinfo {author} {\bibfnamefont {H.}~\bibnamefont {Neven}},\ }\href {https://doi.org/10.22331/q-2020-06-04-276} {\bibfield  {journal} {\bibinfo  {journal} {Quantum}\ }\textbf {\bibinfo {volume} {4}},\ \bibinfo {pages} {276} (\bibinfo {year} {2020})},\ \Eprint {https://arxiv.org/abs/1910.10746} {arXiv:1910.10746 [quant-ph]} \BibitemShut {NoStop}%
\bibitem [{\citenamefont {Miller}\ \emph {et~al.}(2023)\citenamefont {Miller}, \citenamefont {Zimboras}, \citenamefont {Knecht}, \citenamefont {Maniscalco},\ and\ \citenamefont {Garcia-Perez}}]{miller2023bonsai}%
  \BibitemOpen
  \bibfield  {author} {\bibinfo {author} {\bibfnamefont {A.}~\bibnamefont {Miller}}, \bibinfo {author} {\bibfnamefont {Z.}~\bibnamefont {Zimboras}}, \bibinfo {author} {\bibfnamefont {S.}~\bibnamefont {Knecht}}, \bibinfo {author} {\bibfnamefont {S.}~\bibnamefont {Maniscalco}},\ and\ \bibinfo {author} {\bibfnamefont {G.}~\bibnamefont {Garcia-Perez}},\ }\href {https://doi.org/10.1103/PRXQuantum.4.030314} {\bibfield  {journal} {\bibinfo  {journal} {PRX Quantum}\ }\textbf {\bibinfo {volume} {4}},\ \bibinfo {pages} {030314} (\bibinfo {year} {2023})}\BibitemShut {NoStop}%
\bibitem [{\citenamefont {Miller}\ \emph {et~al.}(2026)\citenamefont {Miller}, \citenamefont {Glos},\ and\ \citenamefont {Zimbor{\'a}s}}]{miller2024treespilation}%
  \BibitemOpen
  \bibfield  {author} {\bibinfo {author} {\bibfnamefont {A.}~\bibnamefont {Miller}}, \bibinfo {author} {\bibfnamefont {A.}~\bibnamefont {Glos}},\ and\ \bibinfo {author} {\bibfnamefont {Z.}~\bibnamefont {Zimbor{\'a}s}},\ }\href {https://doi.org/10.1038/s41534-025-01170-2} {\bibfield  {journal} {\bibinfo  {journal} {npj Quantum Information}\ }\textbf {\bibinfo {volume} {12}},\ \bibinfo {pages} {26} (\bibinfo {year} {2026})},\ \Eprint {https://arxiv.org/abs/2403.03992} {arXiv:2403.03992 [quant-ph]} \BibitemShut {NoStop}%
\bibitem [{\citenamefont {Harrison}\ \emph {et~al.}(2024)\citenamefont {Harrison}, \citenamefont {Chiew}, \citenamefont {Necaise}, \citenamefont {Projansky}, \citenamefont {Strelchuk},\ and\ \citenamefont {Whitfield}}]{harrison2024sierpinski}%
  \BibitemOpen
  \bibfield  {author} {\bibinfo {author} {\bibfnamefont {B.}~\bibnamefont {Harrison}}, \bibinfo {author} {\bibfnamefont {M.}~\bibnamefont {Chiew}}, \bibinfo {author} {\bibfnamefont {J.}~\bibnamefont {Necaise}}, \bibinfo {author} {\bibfnamefont {A.~M.}\ \bibnamefont {Projansky}}, \bibinfo {author} {\bibfnamefont {S.}~\bibnamefont {Strelchuk}},\ and\ \bibinfo {author} {\bibfnamefont {J.~D.}\ \bibnamefont {Whitfield}},\ }\href@noop {} {\bibinfo {title} {A {Sierpinski} triangle fermion-to-qubit transform}} (\bibinfo {year} {2024}),\ \Eprint {https://arxiv.org/abs/2409.04348} {arXiv:2409.04348 [quant-ph]} \BibitemShut {NoStop}%
\bibitem [{\citenamefont {Derby}\ \emph {et~al.}(2021)\citenamefont {Derby}, \citenamefont {Klassen}, \citenamefont {Bausch},\ and\ \citenamefont {Cubitt}}]{derby2021compact}%
  \BibitemOpen
  \bibfield  {author} {\bibinfo {author} {\bibfnamefont {C.}~\bibnamefont {Derby}}, \bibinfo {author} {\bibfnamefont {J.}~\bibnamefont {Klassen}}, \bibinfo {author} {\bibfnamefont {J.}~\bibnamefont {Bausch}},\ and\ \bibinfo {author} {\bibfnamefont {T.}~\bibnamefont {Cubitt}},\ }\href {https://doi.org/10.1103/PhysRevB.104.035118} {\bibfield  {journal} {\bibinfo  {journal} {Physical Review B}\ }\textbf {\bibinfo {volume} {104}},\ \bibinfo {pages} {035118} (\bibinfo {year} {2021})},\ \Eprint {https://arxiv.org/abs/2003.06939} {arXiv:2003.06939 [quant-ph]} \BibitemShut {NoStop}%
\bibitem [{\citenamefont {Kirby}\ \emph {et~al.}(2021)\citenamefont {Kirby}, \citenamefont {Hadi}, \citenamefont {Kreshchuk},\ and\ \citenamefont {Love}}]{kirby2021compactencoding}%
  \BibitemOpen
  \bibfield  {author} {\bibinfo {author} {\bibfnamefont {W.~M.}\ \bibnamefont {Kirby}}, \bibinfo {author} {\bibfnamefont {S.}~\bibnamefont {Hadi}}, \bibinfo {author} {\bibfnamefont {M.}~\bibnamefont {Kreshchuk}},\ and\ \bibinfo {author} {\bibfnamefont {P.~J.}\ \bibnamefont {Love}},\ }\href {https://doi.org/10.1103/PhysRevA.104.042607} {\bibfield  {journal} {\bibinfo  {journal} {Physical Review A}\ }\textbf {\bibinfo {volume} {104}},\ \bibinfo {pages} {042607} (\bibinfo {year} {2021})},\ \Eprint {https://arxiv.org/abs/2105.10941} {arXiv:2105.10941 [quant-ph]} \BibitemShut {NoStop}%
\bibitem [{\citenamefont {Kirby}\ \emph {et~al.}(2022)\citenamefont {Kirby}, \citenamefont {Fuller}, \citenamefont {Hadfield},\ and\ \citenamefont {Mezzacapo}}]{kirby2022secondquantized}%
  \BibitemOpen
  \bibfield  {author} {\bibinfo {author} {\bibfnamefont {W.}~\bibnamefont {Kirby}}, \bibinfo {author} {\bibfnamefont {B.}~\bibnamefont {Fuller}}, \bibinfo {author} {\bibfnamefont {C.}~\bibnamefont {Hadfield}},\ and\ \bibinfo {author} {\bibfnamefont {A.}~\bibnamefont {Mezzacapo}},\ }\href {https://doi.org/10.1103/PRXQuantum.3.020351} {\bibfield  {journal} {\bibinfo  {journal} {PRX Quantum}\ }\textbf {\bibinfo {volume} {3}},\ \bibinfo {pages} {020351} (\bibinfo {year} {2022})},\ \Eprint {https://arxiv.org/abs/2109.14465} {arXiv:2109.14465 [quant-ph]} \BibitemShut {NoStop}%
\bibitem [{\citenamefont {Carolan}\ and\ \citenamefont {Schaeffer}(2025)}]{carolan2025succinct}%
  \BibitemOpen
  \bibfield  {author} {\bibinfo {author} {\bibfnamefont {J.}~\bibnamefont {Carolan}}\ and\ \bibinfo {author} {\bibfnamefont {L.}~\bibnamefont {Schaeffer}},\ }in\ \href {https://doi.org/10.4230/LIPIcs.ITCS.2025.32} {\emph {\bibinfo {booktitle} {16th Innovations in Theoretical Computer Science Conference (ITCS 2025)}}},\ \bibinfo {series} {Leibniz International Proceedings in Informatics (LIPIcs)}, Vol.\ \bibinfo {volume} {325}\ (\bibinfo  {publisher} {Schloss Dagstuhl -- Leibniz-Zentrum f{\"u}r Informatik},\ \bibinfo {year} {2025})\ pp.\ \bibinfo {pages} {32:1--32:21},\ \Eprint {https://arxiv.org/abs/2410.04015} {arXiv:2410.04015 [quant-ph]} \BibitemShut {NoStop}%
\bibitem [{\citenamefont {Berry}\ \emph {et~al.}(2019)\citenamefont {Berry}, \citenamefont {Gidney}, \citenamefont {Motta}, \citenamefont {McClean},\ and\ \citenamefont {Babbush}}]{berry2019qubitization}%
  \BibitemOpen
  \bibfield  {author} {\bibinfo {author} {\bibfnamefont {D.~W.}\ \bibnamefont {Berry}}, \bibinfo {author} {\bibfnamefont {C.}~\bibnamefont {Gidney}}, \bibinfo {author} {\bibfnamefont {M.}~\bibnamefont {Motta}}, \bibinfo {author} {\bibfnamefont {J.~R.}\ \bibnamefont {McClean}},\ and\ \bibinfo {author} {\bibfnamefont {R.}~\bibnamefont {Babbush}},\ }\href {https://doi.org/10.22331/q-2019-12-02-208} {\bibfield  {journal} {\bibinfo  {journal} {Quantum}\ }\textbf {\bibinfo {volume} {3}},\ \bibinfo {pages} {208} (\bibinfo {year} {2019})},\ \Eprint {https://arxiv.org/abs/1902.02134} {arXiv:1902.02134} \BibitemShut {NoStop}%
\bibitem [{Note1()}]{Note1}%
  \BibitemOpen
  \bibinfo {note} {We could think of partitioning the spin sectors onto two QPUs, after which we would have to partition spatial orbitals within each sector. For a single balanced bipartition for each sector, this would result in four QPUs of size $N/2$ qubits each to simulate the evolution of $2N$ spin-orbitals}\BibitemShut {NoStop}%
\bibitem [{\citenamefont {Whitten}(1973)}]{whitten1973coulombic}%
  \BibitemOpen
  \bibfield  {author} {\bibinfo {author} {\bibfnamefont {J.~L.}\ \bibnamefont {Whitten}},\ }\href {https://doi.org/10.1063/1.1679012} {\bibfield  {journal} {\bibinfo  {journal} {The Journal of Chemical Physics}\ }\textbf {\bibinfo {volume} {58}},\ \bibinfo {pages} {4496} (\bibinfo {year} {1973})}\BibitemShut {NoStop}%
\bibitem [{\citenamefont {Aquilante}\ \emph {et~al.}(2011)\citenamefont {Aquilante}, \citenamefont {Boman}, \citenamefont {Bostr{\"o}m}, \citenamefont {Koch}, \citenamefont {Lindh}, \citenamefont {S{\'a}nchez~de Mer{\'a}s},\ and\ \citenamefont {Pedersen}}]{aquilante2011cholesky}%
  \BibitemOpen
  \bibfield  {author} {\bibinfo {author} {\bibfnamefont {F.}~\bibnamefont {Aquilante}}, \bibinfo {author} {\bibfnamefont {L.}~\bibnamefont {Boman}}, \bibinfo {author} {\bibfnamefont {J.}~\bibnamefont {Bostr{\"o}m}}, \bibinfo {author} {\bibfnamefont {H.}~\bibnamefont {Koch}}, \bibinfo {author} {\bibfnamefont {R.}~\bibnamefont {Lindh}}, \bibinfo {author} {\bibfnamefont {A.}~\bibnamefont {S{\'a}nchez~de Mer{\'a}s}},\ and\ \bibinfo {author} {\bibfnamefont {T.~B.}\ \bibnamefont {Pedersen}},\ }in\ \href {https://doi.org/10.1007/978-90-481-2853-2_13} {\emph {\bibinfo {booktitle} {Linear-Scaling Techniques in Computational Chemistry and Physics}}}\ (\bibinfo  {publisher} {Springer},\ \bibinfo {year} {2011})\ pp.\ \bibinfo {pages} {301--343}\BibitemShut {NoStop}%
\bibitem [{\citenamefont {Peng}\ and\ \citenamefont {Kowalski}(2017)}]{peng2017lowrank}%
  \BibitemOpen
  \bibfield  {author} {\bibinfo {author} {\bibfnamefont {B.}~\bibnamefont {Peng}}\ and\ \bibinfo {author} {\bibfnamefont {K.}~\bibnamefont {Kowalski}},\ }\href {https://doi.org/10.1021/acs.jctc.7b00605} {\bibfield  {journal} {\bibinfo  {journal} {Journal of Chemical Theory and Computation}\ }\textbf {\bibinfo {volume} {13}},\ \bibinfo {pages} {4179} (\bibinfo {year} {2017})}\BibitemShut {NoStop}%
\bibitem [{\citenamefont {Bellonzi}\ \emph {et~al.}(2025)\citenamefont {Bellonzi}, \citenamefont {Cantin}, \citenamefont {Jangrouei}, \citenamefont {Kunitsa}, \citenamefont {Necaise}, \citenamefont {Nguyen}, \citenamefont {Penuel}, \citenamefont {Radin}, \citenamefont {Fontalvo}, \citenamefont {Sundareswara}, \citenamefont {Wang}, \citenamefont {Watts}, \citenamefont {Zhou}, \citenamefont {Garrett}, \citenamefont {Holmes}, \citenamefont {Izmaylov},\ and\ \citenamefont {Otten}}]{bellonzi2025qbgsee}%
  \BibitemOpen
  \bibfield  {author} {\bibinfo {author} {\bibfnamefont {N.}~\bibnamefont {Bellonzi}}, \bibinfo {author} {\bibfnamefont {J.~T.}\ \bibnamefont {Cantin}}, \bibinfo {author} {\bibfnamefont {M.~R.}\ \bibnamefont {Jangrouei}}, \bibinfo {author} {\bibfnamefont {A.}~\bibnamefont {Kunitsa}}, \bibinfo {author} {\bibfnamefont {J.}~\bibnamefont {Necaise}}, \bibinfo {author} {\bibfnamefont {N.}~\bibnamefont {Nguyen}}, \bibinfo {author} {\bibfnamefont {J.}~\bibnamefont {Penuel}}, \bibinfo {author} {\bibfnamefont {M.~D.}\ \bibnamefont {Radin}}, \bibinfo {author} {\bibfnamefont {J.~R.}\ \bibnamefont {Fontalvo}}, \bibinfo {author} {\bibfnamefont {R.}~\bibnamefont {Sundareswara}}, \bibinfo {author} {\bibfnamefont {L.}~\bibnamefont {Wang}}, \bibinfo {author} {\bibfnamefont {T.}~\bibnamefont {Watts}}, \bibinfo {author} {\bibfnamefont {Y.}~\bibnamefont {Zhou}}, \bibinfo {author} {\bibfnamefont {M.~C.}\ \bibnamefont {Garrett}}, \bibinfo {author} {\bibfnamefont {A.}~\bibnamefont {Holmes}}, \bibinfo {author} {\bibfnamefont {A.~F.}\
  \bibnamefont {Izmaylov}},\ and\ \bibinfo {author} {\bibfnamefont {M.}~\bibnamefont {Otten}},\ }\href {https://doi.org/10.48550/arXiv.2508.10873} {\bibinfo {title} {{QB} ground state energy estimation benchmark}} (\bibinfo {year} {2025}),\ \Eprint {https://arxiv.org/abs/2508.10873} {arXiv:2508.10873 [quant-ph]} \BibitemShut {NoStop}%
\bibitem [{\citenamefont {Roos}\ \emph {et~al.}(1980)\citenamefont {Roos}, \citenamefont {Taylor},\ and\ \citenamefont {Siegbahn}}]{roos1980complete}%
  \BibitemOpen
  \bibfield  {author} {\bibinfo {author} {\bibfnamefont {B.~O.}\ \bibnamefont {Roos}}, \bibinfo {author} {\bibfnamefont {P.~R.}\ \bibnamefont {Taylor}},\ and\ \bibinfo {author} {\bibfnamefont {P.~E.~M.}\ \bibnamefont {Siegbahn}},\ }\href {https://doi.org/10.1016/0301-0104(80)80045-0} {\bibfield  {journal} {\bibinfo  {journal} {Chemical Physics}\ }\textbf {\bibinfo {volume} {48}},\ \bibinfo {pages} {157} (\bibinfo {year} {1980})}\BibitemShut {NoStop}%
\bibitem [{\citenamefont {Knizia}\ and\ \citenamefont {Chan}(2012)}]{knizia2012density}%
  \BibitemOpen
  \bibfield  {author} {\bibinfo {author} {\bibfnamefont {G.}~\bibnamefont {Knizia}}\ and\ \bibinfo {author} {\bibfnamefont {G.~K.-L.}\ \bibnamefont {Chan}},\ }\href {https://doi.org/10.1103/PhysRevLett.109.186404} {\bibfield  {journal} {\bibinfo  {journal} {Physical Review Letters}\ }\textbf {\bibinfo {volume} {109}},\ \bibinfo {pages} {186404} (\bibinfo {year} {2012})},\ \Eprint {https://arxiv.org/abs/1204.5783} {arXiv:1204.5783 [cond-mat.str-el]} \BibitemShut {NoStop}%
\bibitem [{\citenamefont {Kitaura}\ \emph {et~al.}(1999)\citenamefont {Kitaura}, \citenamefont {Ikeo}, \citenamefont {Asada}, \citenamefont {Nakano},\ and\ \citenamefont {Uebayasi}}]{kitaura1999fragment}%
  \BibitemOpen
  \bibfield  {author} {\bibinfo {author} {\bibfnamefont {K.}~\bibnamefont {Kitaura}}, \bibinfo {author} {\bibfnamefont {E.}~\bibnamefont {Ikeo}}, \bibinfo {author} {\bibfnamefont {T.}~\bibnamefont {Asada}}, \bibinfo {author} {\bibfnamefont {T.}~\bibnamefont {Nakano}},\ and\ \bibinfo {author} {\bibfnamefont {M.}~\bibnamefont {Uebayasi}},\ }\href {https://doi.org/10.1016/S0009-2614(99)00874-X} {\bibfield  {journal} {\bibinfo  {journal} {Chemical Physics Letters}\ }\textbf {\bibinfo {volume} {313}},\ \bibinfo {pages} {701} (\bibinfo {year} {1999})}\BibitemShut {NoStop}%
\bibitem [{\citenamefont {Mochizuki}\ \emph {et~al.}(2021)\citenamefont {Mochizuki}, \citenamefont {Tanaka},\ and\ \citenamefont {Fukuzawa}}]{FMO_review_2021}%
  \BibitemOpen
  \bibinfo {editor} {\bibfnamefont {Y.}~\bibnamefont {Mochizuki}}, \bibinfo {editor} {\bibfnamefont {S.}~\bibnamefont {Tanaka}},\ and\ \bibinfo {editor} {\bibfnamefont {K.}~\bibnamefont {Fukuzawa}},\ eds.,\ \href {https://doi.org/10.1007/978-981-15-9235-5} {\emph {\bibinfo {title} {Recent Advances of the Fragment Molecular Orbital Method: Enhanced Performance and Applicability}}}\ (\bibinfo  {publisher} {Springer Singapore},\ \bibinfo {year} {2021})\BibitemShut {NoStop}%
\bibitem [{Note2()}]{Note2}%
  \BibitemOpen
  \bibinfo {note} {The semantics vary across the literature, for e.g., ``fermionic fragments''\cite {oumarou2024compressed}, ``leafs'' \cite {oumarou2024compressed}, ``pairwise operators'' for $\protect \hat {G}_\ell $\cite {motta2021lowrank}, ``Cholesky vectors'' for $|g^{(\ell )}\rangle $\cite {motta2019afqmc_lowrank}. Our exact fragment terminology is from \cite {bellonzi2025qbgsee}.}\BibitemShut {Stop}%
\bibitem [{\citenamefont {Bravyi}(2005)}]{bravyi2005lagrangianflo}%
  \BibitemOpen
  \bibfield  {author} {\bibinfo {author} {\bibfnamefont {S.}~\bibnamefont {Bravyi}},\ }\href {https://doi.org/10.26421/QIC5.3-3} {\bibfield  {journal} {\bibinfo  {journal} {Quantum Information and Computation}\ }\textbf {\bibinfo {volume} {5}},\ \bibinfo {pages} {216} (\bibinfo {year} {2005})},\ \Eprint {https://arxiv.org/abs/quant-ph/0404180} {arXiv:quant-ph/0404180 [quant-ph]} \BibitemShut {NoStop}%
\bibitem [{\citenamefont {Kivlichan}\ \emph {et~al.}(2018)\citenamefont {Kivlichan}, \citenamefont {McClean}, \citenamefont {Wiebe}, \citenamefont {Gidney}, \citenamefont {Aspuru-Guzik}, \citenamefont {Chan},\ and\ \citenamefont {Babbush}}]{kivlichan2018fswap}%
  \BibitemOpen
  \bibfield  {author} {\bibinfo {author} {\bibfnamefont {I.~D.}\ \bibnamefont {Kivlichan}}, \bibinfo {author} {\bibfnamefont {J.}~\bibnamefont {McClean}}, \bibinfo {author} {\bibfnamefont {N.}~\bibnamefont {Wiebe}}, \bibinfo {author} {\bibfnamefont {C.}~\bibnamefont {Gidney}}, \bibinfo {author} {\bibfnamefont {A.}~\bibnamefont {Aspuru-Guzik}}, \bibinfo {author} {\bibfnamefont {G.~K.-L.}\ \bibnamefont {Chan}},\ and\ \bibinfo {author} {\bibfnamefont {R.}~\bibnamefont {Babbush}},\ }\href {https://doi.org/10.1103/PhysRevLett.120.110501} {\bibfield  {journal} {\bibinfo  {journal} {Physical Review Letters}\ }\textbf {\bibinfo {volume} {120}},\ \bibinfo {pages} {110501} (\bibinfo {year} {2018})},\ \Eprint {https://arxiv.org/abs/1711.04789} {arXiv:1711.04789 [quant-ph]} \BibitemShut {NoStop}%
\bibitem [{\citenamefont {Dunlap}\ \emph {et~al.}(1979)\citenamefont {Dunlap}, \citenamefont {Connolly},\ and\ \citenamefont {Sabin}}]{dunlap1979approximations}%
  \BibitemOpen
  \bibfield  {author} {\bibinfo {author} {\bibfnamefont {B.~I.}\ \bibnamefont {Dunlap}}, \bibinfo {author} {\bibfnamefont {J.~W.~D.}\ \bibnamefont {Connolly}},\ and\ \bibinfo {author} {\bibfnamefont {J.~R.}\ \bibnamefont {Sabin}},\ }\href {https://doi.org/10.1063/1.438728} {\bibfield  {journal} {\bibinfo  {journal} {The Journal of Chemical Physics}\ }\textbf {\bibinfo {volume} {71}},\ \bibinfo {pages} {3396} (\bibinfo {year} {1979})}\BibitemShut {NoStop}%
\bibitem [{\citenamefont {Vahtras}\ \emph {et~al.}(1993)\citenamefont {Vahtras}, \citenamefont {Alml{\"o}f},\ and\ \citenamefont {Feyereisen}}]{vahtras1993integral}%
  \BibitemOpen
  \bibfield  {author} {\bibinfo {author} {\bibfnamefont {O.}~\bibnamefont {Vahtras}}, \bibinfo {author} {\bibfnamefont {J.}~\bibnamefont {Alml{\"o}f}},\ and\ \bibinfo {author} {\bibfnamefont {M.~W.}\ \bibnamefont {Feyereisen}},\ }\href {https://doi.org/10.1016/0009-2614(93)89151-7} {\bibfield  {journal} {\bibinfo  {journal} {Chemical Physics Letters}\ }\textbf {\bibinfo {volume} {213}},\ \bibinfo {pages} {514} (\bibinfo {year} {1993})}\BibitemShut {NoStop}%
\bibitem [{\citenamefont {Beebe}\ and\ \citenamefont {Linderberg}(1977)}]{beebe1977simplifications}%
  \BibitemOpen
  \bibfield  {author} {\bibinfo {author} {\bibfnamefont {N.~H.~F.}\ \bibnamefont {Beebe}}\ and\ \bibinfo {author} {\bibfnamefont {J.}~\bibnamefont {Linderberg}},\ }\href {https://doi.org/10.1002/qua.560120408} {\bibfield  {journal} {\bibinfo  {journal} {International Journal of Quantum Chemistry}\ }\textbf {\bibinfo {volume} {12}},\ \bibinfo {pages} {683} (\bibinfo {year} {1977})}\BibitemShut {NoStop}%
\bibitem [{\citenamefont {Koch}\ \emph {et~al.}(2003)\citenamefont {Koch}, \citenamefont {S{\'a}nchez~de Mer{\'a}s},\ and\ \citenamefont {Pedersen}}]{koch2003reduced}%
  \BibitemOpen
  \bibfield  {author} {\bibinfo {author} {\bibfnamefont {H.}~\bibnamefont {Koch}}, \bibinfo {author} {\bibfnamefont {A.}~\bibnamefont {S{\'a}nchez~de Mer{\'a}s}},\ and\ \bibinfo {author} {\bibfnamefont {T.~B.}\ \bibnamefont {Pedersen}},\ }\href {https://doi.org/10.1063/1.1578621} {\bibfield  {journal} {\bibinfo  {journal} {The Journal of Chemical Physics}\ }\textbf {\bibinfo {volume} {118}},\ \bibinfo {pages} {9481} (\bibinfo {year} {2003})}\BibitemShut {NoStop}%
\bibitem [{\citenamefont {Folkestad}\ \emph {et~al.}(2019)\citenamefont {Folkestad}, \citenamefont {Kj{\o}nstad},\ and\ \citenamefont {Koch}}]{folkestad2019efficient}%
  \BibitemOpen
  \bibfield  {author} {\bibinfo {author} {\bibfnamefont {S.~D.}\ \bibnamefont {Folkestad}}, \bibinfo {author} {\bibfnamefont {E.~F.}\ \bibnamefont {Kj{\o}nstad}},\ and\ \bibinfo {author} {\bibfnamefont {H.}~\bibnamefont {Koch}},\ }\href {https://doi.org/10.1063/1.5083802} {\bibfield  {journal} {\bibinfo  {journal} {The Journal of Chemical Physics}\ }\textbf {\bibinfo {volume} {150}},\ \bibinfo {pages} {194112} (\bibinfo {year} {2019})},\ \Eprint {https://arxiv.org/abs/1811.12890} {arXiv:1811.12890} \BibitemShut {NoStop}%
\bibitem [{\citenamefont {Aquilante}\ \emph {et~al.}(2017)\citenamefont {Aquilante}, \citenamefont {Delcey}, \citenamefont {Pedersen}, \citenamefont {Fdez.~Galv{\'a}n},\ and\ \citenamefont {Lindh}}]{aquilante2017innerprojection}%
  \BibitemOpen
  \bibfield  {author} {\bibinfo {author} {\bibfnamefont {F.}~\bibnamefont {Aquilante}}, \bibinfo {author} {\bibfnamefont {M.~G.}\ \bibnamefont {Delcey}}, \bibinfo {author} {\bibfnamefont {T.~B.}\ \bibnamefont {Pedersen}}, \bibinfo {author} {\bibfnamefont {I.}~\bibnamefont {Fdez.~Galv{\'a}n}},\ and\ \bibinfo {author} {\bibfnamefont {R.}~\bibnamefont {Lindh}},\ }\href {https://doi.org/10.1080/00268976.2017.1284354} {\bibfield  {journal} {\bibinfo  {journal} {Molecular Physics}\ }\textbf {\bibinfo {volume} {115}},\ \bibinfo {pages} {2052} (\bibinfo {year} {2017})}\BibitemShut {NoStop}%
\bibitem [{\citenamefont {Motta}\ \emph {et~al.}(2019)\citenamefont {Motta}, \citenamefont {Shee}, \citenamefont {Zhang},\ and\ \citenamefont {Chan}}]{motta2019afqmc_lowrank}%
  \BibitemOpen
  \bibfield  {author} {\bibinfo {author} {\bibfnamefont {M.}~\bibnamefont {Motta}}, \bibinfo {author} {\bibfnamefont {J.}~\bibnamefont {Shee}}, \bibinfo {author} {\bibfnamefont {S.}~\bibnamefont {Zhang}},\ and\ \bibinfo {author} {\bibfnamefont {G.~K.-L.}\ \bibnamefont {Chan}},\ }\href {https://doi.org/10.1021/acs.jctc.8b00996} {\bibfield  {journal} {\bibinfo  {journal} {Journal of Chemical Theory and Computation}\ }\textbf {\bibinfo {volume} {15}},\ \bibinfo {pages} {3510} (\bibinfo {year} {2019})},\ \Eprint {https://arxiv.org/abs/1810.01549} {arXiv:1810.01549} \BibitemShut {NoStop}%
\bibitem [{\citenamefont {Childs}\ \emph {et~al.}(2021)\citenamefont {Childs}, \citenamefont {Su}, \citenamefont {Tran}, \citenamefont {Wiebe},\ and\ \citenamefont {Zhu}}]{childs2021theory}%
  \BibitemOpen
  \bibfield  {author} {\bibinfo {author} {\bibfnamefont {A.~M.}\ \bibnamefont {Childs}}, \bibinfo {author} {\bibfnamefont {Y.}~\bibnamefont {Su}}, \bibinfo {author} {\bibfnamefont {M.~C.}\ \bibnamefont {Tran}}, \bibinfo {author} {\bibfnamefont {N.}~\bibnamefont {Wiebe}},\ and\ \bibinfo {author} {\bibfnamefont {S.}~\bibnamefont {Zhu}},\ }\href {https://doi.org/10.1103/PhysRevX.11.011020} {\bibfield  {journal} {\bibinfo  {journal} {Physical Review X}\ }\textbf {\bibinfo {volume} {11}},\ \bibinfo {pages} {011020} (\bibinfo {year} {2021})},\ \Eprint {https://arxiv.org/abs/1912.08854} {arXiv:1912.08854 [quant-ph]} \BibitemShut {NoStop}%
\bibitem [{\citenamefont {Campbell}(2021)}]{Campbell_2021}%
  \BibitemOpen
  \bibfield  {author} {\bibinfo {author} {\bibfnamefont {E.~T.}\ \bibnamefont {Campbell}},\ }\href {https://doi.org/10.1088/2058-9565/ac3110} {\bibfield  {journal} {\bibinfo  {journal} {Quantum Science and Technology}\ }\textbf {\bibinfo {volume} {7}},\ \bibinfo {pages} {015007} (\bibinfo {year} {2021})},\ \Eprint {https://arxiv.org/abs/2012.09238} {2012.09238} \BibitemShut {NoStop}%
\bibitem [{\citenamefont {Mehendale}\ \emph {et~al.}(2025)\citenamefont {Mehendale}, \citenamefont {Mart{\'i}nez-Mart{\'i}nez}, \citenamefont {Kamath},\ and\ \citenamefont {Izmaylov}}]{mehendale2023estimating}%
  \BibitemOpen
  \bibfield  {author} {\bibinfo {author} {\bibfnamefont {S.~G.}\ \bibnamefont {Mehendale}}, \bibinfo {author} {\bibfnamefont {L.~A.}\ \bibnamefont {Mart{\'i}nez-Mart{\'i}nez}}, \bibinfo {author} {\bibfnamefont {P.~D.}\ \bibnamefont {Kamath}},\ and\ \bibinfo {author} {\bibfnamefont {A.~F.}\ \bibnamefont {Izmaylov}},\ }\href {https://doi.org/10.1039/D5DD00185D} {\bibfield  {journal} {\bibinfo  {journal} {Digital Discovery}\ }\textbf {\bibinfo {volume} {4}},\ \bibinfo {pages} {3540} (\bibinfo {year} {2025})},\ \Eprint {https://arxiv.org/abs/2312.13282} {arXiv:2312.13282 [quant-ph]} \BibitemShut {NoStop}%
\bibitem [{\citenamefont {Low}\ and\ \citenamefont {Chuang}(2019)}]{low2019hamiltonian}%
  \BibitemOpen
  \bibfield  {author} {\bibinfo {author} {\bibfnamefont {G.~H.}\ \bibnamefont {Low}}\ and\ \bibinfo {author} {\bibfnamefont {I.~L.}\ \bibnamefont {Chuang}},\ }\href {https://doi.org/10.22331/q-2019-07-12-163} {\bibfield  {journal} {\bibinfo  {journal} {Quantum}\ }\textbf {\bibinfo {volume} {3}},\ \bibinfo {pages} {163} (\bibinfo {year} {2019})},\ \Eprint {https://arxiv.org/abs/1610.06546} {arXiv:1610.06546 [quant-ph]} \BibitemShut {NoStop}%
\bibitem [{\citenamefont {Babbush}\ \emph {et~al.}(2018{\natexlab{a}})\citenamefont {Babbush}, \citenamefont {Gidney}, \citenamefont {Berry}, \citenamefont {Wiebe}, \citenamefont {McClean}, \citenamefont {Paler}, \citenamefont {Fowler},\ and\ \citenamefont {Neven}}]{babbush2018encoding}%
  \BibitemOpen
  \bibfield  {author} {\bibinfo {author} {\bibfnamefont {R.}~\bibnamefont {Babbush}}, \bibinfo {author} {\bibfnamefont {C.}~\bibnamefont {Gidney}}, \bibinfo {author} {\bibfnamefont {D.~W.}\ \bibnamefont {Berry}}, \bibinfo {author} {\bibfnamefont {N.}~\bibnamefont {Wiebe}}, \bibinfo {author} {\bibfnamefont {J.}~\bibnamefont {McClean}}, \bibinfo {author} {\bibfnamefont {A.}~\bibnamefont {Paler}}, \bibinfo {author} {\bibfnamefont {A.}~\bibnamefont {Fowler}},\ and\ \bibinfo {author} {\bibfnamefont {H.}~\bibnamefont {Neven}},\ }\href {https://doi.org/10.1103/PhysRevX.8.041015} {\bibfield  {journal} {\bibinfo  {journal} {Physical Review X}\ }\textbf {\bibinfo {volume} {8}},\ \bibinfo {pages} {041015} (\bibinfo {year} {2018}{\natexlab{a}})},\ \Eprint {https://arxiv.org/abs/1805.03662} {arXiv:1805.03662 [quant-ph]} \BibitemShut {NoStop}%
\bibitem [{\citenamefont {Leimkuhler}\ and\ \citenamefont {Whaley}(2026)}]{leimkuhler2026exponentialquantumspeedupsnearterm}%
  \BibitemOpen
  \bibfield  {author} {\bibinfo {author} {\bibfnamefont {O.}~\bibnamefont {Leimkuhler}}\ and\ \bibinfo {author} {\bibfnamefont {K.~B.}\ \bibnamefont {Whaley}},\ }\href {https://arxiv.org/abs/2503.21041} {\bibinfo {title} {Exponential quantum speedups for near-term molecular electronic structure methods}} (\bibinfo {year} {2026}),\ \Eprint {https://arxiv.org/abs/2503.21041} {arXiv:2503.21041 [quant-ph]} \BibitemShut {NoStop}%
\bibitem [{\citenamefont {Peschel}(2003)}]{peschel2003rdm}%
  \BibitemOpen
  \bibfield  {author} {\bibinfo {author} {\bibfnamefont {I.}~\bibnamefont {Peschel}},\ }\href {https://doi.org/10.1088/0305-4470/36/14/101} {\bibfield  {journal} {\bibinfo  {journal} {Journal of Physics A: Mathematical and General}\ }\textbf {\bibinfo {volume} {36}},\ \bibinfo {pages} {L205} (\bibinfo {year} {2003})}\BibitemShut {NoStop}%
\bibitem [{\citenamefont {Peschel}\ and\ \citenamefont {Eisler}(2009{\natexlab{a}})}]{peschel2009reduced}%
  \BibitemOpen
  \bibfield  {author} {\bibinfo {author} {\bibfnamefont {I.}~\bibnamefont {Peschel}}\ and\ \bibinfo {author} {\bibfnamefont {V.}~\bibnamefont {Eisler}},\ }\href {https://doi.org/10.1088/1751-8113/42/50/504003} {\bibfield  {journal} {\bibinfo  {journal} {Journal of Physics A: Mathematical and Theoretical}\ }\textbf {\bibinfo {volume} {42}},\ \bibinfo {pages} {504003} (\bibinfo {year} {2009}{\natexlab{a}})}\BibitemShut {NoStop}%
\bibitem [{Note3()}]{Note3}%
  \BibitemOpen
  \bibinfo {note} {Where $R[A,A]$ is defined as the $N_A \times N_A$ submatrix whose elements are defined as $R[A,A]_{ij} = R_{A[i],A[j]}$.}\BibitemShut {Stop}%
\bibitem [{\citenamefont {Tucci}(2005)}]{tucci2005introductioncartanskakdecomposition}%
  \BibitemOpen
  \bibfield  {author} {\bibinfo {author} {\bibfnamefont {R.~R.}\ \bibnamefont {Tucci}},\ }\href {https://arxiv.org/abs/quant-ph/0507171} {\bibinfo {title} {An introduction to cartan's kak decomposition for qc programmers}} (\bibinfo {year} {2005}),\ \Eprint {https://arxiv.org/abs/quant-ph/0507171} {arXiv:quant-ph/0507171 [quant-ph]} \BibitemShut {NoStop}%
\bibitem [{\citenamefont {Hoffman}\ \emph {et~al.}(1972)\citenamefont {Hoffman}, \citenamefont {Raffenetti},\ and\ \citenamefont {Ruedenberg}}]{hoffman1972generalized}%
  \BibitemOpen
  \bibfield  {author} {\bibinfo {author} {\bibfnamefont {K.}~\bibnamefont {Hoffman}}, \bibinfo {author} {\bibfnamefont {R.~C.}\ \bibnamefont {Raffenetti}},\ and\ \bibinfo {author} {\bibfnamefont {K.}~\bibnamefont {Ruedenberg}},\ }\href {https://doi.org/10.1063/1.1666011} {\bibfield  {journal} {\bibinfo  {journal} {Journal of Mathematical Physics}\ }\textbf {\bibinfo {volume} {13}},\ \bibinfo {pages} {528} (\bibinfo {year} {1972})}\BibitemShut {NoStop}%
\bibitem [{\citenamefont {Paige}\ and\ \citenamefont {Wei}(1994)}]{paige1994history}%
  \BibitemOpen
  \bibfield  {author} {\bibinfo {author} {\bibfnamefont {C.~C.}\ \bibnamefont {Paige}}\ and\ \bibinfo {author} {\bibfnamefont {M.}~\bibnamefont {Wei}},\ }\href {https://doi.org/10.1016/0024-3795(94)90446-4} {\bibfield  {journal} {\bibinfo  {journal} {Linear Algebra and its Applications}\ }\textbf {\bibinfo {volume} {208--209}},\ \bibinfo {pages} {303} (\bibinfo {year} {1994})}\BibitemShut {NoStop}%
\bibitem [{\citenamefont {Golub}\ and\ \citenamefont {Van~Loan}(2013)}]{golub2013matrix}%
  \BibitemOpen
  \bibfield  {author} {\bibinfo {author} {\bibfnamefont {G.~H.}\ \bibnamefont {Golub}}\ and\ \bibinfo {author} {\bibfnamefont {C.~F.}\ \bibnamefont {Van~Loan}},\ }\href@noop {} {\emph {\bibinfo {title} {Matrix Computations}}},\ \bibinfo {edition} {4th}\ ed.\ (\bibinfo  {publisher} {Johns Hopkins University Press},\ \bibinfo {year} {2013})\BibitemShut {NoStop}%
\bibitem [{\citenamefont {Ufrecht}\ \emph {et~al.}(2024)\citenamefont {Ufrecht}, \citenamefont {Herzog}, \citenamefont {Scherer}, \citenamefont {Periyasamy}, \citenamefont {Rietsch}, \citenamefont {Plinge},\ and\ \citenamefont {Mutschler}}]{optimal_joint_two}%
  \BibitemOpen
  \bibfield  {author} {\bibinfo {author} {\bibfnamefont {C.}~\bibnamefont {Ufrecht}}, \bibinfo {author} {\bibfnamefont {L.~S.}\ \bibnamefont {Herzog}}, \bibinfo {author} {\bibfnamefont {D.~D.}\ \bibnamefont {Scherer}}, \bibinfo {author} {\bibfnamefont {M.}~\bibnamefont {Periyasamy}}, \bibinfo {author} {\bibfnamefont {S.}~\bibnamefont {Rietsch}}, \bibinfo {author} {\bibfnamefont {A.}~\bibnamefont {Plinge}},\ and\ \bibinfo {author} {\bibfnamefont {C.}~\bibnamefont {Mutschler}},\ }\href {https://doi.org/10.1103/PhysRevA.109.052440} {\bibfield  {journal} {\bibinfo  {journal} {Phys. Rev. A}\ }\textbf {\bibinfo {volume} {109}},\ \bibinfo {pages} {052440} (\bibinfo {year} {2024})}\BibitemShut {NoStop}%
\bibitem [{\citenamefont {Wachter}(1980)}]{wachter1980limiting}%
  \BibitemOpen
  \bibfield  {author} {\bibinfo {author} {\bibfnamefont {K.~W.}\ \bibnamefont {Wachter}},\ }\href {https://doi.org/10.1214/aos/1176345134} {\bibfield  {journal} {\bibinfo  {journal} {The Annals of Statistics}\ }\textbf {\bibinfo {volume} {8}},\ \bibinfo {pages} {937} (\bibinfo {year} {1980})}\BibitemShut {NoStop}%
\bibitem [{\citenamefont {Loaiza}\ \emph {et~al.}(2023)\citenamefont {Loaiza}, \citenamefont {Marefat~Khah}, \citenamefont {Wiebe},\ and\ \citenamefont {Izmaylov}}]{loaiza2023lcucostreduction}%
  \BibitemOpen
  \bibfield  {author} {\bibinfo {author} {\bibfnamefont {I.}~\bibnamefont {Loaiza}}, \bibinfo {author} {\bibfnamefont {A.}~\bibnamefont {Marefat~Khah}}, \bibinfo {author} {\bibfnamefont {N.}~\bibnamefont {Wiebe}},\ and\ \bibinfo {author} {\bibfnamefont {A.~F.}\ \bibnamefont {Izmaylov}},\ }\href {https://doi.org/10.1088/2058-9565/acd577} {\bibfield  {journal} {\bibinfo  {journal} {Quantum Science and Technology}\ }\textbf {\bibinfo {volume} {8}},\ \bibinfo {pages} {035019} (\bibinfo {year} {2023})},\ \Eprint {https://arxiv.org/abs/2208.08272} {arXiv:2208.08272 [quant-ph]} \BibitemShut {NoStop}%
\bibitem [{\citenamefont {Oumarou}\ \emph {et~al.}(2024)\citenamefont {Oumarou}, \citenamefont {Scheurer}, \citenamefont {Parrish}, \citenamefont {Hohenstein},\ and\ \citenamefont {Gogolin}}]{oumarou2024compressed}%
  \BibitemOpen
  \bibfield  {author} {\bibinfo {author} {\bibfnamefont {O.}~\bibnamefont {Oumarou}}, \bibinfo {author} {\bibfnamefont {M.}~\bibnamefont {Scheurer}}, \bibinfo {author} {\bibfnamefont {R.~M.}\ \bibnamefont {Parrish}}, \bibinfo {author} {\bibfnamefont {E.~G.}\ \bibnamefont {Hohenstein}},\ and\ \bibinfo {author} {\bibfnamefont {C.}~\bibnamefont {Gogolin}},\ }\href {https://doi.org/10.22331/q-2024-06-13-1371} {\bibfield  {journal} {\bibinfo  {journal} {{Quantum}}\ }\textbf {\bibinfo {volume} {8}},\ \bibinfo {pages} {1371} (\bibinfo {year} {2024})}\BibitemShut {NoStop}%
\bibitem [{\citenamefont {Rocca}\ \emph {et~al.}(2024)\citenamefont {Rocca}, \citenamefont {Cortes}, \citenamefont {Gonthier}, \citenamefont {Ollitrault}, \citenamefont {Parrish}, \citenamefont {Anselmetti}, \citenamefont {Degroote}, \citenamefont {Moll}, \citenamefont {Santagati},\ and\ \citenamefont {Streif}}]{rocca2024symmetry}%
  \BibitemOpen
  \bibfield  {author} {\bibinfo {author} {\bibfnamefont {D.}~\bibnamefont {Rocca}}, \bibinfo {author} {\bibfnamefont {C.~L.}\ \bibnamefont {Cortes}}, \bibinfo {author} {\bibfnamefont {J.~F.}\ \bibnamefont {Gonthier}}, \bibinfo {author} {\bibfnamefont {P.~J.}\ \bibnamefont {Ollitrault}}, \bibinfo {author} {\bibfnamefont {R.~M.}\ \bibnamefont {Parrish}}, \bibinfo {author} {\bibfnamefont {G.-L.}\ \bibnamefont {Anselmetti}}, \bibinfo {author} {\bibfnamefont {M.}~\bibnamefont {Degroote}}, \bibinfo {author} {\bibfnamefont {N.}~\bibnamefont {Moll}}, \bibinfo {author} {\bibfnamefont {R.}~\bibnamefont {Santagati}},\ and\ \bibinfo {author} {\bibfnamefont {M.}~\bibnamefont {Streif}},\ }\href {https://doi.org/10.1021/acs.jctc.4c00352} {\bibfield  {journal} {\bibinfo  {journal} {Journal of Chemical Theory and Computation}\ }\textbf {\bibinfo {volume} {20}},\ \bibinfo {pages} {4639} (\bibinfo {year} {2024})},\ \Eprint {https://arxiv.org/abs/2403.03502} {arXiv:2403.03502 [quant-ph]} \BibitemShut {NoStop}%
\bibitem [{\citenamefont {Marqversen}\ \emph {et~al.}(2026)\citenamefont {Marqversen}, \citenamefont {Baranes}, \citenamefont {Sirotin},\ and\ \citenamefont {Borregaard}}]{ds45-fm9n}%
  \BibitemOpen
  \bibfield  {author} {\bibinfo {author} {\bibfnamefont {F.~K.}\ \bibnamefont {Marqversen}}, \bibinfo {author} {\bibfnamefont {G.}~\bibnamefont {Baranes}}, \bibinfo {author} {\bibfnamefont {M.}~\bibnamefont {Sirotin}},\ and\ \bibinfo {author} {\bibfnamefont {J.}~\bibnamefont {Borregaard}},\ }\href {https://doi.org/10.1103/ds45-fm9n} {\bibfield  {journal} {\bibinfo  {journal} {Phys. Rev. Res.}\ }\textbf {\bibinfo {volume} {8}},\ \bibinfo {pages} {023097} (\bibinfo {year} {2026})}\BibitemShut {NoStop}%
\bibitem [{\citenamefont {Calabrese}\ and\ \citenamefont {Cardy}(2005)}]{Calabrese2005}%
  \BibitemOpen
  \bibfield  {author} {\bibinfo {author} {\bibfnamefont {P.}~\bibnamefont {Calabrese}}\ and\ \bibinfo {author} {\bibfnamefont {J.}~\bibnamefont {Cardy}},\ }\href {https://doi.org/10.1088/1742-5468/2005/04/p04010} {\bibfield  {journal} {\bibinfo  {journal} {Journal of Statistical Mechanics: Theory and Experiment}\ }\textbf {\bibinfo {volume} {2005}},\ \bibinfo {pages} {P04010} (\bibinfo {year} {2005})}\BibitemShut {NoStop}%
\bibitem [{\citenamefont {Fagotti}\ and\ \citenamefont {Calabrese}(2008)}]{Fagotti2008}%
  \BibitemOpen
  \bibfield  {author} {\bibinfo {author} {\bibfnamefont {M.}~\bibnamefont {Fagotti}}\ and\ \bibinfo {author} {\bibfnamefont {P.}~\bibnamefont {Calabrese}},\ }\bibfield  {journal} {\bibinfo  {journal} {Physical Review A}\ }\textbf {\bibinfo {volume} {78}},\ \href {https://doi.org/10.1103/physreva.78.010306} {10.1103/physreva.78.010306} (\bibinfo {year} {2008})\BibitemShut {NoStop}%
\bibitem [{\citenamefont {Peschel}\ and\ \citenamefont {Eisler}(2009{\natexlab{b}})}]{Peschel2009}%
  \BibitemOpen
  \bibfield  {author} {\bibinfo {author} {\bibfnamefont {I.}~\bibnamefont {Peschel}}\ and\ \bibinfo {author} {\bibfnamefont {V.}~\bibnamefont {Eisler}},\ }\href {https://doi.org/10.1088/1751-8113/42/50/504003} {\bibfield  {journal} {\bibinfo  {journal} {Journal of Physics A: Mathematical and Theoretical}\ }\textbf {\bibinfo {volume} {42}},\ \bibinfo {pages} {504003} (\bibinfo {year} {2009}{\natexlab{b}})}\BibitemShut {NoStop}%
\bibitem [{\citenamefont {Alba}\ and\ \citenamefont {Calabrese}(2017)}]{Alba2017}%
  \BibitemOpen
  \bibfield  {author} {\bibinfo {author} {\bibfnamefont {V.}~\bibnamefont {Alba}}\ and\ \bibinfo {author} {\bibfnamefont {P.}~\bibnamefont {Calabrese}},\ }\href {https://doi.org/10.1073/pnas.1703516114} {\bibfield  {journal} {\bibinfo  {journal} {Proceedings of the National Academy of Sciences}\ }\textbf {\bibinfo {volume} {114}},\ \bibinfo {pages} {7947–7951} (\bibinfo {year} {2017})}\BibitemShut {NoStop}%
\bibitem [{\citenamefont {Dubail}(2017)}]{Dubail2017}%
  \BibitemOpen
  \bibfield  {author} {\bibinfo {author} {\bibfnamefont {J.}~\bibnamefont {Dubail}},\ }\href {https://doi.org/10.1088/1751-8121/aa6f38} {\bibfield  {journal} {\bibinfo  {journal} {Journal of Physics A: Mathematical and Theoretical}\ }\textbf {\bibinfo {volume} {50}},\ \bibinfo {pages} {234001} (\bibinfo {year} {2017})}\BibitemShut {NoStop}%
\bibitem [{\citenamefont {Nandkishore}\ and\ \citenamefont {Huse}(2015)}]{Nandkishore2015}%
  \BibitemOpen
  \bibfield  {author} {\bibinfo {author} {\bibfnamefont {R.}~\bibnamefont {Nandkishore}}\ and\ \bibinfo {author} {\bibfnamefont {D.~A.}\ \bibnamefont {Huse}},\ }\href {https://doi.org/10.1146/annurev-conmatphys-031214-014726} {\bibfield  {journal} {\bibinfo  {journal} {Annual Review of Condensed Matter Physics}\ }\textbf {\bibinfo {volume} {6}},\ \bibinfo {pages} {15–38} (\bibinfo {year} {2015})}\BibitemShut {NoStop}%
\bibitem [{\citenamefont {Paviglianiti}\ \emph {et~al.}(2026)\citenamefont {Paviglianiti}, \citenamefont {Lumia}, \citenamefont {Tirrito}, \citenamefont {Silva}, \citenamefont {Collura}, \citenamefont {Turkeshi},\ and\ \citenamefont {Lami}}]{Paviglianiti2026}%
  \BibitemOpen
  \bibfield  {author} {\bibinfo {author} {\bibfnamefont {A.}~\bibnamefont {Paviglianiti}}, \bibinfo {author} {\bibfnamefont {L.}~\bibnamefont {Lumia}}, \bibinfo {author} {\bibfnamefont {E.}~\bibnamefont {Tirrito}}, \bibinfo {author} {\bibfnamefont {A.}~\bibnamefont {Silva}}, \bibinfo {author} {\bibfnamefont {M.}~\bibnamefont {Collura}}, \bibinfo {author} {\bibfnamefont {X.}~\bibnamefont {Turkeshi}},\ and\ \bibinfo {author} {\bibfnamefont {G.}~\bibnamefont {Lami}},\ }\bibfield  {journal} {\bibinfo  {journal} {Physical Review Letters}\ }\textbf {\bibinfo {volume} {136}},\ \href {https://doi.org/10.1103/w97w-7zny} {10.1103/w97w-7zny} (\bibinfo {year} {2026})\BibitemShut {NoStop}%
\bibitem [{\citenamefont {Nahum}\ \emph {et~al.}(2017)\citenamefont {Nahum}, \citenamefont {Ruhman}, \citenamefont {Vijay},\ and\ \citenamefont {Haah}}]{PhysRevX.7.031016}%
  \BibitemOpen
  \bibfield  {author} {\bibinfo {author} {\bibfnamefont {A.}~\bibnamefont {Nahum}}, \bibinfo {author} {\bibfnamefont {J.}~\bibnamefont {Ruhman}}, \bibinfo {author} {\bibfnamefont {S.}~\bibnamefont {Vijay}},\ and\ \bibinfo {author} {\bibfnamefont {J.}~\bibnamefont {Haah}},\ }\href {https://doi.org/10.1103/PhysRevX.7.031016} {\bibfield  {journal} {\bibinfo  {journal} {Phys. Rev. X}\ }\textbf {\bibinfo {volume} {7}},\ \bibinfo {pages} {031016} (\bibinfo {year} {2017})}\BibitemShut {NoStop}%
\bibitem [{\citenamefont {Nahum}\ \emph {et~al.}(2018)\citenamefont {Nahum}, \citenamefont {Vijay},\ and\ \citenamefont {Haah}}]{PhysRevX.8.021014}%
  \BibitemOpen
  \bibfield  {author} {\bibinfo {author} {\bibfnamefont {A.}~\bibnamefont {Nahum}}, \bibinfo {author} {\bibfnamefont {S.}~\bibnamefont {Vijay}},\ and\ \bibinfo {author} {\bibfnamefont {J.}~\bibnamefont {Haah}},\ }\href {https://doi.org/10.1103/PhysRevX.8.021014} {\bibfield  {journal} {\bibinfo  {journal} {Phys. Rev. X}\ }\textbf {\bibinfo {volume} {8}},\ \bibinfo {pages} {021014} (\bibinfo {year} {2018})}\BibitemShut {NoStop}%
\bibitem [{\citenamefont {Abanin}\ \emph {et~al.}(2019)\citenamefont {Abanin}, \citenamefont {Altman}, \citenamefont {Bloch},\ and\ \citenamefont {Serbyn}}]{RevModPhys.91.021001}%
  \BibitemOpen
  \bibfield  {author} {\bibinfo {author} {\bibfnamefont {D.~A.}\ \bibnamefont {Abanin}}, \bibinfo {author} {\bibfnamefont {E.}~\bibnamefont {Altman}}, \bibinfo {author} {\bibfnamefont {I.}~\bibnamefont {Bloch}},\ and\ \bibinfo {author} {\bibfnamefont {M.}~\bibnamefont {Serbyn}},\ }\href {https://doi.org/10.1103/RevModPhys.91.021001} {\bibfield  {journal} {\bibinfo  {journal} {Rev. Mod. Phys.}\ }\textbf {\bibinfo {volume} {91}},\ \bibinfo {pages} {021001} (\bibinfo {year} {2019})}\BibitemShut {NoStop}%
\bibitem [{\citenamefont {Basko}\ \emph {et~al.}(2006)\citenamefont {Basko}, \citenamefont {Aleiner},\ and\ \citenamefont {Altshuler}}]{Basko2006}%
  \BibitemOpen
  \bibfield  {author} {\bibinfo {author} {\bibfnamefont {D.}~\bibnamefont {Basko}}, \bibinfo {author} {\bibfnamefont {I.}~\bibnamefont {Aleiner}},\ and\ \bibinfo {author} {\bibfnamefont {B.}~\bibnamefont {Altshuler}},\ }\href {https://doi.org/10.1016/j.aop.2005.11.014} {\bibfield  {journal} {\bibinfo  {journal} {Annals of Physics}\ }\textbf {\bibinfo {volume} {321}},\ \bibinfo {pages} {1126–1205} (\bibinfo {year} {2006})}\BibitemShut {NoStop}%
\bibitem [{\citenamefont {Pal}\ and\ \citenamefont {Huse}(2010)}]{PhysRevB.82.174411}%
  \BibitemOpen
  \bibfield  {author} {\bibinfo {author} {\bibfnamefont {A.}~\bibnamefont {Pal}}\ and\ \bibinfo {author} {\bibfnamefont {D.~A.}\ \bibnamefont {Huse}},\ }\href {https://doi.org/10.1103/PhysRevB.82.174411} {\bibfield  {journal} {\bibinfo  {journal} {Phys. Rev. B}\ }\textbf {\bibinfo {volume} {82}},\ \bibinfo {pages} {174411} (\bibinfo {year} {2010})}\BibitemShut {NoStop}%
\bibitem [{\citenamefont {Morningstar}\ \emph {et~al.}(2022)\citenamefont {Morningstar}, \citenamefont {Colmenarez}, \citenamefont {Khemani}, \citenamefont {Luitz},\ and\ \citenamefont {Huse}}]{PhysRevB.105.174205}%
  \BibitemOpen
  \bibfield  {author} {\bibinfo {author} {\bibfnamefont {A.}~\bibnamefont {Morningstar}}, \bibinfo {author} {\bibfnamefont {L.}~\bibnamefont {Colmenarez}}, \bibinfo {author} {\bibfnamefont {V.}~\bibnamefont {Khemani}}, \bibinfo {author} {\bibfnamefont {D.~J.}\ \bibnamefont {Luitz}},\ and\ \bibinfo {author} {\bibfnamefont {D.~A.}\ \bibnamefont {Huse}},\ }\href {https://doi.org/10.1103/PhysRevB.105.174205} {\bibfield  {journal} {\bibinfo  {journal} {Phys. Rev. B}\ }\textbf {\bibinfo {volume} {105}},\ \bibinfo {pages} {174205} (\bibinfo {year} {2022})}\BibitemShut {NoStop}%
\bibitem [{\citenamefont {De~Roeck}\ and\ \citenamefont {Huveneers}(2017)}]{PhysRevB.95.155129}%
  \BibitemOpen
  \bibfield  {author} {\bibinfo {author} {\bibfnamefont {W.}~\bibnamefont {De~Roeck}}\ and\ \bibinfo {author} {\bibfnamefont {F.~m.~c.}\ \bibnamefont {Huveneers}},\ }\href {https://doi.org/10.1103/PhysRevB.95.155129} {\bibfield  {journal} {\bibinfo  {journal} {Phys. Rev. B}\ }\textbf {\bibinfo {volume} {95}},\ \bibinfo {pages} {155129} (\bibinfo {year} {2017})}\BibitemShut {NoStop}%
\bibitem [{\citenamefont {Sels}(2022)}]{PhysRevB.106.L020202}%
  \BibitemOpen
  \bibfield  {author} {\bibinfo {author} {\bibfnamefont {D.}~\bibnamefont {Sels}},\ }\href {https://doi.org/10.1103/PhysRevB.106.L020202} {\bibfield  {journal} {\bibinfo  {journal} {Phys. Rev. B}\ }\textbf {\bibinfo {volume} {106}},\ \bibinfo {pages} {L020202} (\bibinfo {year} {2022})}\BibitemShut {NoStop}%
\bibitem [{\citenamefont {Léonard}\ \emph {et~al.}(2023)\citenamefont {Léonard}, \citenamefont {Kim}, \citenamefont {Rispoli}, \citenamefont {Lukin}, \citenamefont {Schittko}, \citenamefont {Kwan}, \citenamefont {Demler}, \citenamefont {Sels},\ and\ \citenamefont {Greiner}}]{Lonard2023}%
  \BibitemOpen
  \bibfield  {author} {\bibinfo {author} {\bibfnamefont {J.}~\bibnamefont {Léonard}}, \bibinfo {author} {\bibfnamefont {S.}~\bibnamefont {Kim}}, \bibinfo {author} {\bibfnamefont {M.}~\bibnamefont {Rispoli}}, \bibinfo {author} {\bibfnamefont {A.}~\bibnamefont {Lukin}}, \bibinfo {author} {\bibfnamefont {R.}~\bibnamefont {Schittko}}, \bibinfo {author} {\bibfnamefont {J.}~\bibnamefont {Kwan}}, \bibinfo {author} {\bibfnamefont {E.}~\bibnamefont {Demler}}, \bibinfo {author} {\bibfnamefont {D.}~\bibnamefont {Sels}},\ and\ \bibinfo {author} {\bibfnamefont {M.}~\bibnamefont {Greiner}},\ }\href {https://doi.org/10.1038/s41567-022-01887-3} {\bibfield  {journal} {\bibinfo  {journal} {Nature Physics}\ }\textbf {\bibinfo {volume} {19}},\ \bibinfo {pages} {481–485} (\bibinfo {year} {2023})}\BibitemShut {NoStop}%
\bibitem [{\citenamefont {Colbois}\ \emph {et~al.}(2024)\citenamefont {Colbois}, \citenamefont {Alet},\ and\ \citenamefont {Laflorencie}}]{PhysRevLett.133.116502}%
  \BibitemOpen
  \bibfield  {author} {\bibinfo {author} {\bibfnamefont {J.}~\bibnamefont {Colbois}}, \bibinfo {author} {\bibfnamefont {F.}~\bibnamefont {Alet}},\ and\ \bibinfo {author} {\bibfnamefont {N.}~\bibnamefont {Laflorencie}},\ }\href {https://doi.org/10.1103/PhysRevLett.133.116502} {\bibfield  {journal} {\bibinfo  {journal} {Phys. Rev. Lett.}\ }\textbf {\bibinfo {volume} {133}},\ \bibinfo {pages} {116502} (\bibinfo {year} {2024})}\BibitemShut {NoStop}%
\bibitem [{\citenamefont {Anand}\ and\ \citenamefont {et. al.}(2026)}]{QChem_MBL_paper}%
  \BibitemOpen
  \bibfield  {author} {\bibinfo {author} {\bibfnamefont {N.}~\bibnamefont {Anand}}\ and\ \bibinfo {author} {\bibnamefont {et. al.}},\ }\href@noop {} {\bibinfo {title} {{Signatures of Localization in Electronic Structure Simulation}}} (\bibinfo {year} {2026}),\ \bibinfo {note} {{\textit{In Preparation}}}\BibitemShut {NoStop}%
\bibitem [{\citenamefont {Caesura}\ \emph {et~al.}(2025)\citenamefont {Caesura}, \citenamefont {Cortes}, \citenamefont {Pol}, \citenamefont {Sim}, \citenamefont {Steudtner}, \citenamefont {Anselmetti}, \citenamefont {Degroote}, \citenamefont {Moll}, \citenamefont {Santagati}, \citenamefont {Streif},\ and\ \citenamefont {Tautermann}}]{caesura2025bliss}%
  \BibitemOpen
  \bibfield  {author} {\bibinfo {author} {\bibfnamefont {A.}~\bibnamefont {Caesura}}, \bibinfo {author} {\bibfnamefont {C.~L.}\ \bibnamefont {Cortes}}, \bibinfo {author} {\bibfnamefont {W.}~\bibnamefont {Pol}}, \bibinfo {author} {\bibfnamefont {S.}~\bibnamefont {Sim}}, \bibinfo {author} {\bibfnamefont {M.}~\bibnamefont {Steudtner}}, \bibinfo {author} {\bibfnamefont {G.-L.~R.}\ \bibnamefont {Anselmetti}}, \bibinfo {author} {\bibfnamefont {M.}~\bibnamefont {Degroote}}, \bibinfo {author} {\bibfnamefont {N.}~\bibnamefont {Moll}}, \bibinfo {author} {\bibfnamefont {R.}~\bibnamefont {Santagati}}, \bibinfo {author} {\bibfnamefont {M.}~\bibnamefont {Streif}},\ and\ \bibinfo {author} {\bibfnamefont {C.~S.}\ \bibnamefont {Tautermann}},\ }\href {https://doi.org/10.1103/PRXQuantum.6.030337} {\bibfield  {journal} {\bibinfo  {journal} {PRX Quantum}\ }\textbf {\bibinfo {volume} {6}},\ \bibinfo {pages} {030337} (\bibinfo {year} {2025})},\ \Eprint {https://arxiv.org/abs/2501.06165} {arXiv:2501.06165 [quant-ph]} \BibitemShut
  {NoStop}%
\bibitem [{\citenamefont {Loaiza}\ \emph {et~al.}(2025)\citenamefont {Loaiza}, \citenamefont {Brahmachari},\ and\ \citenamefont {Izmaylov}}]{loaiza2024majoranatensordecomposition}%
  \BibitemOpen
  \bibfield  {author} {\bibinfo {author} {\bibfnamefont {I.}~\bibnamefont {Loaiza}}, \bibinfo {author} {\bibfnamefont {A.~S.}\ \bibnamefont {Brahmachari}},\ and\ \bibinfo {author} {\bibfnamefont {A.~F.}\ \bibnamefont {Izmaylov}},\ }\href {https://doi.org/10.1088/2058-9565/add9c1} {\bibfield  {journal} {\bibinfo  {journal} {Quantum Science and Technology}\ }\textbf {\bibinfo {volume} {10}},\ \bibinfo {pages} {035035} (\bibinfo {year} {2025})},\ \Eprint {https://arxiv.org/abs/2407.06571} {arXiv:2407.06571 [quant-ph]} \BibitemShut {NoStop}%
\bibitem [{\citenamefont {Davis}\ and\ \citenamefont {Kahan}(1970)}]{davis1970rotation}%
  \BibitemOpen
  \bibfield  {author} {\bibinfo {author} {\bibfnamefont {C.}~\bibnamefont {Davis}}\ and\ \bibinfo {author} {\bibfnamefont {W.~M.}\ \bibnamefont {Kahan}},\ }\href {https://doi.org/10.1137/0707001} {\bibfield  {journal} {\bibinfo  {journal} {SIAM Journal on Numerical Analysis}\ }\textbf {\bibinfo {volume} {7}},\ \bibinfo {pages} {1} (\bibinfo {year} {1970})}\BibitemShut {NoStop}%
\bibitem [{\citenamefont {Stewart}\ and\ \citenamefont {Sun}(1990)}]{stewart1990matrixperturbation}%
  \BibitemOpen
  \bibfield  {author} {\bibinfo {author} {\bibfnamefont {G.~W.}\ \bibnamefont {Stewart}}\ and\ \bibinfo {author} {\bibfnamefont {J.-g.}\ \bibnamefont {Sun}},\ }\href@noop {} {\emph {\bibinfo {title} {Matrix Perturbation Theory}}}\ (\bibinfo  {publisher} {Academic Press},\ \bibinfo {year} {1990})\BibitemShut {NoStop}%
\bibitem [{\citenamefont {G{\"u}nther}\ \emph {et~al.}(2026{\natexlab{b}})\citenamefont {G{\"u}nther}, \citenamefont {Witteveen}, \citenamefont {Schmidhuber}, \citenamefont {Miller}, \citenamefont {Christandl},\ and\ \citenamefont {Harrow}}]{gunther2026phaseestimation}%
  \BibitemOpen
  \bibfield  {author} {\bibinfo {author} {\bibfnamefont {J.}~\bibnamefont {G{\"u}nther}}, \bibinfo {author} {\bibfnamefont {F.}~\bibnamefont {Witteveen}}, \bibinfo {author} {\bibfnamefont {A.}~\bibnamefont {Schmidhuber}}, \bibinfo {author} {\bibfnamefont {M.}~\bibnamefont {Miller}}, \bibinfo {author} {\bibfnamefont {M.}~\bibnamefont {Christandl}},\ and\ \bibinfo {author} {\bibfnamefont {A.~W.}\ \bibnamefont {Harrow}},\ }\href {https://doi.org/10.1103/ynxb-p2xq} {\bibfield  {journal} {\bibinfo  {journal} {PRX Quantum}\ }\textbf {\bibinfo {volume} {7}},\ \bibinfo {pages} {020332} (\bibinfo {year} {2026}{\natexlab{b}})},\ \Eprint {https://arxiv.org/abs/2503.05647} {arXiv:2503.05647 [quant-ph]} \BibitemShut {NoStop}%
\bibitem [{\citenamefont {Bartels}\ and\ \citenamefont {Stewart}(1972)}]{bartels1972solution}%
  \BibitemOpen
  \bibfield  {author} {\bibinfo {author} {\bibfnamefont {R.~H.}\ \bibnamefont {Bartels}}\ and\ \bibinfo {author} {\bibfnamefont {G.~W.}\ \bibnamefont {Stewart}},\ }\href {https://doi.org/10.1145/361573.361582} {\bibfield  {journal} {\bibinfo  {journal} {Communications of the ACM}\ }\textbf {\bibinfo {volume} {15}},\ \bibinfo {pages} {820} (\bibinfo {year} {1972})}\BibitemShut {NoStop}%
\bibitem [{\citenamefont {Otten}\ \emph {et~al.}(2022)\citenamefont {Otten}, \citenamefont {Hermes}, \citenamefont {Pandharkar}, \citenamefont {Alexeev}, \citenamefont {Gray},\ and\ \citenamefont {Gagliardi}}]{otten2022localized}%
  \BibitemOpen
  \bibfield  {author} {\bibinfo {author} {\bibfnamefont {M.}~\bibnamefont {Otten}}, \bibinfo {author} {\bibfnamefont {M.~R.}\ \bibnamefont {Hermes}}, \bibinfo {author} {\bibfnamefont {R.}~\bibnamefont {Pandharkar}}, \bibinfo {author} {\bibfnamefont {Y.}~\bibnamefont {Alexeev}}, \bibinfo {author} {\bibfnamefont {S.~K.}\ \bibnamefont {Gray}},\ and\ \bibinfo {author} {\bibfnamefont {L.}~\bibnamefont {Gagliardi}},\ }\href {https://doi.org/10.1021/acs.jctc.2c00388} {\bibfield  {journal} {\bibinfo  {journal} {Journal of Chemical Theory and Computation}\ }\textbf {\bibinfo {volume} {18}},\ \bibinfo {pages} {7205} (\bibinfo {year} {2022})},\ \Eprint {https://arxiv.org/abs/2203.02012} {arXiv:2203.02012 [quant-ph]} \BibitemShut {NoStop}%
\bibitem [{\citenamefont {Li}(2025)}]{li2025emo}%
  \BibitemOpen
  \bibfield  {author} {\bibinfo {author} {\bibfnamefont {Z.}~\bibnamefont {Li}},\ }\href {https://arxiv.org/abs/2506.13386} {\bibinfo {title} {Entanglement-minimized orbitals enable faster quantum simulation of molecules}} (\bibinfo {year} {2025}),\ \Eprint {https://arxiv.org/abs/2506.13386} {arXiv:2506.13386 [quant-ph]} \BibitemShut {NoStop}%
\bibitem [{\citenamefont {Zhang}\ and\ \citenamefont {Otten}(2026)}]{zhang2026coreoptimizedorbitals}%
  \BibitemOpen
  \bibfield  {author} {\bibinfo {author} {\bibfnamefont {H.}~\bibnamefont {Zhang}}\ and\ \bibinfo {author} {\bibfnamefont {M.}~\bibnamefont {Otten}},\ }\href {https://doi.org/10.48550/arXiv.2605.22977} {\bibinfo {title} {Absorbing many-body correlations into core-optimized orbitals}} (\bibinfo {year} {2026}),\ \Eprint {https://arxiv.org/abs/2605.22977} {arXiv:2605.22977 [physics.chem-ph]} \BibitemShut {NoStop}%
\bibitem [{\citenamefont {Kieburg}\ \emph {et~al.}(2016)\citenamefont {Kieburg}, \citenamefont {Kuijlaars},\ and\ \citenamefont {Stivigny}}]{truncated_unitaries_2015}%
  \BibitemOpen
  \bibfield  {author} {\bibinfo {author} {\bibfnamefont {M.}~\bibnamefont {Kieburg}}, \bibinfo {author} {\bibfnamefont {A.~B.~J.}\ \bibnamefont {Kuijlaars}},\ and\ \bibinfo {author} {\bibfnamefont {D.}~\bibnamefont {Stivigny}},\ }\href {https://doi.org/10.1093/imrn/rnv242} {\bibfield  {journal} {\bibinfo  {journal} {International Mathematics Research Notices}\ }\textbf {\bibinfo {volume} {2016}},\ \bibinfo {pages} {3392} (\bibinfo {year} {2016})},\ \Eprint {https://arxiv.org/abs/1501.03910} {arXiv:1501.03910 [math-ph]} \BibitemShut {NoStop}%
\bibitem [{\citenamefont {Lubinski}\ \emph {et~al.}(2023)\citenamefont {Lubinski}, \citenamefont {Johri}, \citenamefont {Varosy}, \citenamefont {Coleman}, \citenamefont {Zhao}, \citenamefont {Necaise}, \citenamefont {Baldwin}, \citenamefont {Mayer},\ and\ \citenamefont {Proctor}}]{lubinski2021applicationoriented}%
  \BibitemOpen
  \bibfield  {author} {\bibinfo {author} {\bibfnamefont {T.}~\bibnamefont {Lubinski}}, \bibinfo {author} {\bibfnamefont {S.}~\bibnamefont {Johri}}, \bibinfo {author} {\bibfnamefont {P.}~\bibnamefont {Varosy}}, \bibinfo {author} {\bibfnamefont {J.}~\bibnamefont {Coleman}}, \bibinfo {author} {\bibfnamefont {L.}~\bibnamefont {Zhao}}, \bibinfo {author} {\bibfnamefont {J.}~\bibnamefont {Necaise}}, \bibinfo {author} {\bibfnamefont {C.~H.}\ \bibnamefont {Baldwin}}, \bibinfo {author} {\bibfnamefont {K.}~\bibnamefont {Mayer}},\ and\ \bibinfo {author} {\bibfnamefont {T.}~\bibnamefont {Proctor}},\ }\href {https://doi.org/10.1109/TQE.2023.3253761} {\bibfield  {journal} {\bibinfo  {journal} {IEEE Transactions on Quantum Engineering}\ }\textbf {\bibinfo {volume} {4}},\ \bibinfo {pages} {1} (\bibinfo {year} {2023})},\ \Eprint {https://arxiv.org/abs/2110.03137} {arXiv:2110.03137 [quant-ph]} \BibitemShut {NoStop}%
\bibitem [{\citenamefont {Wang}\ \emph {et~al.}(2025)\citenamefont {Wang}, \citenamefont {Cantin}, \citenamefont {Patel}, \citenamefont {Loaiza}, \citenamefont {Huang},\ and\ \citenamefont {Izmaylov}}]{wang2026plantedsolutions}%
  \BibitemOpen
  \bibfield  {author} {\bibinfo {author} {\bibfnamefont {L.}~\bibnamefont {Wang}}, \bibinfo {author} {\bibfnamefont {J.~T.}\ \bibnamefont {Cantin}}, \bibinfo {author} {\bibfnamefont {S.}~\bibnamefont {Patel}}, \bibinfo {author} {\bibfnamefont {I.}~\bibnamefont {Loaiza}}, \bibinfo {author} {\bibfnamefont {R.}~\bibnamefont {Huang}},\ and\ \bibinfo {author} {\bibfnamefont {A.~F.}\ \bibnamefont {Izmaylov}},\ }\href {https://doi.org/10.48550/arXiv.2507.15166} {\bibinfo {title} {Planted solutions in quantum chemistry: Generating non-trivial hamiltonians with known ground states}} (\bibinfo {year} {2025}),\ \Eprint {https://arxiv.org/abs/2507.15166} {arXiv:2507.15166 [quant-ph]} \BibitemShut {NoStop}%
\bibitem [{\citenamefont {Watts}\ \emph {et~al.}(2024)\citenamefont {Watts}, \citenamefont {Otten}, \citenamefont {Necaise}, \citenamefont {Nguyen}, \citenamefont {Link}, \citenamefont {Williams}, \citenamefont {Sanders}, \citenamefont {Elman}, \citenamefont {Kieferova}, \citenamefont {Bremner}, \citenamefont {Morrell}, \citenamefont {Elenewski}, \citenamefont {Johnson}, \citenamefont {Mathieson}, \citenamefont {Obenland}, \citenamefont {Sundareswara},\ and\ \citenamefont {Holmes}}]{watts2024fullerene}%
  \BibitemOpen
  \bibfield  {author} {\bibinfo {author} {\bibfnamefont {T.~W.}\ \bibnamefont {Watts}}, \bibinfo {author} {\bibfnamefont {M.}~\bibnamefont {Otten}}, \bibinfo {author} {\bibfnamefont {J.~T.}\ \bibnamefont {Necaise}}, \bibinfo {author} {\bibfnamefont {N.}~\bibnamefont {Nguyen}}, \bibinfo {author} {\bibfnamefont {B.}~\bibnamefont {Link}}, \bibinfo {author} {\bibfnamefont {K.~S.}\ \bibnamefont {Williams}}, \bibinfo {author} {\bibfnamefont {Y.~R.}\ \bibnamefont {Sanders}}, \bibinfo {author} {\bibfnamefont {S.~J.}\ \bibnamefont {Elman}}, \bibinfo {author} {\bibfnamefont {M.}~\bibnamefont {Kieferova}}, \bibinfo {author} {\bibfnamefont {M.~J.}\ \bibnamefont {Bremner}}, \bibinfo {author} {\bibfnamefont {K.~J.}\ \bibnamefont {Morrell}}, \bibinfo {author} {\bibfnamefont {J.~E.}\ \bibnamefont {Elenewski}}, \bibinfo {author} {\bibfnamefont {S.~D.}\ \bibnamefont {Johnson}}, \bibinfo {author} {\bibfnamefont {L.}~\bibnamefont {Mathieson}}, \bibinfo {author} {\bibfnamefont {K.~M.}\ \bibnamefont {Obenland}}, \bibinfo {author}
  {\bibfnamefont {R.}~\bibnamefont {Sundareswara}},\ and\ \bibinfo {author} {\bibfnamefont {A.}~\bibnamefont {Holmes}},\ }\href {https://doi.org/10.48550/arXiv.2408.13244} {\bibinfo {title} {Fullerene-encapsulated cyclic ozone for the next generation of nano-sized propellants via quantum computation}} (\bibinfo {year} {2024}),\ \Eprint {https://arxiv.org/abs/2408.13244} {arXiv:2408.13244 [quant-ph]} \BibitemShut {NoStop}%
\bibitem [{\citenamefont {Verstraete}\ \emph {et~al.}(2009{\natexlab{a}})\citenamefont {Verstraete}, \citenamefont {Cirac},\ and\ \citenamefont {Latorre}}]{PhysRevA.79.032316}%
  \BibitemOpen
  \bibfield  {author} {\bibinfo {author} {\bibfnamefont {F.}~\bibnamefont {Verstraete}}, \bibinfo {author} {\bibfnamefont {J.~I.}\ \bibnamefont {Cirac}},\ and\ \bibinfo {author} {\bibfnamefont {J.~I.}\ \bibnamefont {Latorre}},\ }\href {https://doi.org/10.1103/PhysRevA.79.032316} {\bibfield  {journal} {\bibinfo  {journal} {Phys. Rev. A}\ }\textbf {\bibinfo {volume} {79}},\ \bibinfo {pages} {032316} (\bibinfo {year} {2009}{\natexlab{a}})}\BibitemShut {NoStop}%
\bibitem [{\citenamefont {Bauman}\ \emph {et~al.}(2019)\citenamefont {Bauman}, \citenamefont {Bylaska}, \citenamefont {Krishnamoorthy}, \citenamefont {Low}, \citenamefont {Wiebe}, \citenamefont {Granade}, \citenamefont {Roetteler}, \citenamefont {Troyer},\ and\ \citenamefont {Kowalski}}]{bauman2019downfolding}%
  \BibitemOpen
  \bibfield  {author} {\bibinfo {author} {\bibfnamefont {N.~P.}\ \bibnamefont {Bauman}}, \bibinfo {author} {\bibfnamefont {E.~J.}\ \bibnamefont {Bylaska}}, \bibinfo {author} {\bibfnamefont {S.}~\bibnamefont {Krishnamoorthy}}, \bibinfo {author} {\bibfnamefont {G.~H.}\ \bibnamefont {Low}}, \bibinfo {author} {\bibfnamefont {N.}~\bibnamefont {Wiebe}}, \bibinfo {author} {\bibfnamefont {C.~E.}\ \bibnamefont {Granade}}, \bibinfo {author} {\bibfnamefont {M.}~\bibnamefont {Roetteler}}, \bibinfo {author} {\bibfnamefont {M.}~\bibnamefont {Troyer}},\ and\ \bibinfo {author} {\bibfnamefont {K.}~\bibnamefont {Kowalski}},\ }\href {https://doi.org/10.1063/1.5094643} {\bibfield  {journal} {\bibinfo  {journal} {The Journal of Chemical Physics}\ }\textbf {\bibinfo {volume} {151}},\ \bibinfo {pages} {014107} (\bibinfo {year} {2019})},\ \Eprint {https://arxiv.org/abs/1902.01553} {arXiv:1902.01553 [quant-ph]} \BibitemShut {NoStop}%
\bibitem [{\citenamefont {Vorwerk}\ \emph {et~al.}(2022)\citenamefont {Vorwerk}, \citenamefont {Sheng}, \citenamefont {Govoni}, \citenamefont {Huang},\ and\ \citenamefont {Galli}}]{vorwerk2022quantumembedding}%
  \BibitemOpen
  \bibfield  {author} {\bibinfo {author} {\bibfnamefont {C.}~\bibnamefont {Vorwerk}}, \bibinfo {author} {\bibfnamefont {N.}~\bibnamefont {Sheng}}, \bibinfo {author} {\bibfnamefont {M.}~\bibnamefont {Govoni}}, \bibinfo {author} {\bibfnamefont {B.}~\bibnamefont {Huang}},\ and\ \bibinfo {author} {\bibfnamefont {G.}~\bibnamefont {Galli}},\ }\href {https://doi.org/10.1038/s43588-022-00279-0} {\bibfield  {journal} {\bibinfo  {journal} {Nature Computational Science}\ }\textbf {\bibinfo {volume} {2}},\ \bibinfo {pages} {424} (\bibinfo {year} {2022})},\ \Eprint {https://arxiv.org/abs/2105.04736} {arXiv:2105.04736 [quant-ph]} \BibitemShut {NoStop}%
\bibitem [{\citenamefont {Ma}\ \emph {et~al.}(2020)\citenamefont {Ma}, \citenamefont {Govoni},\ and\ \citenamefont {Galli}}]{ma2020quantumsimulations}%
  \BibitemOpen
  \bibfield  {author} {\bibinfo {author} {\bibfnamefont {H.}~\bibnamefont {Ma}}, \bibinfo {author} {\bibfnamefont {M.}~\bibnamefont {Govoni}},\ and\ \bibinfo {author} {\bibfnamefont {G.}~\bibnamefont {Galli}},\ }\href {https://doi.org/10.1038/s41524-020-00353-z} {\bibfield  {journal} {\bibinfo  {journal} {npj Computational Materials}\ }\textbf {\bibinfo {volume} {6}},\ \bibinfo {pages} {85} (\bibinfo {year} {2020})},\ \Eprint {https://arxiv.org/abs/2002.11173} {arXiv:2002.11173 [quant-ph]} \BibitemShut {NoStop}%
\bibitem [{\citenamefont {Merz}\ \emph {et~al.}(2026)\citenamefont {Merz}, \citenamefont {Shajan}, \citenamefont {Kaliakin}, \citenamefont {Liang}, \citenamefont {Otsuka}, \citenamefont {Shirakawa}, \citenamefont {Broers}, \citenamefont {Xu}, \citenamefont {Tsuji}, \citenamefont {Sato}, \citenamefont {Yunoki}, \citenamefont {Wakizaka}, \citenamefont {Kawashima}, \citenamefont {Doi}, \citenamefont {Itoko}, \citenamefont {Horii}, \citenamefont {Pellegrini}, \citenamefont {Robledo~Moreno}, \citenamefont {Sung}, \citenamefont {Fejer}, \citenamefont {Walkup}, \citenamefont {Seelam},\ and\ \citenamefont {Motta}}]{merz2026crossing12000atom}%
  \BibitemOpen
  \bibfield  {author} {\bibinfo {author} {\bibfnamefont {K.~M.~J.}\ \bibnamefont {Merz}}, \bibinfo {author} {\bibfnamefont {A.}~\bibnamefont {Shajan}}, \bibinfo {author} {\bibfnamefont {D.}~\bibnamefont {Kaliakin}}, \bibinfo {author} {\bibfnamefont {F.}~\bibnamefont {Liang}}, \bibinfo {author} {\bibfnamefont {Y.}~\bibnamefont {Otsuka}}, \bibinfo {author} {\bibfnamefont {T.}~\bibnamefont {Shirakawa}}, \bibinfo {author} {\bibfnamefont {L.}~\bibnamefont {Broers}}, \bibinfo {author} {\bibfnamefont {H.}~\bibnamefont {Xu}}, \bibinfo {author} {\bibfnamefont {M.}~\bibnamefont {Tsuji}}, \bibinfo {author} {\bibfnamefont {M.}~\bibnamefont {Sato}}, \bibinfo {author} {\bibfnamefont {S.}~\bibnamefont {Yunoki}}, \bibinfo {author} {\bibfnamefont {R.}~\bibnamefont {Wakizaka}}, \bibinfo {author} {\bibfnamefont {Y.}~\bibnamefont {Kawashima}}, \bibinfo {author} {\bibfnamefont {J.}~\bibnamefont {Doi}}, \bibinfo {author} {\bibfnamefont {T.}~\bibnamefont {Itoko}}, \bibinfo {author} {\bibfnamefont {H.}~\bibnamefont {Horii}},
  \bibinfo {author} {\bibfnamefont {T.}~\bibnamefont {Pellegrini}}, \bibinfo {author} {\bibfnamefont {J.}~\bibnamefont {Robledo~Moreno}}, \bibinfo {author} {\bibfnamefont {K.~J.}\ \bibnamefont {Sung}}, \bibinfo {author} {\bibfnamefont {E.}~\bibnamefont {Fejer}}, \bibinfo {author} {\bibfnamefont {R.}~\bibnamefont {Walkup}}, \bibinfo {author} {\bibfnamefont {S.}~\bibnamefont {Seelam}},\ and\ \bibinfo {author} {\bibfnamefont {M.}~\bibnamefont {Motta}},\ }\href {https://doi.org/10.48550/arXiv.2605.01138} {\bibinfo {title} {Crossing the 12,000-atom barrier with heterogeneous quantum-classical supercomputing: quantum chemistry of protein-ligand complexes}} (\bibinfo {year} {2026}),\ \Eprint {https://arxiv.org/abs/2605.01138} {arXiv:2605.01138 [quant-ph]} \BibitemShut {NoStop}%
\bibitem [{\citenamefont {L{\"o}wdin}(1950)}]{lowdin1950nonorthogonality}%
  \BibitemOpen
  \bibfield  {author} {\bibinfo {author} {\bibfnamefont {P.-O.}\ \bibnamefont {L{\"o}wdin}},\ }\href {https://doi.org/10.1063/1.1747632} {\bibfield  {journal} {\bibinfo  {journal} {The Journal of Chemical Physics}\ }\textbf {\bibinfo {volume} {18}},\ \bibinfo {pages} {365} (\bibinfo {year} {1950})}\BibitemShut {NoStop}%
\bibitem [{\citenamefont {Boys}(1960)}]{boys1960construction}%
  \BibitemOpen
  \bibfield  {author} {\bibinfo {author} {\bibfnamefont {S.~F.}\ \bibnamefont {Boys}},\ }\href {https://doi.org/10.1103/RevModPhys.32.296} {\bibfield  {journal} {\bibinfo  {journal} {Reviews of Modern Physics}\ }\textbf {\bibinfo {volume} {32}},\ \bibinfo {pages} {296} (\bibinfo {year} {1960})}\BibitemShut {NoStop}%
\bibitem [{\citenamefont {Pipek}\ and\ \citenamefont {Mezey}(1989)}]{pipek1989fast}%
  \BibitemOpen
  \bibfield  {author} {\bibinfo {author} {\bibfnamefont {J.}~\bibnamefont {Pipek}}\ and\ \bibinfo {author} {\bibfnamefont {P.~G.}\ \bibnamefont {Mezey}},\ }\href {https://doi.org/10.1063/1.456588} {\bibfield  {journal} {\bibinfo  {journal} {The Journal of Chemical Physics}\ }\textbf {\bibinfo {volume} {90}},\ \bibinfo {pages} {4916} (\bibinfo {year} {1989})}\BibitemShut {NoStop}%
\bibitem [{\citenamefont {Wang}\ and\ \citenamefont {Whitfield}(2023)}]{quiqbox}%
  \BibitemOpen
  \bibfield  {author} {\bibinfo {author} {\bibfnamefont {W.}~\bibnamefont {Wang}}\ and\ \bibinfo {author} {\bibfnamefont {J.~D.}\ \bibnamefont {Whitfield}},\ }\href {https://doi.org/10.1021/acs.jctc.3c00011} {\bibfield  {journal} {\bibinfo  {journal} {Journal of Chemical Theory and Computation}\ }\textbf {\bibinfo {volume} {19}},\ \bibinfo {pages} {8032} (\bibinfo {year} {2023})},\ \bibinfo {note} {pMID: 37924295},\ \Eprint {https://arxiv.org/abs/https://doi.org/10.1021/acs.jctc.3c00011} {https://doi.org/10.1021/acs.jctc.3c00011} \BibitemShut {NoStop}%
\bibitem [{\citenamefont {Hubbard}(1963)}]{hubbard1963electron}%
  \BibitemOpen
  \bibfield  {author} {\bibinfo {author} {\bibfnamefont {J.}~\bibnamefont {Hubbard}},\ }\href {https://doi.org/10.1098/rspa.1963.0204} {\bibfield  {journal} {\bibinfo  {journal} {Proceedings of the Royal Society of London. Series A. Mathematical and Physical Sciences}\ }\textbf {\bibinfo {volume} {276}},\ \bibinfo {pages} {238} (\bibinfo {year} {1963})}\BibitemShut {NoStop}%
\bibitem [{\citenamefont {Arovas}\ \emph {et~al.}(2022)\citenamefont {Arovas}, \citenamefont {Berg}, \citenamefont {Kivelson},\ and\ \citenamefont {Raghu}}]{Arovas2022}%
  \BibitemOpen
  \bibfield  {author} {\bibinfo {author} {\bibfnamefont {D.~P.}\ \bibnamefont {Arovas}}, \bibinfo {author} {\bibfnamefont {E.}~\bibnamefont {Berg}}, \bibinfo {author} {\bibfnamefont {S.~A.}\ \bibnamefont {Kivelson}},\ and\ \bibinfo {author} {\bibfnamefont {S.}~\bibnamefont {Raghu}},\ }\href {https://doi.org/10.1146/annurev-conmatphys-031620-102024} {\bibfield  {journal} {\bibinfo  {journal} {Annual Review of Condensed Matter Physics}\ }\textbf {\bibinfo {volume} {13}},\ \bibinfo {pages} {239–274} (\bibinfo {year} {2022})}\BibitemShut {NoStop}%
\bibitem [{\citenamefont {Liu}\ \emph {et~al.}(2025)\citenamefont {Liu}, \citenamefont {Zhai}, \citenamefont {Peng}, \citenamefont {Gu},\ and\ \citenamefont {Chan}}]{Liu2025}%
  \BibitemOpen
  \bibfield  {author} {\bibinfo {author} {\bibfnamefont {W.-Y.}\ \bibnamefont {Liu}}, \bibinfo {author} {\bibfnamefont {H.}~\bibnamefont {Zhai}}, \bibinfo {author} {\bibfnamefont {R.}~\bibnamefont {Peng}}, \bibinfo {author} {\bibfnamefont {Z.-C.}\ \bibnamefont {Gu}},\ and\ \bibinfo {author} {\bibfnamefont {G.~K.-L.}\ \bibnamefont {Chan}},\ }\bibfield  {journal} {\bibinfo  {journal} {Physical Review Letters}\ }\textbf {\bibinfo {volume} {134}},\ \href {https://doi.org/10.1103/r4q9-4yvj} {10.1103/r4q9-4yvj} (\bibinfo {year} {2025})\BibitemShut {NoStop}%
\bibitem [{\citenamefont {Jiang}\ \emph {et~al.}(2018)\citenamefont {Jiang}, \citenamefont {Sung}, \citenamefont {Kechedzhi}, \citenamefont {Smelyanskiy},\ and\ \citenamefont {Boixo}}]{PhysRevApplied.9.044036}%
  \BibitemOpen
  \bibfield  {author} {\bibinfo {author} {\bibfnamefont {Z.}~\bibnamefont {Jiang}}, \bibinfo {author} {\bibfnamefont {K.~J.}\ \bibnamefont {Sung}}, \bibinfo {author} {\bibfnamefont {K.}~\bibnamefont {Kechedzhi}}, \bibinfo {author} {\bibfnamefont {V.~N.}\ \bibnamefont {Smelyanskiy}},\ and\ \bibinfo {author} {\bibfnamefont {S.}~\bibnamefont {Boixo}},\ }\href {https://doi.org/10.1103/PhysRevApplied.9.044036} {\bibfield  {journal} {\bibinfo  {journal} {Phys. Rev. Appl.}\ }\textbf {\bibinfo {volume} {9}},\ \bibinfo {pages} {044036} (\bibinfo {year} {2018})}\BibitemShut {NoStop}%
\bibitem [{\citenamefont {Alam}\ \emph {et~al.}(2025)\citenamefont {Alam}, \citenamefont {Bosse}, \citenamefont {Čepaitė}, \citenamefont {Chapman}, \citenamefont {Clinton}, \citenamefont {Crichigno}, \citenamefont {Crosson}, \citenamefont {Cubitt}, \citenamefont {Derby}, \citenamefont {Dowinton}, \citenamefont {Faehrmann}, \citenamefont {Flammia}, \citenamefont {Flynn}, \citenamefont {Gambetta}, \citenamefont {García-Patrón}, \citenamefont {Hunter-Gordon}, \citenamefont {Jones}, \citenamefont {Khedkar}, \citenamefont {Klassen}, \citenamefont {Kreshchuk}, \citenamefont {McMullan}, \citenamefont {Mineh}, \citenamefont {Montanaro}, \citenamefont {Mora}, \citenamefont {Morton}, \citenamefont {Patel}, \citenamefont {Rolph}, \citenamefont {Santos}, \citenamefont {Seddon}, \citenamefont {Sheridan}, \citenamefont {Somogyi}, \citenamefont {Svensson}, \citenamefont {Vaishnav}, \citenamefont {Wang},\ and\ \citenamefont {Wright}}]{alam2025programmabledigitalquantumsimulation}%
  \BibitemOpen
  \bibfield  {author} {\bibinfo {author} {\bibfnamefont {F.}~\bibnamefont {Alam}}, \bibinfo {author} {\bibfnamefont {J.~L.}\ \bibnamefont {Bosse}}, \bibinfo {author} {\bibfnamefont {I.}~\bibnamefont {Čepaitė}}, \bibinfo {author} {\bibfnamefont {A.}~\bibnamefont {Chapman}}, \bibinfo {author} {\bibfnamefont {L.}~\bibnamefont {Clinton}}, \bibinfo {author} {\bibfnamefont {M.}~\bibnamefont {Crichigno}}, \bibinfo {author} {\bibfnamefont {E.}~\bibnamefont {Crosson}}, \bibinfo {author} {\bibfnamefont {T.}~\bibnamefont {Cubitt}}, \bibinfo {author} {\bibfnamefont {C.}~\bibnamefont {Derby}}, \bibinfo {author} {\bibfnamefont {O.}~\bibnamefont {Dowinton}}, \bibinfo {author} {\bibfnamefont {P.~K.}\ \bibnamefont {Faehrmann}}, \bibinfo {author} {\bibfnamefont {S.}~\bibnamefont {Flammia}}, \bibinfo {author} {\bibfnamefont {B.}~\bibnamefont {Flynn}}, \bibinfo {author} {\bibfnamefont {F.~M.}\ \bibnamefont {Gambetta}}, \bibinfo {author} {\bibfnamefont {R.}~\bibnamefont {García-Patrón}}, \bibinfo {author} {\bibfnamefont
  {M.}~\bibnamefont {Hunter-Gordon}}, \bibinfo {author} {\bibfnamefont {G.}~\bibnamefont {Jones}}, \bibinfo {author} {\bibfnamefont {A.}~\bibnamefont {Khedkar}}, \bibinfo {author} {\bibfnamefont {J.}~\bibnamefont {Klassen}}, \bibinfo {author} {\bibfnamefont {M.}~\bibnamefont {Kreshchuk}}, \bibinfo {author} {\bibfnamefont {E.~H.}\ \bibnamefont {McMullan}}, \bibinfo {author} {\bibfnamefont {L.}~\bibnamefont {Mineh}}, \bibinfo {author} {\bibfnamefont {A.}~\bibnamefont {Montanaro}}, \bibinfo {author} {\bibfnamefont {C.}~\bibnamefont {Mora}}, \bibinfo {author} {\bibfnamefont {J.~J.~L.}\ \bibnamefont {Morton}}, \bibinfo {author} {\bibfnamefont {D.}~\bibnamefont {Patel}}, \bibinfo {author} {\bibfnamefont {P.}~\bibnamefont {Rolph}}, \bibinfo {author} {\bibfnamefont {R.~A.}\ \bibnamefont {Santos}}, \bibinfo {author} {\bibfnamefont {J.~R.}\ \bibnamefont {Seddon}}, \bibinfo {author} {\bibfnamefont {E.}~\bibnamefont {Sheridan}}, \bibinfo {author} {\bibfnamefont {W.}~\bibnamefont {Somogyi}}, \bibinfo {author}
  {\bibfnamefont {M.}~\bibnamefont {Svensson}}, \bibinfo {author} {\bibfnamefont {N.}~\bibnamefont {Vaishnav}}, \bibinfo {author} {\bibfnamefont {S.~Y.}\ \bibnamefont {Wang}},\ and\ \bibinfo {author} {\bibfnamefont {G.}~\bibnamefont {Wright}},\ }\href {https://arxiv.org/abs/2510.26845} {\bibinfo {title} {Programmable digital quantum simulation of 2d fermi-hubbard dynamics using 72 superconducting qubits}} (\bibinfo {year} {2025}),\ \Eprint {https://arxiv.org/abs/2510.26845} {arXiv:2510.26845 [quant-ph]} \BibitemShut {NoStop}%
\bibitem [{\citenamefont {Verstraete}\ \emph {et~al.}(2009{\natexlab{b}})\citenamefont {Verstraete}, \citenamefont {Cirac},\ and\ \citenamefont {Latorre}}]{Verstraete2009}%
  \BibitemOpen
  \bibfield  {author} {\bibinfo {author} {\bibfnamefont {F.}~\bibnamefont {Verstraete}}, \bibinfo {author} {\bibfnamefont {J.~I.}\ \bibnamefont {Cirac}},\ and\ \bibinfo {author} {\bibfnamefont {J.~I.}\ \bibnamefont {Latorre}},\ }\bibfield  {journal} {\bibinfo  {journal} {Physical Review A}\ }\textbf {\bibinfo {volume} {79}},\ \href {https://doi.org/10.1103/physreva.79.032316} {10.1103/physreva.79.032316} (\bibinfo {year} {2009}{\natexlab{b}})\BibitemShut {NoStop}%
\bibitem [{\citenamefont {Wecker}\ \emph {et~al.}(2015)\citenamefont {Wecker}, \citenamefont {Hastings}, \citenamefont {Wiebe}, \citenamefont {Clark}, \citenamefont {Nayak},\ and\ \citenamefont {Troyer}}]{wecker2015solving}%
  \BibitemOpen
  \bibfield  {author} {\bibinfo {author} {\bibfnamefont {D.}~\bibnamefont {Wecker}}, \bibinfo {author} {\bibfnamefont {M.~B.}\ \bibnamefont {Hastings}}, \bibinfo {author} {\bibfnamefont {N.}~\bibnamefont {Wiebe}}, \bibinfo {author} {\bibfnamefont {B.~K.}\ \bibnamefont {Clark}}, \bibinfo {author} {\bibfnamefont {C.}~\bibnamefont {Nayak}},\ and\ \bibinfo {author} {\bibfnamefont {M.}~\bibnamefont {Troyer}},\ }\href {https://doi.org/10.1103/PhysRevA.92.062318} {\bibfield  {journal} {\bibinfo  {journal} {Physical Review A}\ }\textbf {\bibinfo {volume} {92}},\ \bibinfo {pages} {062318} (\bibinfo {year} {2015})},\ \Eprint {https://arxiv.org/abs/1506.05135} {arXiv:1506.05135 [quant-ph]} \BibitemShut {NoStop}%
\bibitem [{\citenamefont {Babbush}\ \emph {et~al.}(2018{\natexlab{b}})\citenamefont {Babbush}, \citenamefont {Wiebe}, \citenamefont {McClean}, \citenamefont {McClain}, \citenamefont {Neven},\ and\ \citenamefont {Chan}}]{babbush2018lowdepth}%
  \BibitemOpen
  \bibfield  {author} {\bibinfo {author} {\bibfnamefont {R.}~\bibnamefont {Babbush}}, \bibinfo {author} {\bibfnamefont {N.}~\bibnamefont {Wiebe}}, \bibinfo {author} {\bibfnamefont {J.}~\bibnamefont {McClean}}, \bibinfo {author} {\bibfnamefont {J.}~\bibnamefont {McClain}}, \bibinfo {author} {\bibfnamefont {H.}~\bibnamefont {Neven}},\ and\ \bibinfo {author} {\bibfnamefont {G.~K.-L.}\ \bibnamefont {Chan}},\ }\href {https://doi.org/10.1103/PhysRevX.8.011044} {\bibfield  {journal} {\bibinfo  {journal} {Physical Review X}\ }\textbf {\bibinfo {volume} {8}},\ \bibinfo {pages} {011044} (\bibinfo {year} {2018}{\natexlab{b}})}\BibitemShut {NoStop}%
\bibitem [{\citenamefont {Valiant}(2001)}]{Valiant2001}%
  \BibitemOpen
  \bibfield  {author} {\bibinfo {author} {\bibfnamefont {L.~G.}\ \bibnamefont {Valiant}},\ }in\ \href {https://doi.org/10.1145/380752.380785} {\emph {\bibinfo {booktitle} {Proceedings of the thirty-third annual ACM symposium on Theory of computing}}},\ \bibinfo {series and number} {STOC01}\ (\bibinfo  {publisher} {ACM},\ \bibinfo {year} {2001})\ p.\ \bibinfo {pages} {114–123}\BibitemShut {NoStop}%
\bibitem [{\citenamefont {Terhal}\ and\ \citenamefont {DiVincenzo}(2002)}]{PhysRevA.65.032325}%
  \BibitemOpen
  \bibfield  {author} {\bibinfo {author} {\bibfnamefont {B.~M.}\ \bibnamefont {Terhal}}\ and\ \bibinfo {author} {\bibfnamefont {D.~P.}\ \bibnamefont {DiVincenzo}},\ }\href {https://doi.org/10.1103/PhysRevA.65.032325} {\bibfield  {journal} {\bibinfo  {journal} {Phys. Rev. A}\ }\textbf {\bibinfo {volume} {65}},\ \bibinfo {pages} {032325} (\bibinfo {year} {2002})}\BibitemShut {NoStop}%
\bibitem [{\citenamefont {Hebenstreit}\ \emph {et~al.}(2020)\citenamefont {Hebenstreit}, \citenamefont {Jozsa}, \citenamefont {Kraus},\ and\ \citenamefont {Strelchuk}}]{PhysRevA.102.052604}%
  \BibitemOpen
  \bibfield  {author} {\bibinfo {author} {\bibfnamefont {M.}~\bibnamefont {Hebenstreit}}, \bibinfo {author} {\bibfnamefont {R.}~\bibnamefont {Jozsa}}, \bibinfo {author} {\bibfnamefont {B.}~\bibnamefont {Kraus}},\ and\ \bibinfo {author} {\bibfnamefont {S.}~\bibnamefont {Strelchuk}},\ }\href {https://doi.org/10.1103/PhysRevA.102.052604} {\bibfield  {journal} {\bibinfo  {journal} {Phys. Rev. A}\ }\textbf {\bibinfo {volume} {102}},\ \bibinfo {pages} {052604} (\bibinfo {year} {2020})}\BibitemShut {NoStop}%
\bibitem [{\citenamefont {Hebenstreit}\ \emph {et~al.}(2019)\citenamefont {Hebenstreit}, \citenamefont {Jozsa}, \citenamefont {Kraus}, \citenamefont {Strelchuk},\ and\ \citenamefont {Yoganathan}}]{PhysRevLett.123.080503}%
  \BibitemOpen
  \bibfield  {author} {\bibinfo {author} {\bibfnamefont {M.}~\bibnamefont {Hebenstreit}}, \bibinfo {author} {\bibfnamefont {R.}~\bibnamefont {Jozsa}}, \bibinfo {author} {\bibfnamefont {B.}~\bibnamefont {Kraus}}, \bibinfo {author} {\bibfnamefont {S.}~\bibnamefont {Strelchuk}},\ and\ \bibinfo {author} {\bibfnamefont {M.}~\bibnamefont {Yoganathan}},\ }\href {https://doi.org/10.1103/PhysRevLett.123.080503} {\bibfield  {journal} {\bibinfo  {journal} {Phys. Rev. Lett.}\ }\textbf {\bibinfo {volume} {123}},\ \bibinfo {pages} {080503} (\bibinfo {year} {2019})}\BibitemShut {NoStop}%
\bibitem [{\citenamefont {Gottesman}(1998)}]{gottesman1998heisenberg}%
  \BibitemOpen
  \bibfield  {author} {\bibinfo {author} {\bibfnamefont {D.}~\bibnamefont {Gottesman}},\ }\href {https://arxiv.org/abs/quant-ph/9807006} {\bibinfo {title} {The heisenberg representation of quantum computers}} (\bibinfo {year} {1998}),\ \Eprint {https://arxiv.org/abs/quant-ph/9807006} {arXiv:quant-ph/9807006 [quant-ph]} \BibitemShut {NoStop}%
\bibitem [{\citenamefont {Aaronson}\ and\ \citenamefont {Gottesman}(2004)}]{aaronson2004improved}%
  \BibitemOpen
  \bibfield  {author} {\bibinfo {author} {\bibfnamefont {S.}~\bibnamefont {Aaronson}}\ and\ \bibinfo {author} {\bibfnamefont {D.}~\bibnamefont {Gottesman}},\ }\href {https://doi.org/10.1103/PhysRevA.70.052328} {\bibfield  {journal} {\bibinfo  {journal} {Physical Review A}\ }\textbf {\bibinfo {volume} {70}},\ \bibinfo {pages} {052328} (\bibinfo {year} {2004})},\ \Eprint {https://arxiv.org/abs/quant-ph/0406196} {arXiv:quant-ph/0406196 [quant-ph]} \BibitemShut {NoStop}%
\bibitem [{\citenamefont {Bravyi}\ \emph {et~al.}(2016)\citenamefont {Bravyi}, \citenamefont {Smith},\ and\ \citenamefont {Smolin}}]{bravyi2016trading}%
  \BibitemOpen
  \bibfield  {author} {\bibinfo {author} {\bibfnamefont {S.}~\bibnamefont {Bravyi}}, \bibinfo {author} {\bibfnamefont {G.}~\bibnamefont {Smith}},\ and\ \bibinfo {author} {\bibfnamefont {J.~A.}\ \bibnamefont {Smolin}},\ }\href {https://doi.org/10.1103/PhysRevX.6.021043} {\bibfield  {journal} {\bibinfo  {journal} {Physical Review X}\ }\textbf {\bibinfo {volume} {6}},\ \bibinfo {pages} {021043} (\bibinfo {year} {2016})},\ \Eprint {https://arxiv.org/abs/1506.01396} {arXiv:1506.01396 [quant-ph]} \BibitemShut {NoStop}%
\bibitem [{\citenamefont {Bravyi}\ and\ \citenamefont {Gosset}(2016)}]{bravyi2016improved}%
  \BibitemOpen
  \bibfield  {author} {\bibinfo {author} {\bibfnamefont {S.}~\bibnamefont {Bravyi}}\ and\ \bibinfo {author} {\bibfnamefont {D.}~\bibnamefont {Gosset}},\ }\href {https://doi.org/10.1103/PhysRevLett.116.250501} {\bibfield  {journal} {\bibinfo  {journal} {Physical Review Letters}\ }\textbf {\bibinfo {volume} {116}},\ \bibinfo {pages} {250501} (\bibinfo {year} {2016})},\ \Eprint {https://arxiv.org/abs/1601.07601} {arXiv:1601.07601 [quant-ph]} \BibitemShut {NoStop}%
\bibitem [{\citenamefont {Bravyi}\ \emph {et~al.}(2019)\citenamefont {Bravyi}, \citenamefont {Browne}, \citenamefont {Calpin}, \citenamefont {Campbell}, \citenamefont {Gosset},\ and\ \citenamefont {Howard}}]{bravyi2019simulation}%
  \BibitemOpen
  \bibfield  {author} {\bibinfo {author} {\bibfnamefont {S.}~\bibnamefont {Bravyi}}, \bibinfo {author} {\bibfnamefont {D.}~\bibnamefont {Browne}}, \bibinfo {author} {\bibfnamefont {P.}~\bibnamefont {Calpin}}, \bibinfo {author} {\bibfnamefont {E.}~\bibnamefont {Campbell}}, \bibinfo {author} {\bibfnamefont {D.}~\bibnamefont {Gosset}},\ and\ \bibinfo {author} {\bibfnamefont {M.}~\bibnamefont {Howard}},\ }\href {https://doi.org/10.22331/q-2019-09-02-181} {\bibfield  {journal} {\bibinfo  {journal} {Quantum}\ }\textbf {\bibinfo {volume} {3}},\ \bibinfo {pages} {181} (\bibinfo {year} {2019})},\ \Eprint {https://arxiv.org/abs/1808.00128} {arXiv:1808.00128 [quant-ph]} \BibitemShut {NoStop}%
\bibitem [{\citenamefont {Reardon-Smith}\ \emph {et~al.}(2024)\citenamefont {Reardon-Smith}, \citenamefont {Oszmaniec},\ and\ \citenamefont {Korzekwa}}]{reardonsmith2024improved}%
  \BibitemOpen
  \bibfield  {author} {\bibinfo {author} {\bibfnamefont {O.}~\bibnamefont {Reardon-Smith}}, \bibinfo {author} {\bibfnamefont {M.}~\bibnamefont {Oszmaniec}},\ and\ \bibinfo {author} {\bibfnamefont {K.}~\bibnamefont {Korzekwa}},\ }\href {https://doi.org/10.22331/q-2024-12-04-1549} {\bibfield  {journal} {\bibinfo  {journal} {Quantum}\ }\textbf {\bibinfo {volume} {8}},\ \bibinfo {pages} {1549} (\bibinfo {year} {2024})},\ \Eprint {https://arxiv.org/abs/2307.12702} {arXiv:2307.12702 [quant-ph]} \BibitemShut {NoStop}%
\bibitem [{\citenamefont {Hassman}\ \emph {et~al.}(2025)\citenamefont {Hassman}, \citenamefont {Reardon-Smith}, \citenamefont {Ravi}, \citenamefont {Chong},\ and\ \citenamefont {Sung}}]{hassman2025flo}%
  \BibitemOpen
  \bibfield  {author} {\bibinfo {author} {\bibfnamefont {Z.}~\bibnamefont {Hassman}}, \bibinfo {author} {\bibfnamefont {O.}~\bibnamefont {Reardon-Smith}}, \bibinfo {author} {\bibfnamefont {G.~S.}\ \bibnamefont {Ravi}}, \bibinfo {author} {\bibfnamefont {F.~T.}\ \bibnamefont {Chong}},\ and\ \bibinfo {author} {\bibfnamefont {K.~J.}\ \bibnamefont {Sung}},\ }\href@noop {} {\bibinfo {title} {Enhancing chemistry on quantum computers with fermionic linear optical simulation}} (\bibinfo {year} {2025}),\ \Eprint {https://arxiv.org/abs/2511.12416} {arXiv:2511.12416 [quant-ph]} \BibitemShut {NoStop}%
\bibitem [{\citenamefont {Dias}\ and\ \citenamefont {Koenig}(2024)}]{dias2024nongaussian}%
  \BibitemOpen
  \bibfield  {author} {\bibinfo {author} {\bibfnamefont {B.}~\bibnamefont {Dias}}\ and\ \bibinfo {author} {\bibfnamefont {R.}~\bibnamefont {Koenig}},\ }\href {https://doi.org/10.22331/q-2024-05-21-1350} {\bibfield  {journal} {\bibinfo  {journal} {Quantum}\ }\textbf {\bibinfo {volume} {8}},\ \bibinfo {pages} {1350} (\bibinfo {year} {2024})},\ \Eprint {https://arxiv.org/abs/2307.12912} {arXiv:2307.12912 [quant-ph]} \BibitemShut {NoStop}%
\bibitem [{\citenamefont {Cudby}\ and\ \citenamefont {Strelchuk}(2025)}]{cudby2023gaussian}%
  \BibitemOpen
  \bibfield  {author} {\bibinfo {author} {\bibfnamefont {J.}~\bibnamefont {Cudby}}\ and\ \bibinfo {author} {\bibfnamefont {S.}~\bibnamefont {Strelchuk}},\ }\href {https://arxiv.org/abs/2307.12654} {\bibinfo {title} {Gaussian decomposition of magic states for matchgate computations}} (\bibinfo {year} {2025}),\ \Eprint {https://arxiv.org/abs/2307.12654} {arXiv:2307.12654 [quant-ph]} \BibitemShut {NoStop}%
\bibitem [{\citenamefont {Jozsa}\ and\ \citenamefont {Miyake}(2008)}]{jozsa2008matchgates}%
  \BibitemOpen
  \bibfield  {author} {\bibinfo {author} {\bibfnamefont {R.}~\bibnamefont {Jozsa}}\ and\ \bibinfo {author} {\bibfnamefont {A.}~\bibnamefont {Miyake}},\ }\href {https://doi.org/10.1098/rspa.2008.0189} {\bibfield  {journal} {\bibinfo  {journal} {Proceedings of the Royal Society A: Mathematical, Physical and Engineering Sciences}\ }\textbf {\bibinfo {volume} {464}},\ \bibinfo {pages} {3089} (\bibinfo {year} {2008})},\ \Eprint {https://arxiv.org/abs/0804.4050} {arXiv:0804.4050 [quant-ph]} \BibitemShut {NoStop}%
\bibitem [{\citenamefont {Brod}\ and\ \citenamefont {Galv{\~a}o}(2011)}]{brod2011extending}%
  \BibitemOpen
  \bibfield  {author} {\bibinfo {author} {\bibfnamefont {D.~J.}\ \bibnamefont {Brod}}\ and\ \bibinfo {author} {\bibfnamefont {E.~F.}\ \bibnamefont {Galv{\~a}o}},\ }\href {https://doi.org/10.1103/PhysRevA.84.022310} {\bibfield  {journal} {\bibinfo  {journal} {Physical Review A}\ }\textbf {\bibinfo {volume} {84}},\ \bibinfo {pages} {022310} (\bibinfo {year} {2011})},\ \Eprint {https://arxiv.org/abs/1106.1863} {arXiv:1106.1863 [quant-ph]} \BibitemShut {NoStop}%
\bibitem [{\citenamefont {Mocherla}\ \emph {et~al.}(2024)\citenamefont {Mocherla}, \citenamefont {Lao},\ and\ \citenamefont {Browne}}]{mocherla2024extendingmatchgatesimulationmethods}%
  \BibitemOpen
  \bibfield  {author} {\bibinfo {author} {\bibfnamefont {A.}~\bibnamefont {Mocherla}}, \bibinfo {author} {\bibfnamefont {L.}~\bibnamefont {Lao}},\ and\ \bibinfo {author} {\bibfnamefont {D.~E.}\ \bibnamefont {Browne}},\ }\href {https://arxiv.org/abs/2302.02654} {\bibinfo {title} {Extending matchgate simulation methods to universal quantum circuits}} (\bibinfo {year} {2024}),\ \Eprint {https://arxiv.org/abs/2302.02654} {arXiv:2302.02654 [quant-ph]} \BibitemShut {NoStop}%
\bibitem [{\citenamefont {Miller}\ \emph {et~al.}(2025)\citenamefont {Miller}, \citenamefont {Favre}, \citenamefont {Holmes}, \citenamefont {Salehi}, \citenamefont {Chakraborty}, \citenamefont {Nyk{\"a}nen}, \citenamefont {Zimbor{\'a}s}, \citenamefont {Glos},\ and\ \citenamefont {Garc{\'i}a-P{\'e}rez}}]{miller2025majoranapropagation}%
  \BibitemOpen
  \bibfield  {author} {\bibinfo {author} {\bibfnamefont {A.}~\bibnamefont {Miller}}, \bibinfo {author} {\bibfnamefont {J.}~\bibnamefont {Favre}}, \bibinfo {author} {\bibfnamefont {Z.}~\bibnamefont {Holmes}}, \bibinfo {author} {\bibfnamefont {{\"O}.}~\bibnamefont {Salehi}}, \bibinfo {author} {\bibfnamefont {R.}~\bibnamefont {Chakraborty}}, \bibinfo {author} {\bibfnamefont {A.}~\bibnamefont {Nyk{\"a}nen}}, \bibinfo {author} {\bibfnamefont {Z.}~\bibnamefont {Zimbor{\'a}s}}, \bibinfo {author} {\bibfnamefont {A.}~\bibnamefont {Glos}},\ and\ \bibinfo {author} {\bibfnamefont {G.}~\bibnamefont {Garc{\'i}a-P{\'e}rez}},\ }\href@noop {} {\bibinfo {title} {Simulation of fermionic circuits using majorana propagation}} (\bibinfo {year} {2025}),\ \Eprint {https://arxiv.org/abs/2503.18939} {arXiv:2503.18939 [quant-ph]} \BibitemShut {NoStop}%
\bibitem [{\citenamefont {Arunachalam}\ and\ \citenamefont {Dutt}(2026)}]{arunachalam2026tomographyquantumstatesbounded}%
  \BibitemOpen
  \bibfield  {author} {\bibinfo {author} {\bibfnamefont {S.}~\bibnamefont {Arunachalam}}\ and\ \bibinfo {author} {\bibfnamefont {A.}~\bibnamefont {Dutt}},\ }\href {https://arxiv.org/abs/2606.07425} {\bibinfo {title} {Tomography of quantum states with bounded extent}} (\bibinfo {year} {2026}),\ \Eprint {https://arxiv.org/abs/2606.07425} {arXiv:2606.07425 [quant-ph]} \BibitemShut {NoStop}%
\bibitem [{\citenamefont {Schuch}\ and\ \citenamefont {Verstraete}(2009)}]{Schuch2009}%
  \BibitemOpen
  \bibfield  {author} {\bibinfo {author} {\bibfnamefont {N.}~\bibnamefont {Schuch}}\ and\ \bibinfo {author} {\bibfnamefont {F.}~\bibnamefont {Verstraete}},\ }\href {https://doi.org/10.1038/nphys1370} {\bibfield  {journal} {\bibinfo  {journal} {Nature Physics}\ }\textbf {\bibinfo {volume} {5}},\ \bibinfo {pages} {732–735} (\bibinfo {year} {2009})}\BibitemShut {NoStop}%
\bibitem [{\citenamefont {O’Gorman}\ \emph {et~al.}(2022)\citenamefont {O’Gorman}, \citenamefont {Irani}, \citenamefont {Whitfield},\ and\ \citenamefont {Fefferman}}]{OGorman2022}%
  \BibitemOpen
  \bibfield  {author} {\bibinfo {author} {\bibfnamefont {B.}~\bibnamefont {O’Gorman}}, \bibinfo {author} {\bibfnamefont {S.}~\bibnamefont {Irani}}, \bibinfo {author} {\bibfnamefont {J.}~\bibnamefont {Whitfield}},\ and\ \bibinfo {author} {\bibfnamefont {B.}~\bibnamefont {Fefferman}},\ }\bibfield  {journal} {\bibinfo  {journal} {PRX Quantum}\ }\textbf {\bibinfo {volume} {3}},\ \href {https://doi.org/10.1103/prxquantum.3.020322} {10.1103/prxquantum.3.020322} (\bibinfo {year} {2022})\BibitemShut {NoStop}%
\bibitem [{\citenamefont {Zhai}\ \emph {et~al.}(2026)\citenamefont {Zhai}, \citenamefont {Li}, \citenamefont {Zhang}, \citenamefont {Li}, \citenamefont {Lee},\ and\ \citenamefont {Chan}}]{zhai2026classicalsolutionfemocofactormodel}%
  \BibitemOpen
  \bibfield  {author} {\bibinfo {author} {\bibfnamefont {H.}~\bibnamefont {Zhai}}, \bibinfo {author} {\bibfnamefont {C.}~\bibnamefont {Li}}, \bibinfo {author} {\bibfnamefont {X.}~\bibnamefont {Zhang}}, \bibinfo {author} {\bibfnamefont {Z.}~\bibnamefont {Li}}, \bibinfo {author} {\bibfnamefont {S.}~\bibnamefont {Lee}},\ and\ \bibinfo {author} {\bibfnamefont {G.~K.-L.}\ \bibnamefont {Chan}},\ }\href {https://arxiv.org/abs/2601.04621} {\bibinfo {title} {Classical solution of the femo-cofactor model to chemical accuracy and its implications}} (\bibinfo {year} {2026}),\ \Eprint {https://arxiv.org/abs/2601.04621} {arXiv:2601.04621 [physics.chem-ph]} \BibitemShut {NoStop}%
\bibitem [{\citenamefont {Bluvstein}\ \emph {et~al.}(2023)\citenamefont {Bluvstein}, \citenamefont {Evered}, \citenamefont {Geim}, \citenamefont {Li}, \citenamefont {Zhou}, \citenamefont {Manovitz}, \citenamefont {Ebadi}, \citenamefont {Cain}, \citenamefont {Kalinowski}, \citenamefont {Hangleiter}, \citenamefont {Bonilla~Ataides}, \citenamefont {Maskara}, \citenamefont {Cong}, \citenamefont {Gao}, \citenamefont {Sales~Rodriguez}, \citenamefont {Karolyshyn}, \citenamefont {Semeghini}, \citenamefont {Gullans}, \citenamefont {Greiner}, \citenamefont {Vuletić},\ and\ \citenamefont {Lukin}}]{Bluvstein2023}%
  \BibitemOpen
  \bibfield  {author} {\bibinfo {author} {\bibfnamefont {D.}~\bibnamefont {Bluvstein}}, \bibinfo {author} {\bibfnamefont {S.~J.}\ \bibnamefont {Evered}}, \bibinfo {author} {\bibfnamefont {A.~A.}\ \bibnamefont {Geim}}, \bibinfo {author} {\bibfnamefont {S.~H.}\ \bibnamefont {Li}}, \bibinfo {author} {\bibfnamefont {H.}~\bibnamefont {Zhou}}, \bibinfo {author} {\bibfnamefont {T.}~\bibnamefont {Manovitz}}, \bibinfo {author} {\bibfnamefont {S.}~\bibnamefont {Ebadi}}, \bibinfo {author} {\bibfnamefont {M.}~\bibnamefont {Cain}}, \bibinfo {author} {\bibfnamefont {M.}~\bibnamefont {Kalinowski}}, \bibinfo {author} {\bibfnamefont {D.}~\bibnamefont {Hangleiter}}, \bibinfo {author} {\bibfnamefont {J.~P.}\ \bibnamefont {Bonilla~Ataides}}, \bibinfo {author} {\bibfnamefont {N.}~\bibnamefont {Maskara}}, \bibinfo {author} {\bibfnamefont {I.}~\bibnamefont {Cong}}, \bibinfo {author} {\bibfnamefont {X.}~\bibnamefont {Gao}}, \bibinfo {author} {\bibfnamefont {P.}~\bibnamefont {Sales~Rodriguez}}, \bibinfo {author} {\bibfnamefont
  {T.}~\bibnamefont {Karolyshyn}}, \bibinfo {author} {\bibfnamefont {G.}~\bibnamefont {Semeghini}}, \bibinfo {author} {\bibfnamefont {M.~J.}\ \bibnamefont {Gullans}}, \bibinfo {author} {\bibfnamefont {M.}~\bibnamefont {Greiner}}, \bibinfo {author} {\bibfnamefont {V.}~\bibnamefont {Vuletić}},\ and\ \bibinfo {author} {\bibfnamefont {M.~D.}\ \bibnamefont {Lukin}},\ }\href {https://doi.org/10.1038/s41586-023-06927-3} {\bibfield  {journal} {\bibinfo  {journal} {Nature}\ }\textbf {\bibinfo {volume} {626}},\ \bibinfo {pages} {58–65} (\bibinfo {year} {2023})}\BibitemShut {NoStop}%
\bibitem [{\citenamefont {Schwerdt}\ \emph {et~al.}(2024)\citenamefont {Schwerdt}, \citenamefont {Peleg}, \citenamefont {Shapira}, \citenamefont {Priel}, \citenamefont {Florshaim}, \citenamefont {Gross}, \citenamefont {Zalic}, \citenamefont {Afek}, \citenamefont {Akerman}, \citenamefont {Stern}, \citenamefont {Kish},\ and\ \citenamefont {Ozeri}}]{Schwerdt2024}%
  \BibitemOpen
  \bibfield  {author} {\bibinfo {author} {\bibfnamefont {D.}~\bibnamefont {Schwerdt}}, \bibinfo {author} {\bibfnamefont {L.}~\bibnamefont {Peleg}}, \bibinfo {author} {\bibfnamefont {Y.}~\bibnamefont {Shapira}}, \bibinfo {author} {\bibfnamefont {N.}~\bibnamefont {Priel}}, \bibinfo {author} {\bibfnamefont {Y.}~\bibnamefont {Florshaim}}, \bibinfo {author} {\bibfnamefont {A.}~\bibnamefont {Gross}}, \bibinfo {author} {\bibfnamefont {A.}~\bibnamefont {Zalic}}, \bibinfo {author} {\bibfnamefont {G.}~\bibnamefont {Afek}}, \bibinfo {author} {\bibfnamefont {N.}~\bibnamefont {Akerman}}, \bibinfo {author} {\bibfnamefont {A.}~\bibnamefont {Stern}}, \bibinfo {author} {\bibfnamefont {A.~B.}\ \bibnamefont {Kish}},\ and\ \bibinfo {author} {\bibfnamefont {R.}~\bibnamefont {Ozeri}},\ }\bibfield  {journal} {\bibinfo  {journal} {Physical Review X}\ }\textbf {\bibinfo {volume} {14}},\ \href {https://doi.org/10.1103/physrevx.14.041017} {10.1103/physrevx.14.041017} (\bibinfo {year} {2024})\BibitemShut {NoStop}%
\bibitem [{\citenamefont {Maskara}\ \emph {et~al.}(2025)\citenamefont {Maskara}, \citenamefont {Kalinowski}, \citenamefont {Gonzalez-Cuadra},\ and\ \citenamefont {Lukin}}]{maskara2025fastsimulationfermionsreconfigurable}%
  \BibitemOpen
  \bibfield  {author} {\bibinfo {author} {\bibfnamefont {N.}~\bibnamefont {Maskara}}, \bibinfo {author} {\bibfnamefont {M.}~\bibnamefont {Kalinowski}}, \bibinfo {author} {\bibfnamefont {D.}~\bibnamefont {Gonzalez-Cuadra}},\ and\ \bibinfo {author} {\bibfnamefont {M.~D.}\ \bibnamefont {Lukin}},\ }\href {https://arxiv.org/abs/2509.08898} {\bibinfo {title} {Fast simulation of fermions with reconfigurable qubits}} (\bibinfo {year} {2025}),\ \Eprint {https://arxiv.org/abs/2509.08898} {arXiv:2509.08898 [quant-ph]} \BibitemShut {NoStop}%
\bibitem [{\citenamefont {Hu}\ \emph {et~al.}(2026)\citenamefont {Hu}, \citenamefont {Liu},\ and\ \citenamefont {Wu}}]{hu2026entanglementassistedcircuitknittingdistributed}%
  \BibitemOpen
  \bibfield  {author} {\bibinfo {author} {\bibfnamefont {S.-H.}\ \bibnamefont {Hu}}, \bibinfo {author} {\bibfnamefont {P.-S.}\ \bibnamefont {Liu}},\ and\ \bibinfo {author} {\bibfnamefont {J.-Y.}\ \bibnamefont {Wu}},\ }\href {https://arxiv.org/abs/2510.26789} {\bibinfo {title} {Entanglement-assisted circuit knitting: Distributed quantum computing using limited entanglement resources}} (\bibinfo {year} {2026}),\ \Eprint {https://arxiv.org/abs/2510.26789} {arXiv:2510.26789 [quant-ph]} \BibitemShut {NoStop}%
\bibitem [{\citenamefont {Nielsen}(1999)}]{nielsen_conditions_1999}%
  \BibitemOpen
  \bibfield  {author} {\bibinfo {author} {\bibfnamefont {M.~A.}\ \bibnamefont {Nielsen}},\ }\href {https://doi.org/10.1103/PhysRevLett.83.436} {\bibfield  {journal} {\bibinfo  {journal} {Physical Review Letters}\ }\textbf {\bibinfo {volume} {83}},\ \bibinfo {pages} {436} (\bibinfo {year} {1999})}\BibitemShut {NoStop}%
\bibitem [{\citenamefont {R{\'e}nyi}(1961)}]{renyi1961measures}%
  \BibitemOpen
  \bibfield  {author} {\bibinfo {author} {\bibfnamefont {A.}~\bibnamefont {R{\'e}nyi}},\ }in\ \href@noop {} {\emph {\bibinfo {booktitle} {Proceedings of the Fourth Berkeley Symposium on Mathematical Statistics and Probability, Volume 1: Contributions to the Theory of Statistics}}}\ (\bibinfo {organization} {The Regents of the University of California},\ \bibinfo {year} {1961})\BibitemShut {NoStop}%
\bibitem [{\citenamefont {Jamio{\l}kowski}(1972)}]{jamiolkowskiLinearTransformationsWhich1972}%
  \BibitemOpen
  \bibfield  {author} {\bibinfo {author} {\bibfnamefont {A.}~\bibnamefont {Jamio{\l}kowski}},\ }\href {https://doi.org/10.1016/0034-4877(72)90011-0} {\bibfield  {journal} {\bibinfo  {journal} {Reports on Mathematical Physics}\ }\textbf {\bibinfo {volume} {3}},\ \bibinfo {pages} {275} (\bibinfo {year} {1972})}\BibitemShut {NoStop}%
\bibitem [{\citenamefont {Choi}(1975)}]{choiCompletelyPositiveLinear1975}%
  \BibitemOpen
  \bibfield  {author} {\bibinfo {author} {\bibfnamefont {M.-D.}\ \bibnamefont {Choi}},\ }\href {https://doi.org/10.1016/0024-3795(75)90075-0} {\bibfield  {journal} {\bibinfo  {journal} {Linear Algebra and its Applications}\ }\textbf {\bibinfo {volume} {10}},\ \bibinfo {pages} {285} (\bibinfo {year} {1975})}\BibitemShut {NoStop}%
\bibitem [{\citenamefont {Gittsovich}\ \emph {et~al.}(2008)\citenamefont {Gittsovich}, \citenamefont {G\"{u}hne}, \citenamefont {Hyllus},\ and\ \citenamefont {Eisert}}]{Gittsovich2008}%
  \BibitemOpen
  \bibfield  {author} {\bibinfo {author} {\bibfnamefont {O.}~\bibnamefont {Gittsovich}}, \bibinfo {author} {\bibfnamefont {O.}~\bibnamefont {G\"{u}hne}}, \bibinfo {author} {\bibfnamefont {P.}~\bibnamefont {Hyllus}},\ and\ \bibinfo {author} {\bibfnamefont {J.}~\bibnamefont {Eisert}},\ }\bibfield  {journal} {\bibinfo  {journal} {Physical Review A}\ }\textbf {\bibinfo {volume} {78}},\ \href {https://doi.org/10.1103/physreva.78.052319} {10.1103/physreva.78.052319} (\bibinfo {year} {2008})\BibitemShut {NoStop}%
\end{thebibliography}%
\PRLrefsep
\clearpage
\onecolumngrid

%%%% APPENDICES
\section*{Appendices}
\let\addcontentsline\oldaddcontentsline
\appendix

% resetting counters
\renewcommand{\thesection}{A\arabic{section}}
\setcounter{section}{0}
\renewcommand{\thetable}{A\arabic{table}}
\setcounter{table}{0}
\renewcommand{\thefigure}{A\arabic{figure}}
\setcounter{figure}{0}
\renewcommand{\theequation}{A\arabic{equation}}
\setcounter{equation}{0}

\makeatletter
\@starttoc{toc}
% REVTeX resets equations at each appendix section; remove this reset.
\@removefromreset{equation}{section}
\makeatother

\section{Framework and background results for distribution complexity}
\label{app:distribution_complexity_details}
In the main text, we discussed the distribution complexity of two ways of implementing a nonlocal circuit across separate quantum processors. With only classical communication, a crossing operation is replaced by a quasiprobability decomposition (QPD) over local operations and classical communication (LOCC). With a coherent quantum interconnect, the processors can instead consume shared entanglement, such as Bell pairs, to implement or prepare the nonlocal resource directly. These two settings are controlled by related, but distinct, entanglement measures. In this section, we give a theoretical background necessary to understand our distribution complexity metrics. First first we discuss the optimal two-qubit \emph{LOCC-assisted} distribution cost used for gate-by-gate compiled baselines, and follow with with a discussion of of \emph{entanglement-assisted} cost, and the R\'enyi-entropy hierarchy that relates these two.

\subsection{Two-qubit formulas used for compiled baselines}
\label{app:kak_baseline}

For two-qubit gates, the optimal quasiprobability extent, i.e. distribution cost, can be expressed through the KAK decomposition~\cite{tucci2005introductioncartanskakdecomposition,piveteau2024knitting}. Up to single-qubit rotations, any two-qubit unitary can be expressed as
\begin{equation}
    U_{AB}
    =
    \exp\left(
    i\theta_X X\otimes X
    +
    i\theta_Y Y\otimes Y
    +
    i\theta_Z Z\otimes Z
    \right),
    \label{eq:kak_supp}
\end{equation}
with real parameters ordered as $|\theta_Z|\leq \theta_Y\leq \theta_X\leq \pi/4$. For the Givens and diagonal CPhase gates appearing in the double-factorized Trotter circuit, $\theta_Z=0$, and the optimal two-qubit QPD distribution complexity is
\begin{equation}
    \gamma_{\mathrm{LOCC}}(\mathcal U)
    =
    1
    +
    2|\sin 2\theta_X|
    +
    2|\sin 2\theta_Y|
    +
    2|\sin 2\theta_X|\,|\sin 2\theta_Y| .
    \label{eq:tight_2q_supp}
\end{equation}
A Givens rotation $G(\theta)$ has $\theta_X=\theta_Y=\theta/2$, giving $\gamma=2(1+|\sin\theta|)^2-1$. A diagonal $ZZ$ rotation with phase angle $\vartheta$ has one nonzero KAK angle and gives $\gamma=1+2|\sin(\vartheta/2)|$.

For sequential layers, cutting layers independently gives a multiplicative upper bound on the total QPD distribution complexity, and hence a squared multiplicative contribution to sampling overhead~\cite{piveteau2024knitting}. This multiplicative estimate is the compiled-gate baseline for the Gaussian and diagonal contributions analyzed in the main text.

\subsection{Bell-pair costs and R\'enyi hierarchy}
\label{app:bell_pair_renyi}

Instead of sampling many branches of product unitaries across QPUs, the processors may consume shared entanglement to prepare the relevant nonlocal state or channel coherently.
For pure bipartite states, where the same Schmidt spectrum controls both the quasiprobability robustness and the minimum zero-error entanglement resource. We write the Schmidt decomposition of the state $|\psi\rangle\in\mathcal H_A\otimes\mathcal H_B$ as
\begin{equation}
    |\psi\rangle
    =
    \sum_{j=1}^{r}
    \lambda_j
    |\phi_j\rangle_A
    \otimes
    |\widetilde{\phi}_j\rangle_B,
    \label{eq:schmidt_supp}
\end{equation}
where $\ket{\phi_j}_A$ and $\ket{\tilde \phi_j}_B$ are the orthonormal basis states for A and B respectively. Likewise, $\lambda_j$ are non-negative real numbers satisfying $\sum_j \lambda_j^2=1$ known as Schmidt coefficients.
Nielsen's theorem characterizes pure-state LOCC transformations by majorization of Schmidt vectors~\cite{nielsen_conditions_1999}. In particular, a maximally entangled two-qudit state can be converted by LOCC into any pure state of Schmidt rank at most $d$, making Bell pairs the natural resource for coherent distributed state preparation.

The zero-error entanglement cost can be written as~\cite{Yue_2019,theurer2023singleshotentanglementmanipulationstates}
\begin{equation}
    E_c^{(0)}(\rho)
    =
    \min
    \left\{
    \log d:
    \inf_{\Lambda\in\mathcal L}
    \left\|
    \Lambda\left(|\Phi_d\rangle\langle\Phi_d|\right)
    -
    \rho
    \right\|_1
    =
    0
    \right\},
    \label{eq:zero_error_cost_supp}
\end{equation}
where $\mathcal L$ denotes LOCC maps, and $\ket{\Phi_d}$ is the two-qudit maximally entangled state defined as $\ket{\Phi_d} = \frac{1}{\sqrt{d}} \sum_{j=1}^{d} \ket{\phi_j}_A \otimes \ket{\tilde\phi_j}_B$. For pure states this reduces to $S_0(\rho_A)=\log\mathrm{rank}(\rho_A)$, the zeroth R\'enyi entropy. The R\'enyi entropy of order $\alpha$ is defined as
\begin{equation}
    S_\alpha(\rho)
    =
    \frac{1}{1-\alpha}
    \log\operatorname{Tr}(\rho^\alpha),
    \qquad
    \alpha\neq 1.
    \label{eq:renyi_def_supp}
\end{equation}
The R\'enyi entropies obey the hierarchy in \cref{eq:renyi_hierarchy_main}~\cite{renyi1961measures}. The robustness of entanglement, $R = (\sum_j\lambda_j)^2-1$, and equivalently the LOCC-assisted distribution complexity, $\gamma_{\mathrm{state}}=2R +1$, are related to the half-R\'enyi entropy as follows: 
\begin{equation}
    S_{1/2}(\rho_A)
    =
    \log\left(
    \frac{\gamma_{\mathrm{state}}+1}{2}
    \right).
    \label{eq:state_gamma_renyi_supp}
\end{equation}
Due to the hierarchy of R\'enyi entropies, $S_{1/2}$ gives a lower bound on the zero-error Bell-pair cost, while the von Neumann entropy $S_{\mathrm{vN}}(\rho) = \lim_{\alpha \to 1} S_\alpha(\rho)$ represents the asymptotic entanglement distillation cost and $S_2$ provides the experimentally accessible entanglement via two copies of a quantum state.

%=============================================================================
\section{Covariance matrix of the maximally entangled state}
\label{app:covariance}
%=============================================================================

The distribution complexity lower bound in \cref{thm:gaussian_choi_bound} rests on the fact that the Choi state of a fermionic Gaussian unitary is itself a fermionic Gaussian state. This means that we don't need to construct the $2^N \times 2^N$ unitary or its dense Choi statevector, because a $2N \times 2N$ covariance matrix of moments provides a complete description of fermionic Gaussian states. The Schmidt coefficients of such states are accessible, and their one-norm is efficiently computable, from the eigenvalues of reduced submatrices of this covariance matrix, which we now compute specifically for fermionic Gaussian Choi states.

In the Choi--Jamiolkowski construction, this reference state entangles the basis states of Hilbert space with an identical auxiliary copy~\cite{jamiolkowskiLinearTransformationsWhich1972,choiCompletelyPositiveLinear1975}. 
We do this in the fermionic setting by taking the $N$ physical fermionic modes $a^\dagger_1, \ldots a^\dagger_N$ and pairing each with a secondary set of auxiliary modes $a^{\dagger}_1{'}, \ldots a^{\dagger}_N{'}$, such that:
\begin{equation}
    | \Omega \rangle
    =
    2^{-N/2}
    \prod_{j=1}^{N}
    \left(1+a_j^\dagger a_j^{\dagger}{'}\right)
    |\mathrm{vac}\rangle .
\end{equation}
Equivalently, we can think of $|\Omega\rangle$ as initializing each physical mode $j$ and its auxiliary partner $j'$ in a two-mode fermionic Bell state $(|00\rangle + |11\rangle)/\sqrt{2}$, with a fixed interleaved ordering convention $(a_1,a'_1,a_2,a'_2,\ldots)$. For each mode pair, define the four Majorana operators
\begin{equation}
    c_{4j-3}=a_j+a_j^\dagger,\qquad
    c_{4j-2}=i(a_j-a_j^\dagger),\qquad
    c_{4j-1}=a'_j+a_j^{\dagger}{'},\qquad
    c_{4j}=i(a'_j-a_j^{\dagger}{'}).
\end{equation}
Then the covariance matrix is defined as 
\begin{equation}
    [\Gamma]_{kl}
    =
    \frac{i}{2}
    \langle [c_k,c_l]\rangle ,
\end{equation}
which characterizes a fermionic Gaussian state through its two-point
Majorana correlations~\cite{bravyi2005lagrangianflo,peschel2003rdm,peschel2009reduced}.

Since $|\Omega\rangle$ is a product over independent physical-auxiliary pairs,
$\Gamma^\Omega$ is a direct sum of identical two-mode blocks. It is enough to
compute one pair in the state
$|\phi\rangle=(|00\rangle+|11\rangle)/\sqrt{2}$, where
$|11\rangle=a^\dagger a^{\prime\dagger}|\mathrm{vac}\rangle$.
The nonzero expectation values are
\begin{equation}
    \langle a^\dagger a\rangle
    =
    \langle a^{\dagger}{'}a'\rangle
    =
    \frac{1}{2},
    \qquad
    \langle a^\dagger a^{\dagger}{'}\rangle
    =
    \langle a'a\rangle
    =
    \frac{1}{2},
\end{equation}
with $\langle aa'\rangle=\langle a^{\dagger}{'}a^\dagger\rangle=-1/2$ by
fermionic ordering. The cross terms
$\langle a^\dagger a'\rangle$ and $\langle aa^{\dagger}{'}\rangle$ vanish. Using linearity of expectations,
\begin{align}
\Gamma_{13}
&=
i\left\langle (a+a^\dagger)(a'+a^{\prime\dagger})\right\rangle,
&
\Gamma_{14}
&=
-\left\langle (a+a^\dagger)(a'-a^{\prime\dagger})\right\rangle,
\\
\Gamma_{23}
&=
-\left\langle (a-a^\dagger)(a'+a^{\prime\dagger})\right\rangle,
&
\Gamma_{24}
&=
-i\left\langle (a-a^\dagger)(a'-a^{\prime\dagger})\right\rangle,
\end{align}
and simplifying, we get 
\begin{equation}
    \Gamma_{\mathrm{pair}}
    =
    \begin{pmatrix}
       0 &  0 &  0 &  1 \\
       0 &  0 &  1 &  0 \\
       0 & -1 &  0 &  0 \\
      -1 &  0 &  0 &  0
    \end{pmatrix}, \qquad \Gamma^\Omega
    =
    \bigoplus_{j=1}^{N}
    \Gamma_{\mathrm{pair}} .
\end{equation}

%=============================================================================
\section{Efficient One-norm for Gaussian States}
\label{app:factorization_schmidt}
%=============================================================================

For a pure fermionic Gaussian state $|\psi\rangle$ on $N$ modes, the restricted
covariance matrix $\Gamma_{\mathcal{A}}$ for a subsystem $N_A$ modes has eigenvalues $\pm i\nu_k$ with $\nu_k \in [0,1]$, $k = 1  ,\ldots,N_A$~\cite{bravyi2005lagrangianflo,peschel2003rdm,peschel2009reduced}.
Then the $2^{N_A}$ many eigenvalues of the reduced density matrix $\rho_A = \mathrm{Tr}_B(|\psi\rangle\langle\psi|)$ are given as the products of the factors:
\begin{equation}
    p_{\vec{n}} = \prod_{k=1}^{N_A}
    \biggl(\frac{1+\nu_k}{2}\biggr)^{\!n_k}
    \biggl(\frac{1-\nu_k}{2}\biggr)^{\!1-n_k},
    \qquad \vec{n} \in \{0,1\}^{N_A} .
    \label{eq:gaussian_spectrum}
\end{equation}
Since the Schmidt coefficients are given by $\alpha_{\vec{n}}=\sqrt{p_{\vec{n}}}$, we can show that the one-norm summation $\sum_i \alpha_i$ factorizes:
\begin{align}
    \sum_{\vec{n}} \alpha_{\vec{n}}
    &= \sum_{\vec{n}} \prod_{k=1}^{N_A}
       \sqrt{\biggl(\frac{1+\nu_k}{2}\biggr)^{\!n_k}
             \biggl(\frac{1-\nu_k}{2}\biggr)^{\!1-n_k}}
    \nonumber \\
    &= \prod_{k=1}^{N_A} \sum_{n_k \in \{0,1\}}
       \sqrt{\biggl(\frac{1+\nu_k}{2}\biggr)^{\!n_k}
             \biggl(\frac{1-\nu_k}{2}\biggr)^{\!1-n_k}}
    \nonumber \\
    &= \prod_{k=1}^{N_A}
       \biggl(\sqrt{\frac{1+\nu_k}{2}} +
              \sqrt{\frac{1-\nu_k}{2}}\biggr).
    \label{eq:schmidt_factorization}
\end{align}
The second equality exchanges the sum over $2^{N_A}$ binary strings with
a product of $N_A$ binary sums, using the product form of $p_{\vec n}$.

Squaring the single-mode factor gives
\begin{equation}
    \biggl(\sqrt{\frac{1+s}{2}}+\sqrt{\frac{1-s}{2}}\biggr)^2
    =
    1+\sqrt{1-s^2}.
\end{equation}
Thus a generic Gaussian state has
\begin{equation}
    \sum_{\vec n}\alpha_{\vec n}
    =
    \prod_{k=1}^{N_A}
    \bigl(1+\sqrt{1-\nu_k^2}\bigr)^{1/2}.
    \label{eq:generic_gaussian_nuclear_norm}
\end{equation}
For the Choi state used in \cref{eq:gamma_choi}, the restricted covariance spectrum has a doubled structure: each singular value $\sigma_k$ of $R[A,A]$ appears twice among the $\nu_k$. Applying \cref{eq:generic_gaussian_nuclear_norm} to these two identical modes removes the square root and gives
\begin{equation}
    S
    =
    \sum_{\vec n}\alpha_{\vec n}
    =
    \prod_{k=1}^{N_A}
    \bigl(1+\sqrt{1-\sigma_k^2}\bigr).
    \label{eq:choi_schmidt_sum}
\end{equation}
Substituting $S$ into the Choi/robustness expression $\gamma_{\mathrm{Choi}}=2S^2-1$ gives \cref{eq:gamma_choi}.

%=============================================================================
\section{Uncorrelated diagonal interactions and random CPhase saturation}
\label{app:random_cphase_saturation}
%=============================================================================

\begin{figure*}[!thbp]
    \centering
    \includegraphics[width=\textwidth]{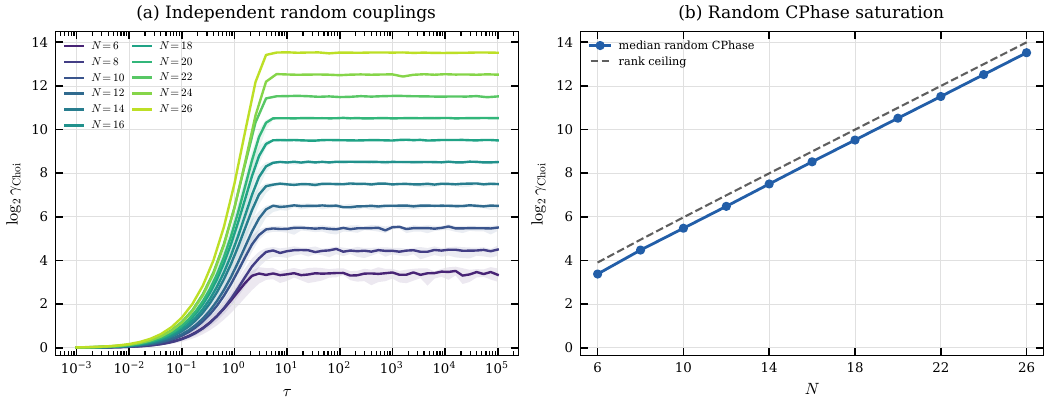}
    \caption{\textbf{Generic diagonal phases rapidly approach the random CPhase limit.}
    Here we show what happens when the correlated phase structure of a squared one-body diagonal layer is removed, with two generic diagonal models. In both panels, the plotted quantity is the exact Choi lower bound $\log_2\gamma_{\mathrm{Choi}}$ across a balanced bipartition. \textbf{(a)} We sample dense all-to-all diagonal Hamiltonians $H=\sum_{p<q}\Theta_{pq}\hat n_p\hat n_q$ with independent random couplings, normalized to one as in the random rank-one model discussed in~\cref{sec:coulomb}, and plot the median $\log_2\gamma_{\mathrm{Choi}}$ as a function of evolution time $\tau$. Shaded bands show the $10$--$90\%$ range over random instances. Unlike the rank-one model, which shows a slow growth of entanglement after an initial regime, the independent-coupling ensemble rapidly approaches an $N$-dependent saturation value. \textbf{(b)} We plot the  corresponding infinite-time random-phase limit: an all-to-all cross-partition CPhase circuit with independent angles $\theta_{ij}\sim\mathrm{Uniform}(0,2\pi)$. The blue curve shows the median over random angle samples, and the dashed black curve is the maximum possible Choi value for the balanced cut. Together, the panels show that slow diagonal-layer growth is not a generic consequence of diagonality, but comes from the rank-one phase locking present in double-factorized fragments.}
    \label{fig:random_diagonal_baselines}
    \label{fig:appendix_uncorrelated_iid_diagonal}
    \label{fig:cphase_gamma_scaling}
\end{figure*}

The Coulomb layers in a double-factorized Trotter step are generated by squares of one-body diagonal operators.  If
$\hat D_\ell=\sum_p\varepsilon_p^{(\ell)}\hat n_p$, then
\begin{equation}
    \frac{\lambda_\ell}{2}\hat D_\ell^2
    =
    \frac{\lambda_\ell}{2}\sum_p
    \bigl(\varepsilon_p^{(\ell)}\bigr)^2\hat n_p
    +
    \lambda_\ell
    \sum_{p<q}
    \varepsilon_p^{(\ell)}\varepsilon_q^{(\ell)}
    \hat n_p\hat n_q .
    \label{eq:df_diagonal_pair_expansion}
\end{equation}
The first term gives one-mode phases.  For a fixed processor partition $A|B$,
the two-mode terms with $p,q$ both in $A$ or both in $B$ are also local.  Thus
the part of the diagonal layer that can create operator entanglement across
the cut is the cross-boundary CPhase layer with angles
$\theta_{pq}^{(\ell)}=\tau\lambda_\ell
\varepsilon_p^{(\ell)}\varepsilon_q^{(\ell)}$, for $p\in A$ and $q\in B$.
The sign is immaterial for the Choi spectrum.

To separate this squared-one-body structure from the generic behavior of
diagonal two-body phases, we compare it with an uncorrelated random diagonal
Hamiltonian
\begin{equation}
    \hat H_\Theta
    =
    \sum_{p<q}
    \Theta_{pq}\hat n_p\hat n_q,
    \qquad
    \hat U_\Theta(\tau)
    =
    \exp(-i\tau\hat H_\Theta).
    \label{eq:random_pair_diagonal_hamiltonian}
\end{equation}
In \cref{fig:appendix_uncorrelated_iid_diagonal}, we consider models with independent couplings sampled uniformly $X_{pq}\sim\operatorname{Uniform}(-1,1)$, symmetrize, and normalize according to the Frobenius norm to match the random rank-one models in \cref{sec:coulomb}, 
\begin{equation}
    \Theta_{pq}
    =
    \frac{X_{pq}+X_{qp}}{\|X\|_F},
    \qquad p<q .
    \label{eq:appendix_random_theta_distribution}
\end{equation}
This normalization fixes the horizontal scale of $\tau$. 
For any fixed partition, only the cross block of $\Theta$ affects the
operator-Schmidt spectrum.  The cross-boundary unitary is
\begin{equation}
    \hat U_{AB}(\tau)
    =
    \prod_{p\in A,\,q\in B}
    \exp\!\left(-i\tau\Theta_{pq}\hat n_p\hat n_q\right),
    \label{eq:cross_cphase_factorization}
\end{equation}
where the product order is irrelevant because all factors commute. In \cref{fig:random_diagonal_baselines}(a), we plot the $\gamma_\mathrm{Choi}$ of the above unitaries by explicitly computing the dense Schmidt decomposition of the Choi state. 
For a fixed set of generic nonzero couplings, increasing $\tau$ eventually causes the
angles $\tau\Theta_{pq}$ to wind around the circle modulo $2\pi$, where the coupling values act as CPhase angle frequencies. 
At large times, these wrapped phases behave like independent random angles. At sufficiently large $\tau$, the value of $\gamma_\mathrm{Choi}$ saturates to an $N$-dependent plateau. Indeed, the distribution complexity of the random rank-one model in \cref{sec:coulomb} saturates same value for each $N$.
% , where all of the couplings are of the form $\varepsilon_p \varepsilon_q$ for a single vector $\vec{\varepsilon}$ and which transitions to the slow intermediate growth discussed in \cref{thm:slow-growth-MBL}. 
However, unlike the random rank-one model, there is no slow growth regime for random independent couplings, and instead we observe superpolynomial growth of $\gamma_\mathrm{Choi}$ until saturation.

We also explicitly consider the distribution bound for all-to-all CPhase circuits, in which we directly sample
\begin{equation}
    \theta_{pq}\sim\mathrm{Uniform}(0,2\pi),
    \qquad p\in A,\ q\in B,
\end{equation}
and the corresponding random CPhase unitary
\begin{equation}
    \hat{U}_\mathrm{RCP} = \prod_{p\in A,\,q\in B}
    \exp(i\theta_{pq}\hat n_p\hat n_q).
\end{equation}
We then plot the median Choi lower bound for many random iterations for increasing $N$, always taking balanced bipartitions, and observe the exponential growth of $\gamma_\mathrm{Choi}(\hat{U}_\mathrm{RCP})$ in \cref{fig:cphase_gamma_scaling}. We observe that its values closely match the long-time plateaus of both the random independent couplings model and the random rank-one model. This shows that the slow growth of entanglement is not a consequence of the general diagonal form of the Coulomb interactions, but specifically due to the structure of being squared one-body operators.

%=============================================================================
\section{Small molecule details and distribution representation optimization}
\label{app:small_molecules}
%=============================================================================

Here we use small STO-3G molecular systems to test whether the representation levers identified in \cref{sec:structure} for double-factorized electronic
structure data. The closed-form Gaussian diagnostics make it scalable to
compare truncation, partition, fragment-order, and gauge choices directly,
without constructing the full $2^N$ dimensional unitary.

For each molecule we start from the spinless double-factorized Hamiltonian in a
canonical molecular-orbital basis.  The truncation threshold is chosen along the
weighted-\(L_2\) path of \cref{eq:weighted_l2}, and we retain the most aggressive
threshold whose induced CCSD(T) correlation-energy error satisfies
\(\Delta E_{\mathrm{corr}}\le 1.6~\mathrm{mHa}\).  This follows the
chemistry-facing calibration used in recent tensor-factorization resource
estimates, where raw tensor norms are useful construction objectives but are not
by themselves the observable error of interest~\cite{lee2021thc,caesura2025bliss}.

We optimize the Gaussian part of the DF
Trotter step, since the diagonal rank-one Coulomb layers are discussed separately, and are shown to be bounded for small step sizes. For a fixed balanced partition \(A|B\), the Gaussian lower bound cost is
the product of Choi costs for the first fragment basis, adjacent interfragment basis changes, and the final fragment basis.
After truncation, we search for the best mode partition, fragment order, and eigenspace gauges for each fragment that results in the smallest distribution overhead. 

For \(N\le 10\) we search over all balanced partitions, while for the three
largest STO-3G entries below it is a greedy single-mode-swap search. 
At each partition, we search for fragment orders using the ungauged pairwise Choi cost as a distance matrix and then apply 2-opt local refinements. For each partition and order, we check optimize for each gauge choice using only the degenerate and
null eigenspaces of retained fragments.

\begin{table*}[t]
    \centering
    \scriptsize
    \setlength{\tabcolsep}{3.5pt}
    \caption{\textbf{Gaussian costs for molecular systems are strongly representation dependent.}
    For each STO-3G molecule, $N$ is the number of spatial orbitals,
    $L_{\mathrm{full}}=N(N+1)/2$ is the upper bound on the number of fragments, and
    $L_{\mathrm{active}}$ is the number of retained fragments after
    choosing the most aggressive weighted-$L_2$ truncation threshold whose
    induced CCSD(T) correlation-energy error satisfies
    $|\Delta E_{\mathrm{corr}}| < 1.6~\mathrm{mHa}$. The cost columns report the total distribution cost  DF Trotter step, rounded to the nearest 
    power of two. The costs measured treat each diagonal and Gaussian layers independently, meaning that the total cost is multiplicative across all layers. The Choi columns
    are lower-bound diagnostics for the corresponding Gaussian representation,
    while the CS column is the cost of the constructive cosine-sine protocol evaluated on the
    same optimized representation.}
    \label{tab:molecular_gaussian_appendix}
    \begin{tabular}{@{}lrrrrrrr@{}}
        \toprule
        Molecule
            & $N$
            & $L_{\mathrm{full}}$
            & $L_{\mathrm{active}}$
            & \shortstack{untruncated default\\$\log_2(\gamma_{\mathrm{Choi}}^{(\mathrm{Trotter})})$}
            & \shortstack{truncated default\\$\log_2(\gamma_{\mathrm{Choi}}^{(\mathrm{Trotter})})$}
            & \shortstack{optimized\\$\log_2(\gamma_{\mathrm{Choi}}^{(\mathrm{Trotter})})$}
            & \shortstack{optimized\\$\log_2(\gamma_{\mathrm{CS}}^{(\mathrm{Trotter})})$} \\
        \midrule
        H$_2$    &  2 &   3 &  2 &   10 &   7 &  7 &  7 \\
        LiH      &  6 &  21 &  6 &  101 &  30 & 10 & 10 \\
        HF       &  6 &  21 & 11 &  101 &  52 & 18 & 18 \\
        BeH$_2$  &  7 &  28 &  7 &  151 &  42 & 10 & 10 \\
        H$_2$O   &  7 &  28 & 12 &  149 &  69 & 22 & 24 \\
        NH$_3$   &  8 &  36 & 14 &  230 &  89 & 27 & 31 \\
        CH$_4$   &  9 &  45 & 14 &  285 &  91 & 35 & 40 \\
        CO       & 10 &  55 & 10 &  417 &  74 & 24 & 28 \\
        HCl      & 10 &  55 & 24 &  411 & 176 & 21 & 23 \\
        N$_2$    & 10 &  55 & 10 &  422 &  80 & 26 & 28 \\
        F$_2$    & 10 &  55 & 20 &  420 & 152 & 37 & 40 \\
        H$_2$S   & 11 &  66 & 25 &  543 & 214 & 53 & 59 \\
        H$_2$CO  & 12 &  78 & 23 &  685 & 202 & 50 & 56 \\
        CO$_2$   & 15 & 120 & 14 & 1288 & 149 & 39 & 44 \\
        \bottomrule
    \end{tabular}
\end{table*}

\Cref{tab:molecular_gaussian_appendix} shows that chemistry-calibrated
truncation can substantially shorten the Gaussian path, and that representation
choices within the retained Hamiltonian can reduce the exponent further. The
optimized columns lower the distribution cost by tens to hundreds of powers of two across the benchmark
set, corresponding to large powers in the distribution cost. The
optimized CS exponents also track the optimized Choi diagnostics closely for
these instances, suggesting that the constructive Gaussian protocol captures
most of the improvement indicated by the lower-bound calculation. The searches used in \cref{tab:molecular_gaussian_appendix} are also relatively
modest. Broader searches over orbital bases, partitions, fragment orderings, and
gauges could identify still more efficient representations. The main point is
that distribution complexity becomes a computable objective for choosing among
electronic-structure representations before compiling the simulation into
elementary gates.

%=============================================================================
\section{Local orbital bases and projection}
\label{app:localized_orbitals}
%=============================================================================

The fermionic modes used to define the ERI tensor can be any orthonormal basis of a finite single-particle space. Two orthonormal bases can describe the same Fock space, but express different ERI tensor elements and therefore have very different double-factorized fragment rotations. This appendix illustrates that basis choice can either expose or obscure locality across a processor cut. This is especially true for spatially localized systems
% Working in one basis over another might dramatically change the properties of the ERI tensor, and therefore the Gaussian layers between tensor fragments, that contribute directly to the distribution complexity of the Hamiltonian, especially spatially localized systems. 
If $\{a^\dagger_j\}$ are the creation operators for a given set of orbitals $\{\chi_i\}$, then the transformation to 
\begin{equation}
    b^\dagger_i = \sum_j U_{ji} \, a^\dagger_j 
\end{equation}
is another set of creation operators for the set of orbitals $\{\phi_i\}$ which span the same single-particle space as $\{\chi_i\}$ if and only if $U$ is a unitary matrix. Orbital rotations are exactly implemented in Fock space by Gaussian fermionic unitaries. The point is that working in one basis or another transforms the electronic Hamiltonian coefficients by the consistent transformation:
\begin{align}
    h'_{pq}
      &=
      \sum_{ij} U^*_{ip} h_{ij} U_{jq},\\
    V'_{pqrs}
      &=
      \sum_{ijkl}
      U^*_{ip}U_{jq}U^*_{kr}U_{ls} V_{ijkl}.
\end{align}
where $h'$ and $V'$ are the one- and two-body integrals in the $\{\phi_i\}$ orbital basis in terms of the corresponding integrals $h$ and $V$ in the original $\{\chi_i\}$ basis.
Working in one basis or another might cause the decomposition of the ERI tensor to have different spectral properties. This is not to be confused with working in basis sets which span inequivalent Fock spaces: for example, STO-3G and cc-pVDZ denote distinct common orbital wavefunction approximations and are not related by orbital rotations (and generally have different numbers of orbitals for the same system of atoms and electrons). 

Atomic-orbital functions are usually not orthogonal, and need to be orthogonalized to arrive at orthonormal fermionic modes to define Fock space. Orbitals can be orthogonalized with different objectives in mind, a prime example being locality preservation. For example, Löwdin orthogonalization preserves the atom-centered character of the resulting basis as much as possible under a symmetric orthogonalization~\cite{lowdin1950nonorthogonality}. Boys and Pipek--Mezey localized orbitals provide a variational procedure that rotate an orthonormal molecular-orbital basis to optimize a localization criterion~\cite{boys1960construction,pipek1989fast}. In all cases, preserving locality can concentrate large electronic integrals within subsets of nearby modes, which we illustrate in \cref{fig:h2_lowdin_locality_simple}.

A common choice of basis for quantum algorithms is the canonical Hartree--Fock molecular orbitals. These are the single-particle eigenmodes of the self-consistent Fock operator for a system, which is the effective quadratic Hamiltonian which best approximates the system. In this basis, the Hartree--Fock reference state is a computational basis state, which makes it convenient to work in for state preparation when using mean-field approximations as a starting point. However, these are generally delocalized orbitals that extend across the molecular system, and they may not align well with any single processor partition.

\begin{figure}[t]
    \centering
    \includegraphics[width=\textwidth]{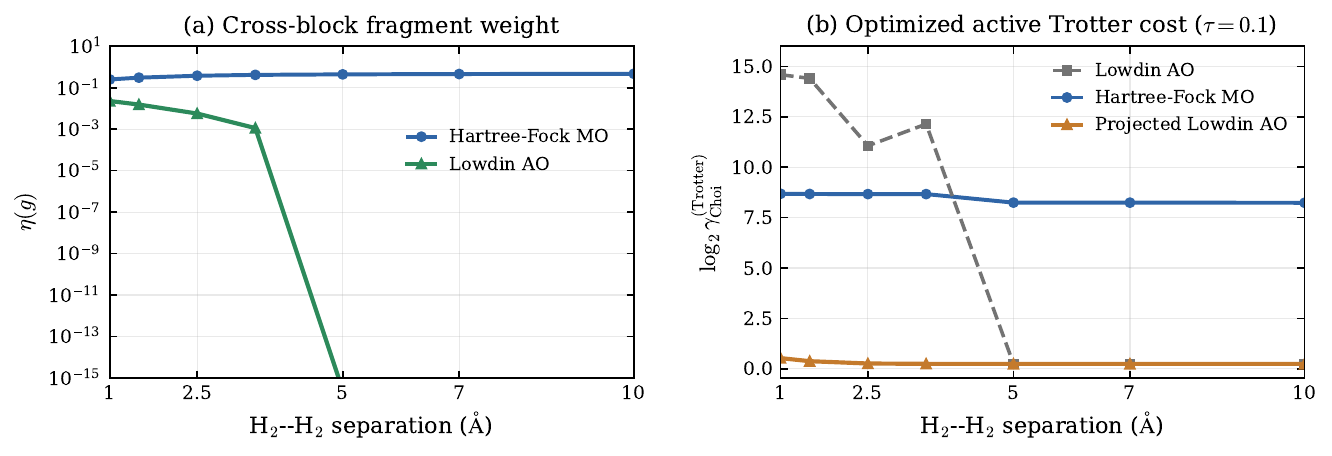}
    \caption{\textbf{Local orbital bases and cross-boundary near-degeneracies.}
    Molecular orbital bases can be highly nonlocal and exhibit much greater cross partition strength than other orthogonalization procedures despite spanning the same single-particle Fock space. We illustrate this for two H$_2$ molecules in the STO-3G basis, with one molecule assigned to each side of the processor cut. \textbf{(a)} We plot the $|\lambda_\ell|$-weighted cross-block fragment weight $\eta = \sum \lambda_\ell \, \eta(g^{(\ell)}) / (\sum \lambda_\ell)$ for the active double-factorized fragments, where $\eta(g^{(\ell)})=\|g^{(\ell)}-\Pi_{A|B}(g^{(\ell)})\|_F/\|g^{(\ell)}\|_F$. The Hartree--Fock molecular orbitals delocalize across the two molecules even as the intermolecule separation increases, while the L\"owdin atomic-orbital representation rapidly becomes block local as the molecules separate. \textbf{(b)} The optimized distribution cost for the Trotter step of this system at step size $\tau = 0.1$, including the Gaussian and diagonal layers. The projected L\"owdin curve removes the cross-block part of each fragment and serves as a maximally local reference. For the Hartree--Fock and even unprojected L\"owdin orbital bases at small separation, the distribution remains high.}
\label{fig:h2_lowdin_locality_simple}
\end{figure}

In a highly localized basis, the double factorization fragment matrices $g^{(\ell)}$ may have very weak cross-partition elements. For a spatial mode partition $A|B$, define the block projection
\begin{equation}
    \Pi_{A|B}(g^{(\ell)})
    =
    P_A g^{(\ell)} P_A + P_B g^{(\ell)} P_B ,
    \qquad P_B=I-P_A .
    \label{eq:block_projection_appendix}
\end{equation}
and the normalized cross-block weight
\begin{equation}
    \eta(g^{(\ell)})
    =
    \frac{\|g^{(\ell)}-\Pi_{A|B}(g^{(\ell)})\|_F}{\|g^{(\ell)}\|_F}.
\end{equation}
In \cref{fig:h2_lowdin_locality_simple}(a), we plot the corresponding $|\lambda_\ell|$-weighted average over the active double-factorized fragments. This gives a basis-level diagnostic for how much of the fragment representation crosses the processor cut before any circuit decomposition is chosen. 
Each projected fragment matrix can be diagonalized via a block-local rotation,
$R_\ell^{\mathrm{blk}}=R_{\ell,A}\oplus R_{\ell,B}$, so its Gaussian
diagonalization layer has $\gamma_{\mathrm{Choi}}=1$ across $A|B$.  The Hamiltonian error from this replacement is controlled at the ERI level by
\begin{equation}
    \Delta V
    =
    \sum_{\ell\in S}
    \lambda_\ell
    \left[
      g^{(\ell)}\otimes g^{(\ell)}
      -
      \Pi_{A|B}(g^{(\ell)})\otimes \Pi_{A|B}(g^{(\ell)})
    \right],
    \label{eq:block_projection_eri_error}
\end{equation}
where $S$ is the set of projected fragments and the tensor product denotes the
outer product in the reshaped ERI supermatrix.  This gives a direct way to
trade distribution overhead against ERI error, and in small systems the same
approximation can be calibrated by the change in correlation energy.

As we discuss in \cref{sec:gauge}, any degenerate eigenspaces of $g^{(\ell)}$, including the nullspace associated with zero or truncated
$\varepsilon_j^{(\ell)}$, lead to gauge degrees of freedom in the eigendecomposition. Near-degeneracies make this choice unstable under
small changes in geometry, basis, or truncation threshold. A nearly
block-diagonal matrix with almost equal diagonal entries can have delocalized
eigenvectors even when its cross-block entries are tiny.  Since
\cref{eq:gamma_choi} depends on the singular values of subblocks of the chosen
Gaussian rotation, a raw numerical eigensolver can overstate the distribution
cost of a fragment that is physically close to block local.  The gauge and
nullspace optimizations in the main text are designed to remove this
coordinate artifact when possible.  Block projection is a stronger approximation, which specifically truncates against the partition directly, enforcing local structure of fragments.

Finally, block-local diagonalization of $g^{(\ell)}$ should not be confused with
removing all nonlocal dynamics.  If a projected fragment has
$D_\ell=D_{\ell,A}+D_{\ell,B}$, then
\begin{equation}
    D_\ell^2
    =
    D_{\ell,A}^2 + D_{\ell,B}^2
    + 2D_{\ell,A}D_{\ell,B}.
\end{equation}
The Gaussian change of basis can be local while the diagonal squared fragment
still contains the cross-boundary interaction, under the block-diagonal projection approximation.  

\section{Localized Fragments gap and interface sweep}
\label{app:localized-fragments-gap-interface-sweep}

\begin{figure}[!ht]
    \centering
    \includegraphics[width=\textwidth]{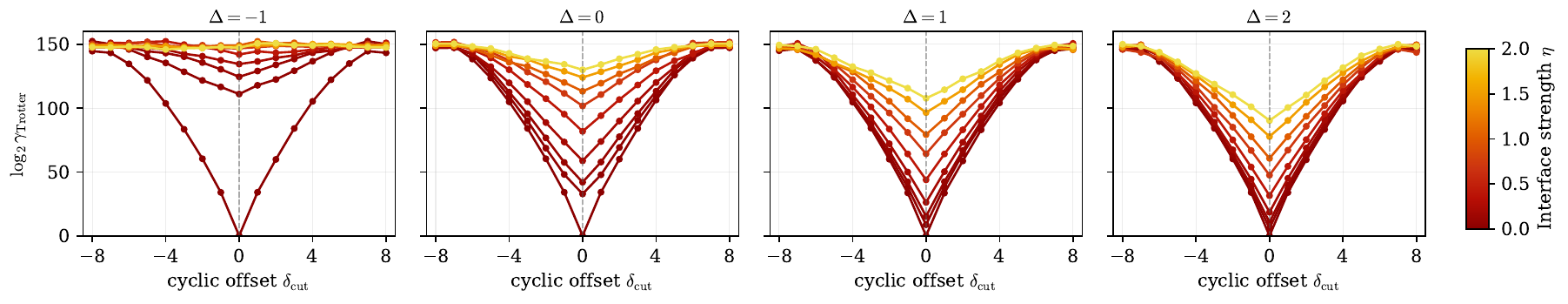}
    \caption{\textbf{Localized Fragments numerical stability and interface sweep.}
    We plot the interfragment Gaussian contribution
    $\log_2\gamma_{\mathrm{Trotter}}(A_j)$ for cyclic mode partitions of the Localized Fragments ensemble.
    Each panel fixes the block spectral separation $\Delta$, defined such that $\Delta = \lambda_\mathrm{max}(A_\ell) - \lambda_\mathrm{min}(B_\ell)$ for each fragment. Each colored curve fixes the
    interface strength $\eta$. For $\Delta>0$, the local block spectra are separated and the
    planted partition remains inexpensive until the interface is large enough
    to hybridize modes across the cut. At $\Delta=0$, the bands touch and the
    planted cost rises more quickly with $\eta$. At $\Delta=-1$, the bandwidths
    overlap completely, so even a small nonzero interface rapidly degrades the planted-cut
    advantage as the eigenvectors' locality becomes numerically unstable to perturbation.}
    \label{fig:localized-fragments-gap-interface-sweep}
\end{figure}

We now discuss the Localized Fragments ensembles used in \cref{sec:model_localized_fragments} in more detail.
The goal is to isolate how the interfragment Gaussian cost
depends on two physical parameters: the cross-block interface strength $\eta$
and in particular the spectral separation $\Delta$ between the two local blocks.
For each fragment $\ell=1,\ldots,L$, with $N=30$, $L=8$, and $N_A=N/2=15$, we
construct a real symmetric single-particle matrix
\begin{equation}
g^{(\ell)}(\eta)=
\begin{pmatrix}
A_\ell & \eta X_\ell\\
\eta X_\ell^\dagger & B_\ell
\end{pmatrix}.
\end{equation}
For each $\ell$, we draw independent Haar-random orthogonal matrices
$Q_{\ell,A}$ and $Q_{\ell,B}$ to serve as the eigenbases of blocks $A_\ell$ and $B_\ell$.
The off-diagonal block $X_\ell$ is drawn by sampling an element $G_\ell$ of the Gaussian orthogonal ensemble (GOE) and then normalizing by
spectral norm,
\begin{equation}
X_\ell = G_\ell / \|G_\ell\|_2,
\end{equation}
so that $\eta$ directly controls the interface scale.
The spectra of the two diagonal blocks have fixed bandwidth equal to 1 and are generated in the following way 
Starting from an evenly spaced grid on $[0,1]$, we add Gaussian noise with variance $\sigma^2=0.1$, sort the resulting values, 
and rescale them back to the interval $[0,1]$. This gives two ordered spectra
$t^{A}_{\ell,i}$ and $t^{B}_{\ell,i}$. We then set
\begin{equation}
\lambda_i(A_\ell)=-\Delta/2-t^{A}_{\ell,i},
\qquad
\lambda_i(B_\ell)=\Delta/2+t^{B}_{\ell,i},
\end{equation}
and define
\begin{equation}
A_\ell=Q_{\ell,A}\operatorname{diag}(\lambda(A_\ell))Q_{\ell,A}^\dagger,
\qquad
B_\ell=Q_{\ell,B}\operatorname{diag}(\lambda(B_\ell))Q_{\ell,B}^\dagger.
\end{equation}
The same base random draw is reused across the plotted $\eta$ and
$\Delta$ values, so changes in the curves reflect the parameter change rather
than a change in the sample.

In \cref{fig:localized-fragments-gap-interface-sweep}, we demonstrate the relationship between the parameters $(\eta,\Delta)$ and the sensitivity of distribution cost to the optimal partition. We sweep through values of the parameters, using a
block-aware column ordering for each $R_\ell$: after diagonalization, the columns are
sorted by their total weight on the planted left block. This increases the numerical stability for comparing neighboring fragment bases.
It removes some artificial cost from numerical noise in the eigenvalue ordering, especially for small $\eta$ and overlapping spectra in the local blocks $A_\ell$ and $B_\ell$.
For each cyclically shifted partition of $N_A$ consecutive modes, the plotted value is the
interfragment Gaussian contribution
\begin{equation}
\log_2\gamma_{\mathrm{Trotter}} =
\sum_{\ell=1}^{L-1}
\log_2\gamma_{\mathrm{Choi}}\!\left(R_{\ell+1}^\dagger R_\ell\right).
\end{equation}

%=============================================================================
\section{Collective dephasing in random diagonal Hamiltonians}
\label{app:collective_dephasing}
%=============================================================================

We prove \cref{thm:slow-growth-MBL} and extend it to all R\'enyi orders.
Consider one diagonal Coulomb layer,
\begin{equation}
    \hat C_\ell(t_\ell)
    :=e^{-it_\ell \hat D_\ell^2},
    \qquad
    \hat D_\ell=\sum_{m=1}^{N}\varepsilon_m^{(\ell)}\hat n_m .
    \label{eq:app_rro_layer}
\end{equation}
Here, $t_\ell$ denotes the coefficient of $\hat D_\ell^2$ in the exponent;
for the convention in \cref{eq:trotter}, $t_\ell=\tau\lambda_\ell/2$.
Let $A|B$ be a mode partition, with $N_A=|A|$, $N_B=|B|$,
$d_A=2^{N_A}$, and $d_B=2^{N_B}$. We suppress the fragment label $\ell$
until the end of the section.

Write $\hat D=\hat D_A+\hat D_B$, where
\begin{equation}
    \hat D_A=\sum_{a\in A}\varepsilon_a\hat n_a,
    \qquad
    \hat D_B=\sum_{b\in B}\varepsilon_b\hat n_b.
\end{equation}
Since $[\hat D_A,\hat D_B]=0$,
\begin{equation}
    e^{-it\hat D^2}
    =
    \left(e^{-it\hat D_A^2}\otimes e^{-it\hat D_B^2}\right)
    e^{-i2t\hat D_A\hat D_B}.
    \label{eq:app_rro_boundary}
\end{equation}
The first factor is local across $A|B$ and does not affect the operator
Schmidt coefficients. It is therefore sufficient to study the boundary
unitary $e^{-i2t\hat D_A\hat D_B}$.

For occupation strings $x\in\{0,1\}^{N_A}$ and
$y\in\{0,1\}^{N_B}$, define the centered subsystem charges
\begin{equation}
    X_x=\sum_{a\in A}\varepsilon_a\left(x_a-\frac12\right),
    \qquad
    Y_y=\sum_{b\in B}\varepsilon_b\left(y_b-\frac12\right).
    \label{eq:app_rro_charges}
\end{equation}
We group the input and output copies of $A$ into
$\mathcal A=\mathcal H_{A,\mathrm{out}}\otimes
\mathcal H_{A,\mathrm{in}}$, and similarly define $\mathcal B$.
For a diagonal unitary, the Choi state is supported on the diagonal Choi
subspaces spanned by
$|\widetilde x\rangle_{\mathcal A}
=|x\rangle_{A,\mathrm{out}}|x\rangle_{A,\mathrm{in}}$ and
$|\widetilde y\rangle_{\mathcal B}
=|y\rangle_{B,\mathrm{out}}|y\rangle_{B,\mathrm{in}}$.
Up to local and global phases, its normalized Choi state is therefore
\begin{equation}
    |\Psi(t)\rangle
    =\frac{1}{\sqrt{d_A d_B}}
    \sum_{x,y}e^{-i2tX_xY_y}
    |\widetilde x\rangle_{\mathcal A}
    |\widetilde y\rangle_{\mathcal B}.
    \label{eq:app_rro_choi}
\end{equation}
Tracing out $\mathcal B$ gives the exact reduced Choi state
\begin{align}
    \rho_{\mathcal A}(t)
    &=\frac{1}{d_A}\sum_{x,x'}
      \chi_B\!\left(2t(X_x-X_{x'})\right)
      |\widetilde x\rangle\langle\widetilde x'|,
      \label{eq:app_rro_reduced}\\
    \chi_B(s)
    &:=\frac{1}{d_B}\sum_y e^{-isY_y}
      =\prod_{b\in B}\cos\!\left(\frac{s\varepsilon_b}{2}\right).
      \label{eq:app_rro_characteristic}
\end{align}
The exact charge variances are
\begin{equation}
    v_A^2:=\frac{1}{d_A}\sum_xX_x^2
      =\frac14\sum_{a\in A}\varepsilon_a^2,
    \qquad
    v_B^2:=\frac14\sum_{b\in B}\varepsilon_b^2.
    \label{eq:app_rro_variances}
\end{equation}
It is convenient to introduce the dimensionless collective-dephasing
strength
\begin{equation}
    g:=4|t|v_Av_B.
    \label{eq:app_rro_g}
\end{equation}
We use natural logarithms and
$S_\alpha(\rho)=(1-\alpha)^{-1}\log\Tr(\rho^\alpha)$ for
$\alpha>0$, $\alpha\neq1$, with the usual continuous limits at
$\alpha=1$ and $\alpha=\infty$. Below,
$S_\alpha\equiv S_\alpha^{\mathrm{Choi}}(\hat C_\ell(t_\ell);A)$
denotes the operator R\'enyi entropy across $A|B$.

\begin{proposition}[\textbf{R\'enyi entropies of a rank-one diagonal layer}]
\label{prop:rro_all_renyi}
For any finite set of coefficients with $v_Av_B>0$, the short-time
operator R\'enyi entropies of \cref{eq:app_rro_layer} obey
\begin{equation}
S_\alpha=
\begin{cases}
\displaystyle
\frac{g^{2\alpha}}{(1-\alpha)4^\alpha}+o(g^{2\alpha}),
&0<\alpha<1,\\[0.8em]
\displaystyle
\frac{g^2}{4}\left[1-\log\!\left(\frac{g^2}{4}\right)\right]
+o\!\left(g^2\log\frac1g\right),
&\alpha=1,\\[0.8em]
\displaystyle
\frac{\alpha}{4(\alpha-1)}g^2+o(g^2),
&1<\alpha<\infty,\\[0.8em]
\displaystyle
\frac{g^2}{4}+o(g^2),
&\alpha=\infty.
\end{cases}
\label{eq:app_rro_short_all_alpha}
\end{equation}
In particular, $S_{1/2}=g+o(g)$, whereas
$S_2=g^2/2+o(g^2)$.

In the Gaussian-charge approximation, in which the centered charges
$X$ and $Y$ are Gaussian with variances $v_A^2$ and $v_B^2$, the reduced
Choi spectrum is geometric,
\begin{equation}
    p_n=(1-q)q^n,
    \qquad
    q=\frac{\nu-1}{\nu+1},
    \qquad
    \nu=\sqrt{1+g^2},
    \qquad n=0,1,\ldots .
    \label{eq:app_rro_geometric}
\end{equation}
Consequently, for $0<\alpha<\infty$, $\alpha\neq1$,
\begin{equation}
    S_\alpha(g)
    =\frac{1}{\alpha-1}
      \log\!\left[
      \left(\frac{\nu+1}{2}\right)^\alpha
      -\left(\frac{\nu-1}{2}\right)^\alpha
      \right].
    \label{eq:app_rro_all_alpha}
\end{equation}
The continuous limits are
\begin{align}
    S_1(g)
    &=\frac{\nu+1}{2}\log\frac{\nu+1}{2}
      -\frac{\nu-1}{2}\log\frac{\nu-1}{2},
      \label{eq:app_rro_vn}\\
    S_\infty(g)
    &=\log\frac{\nu+1}{2}.
    \label{eq:app_rro_min}
\end{align}
Two useful special cases are
\begin{equation}
    S_{1/2}(g)=\operatorname{arsinh}(g),
    \qquad
    S_2(g)=\frac12\log(1+g^2).
    \label{eq:app_rro_half_second}
\end{equation}
For every fixed $\alpha>0$, before finite-size saturation,
\begin{equation}
    S_\alpha(g)=\log g+O_\alpha(1),
    \qquad g\gg1.
    \label{eq:app_rro_log_growth}
\end{equation}
For every finite system, the Choi lower bound satisfies the exact relation
$\gamma_{\mathrm{Choi}}=2e^{S_{1/2}}-1$. In the Gaussian-charge
approximation this becomes
\begin{equation}
    \gamma_{\mathrm{Choi}}^{(\mathrm G)}
    =2\left(g+\sqrt{1+g^2}\right)-1.
    \label{eq:app_rro_gamma}
\end{equation}
Consequently, $\gamma_{\mathrm{Choi}}-1=2g+o(g)$ for every finite
system at short times, while
$\gamma_{\mathrm{Choi}}^{(\mathrm G)}=4g+O(1)$ for $g\gg1$ in the
pre-saturation regime.
\end{proposition}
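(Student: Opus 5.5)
The plan is to work directly with the Choi coefficient matrix rather than with $\rho_{\mathcal A}$. By \cref{eq:app_rro_choi}, the Schmidt coefficients of $|\Psi(t)\rangle$ are the singular values of the $d_A\times d_B$ matrix $M_{xy}=e^{-isX_xY_y}/\sqrt{d_Ad_B}$, with $s=2t$. The eigenvalues $p_k$ of \cref{eq:app_rro_reduced} are the squares of these singular values. Every R\'enyi entropy is a function of $\{p_k\}$ only, so both parts of the proposition reduce to estimating this spectrum.

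\emph{Short-time regime (finite system).} First expand
\begin{equation}
M=\frac{1}{\sqrt{d_Ad_B}}\bigl(\mathbf 1\mathbf 1^{T}-is\,\mathbf X\mathbf Y^{T}\bigr)+E,
\qquad \|E\|_F=O(s^2).
\end{equation}
The key structural fact is that the centered charges are orthogonal to constants: $\sum_xX_x=\sum_yY_y=0$. Also $\|\mathbf X\|^2=d_Av_A^2$ and $\|\mathbf Y\|^2=d_Bv_B^2$ by \cref{eq:app_rro_variances}. The leading part is therefore an exact rank-two matrix with orthogonal left and right factors. Its singular values are $1$ and $s v_Av_B=g/2$.

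Weyl's inequality for singular values then gives $\sigma_1=1-O(g^2)$, $\sigma_2=g/2+O(g^2)$, and $\sum_{k\ge3}\sigma_k^2=O(g^4)$. Since $\sum_kp_k=1$, this yields
\begin{equation}
p_1=1-\frac{g^2}{4}+O(g^3),\qquad p_2=\frac{g^2}{4}+O(g^3),\qquad \sum_{k\ge3}p_k=O(g^4).
\end{equation}
Now insert these into $\log\Tr\rho^\alpha$, handling the tail separately by regime.

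For $0<\alpha<1$, Hölder's inequality gives $\sum_{k\ge3}p_k^\alpha\le d_A^{1-\alpha}\,O(g^{4\alpha})=o(g^{2\alpha})$ at fixed system size. The $p_2^\alpha=(g^2/4)^\alpha$ term then dominates the correction to $p_1^\alpha$, which gives the first line of \cref{eq:app_rro_short_all_alpha}.

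For $\alpha>1$, the tail is negligible and $\log\Tr\rho^\alpha=-\alpha g^2/4+o(g^2)$, which gives the third line. The $\alpha=1$ line is the binary entropy of $p=g^2/4$ plus an $O(g^4\log g)$ tail. The $\alpha=\infty$ line is $-\log p_1$. The special cases $S_{1/2}=g+o(g)$ and $S_2=g^2/2+o(g^2)$ follow by substituting $\alpha=\tfrac12$ and $\alpha=2$. Finally, $\gamma_{\mathrm{Choi}}=2e^{S_{1/2}}-1$ is just \cref{eq:choi_half_renyi_coulomb}.

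\emph{Gaussian-charge approximation.} Replace $X$ by a Gaussian of variance $v_A^2$ and replace $\chi_B$ in \cref{eq:app_rro_characteristic} by $e^{-s^2v_B^2u^2/2}$. Then $\rho_{\mathcal A}$ becomes the integral operator with kernel $\exp[-2t^2v_B^2(X-X')^2]$ on $L^2$ of the Gaussian measure.

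I expect the main obstacle to be diagonalizing this operator exactly. I would first conjugate by the square root of the Gaussian density, which turns it into a Gaussian kernel on $L^2(\mathbb R)$. Completing the square should put it in Mehler form, $\sum_n q^n\phi_n(x)\phi_n(x')$ with Hermite functions $\phi_n$, i.e.\ a thermal state of one harmonic mode. That makes the spectrum geometric, $p_n=(1-q)q^n$.

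To fix $q$ without tracking all constants, I would match the purity. Since $X-X'$ has variance $2v_A^2$,
\begin{equation}
\Tr\rho^2=\mathbb E\,e^{-4t^2v_B^2(X-X')^2}=(1+16t^2v_A^2v_B^2)^{-1/2}=\nu^{-1}.
\end{equation}
For a geometric spectrum $\Tr\rho^2=(1-q)/(1+q)$, so $q=(\nu-1)/(\nu+1)$.

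Everything else is geometric-series algebra. We have $\Tr\rho^\alpha=(1-q)^\alpha/(1-q^\alpha)$, and with $1-q=2/(\nu+1)$ this rearranges to \cref{eq:app_rro_all_alpha}. The limits $\alpha\to1$ and $\alpha\to\infty$ give \cref{eq:app_rro_vn} and \cref{eq:app_rro_min}.

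For $\alpha=\tfrac12$, $(\sum\sqrt{p_n})^2=(1+\sqrt q)/(1-\sqrt q)$. Using $(\sqrt{\nu+1}+\sqrt{\nu-1})^2=2(\nu+g)$, this simplifies to $\nu+g$. Hence $S_{1/2}=\operatorname{arsinh}g$ and $\gamma^{(\mathrm G)}_{\mathrm{Choi}}=2(g+\sqrt{1+g^2})-1$.

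For $g\gg1$, $q\to1$ and $1-q\sim2/g$, so $\Tr\rho^\alpha\sim(2/g)^{\alpha-1}/\alpha$. This gives $S_\alpha=\log g+O_\alpha(1)$ and $\gamma^{(\mathrm G)}_{\mathrm{Choi}}=4g+O(1)$.
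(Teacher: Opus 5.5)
Your proposal is correct and follows the paper's proof. It uses the same rank-two expansion of the Choi state, with the centered charges orthogonal to the constant vector giving Schmidt coefficients $1$ and $g/2$, and the same Mehler diagonalization of the Gaussian-charge kernel into a geometric spectrum. Your Weyl/H\"older control of the tail for $0<\alpha<1$ and your purity-matching shortcut $\Tr\rho^2=\nu^{-1}\Rightarrow q=(\nu-1)/(\nu+1)$ are valid refinements of steps the paper states more tersely.
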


\begin{proof}
We first prove the short-time statement without a disorder assumption.
Let
\begin{equation}
    |0_A\rangle=\frac{1}{\sqrt{d_A}}\sum_x|\widetilde x\rangle,
    \qquad
    |0_B\rangle=\frac{1}{\sqrt{d_B}}\sum_y|\widetilde y\rangle,
\end{equation}
and define diagonal charge operators
$\hat X|\widetilde x\rangle=X_x|\widetilde x\rangle$ and
$\hat Y|\widetilde y\rangle=Y_y|\widetilde y\rangle$.
Centering implies $\langle0_A|\hat X|0_A\rangle=
\langle0_B|\hat Y|0_B\rangle=0$. Hence
\begin{equation}
    |1_A\rangle:=\frac{\hat X|0_A\rangle}{v_A},
    \qquad
    |1_B\rangle:=\frac{\hat Y|0_B\rangle}{v_B}
\end{equation}
are normalized and orthogonal to $|0_A\rangle$ and $|0_B\rangle$.
Expanding \cref{eq:app_rro_choi},
\begin{equation}
    |\Psi(t)\rangle
    =|0_A0_B\rangle
     -i2t v_Av_B|1_A1_B\rangle+O(t^2),
    \label{eq:app_rro_state_expansion}
\end{equation}
where the remainder is in Hilbert-space norm. The first two terms have
orthogonal left and right Schmidt vectors. Therefore the two leading
Schmidt coefficients are
\begin{equation}
    s_0=1+O(t^2),
    \qquad
    s_1=2|t|v_Av_B+O(t^2)=\frac{g}{2}+O(g^2),
\end{equation}
and all remaining Schmidt coefficients are $O(t^2)$. The reduced Choi
eigenvalues $p_j=s_j^2$ consequently satisfy
\begin{equation}
    p_0=1-\frac{g^2}{4}+o(g^2),
    \qquad
    p_1=\frac{g^2}{4}+o(g^2),
    \qquad
    \sum_{j\ge2}p_j=O(g^4).
    \label{eq:app_rro_small_spectrum}
\end{equation}
Substitution into the definition of $S_\alpha$ gives
\cref{eq:app_rro_short_all_alpha}. The distinction between
$S_{1/2}$ and $S_2$ follows directly from
$\sqrt{p_1}=O(|t|)$ but $p_1=O(t^2)$.

We next diagonalize the Gaussian-charge model. Replacing the empirical
charge distributions by
\begin{equation}
    P_A(X)=\frac{e^{-X^2/(2v_A^2)}}{\sqrt{2\pi v_A^2}},
    \qquad
    P_B(Y)=\frac{e^{-Y^2/(2v_B^2)}}{\sqrt{2\pi v_B^2}},
\end{equation}
the normalized Choi wavefunction is
$\Psi_t(X,Y)=\sqrt{P_A(X)P_B(Y)}e^{-i2tXY}$.
Tracing out $Y$ gives
\begin{equation}
    \rho_A(X,X')
    =\frac{1}{\sqrt{2\pi v_A^2}}
      \exp\!\left[
      -\frac{X^2+X'^2}{4v_A^2}
      -2t^2v_B^2(X-X')^2
      \right].
    \label{eq:app_rro_gaussian_kernel}
\end{equation}
With $u=X/(\sqrt2v_A)$ and $u'=X'/(\sqrt2v_A)$, the unitarily
rescaled kernel is
\begin{equation}
    \widetilde\rho_A(u,u')
    =\frac{1}{\sqrt\pi}
      \exp\!\left[
      -\frac{2+g^2}{4}(u^2+u'^2)
      +\frac{g^2}{2}uu'
      \right].
    \label{eq:app_rro_mehler_kernel}
\end{equation}
Let $\phi_n^{(\nu)}$ denote the normalized Hermite functions of width
$\nu$. Mehler's formula, with
$\nu=\sqrt{1+g^2}$ and $q=(\nu-1)/(\nu+1)$, gives
\begin{equation}
    \widetilde\rho_A(u,u')
    =(1-q)\sum_{n=0}^{\infty}q^n
      \phi_n^{(\nu)}(u)\phi_n^{(\nu)}(u').
    \label{eq:app_rro_mehler_decomposition}
\end{equation}
This proves \cref{eq:app_rro_geometric}. Summing the geometric series,
\begin{equation}
    \Tr\rho_A^\alpha
    =\frac{(1-q)^\alpha}{1-q^\alpha}
    =\frac{1}{
      \left(\frac{\nu+1}{2}\right)^\alpha
      -\left(\frac{\nu-1}{2}\right)^\alpha},
\end{equation}
which gives \cref{eq:app_rro_all_alpha} and its continuous limits.
The identities in \cref{eq:app_rro_half_second} follow by setting
$\alpha=1/2$ and $\alpha=2$.

At large $g$, $\nu=g+O(g^{-1})$. For fixed $\alpha\neq1,\infty$,
\begin{equation}
    S_\alpha(g)
    =\log\frac{g}{2}+\frac{\log\alpha}{\alpha-1}+O(g^{-1}),
    \label{eq:app_rro_large_alpha}
\end{equation}
while $S_1(g)=\log(g/2)+1+O(g^{-1})$ and
$S_\infty(g)=\log(g/2)+O(g^{-1})$. This proves
\cref{eq:app_rro_log_growth}. Finally, the short-time statement for $\gamma_{\mathrm{Choi}}$ follows
from $S_{1/2}=g+o(g)$, while \cref{eq:app_rro_gamma} follows from
$e^{\operatorname{arsinh}(g)}=g+\sqrt{1+g^2}$.
\end{proof}

The Gaussian-charge approximation is supported by a weighted central-limit
condition. Indeed, under the uniform sum over occupation strings,
$X=(1/2)\sum_{a\in A}\varepsilon_a r_a$, where the $r_a=\pm1$ are
independent and equiprobable. If
\begin{equation}
    \frac{\sum_{a\in A}\varepsilon_a^4}
    {\left(\sum_{a\in A}\varepsilon_a^2\right)^2}\longrightarrow0,
    \qquad
    \frac{\sum_{b\in B}\varepsilon_b^4}
    {\left(\sum_{b\in B}\varepsilon_b^2\right)^2}\longrightarrow0,
    \label{eq:app_rro_lindeberg}
\end{equation}
then $X/v_A$ and $Y/v_B$ converge in distribution to standard normal
variables. This follows by expanding the characteristic function,
$\prod_a\cos(s\varepsilon_a/(2v_A))=e^{-s^2/2+o(1)}$, and similarly
for $B$. The condition holds with high probability for independent
Gaussian coefficients and for vectors sampled uniformly from
$S^{N-1}$.

Restoring the fragment label, for
$\varepsilon_m^{(\ell)}\sim\mathcal N(0,\sigma_\ell^2)$,
\begin{equation}
    \omega_\ell
    =4|t_\ell|v_{A,\ell}v_{B,\ell}
    \simeq
    |t_\ell|\sigma_\ell^2\sqrt{N_AN_B}.
    \label{eq:app_rro_g_typical}
\end{equation}
For the Trotter convention in \cref{eq:trotter},
$t_\ell=\tau\lambda_\ell/2$, and hence
\begin{equation}
    \omega_\ell
    \simeq
    \frac{|\tau\lambda_\ell|}{2}\sigma_\ell^2\sqrt{N_AN_B}.
    \label{eq:app_rro_g_trotter}
\end{equation}
For a balanced cut, $\omega_\ell\simeq
|\tau\lambda_\ell|\sigma_\ell^2N/4$. If
$\boldsymbol\varepsilon^{(\ell)}$ is normalized to unit Euclidean norm,
then $\sigma_\ell^2\simeq1/N$, and the pre-saturation spectrum depends
on $\tau\lambda_\ell$ but not on $N$ to leading order.

For completeness, the order-zero R\'enyi entropy behaves differently.
For a finite system, it is the logarithm of the rank of the coefficient
matrix
\begin{equation}
    C_{xy}(t)=\frac{1}{\sqrt{d_Ad_B}}e^{-i2tX_xY_y},
\end{equation}
so $S_0\leq\min(N_A,N_B)\log2$. If all $X_x$ and all $Y_y$ are
pairwise distinct, an $r\times r$ minor, with
$r=\min(d_A,d_B)$, has the small-$t$ expansion
\begin{equation}
\det[e^{zX_iY_j}]_{i,j=1}^{r}
=\frac{z^{r(r-1)/2}}{\prod_{k=0}^{r-1}k!}
 \prod_{i<j}(X_j-X_i)\prod_{i<j}(Y_j-Y_i)
 +O\!\left(z^{r(r-1)/2+1}\right),
\end{equation}
where $z=-i2t$. The leading coefficient is nonzero, so the matrix has
maximal rank for sufficiently small $t\neq0$, and for all $t$ outside a
discrete exceptional set. Thus, for continuous disorder,
$S_0(t)=\min(N_A,N_B)\log2$ almost surely for generic nonzero $t$,
whereas $S_0(0)=0$. This discontinuity is why the logarithmic-growth
statement in \cref{eq:app_rro_log_growth} is made for fixed
$\alpha>0$.

%=============================================================================
\section{Operator R\'enyi entropies of Gaussian unitaries}
\label{app:operator_entanglement_Gaussian_covariance}
%=============================================================================

Let us first remind ourselves of the Schmidt decomposition for pure bipartite states, \(|\psi\rangle\in\mathcal{H}_{AB}\). Any such state can be written as
\begin{align}
|\psi\rangle=\sum_{j=1}^{\chi}\sqrt{\lambda_j}\,|j_A\rangle\otimes |j_B\rangle,
\end{align}
where \(\chi\le \min\{d_A,d_B\}\) is the Schmidt rank and \(\{|j_A\rangle\},\{|j_B\rangle\}\) are orthonormal bases in each subsystem, respectively. The Schmidt rank being greater than one is necessary and sufficient for a pure quantum state to be entangled; and more generally, one can define entanglement monotones by computing scalar entropic functions of the Schmidt values \(\{\lambda_j\}\), such as the von Neumann entropy.

In a similar spirit, the operator Schmidt decomposition of a unitary \(U\) acting on a bipartite quantum system quantifies the minimal number of product operators, such as Pauli strings, that are needed to represent \(U\). Namely,
\begin{align}
U=\sum_{j=1}^{\mathcal{X}}\sqrt{\lambda_j}\,V_j\otimes W_j,
\end{align}
where \(\mathcal{X}\le \min\{d_A^2,d_B^2\}\) is the operator Schmidt rank of the unitary and \(\{V_j\},\{W_j\}\) represent orthonormal bases of operators with respect to the Hilbert-Schmidt inner product. A unitary is a product unitary if and only if \(\mathcal{X}=1\), otherwise it is an operator-space entangled unitary. Similar to the pure state case, one can compute the entropy of the operator Schmidt coefficients \(\{\lambda_j\}\) to quantify operator-space entanglement of the unitary. It is worth emphasizing that operator-space entanglement of unitaries quantifies a number of operational characteristics such as the cost of distributed quantum simulation of \(U\), the cost of classically simulating \(U\) via a matrix product operator (MPO) representation, etc. In fact, operator entanglement has long been used to quantify the entanglement of \textit{mixed states}, under the name of the \textit{realignment criterion}, and has a rich history in quantum information theory; see e.g., \cite{Gittsovich2008} and references therein. For our discussion, the operator entanglement quantifies the distribution complexity of simulating \(U\) locally.

Generically, computing operator entanglement of a unitary is equivalent to performing a singular value decomposition of its realigned matrix. This entails a classical complexity \(O(d^3)\), where \(d=2^N\times 2^N\), namely exponential in the number of qubits. As a result, computing operator entanglement is generically intractable for many-body quantum dynamics. However, for special classes of unitaries, such as Cliffords, matchgates, and linear optical unitaries, one can often compute this more efficiently. This relies on using a compact representation of the unitary itself, which is intimately related to the fact that these unitaries form non-universal subgroups of the full unitary group. We will now show that for Gaussian unitaries that appear in the Trotter circuits of double-factorized electronic structure Hamiltonians, one can compute their operator entanglement with a covariance matrix that scales linearly with the system size, as opposed to exponential. This reduction relies on the standard covariance matrix formalism for free fermionic systems; see e.g., \cite{peschel2003rdm} for a review.

For consistency with the preceding section, we formulate the Gaussian
result directly in terms of the normalized Choi state and its operator
R\'enyi entropies. Let $\hat U$ be an even fermionic Gaussian unitary on
$N$ modes. In a Majorana basis $\gamma_1,\ldots,\gamma_{2N}$, write
\begin{equation}
    \hat U^\dagger\gamma_j\hat U
    =\sum_{k=1}^{2N}O_{jk}\gamma_k,
    \qquad O\in\mathrm{SO}(2N).
    \label{eq:app_gaussian_majorana_action}
\end{equation}
For a mode partition $A|B$, $O[A,A]$ denotes the
$2N_A\times2N_A$ block containing both Majoranas of every mode in $A$. For a bipartition \(A|B\), the operator entanglement of \(U\) can be obtained without constructing the \(2^N\times 2^N\) matrix of \(U\). One only needs the \(AA'\) block of \(R\). Namely, we can perform the realignment directly on the covariance matrix without ever actually computing a SVD of an exponentially large matrix.

\begin{proposition}[\textbf{All-order operator entanglement of a Gaussian unitary}]
\label{prop:gaussian_all_renyi}
Let $\nu_1,\ldots,\nu_{2N_A}$ be the singular values of $O[A,A]$.
The eigenvalues of the reduced normalized Choi state are
\begin{equation}
    p_{\mathbf n}
    =\prod_{j=1}^{2N_A}
      \frac{1+(-1)^{n_j}\nu_j}{2},
    \qquad
    \mathbf n\in\{0,1\}^{2N_A}.
    \label{eq:app_gaussian_spectrum}
\end{equation}
Therefore, for $0<\alpha<\infty$, $\alpha\neq1$,
\begin{equation}
    S_\alpha^{\mathrm{Choi}}(\hat U;A)
    =\frac{1}{1-\alpha}
      \sum_{j=1}^{2N_A}
      \log\!\left[
      \left(\frac{1+\nu_j}{2}\right)^\alpha
      +\left(\frac{1-\nu_j}{2}\right)^\alpha
      \right].
    \label{eq:app_gaussian_all_alpha}
\end{equation}
The remaining R\'enyi orders are
\begin{align}
    S_1^{\mathrm{Choi}}(\hat U;A)
    &=\sum_{j=1}^{2N_A}h\!\left(\frac{1+\nu_j}{2}\right),
    \\
    S_\infty^{\mathrm{Choi}}(\hat U;A)
    &=-\sum_{j=1}^{2N_A}\log\!\left(\frac{1+\nu_j}{2}\right),
    \\
    S_0^{\mathrm{Choi}}(\hat U;A)
    &=r_A\log2,
    \qquad
    r_A:=\#\{j:\nu_j<1\},
    \label{eq:app_gaussian_limits}
\end{align}
where $h(p)=-p\log p-(1-p)\log(1-p)$. For a particle-preserving Gaussian unitary $\hat R$ induced by the real
orbital rotation $R\in\mathrm{SO}(N)$, each singular value $\sigma_k$ of
$R[A,A]$ occurs twice among the $\nu_j$. Hence
\begin{equation}
    S_\alpha^{\mathrm{Choi}}(\hat R;A)
    =\frac{2}{1-\alpha}
      \sum_{k=1}^{N_A}
      \log\!\left[
      \left(\frac{1+\sigma_k}{2}\right)^\alpha
      +\left(\frac{1-\sigma_k}{2}\right)^\alpha
      \right].
    \label{eq:app_gaussian_particle_all_alpha}
\end{equation}
In particular,
\begin{align}
    S_{1/2}^{\mathrm{Choi}}(\hat R;A)
    &=2\sum_{k=1}^{N_A}
      \log\!\left(1+\sqrt{1-\sigma_k^2}\right),
      \label{eq:app_gaussian_half}\\
    S_1^{\mathrm{Choi}}(\hat R;A)
    &=2\sum_{k=1}^{N_A}
      h\!\left(\frac{1+\sigma_k}{2}\right),
      \label{eq:app_gaussian_vn}\\
    S_2^{\mathrm{Choi}}(\hat R;A)
    &=-2\sum_{k=1}^{N_A}
      \log\!\left(\frac{1+\sigma_k^2}{2}\right),
      \label{eq:app_gaussian_second}
\end{align}
The half-R\'enyi formula gives
\begin{equation}
    \gamma_{\mathrm{Choi}}(\hat R;A)
    =2e^{S_{1/2}^{\mathrm{Choi}}}-1
    =2\prod_{k=1}^{N_A}
      \left(1+\sqrt{1-\sigma_k^2}\right)^2-1,
\end{equation}
which recovers \cref{thm:gaussian_choi_bound}.
\end{proposition}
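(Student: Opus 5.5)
The plan is to realize the normalized Choi state of $\hat U$ as a pure fermionic Gaussian state on $2N$ modes, then read its reduced spectrum off the restricted covariance matrix. We use the same physical--auxiliary pairing as in \cref{app:covariance}: $|\Phi_U\rangle=(\hat U\otimes I)|\Omega\rangle$, with covariance $\Gamma^\Omega=\bigoplus_j\Gamma_{\mathrm{pair}}$. Because $\hat U$ is an even Gaussian unitary, it acts on the physical Majoranas by $O\in\mathrm{SO}(2N)$, as in \cref{eq:app_gaussian_majorana_action}, and trivially on the auxiliary ones. The Choi state is therefore Gaussian, with covariance $\Gamma^\Phi=(O\oplus I)\,\Gamma^\Omega\,(O\oplus I)^T$ in the ordering where physical and auxiliary Majoranas are grouped separately.

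The cut of interest is $\mathcal A=A_{\mathrm{out}}\cup A_{\mathrm{in}}$ versus $\mathcal B$. The restricted covariance $\Gamma^\Phi_{\mathcal A}$ is a $4N_A\times4N_A$ antisymmetric matrix. Up to the sign and permutation conventions fixed by $\Gamma_{\mathrm{pair}}$, it has block form
\begin{equation}
    \Gamma^\Phi_{\mathcal A}\simeq\begin{pmatrix}0 & O[A,A]J\\ -(O[A,A]J)^T & 0\end{pmatrix},
\end{equation}
where $J$ is a fixed orthogonal matrix coming from $\Gamma_{\mathrm{pair}}$. The diagonal blocks vanish because $\langle\Omega|$ has zero physical--physical and auxiliary--auxiliary correlations, and this is preserved by $O\oplus I$. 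The eigenvalues of $\Gamma^\Phi_{\mathcal A}$ are therefore $\pm i\nu_j$, where $\nu_j$ are the singular values of $O[A,A]J$, hence of $O[A,A]$.

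By the standard Gaussian reduced-state formula (\cref{eq:gaussian_spectrum}), applied to the $2N_A$ normal modes of $\mathcal A$, the reduced spectrum is the product form \cref{eq:app_gaussian_spectrum}. Summing $p_{\mathbf n}^\alpha$ factorizes exactly as in \cref{eq:schmidt_factorization}, which gives \cref{eq:app_gaussian_all_alpha}. The orders $\alpha=1,\infty,0$ follow by taking limits. For $\alpha=\infty$ the largest eigenvalue is $\prod_j(1+\nu_j)/2$. For $\alpha=0$ each factor with $\nu_j<1$ contributes two nonzero eigenvalues, and each factor with $\nu_j=1$ contributes one.

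For the particle-preserving case we write $\gamma_{2p-1}=a_p+a_p^\dagger$ and $\gamma_{2p}=i(a_p-a_p^\dagger)$. A real $R$ acts identically on both families of Majoranas, so $O=R\otimes I_2$ after reordering. Then $O[A,A]=R[A,A]\otimes I_2$, and each $\sigma_k$ appears twice among the $\nu_j$. Substituting gives \cref{eq:app_gaussian_particle_all_alpha}. The special orders follow from elementary identities: for $\alpha=1/2$, $(\sqrt{(1+\sigma)/2}+\sqrt{(1-\sigma)/2})^2=1+\sqrt{1-\sigma^2}$, and for $\alpha=2$, $((1+\sigma)/2)^2+((1-\sigma)/2)^2=(1+\sigma^2)/2$. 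Finally, $\gamma_{\mathrm{Choi}}=2e^{S_{1/2}}-1$ from \cref{eq:choi_half_renyi_coulomb} recovers \cref{thm:gaussian_choi_bound}.

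The main obstacle is the block computation of $\Gamma^\Phi_{\mathcal A}$. The delicate points are fermionic sign and ordering conventions: the interleaved ordering in $|\Omega\rangle$ must be compatible with the bipartition $\mathcal A|\mathcal B$, which is a reordering of modes and must not change the spectrum of the reduced state. The other thing to confirm is that the off-diagonal block is $O[A,A]$ up to orthogonal factors, rather than $O[A,A]$ mixed with $O[A,B]$. I would verify this directly from $\langle c_kc_l\rangle$ for $k$ physical and $l$ auxiliary, using $\langle\Omega|\gamma_j\gamma'_{m}|\Omega\rangle\propto\delta_{jm}$. The Gaussianity of the reduced state is also required, and follows from parity superselection since $\hat U$ is even.
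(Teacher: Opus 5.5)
Your proof is correct and follows the paper's skeleton: vanishing diagonal covariance blocks, an off-diagonal block built from $O$, restriction to $\mathcal A$ leaving only $O[A,A]$, an SVD, the product formula for the spectrum, and $O_R=R\oplus R$ for the doubling. The difference is how the Choi object is realized. You use the ancilla-mode Choi state $(\hat U\otimes I)|\Omega\rangle$ together with $\Gamma^\Omega$ from \cref{app:covariance}, and obtain the off-diagonal block $O[A,A]J_A$. Your worry about $O[A,B]$ leaking in is settled because the pairing matrix $J$ inherited from $\Gamma_{\mathrm{pair}}$ is block diagonal in the mode index, so $(OJ)[A,A]=O[A,A]J_A$. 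The paper instead treats $\hat U$ as a vector $|\hat U)$ in Hilbert--Schmidt space, with left and right super-Majoranas $\ell_j|X)=|\gamma_jX)$ and $r_j|X)=i\Pi|X\gamma_j)$. Its covariance is exactly $\bigl(\begin{smallmatrix}0&O\\-O^{T}&0\end{smallmatrix}\bigr)$ with no $J$, and realignment is just a permutation of super-Majoranas. Your route uses only standard facts about Gaussian states and fermionic partial traces, and it makes explicit the block computation behind the doubling that \cref{app:factorization_schmidt} only asserts. The paper's route avoids ancillas and pairing-sign conventions, at the cost of checking that the $r_j$ obey the anticommutation relations, which is what the parity twist $\Pi$ ensures. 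Two minor points. First, the mode-ordering issue you flag is not an obstacle: the restricted covariance depends only on which Majoranas lie in $\mathcal A$, and the fermionic reduced state of an even state does not depend on ordering. Second, Gaussianity of the reduced state comes from Gaussianity of the global state, since Wick's theorem restricts to any subset of Majoranas. Evenness of $\hat U$ is instead what makes it act as $O\oplus I$ and makes the fermionic partial trace well defined.
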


\begin{proof}
Regard $\hat U$ as a normalized vector $|\hat U)$ in Hilbert--Schmidt
operator space, with $(X|Y)=2^{-N}\Tr(X^\dagger Y)$. Let $\Pi$ be the
operator-parity map, and define left and right super-Majoranas by
\begin{equation}
    \ell_j|X)=|\gamma_jX),
    \qquad
    r_j|X)=i\Pi|X\gamma_j).
\end{equation}
On a Majorana monomial $X$, the parity map acts as
$\Pi|X)=(-1)^{|X|}|X)$, where $|X|$ is its fermion parity. The
"super-Majoranas" (Majorana superoperators) satisfy the Majorana anticommutation relations. With
$\eta=(\ell_1,\ldots,\ell_{2N},r_1,\ldots,r_{2N})$, define
\begin{equation}
    \Omega_{\mu\nu}
    :=\frac{i}{2}(\hat U|[\eta_\mu,\eta_\nu]|\hat U).
\end{equation}
The covariance matrix of the Gaussian operator vector $|\hat U)$ is
\begin{equation}
    \Omega
    =\begin{pmatrix}0&O\\-O^\dagger&0\end{pmatrix}.
    \label{eq:app_gaussian_operator_covariance}
\end{equation}
Indeed, the left-left and right-right blocks vanish, while the cross
block follows from
$2^{-N}\Tr(\hat U^\dagger\gamma_j\hat U\gamma_k)=O_{jk}$.

The operator Schmidt decomposition across $A|B$ is the ordinary Schmidt
decomposition of $|\hat U)$ after grouping the output and input copies
of $A$ into $\mathcal A$, and similarly for $B$. At the covariance-matrix
level this realignment only permutes
$(\ell_A,\ell_B,r_A,r_B)$ into
$(\ell_A,r_A,\ell_B,r_B)$. The restricted covariance matrix on
$\mathcal A$ is therefore
\begin{equation}
    \Omega_{\mathcal A}
    =\begin{pmatrix}
      0&O[A,A]\\
      -O[A,A]^\dagger&0
      \end{pmatrix}.
    \label{eq:app_gaussian_reduced_covariance}
\end{equation}
An SVD of $O[A,A]$ transforms this matrix into the direct sum
\begin{equation}
    \Omega_{\mathcal A}
    \sim
    \bigoplus_{j=1}^{2N_A}
    \begin{pmatrix}0&\nu_j\\-\nu_j&0\end{pmatrix}.
\end{equation}
A Gaussian density operator with one covariance block
$\left(\begin{smallmatrix}0&\nu\\-\nu&0\end{smallmatrix}\right)$ has
eigenvalues $(1\pm\nu)/2$. The reduced Choi spectrum therefore
factorizes as in \cref{eq:app_gaussian_spectrum}. Summing
$p_{\mathbf n}^\alpha$ over the independent binary indices gives
\cref{eq:app_gaussian_all_alpha}.

For a real particle-preserving orbital rotation, the Majorana
transformation is, up to a permutation of Majoranas, $O_R=R\oplus R$.
Thus the singular values of $O_R[A,A]$ are
$\sigma_1,\sigma_1,\ldots,\sigma_{N_A},\sigma_{N_A}$, which gives
\cref{eq:app_gaussian_particle_all_alpha} and its special cases.
\end{proof}

\PRLrefsep
\end{document}